\def\##1\#{\begin{align}#1\end{align}}
\def\$#1\${\begin{align*}#1\end{align*}}
\def\sn{\sum_{i=1}^n}
\newcommand{\T}{{\mathsmaller {\rm T}}}
\newcommand{\nn}{\nonumber}
\def\boxit#1{\vbox{\hrule\hbox{\vrule\kern6pt
        \vbox{\kern6pt#1\kern6pt}\kern6pt\vrule}\hrule}}
\numberwithin{equation}{section}
\begin{document}

\title{Deep Neural Expected Shortfall Regression \\ with Tail-Robustness}

\author{\name Myeonghun Yu \email audgns@umich.edu \\
       \addr Department of Biostatistics\\
       University of Michigan\\
       Ann Arbor, MI 48109, USA
       \AND
       \name Kean Ming Tan \email keanming@umich.edu \\
       \addr Department of Statistics\\
       University of Michigan\\
       Ann Arbor, MI 48109, USA
       \AND
       \name Huixia Judy Wang \email jw322@rice.edu \\
       \addr Department of Statistics\\
       Rice University\\
       Houston, TX 77005, USA
       \AND
       \name Wen-Xin Zhou \email wenxinz@uic.edu \\
       \addr Department of Information and Decision Sciences\\
       University of Illinois Chicago\\
       Chicago, IL 60607, USA}
       
\editor{Jie Peng}

\maketitle

\begin{abstract}%
Expected shortfall (ES), also known as conditional value-at-risk, is a widely recognized risk measure that complements value-at-risk by capturing tail-related risks more effectively. Compared with quantile regression, which has been extensively developed and applied across disciplines, ES regression remains in its early stage, partly because the traditional empirical risk minimization framework is not directly applicable. In this paper, we develop a nonparametric framework for expected shortfall regression based on a two-step approach that treats the conditional quantile function as a nuisance parameter. Leveraging the representational power of deep neural networks, we construct a two-step ES estimator using feedforward ReLU networks, which can alleviate the curse of dimensionality when the underlying functions possess hierarchical composition structures. However, ES estimation is inherently sensitive to heavy-tailed response or error distributions. To address this challenge, we integrate a properly tuned Huber loss into the neural network training, yielding a robust deep ES estimator that is provably resistant to heavy-tailedness in a non-asymptotic sense and first-order insensitive to quantile estimation errors in the first stage. Comprehensive simulation studies and an empirical analysis of the effect of El Ni\~no on extreme precipitation illustrate the accuracy and robustness of the proposed method.
\end{abstract}

\begin{keywords}
  expected shortfall, deep learning, Huber loss, neural networks, non-asymptotic bounds, nonparametric regression, quantile regression
\end{keywords}

\section{Introduction}

Expected shortfall (ES), also known as conditional value-at-risk, is defined as the expected value of a random variable given that its realization falls below a specified quantile of the underlying distribution. Introduced as a risk measure by \cite{ADEH1997}, ES has gained widespread recognition and application across multiple disciplines, including financial risk management \citep{AT2002, MFE2015, PZC2019}, operations research \citep{RU2000}, and actuarial modeling \citep{FMW2023}. Notably, under the recent Fundamental Review of the Trading Book\footnote{\href{https://www.bis.org/bcbs/publ/d457.htm}{https://www.bis.org/bcbs/publ/d457.htm}
}, the Basel Committee on Banking Supervision confirmed the replacement of value-at-risk (quantile) with ES as the standard regulatory risk measure for market risk. Furthermore, in the context of insurance regulation, ES has been adopted as the risk measure in the Swiss Solvency Test.

Formally, let $Y$ be a real-valued random variable, such as the return of an asset or investment portfolio, with cumulative distribution function (CDF) $F_Y$. Denote the quantile of $Y$ at level $\alpha \in (0,1)$ by $q_\alpha(Y) := \inf\{y \in \RR : F_Y(y) \geq \alpha\}$.
Provided that $\EE|Y| < \infty$, the ES of $Y$ at level $\alpha$ is defined as
\$
e_\alpha(Y) := \EE\{Y|Y \leq q_\alpha(Y)\} = \frac{1}{\alpha} \EE[  Y \mathbbm{1}\{ Y \leq q_\alpha(Y) \}  ],
\$
where $\mathbbm{1}(\cdot)$ denotes the indicator function. Intuitively, the $\alpha$-level ES represents the average of the lowest $(100\cdot\alpha)$\% realizations of $Y$.
If $F_Y$ is continuous at $q_\alpha(Y)$, the $\alpha$-level ES can be equivalently expressed as $e_\alpha(Y) = \alpha^{-1}\int_0^\alpha q_u(Y){\rm d}u$.
We refer readers to Section 2.2.4 of \cite{MFE2015} for a concise introduction to ES and its fundamental properties.

In the presence of covariates $X \in \mathbb{R}^d$, our goal is to estimate the conditional ES of $Y$ given $X$ from a sample $\{ (X_i, Y_i) \}_{i=1}^n$. A major challenge lies in the fact that ES is not elicitable \citep{G2011}; that is, there exists no loss function for which ES uniquely minimizes the expected loss. To address this issue, \cite{FZ2016} demonstrated that ES is jointly elicitable with the quantile, constructing a class of strictly consistent joint loss functions. Building upon this property, \cite{DB2019} developed a joint linear regression framework for modeling conditional quantile and ES, while \cite{PZC2019} studied a semi-parametric version in the autoregressive setting. From an alternative viewpoint that treats the conditional quantile as a nuisance parameter, \cite{B2020}, \cite{PW2023} and \cite{HTZ2023} proposed two-step estimators and established their (non-)asymptotic properties under joint linear models. Although their approaches differ, they all rely on an orthogonality property, which we revisit in Section~\ref{sec:2}. To capture complex nonlinear relationships between $Y$ and $X$, several nonparametric methods, such as the Nadaraya-Watson estimator \citep{S2005, CW2008, K2012} and the local linear estimator \citep{O2021}, have been proposed for estimating conditional ES functions. However, these nonparametric methods suffer from deteriorating accuracy and efficiency as the dimension $d$ increases and do not scale well even in moderate-dimensional settings. To address this limitation, \cite{FLLZ2023} proposed a weighted single-index quantile regression method based on the central quantile subspace approach of \cite{C2020}, though its theoretical foundations remain to be fully established. 

Deep learning has achieved remarkable success as a powerful tool for capturing nonlinear relationships between outcomes and explanatory variables. For instance, in climate science, deep neural network (DNN)-based methods have demonstrated high predictive accuracy for El Ni{\~n}o–Southern Oscillation, precipitation, and temperature  \citep{HVS2019, JVD2022, WAHSZ2023}.
From a statistical perspective, the effectiveness of DNNs arises from their ability to approximate complex functions efficiently. Recent studies show that DNN regression estimators can adapt to the intrinsic low-dimensional structure of the conditional mean function, either when it possesses a hierarchical composition of smooth functions \citep{BK2019, S2020, KL2021} or when the support of the predictors lies on a lower-dimensional manifold \citep{CJLZ2022, JSLH2023}, thus mitigating the curse of dimensionality. DNNs have also been successfully used to estimate nonlinear components in semiparametric models \citep{FLM2021, ZMW2022}.

Most existing DNN regression studies assume that the response or noise variable is bounded or sub-Gaussian, an assumption often unrealistic in applications involving extreme outcomes such as precipitation, wages, earnings, or insurance claims. \cite{CDS2010} demonstrated a conflict between subadditivity and robustness in risk measurement procedures, implying that the empirical ES estimator lacks robustness. Consequently, recent research has increasingly focused on nonparametric regression under heavy-tailed errors. Several works have examined how heavy-tailed nosie affects the convergence rates of least squares estimators constrained to a nonparametric function class corresponding to the true conditional mean function \citep{HW2019, KP2022}.

In this paper, we study nonparametric expected shortfall regression for heavy-tailed data, where high-order moments of the response variable may be infinite.
The main challenge arises because the conditional quantile function, treated as a nuisance parameter, is unknown. Motivated by \cite{B2020} and \cite{PW2023}, we propose a two-stage estimation approach based on an orthogonal score function, which ensures that the estimator is first-order insensitive to quantile estimation errors. Our method avoids sample-splitting, thereby preserving efficiency and reducing uncertainty. We also propose a novel procedure for jointly estimating multiple ES functions that satisfy the non-crossing property, specifically, ensuring that fitted ES functions are monotone across quantile levels (higher quantiles yield large ES values) and that each fitted ES does not exceed its corresponding fitted quantile. Another challenge is that ES estimation is more sensitive to tail behavior than quantile or mean estimation. To address this, we employ the Huber loss \citep{H1964} with a data-driven, diverging robustification parameter.

We establish non-asymptotic error bounds for both the two-step robust and two-step least squares estimators (LSEs) under heavy-tailed error distributions, characterizing the effects of quantile estimation and heavy tails without relying on sample-splitting. When the underlying functions satisfy smoothness conditions, our results demonstrate that neural networks mitigate the curse of dimensionality, with convergence rates determined by the intrinsic dimension. Under heavy-tailed noise, the robust estimator outperforms the LSE, achieving a faster convergence rate. Moreover, from a non-asymptotic perspective, the robust estimator enjoys exponential-type deviation bounds, whereas the LSE exhibits only polynomial-type error bounds with high probability. To complement this analysis, we also derive non-asymptotic error bounds for the robust estimators under light-tailed noise, showing that a properly tuned robust estimator incurs negligible efficiency loss relative to the LSE in such settings.

The remainder of the paper is organized as follows. Section~\ref{sec:2} introduces the problem setup and DNN-based two-step estimation methods, along with their implementation. Section~\ref{sec:3} presents the theoretical properties of the proposed estimators. Section~\ref{sec:4} reports numerical studies and an application to the U.S. precipitation reanalysis dataset \citep{Setal2019}. All proofs, together with supplementary simulation results, are presented in the Appendix.

\bigskip
\noindent
{\sc Notation}. 
We use $c_1, c_2, \ldots$ to denote global constants and $C_1, C_2, \ldots$ to denote local intermediate constants wthat may vary across lines within a proof. Each $c_1, c_2, \ldots$ represents a distinct fixed constant, whereas $C_1, C_2, \ldots$ may take different values from one line to another. We write $a \lesssim b$ if there exists an absolute constant $C > 0$ such that $a \leq Cb$, and $a \gtrsim b$ if $b \lesssim a$.
Moreover, we write $a \asymp b$ when both $a \lesssim b$ and $a \gtrsim b$ hold. We denote $\NN_0 = \{0,1,2,\ldots\}$ and $\NN^+ = \{1,2,\ldots\}$ as the sets of nonnegative integers and positive integers, respectively. For any real-valued function $h$ defined on a domain $\cX$, let $\|h\|_\infty = \sup_{x\in \cX} |h(x)|$. Let $\PP_X$ denote a probability measure on $\cX$. For any $q\geq 1$, $L_q$-norm under $\PP_X$ is defined as $\|h\|_q := \|h\|_{\PP_X,q} = \{\EE_{X \sim \PP_X}|h(X)|^q\}^{1/q}$ for any function $h : \cX \to \RR$. Throughout, we assume the sample size satisfies $n \geq 3$, ensuring that $\log n \geq 1$.

\section{Methodologies}
\label{sec:2}

\subsection{Model setup and neural network}

Let $\{(Y_i, X_i)\}_{i = 1}^n$ be a collection of independent observations from the random variable $(Y, X) \in \RR \times [0,1]^d$.
Here, $Y$ is a real-valued response variable and $X$ is a $d$-dimensional vector of covariates that follows some distribution $\PP_{X}$. 
Given a quantile level $\alpha \in (0,1)$, we denote the conditional $\alpha$-level quantile and expected shortfall of $Y$ given the covariates $X$ as $q_\alpha(Y|X)$ and $e_\alpha(Y|X)$, respectively.  
The conditional ES is formally defined  as $e_\alpha(Y|X) = \EE\{Y|Y \leq q_\alpha(Y|X), X\}$.
We consider the following nonparametric joint quantile and ES regression model:
\#
q_\alpha(Y_i|X_i) = f_0(X_i)  ~\mbox{ and }~ e_\alpha(Y_i|X_i) = g_0(X_i), \label{model:Q.ES}
\#
where $f_0, g_0 : [0,1]^d \to \RR$ are two unknown functions satisfying $\PP\{ Y\leq f_0(X)| X = \bx \} = \alpha$ and 
$$
    g_0(\bx) = \alpha^{-1} \EE [  Y \mathbbm{1}\{ Y \leq f_0(X) \} | X = \bx ] , \quad   \mbox{ for }~  \bx \in [0, 1]^d.
$$
The primary objective is to propose a fully nonparametric estimator $\hat g$ of the ES function $g_0$, which also depends on the unknown quantile function $f_0$.

To provide insight into the relationship between the quantile and ES regression functions, we present illustrative examples along with their corresponding functions for simple models

\begin{example} [Location-scale model]
Quantile regression is often applied to capture heterogeneity in the predictors at different quantile levels of the response distribution, which can be caused by heteroscedastic variance. As a typical heteroscedastic model, we consider a location-scale model of the form $Y = h_1(X) + h_2(X)\cdot \eta$, where $\eta \in \RR$, independent of $X$, follows a continuous distribution  and $h_1, h_2 : [0,1]^d \to \RR$ with $h_2(\bx) \geq 0$ for $\bx \in [0,1]^d$.
Since $h_2$ is non-negative, the conditional quantile and ES functions are 
\$
f_0(\bx) = h_1(\bx) + h_2(\bx)\cdot q_\alpha(\eta) ~~\mbox{ and }~~ g_0(\bx) = h_1(\bx) + h_2(\bx) \cdot e_\alpha(\eta),
\$
respectively. When $h_2$ is a constant function, this reduces to a homogeneous model, in which case $f_0$ and $g_0$ only differ by a constant.
If there is heterogeneity, the difference between the two functions is affected by the heterogeneity in the predictors.
The joint linear model \citep{DB2019} assumes $h_1(X) = X^\T \beta_0$ and $h_2(X) = X^\T \theta_0$, where $\beta_0, \theta_0 \in \RR^d$ are such that $X^\T \theta_0 \geq 0$ almost surely.
\end{example}

\begin{example} [Quantile regression process]
Another widely used model is the nonseparable model of the form $Y=f_0(X, U)$ \citep{CIN2007}, where $f_0$ is strictly increasing in its second argument, and $U\sim {\rm Uniform}(0, 1)$ is an unobserved random variable that is independent of $X$.
Then, for each $u \in (0,1)$, the conditional $u$-th quantile of $Y$ given $X = \bx$ is $f_0(\bx, u)$ and the collection $\{f_0(\cdot, u) : u \in (0,1)\}$ is referred to as a quantile regression process.
Given $\alpha \in (0, 1)$, it is easy to see that 
$$
    g_0(\bx) = e_{\alpha}(Y|X=\bx) = \frac{1}{\alpha} \int_0^\alpha f_0(\bx, u) {\rm d}u .
$$
Of particular interest is the semiparametric model, $f_0(\bx, u) = \bs(\bx)^\T \beta_0(u)$, where $\bs(\bx)\in \RR^m$ is an $m$-dimensional series representation of $\bx \in \RR^d$. This implies that $g_0(\bx) = \bs(\bx)^\T \theta_0$ with $\theta_0 = \alpha^{-1} \int_0^\alpha \beta_0(u) {\rm d} u \in \RR^m$.
\end{example}

The estimation of the nonlinear conditional quantile function $f_0$ is self-contained, and has been extensively studied in the literature.
Motivated by the recent success of deep learning, we construct nonparametric estimators for joint quantile and ES regression using deep neural networks.

Specifically, we construct nonparametric estimators using truncated fully-connected deep neural networks with the rectified linear unit (ReLU) activation function, denoted as $\sigma(\cdot) = \max(\cdot, 0)$. These networks are succinctly referred to as truncated deep ReLU neural networks throughout the paper.
We start with a brief introduction to the structure of a fully-connected DNN. 
Let $L>0$ and $N>0$ be the depth and width parameters, respectively.  
Define a class of deep ReLU neural networks, $\cF_{{\rm DNN}}(d, L, N)$, which consists of functions $f : \RR^d \to \RR$ that can be expressed as $f(\bx) = \cL_{L +1} \circ \sigma \circ \cL_{L} \circ \sigma \circ \cdots \cL_2 \circ \sigma \circ \cL_1(\bx)$. Each $\cL_l$ is an affine transformation, i.e., $\cL_l(\bx) = W_l \bx + b_l$, where $W_l \in \RR^{d_l \times d_{l-1}}$ is the weight matrix, $b_l \in \RR^{d_l}$ is the bias vector, and $(d_0, d_1, \ldots, d_L, d_{L+1}) = (d, N,\ldots, N, 1)$ is the width vector of layers.
When $\bx$ is a vector, the ReLU function $\sigma(\bx)$ is defined by applying $\sigma(\cdot)$ to each coordinate of $\bx$. 
Next, for any $M > 0$, we define a truncated ReLU neural network as
\$
\cF_{{\rm DNN}}(d,L, N, M) = \cT_M\cF_{{\rm DNN}}(d , L, N) = \{  \cT_M h : h \in \cF_{{\rm DNN}}(d, L, N)\},
\$
where the truncated function $\cT_M h$ is given by $(\cT_M h)(\bx) = {\rm sgn}\{h(\bx)\}( |h(\bx)| \wedge M)$. 
Here, we focus exclusively on the class of uniformly bounded neural networks, as we assume that both $f_0$ and $g_0$ are uniformly bounded in Section~\ref{sec:3}.
Moreover, the use of uniformly bounded nonparametric estimators is common in the nonparametric regression literature for theoretical convenience (see, e.g., \cite{GKKW2002}).

\subsection{Tail-robust nonparametric ES regression}
\label{sec:2.2}

In this section,  we introduce a nonparametric two-step ES regression estimator using deep ReLU neural networks, without the need for sample splitting. We will discuss the impact of sample splitting and the absence thereof in Remark~\ref{rmk:sample.splitting}, following the presentation of the theoretical properties of the resulting ES estimator. In the first step, we define the \underline{D}eep \underline{Q}uantile \underline{R}egression (DQR) estimator within the class $\cF_n$ of truncated ReLU neural networks as  
\#
\hat f_n \in \argmin_{f \in \cF_n} \bigg\{ \hat \cQ_{\alpha}(f) := \frac{1}{n}\sn \rho_{\alpha}(Y_i - f(X_i)) \bigg\} , \label{def:deep.quantile.regression}
\#
where $\rho_\alpha(u) = \{\alpha - \mathbbm{1}(u < 0)\}u$ is the check function \citep{KB1978}. The convergence rate of $\hat f_n$ (in high probability) will be presented in Section~\ref{sec:deep.quantile}.
Our results complement those obtained in \cite{PTC2022} and \cite{SJLHH2024} by employing different proof techniques and leveraging new approximation results in Proposition~\ref{prop:approx.error}.
 
Turning to the estimation of $g_0$, recall from the definition of ES that $g_0(\bx) = \alpha^{-1}  \EE [  Y \mathbbm{1}\{ Y \leq f_0(X) \} | X = \bx ]$. Let $\epsilon := Y - f_0(X)$ be the quantile regression error, which satisfies $\PP(\epsilon \leq 0 | X) = \alpha$, and denote its negative part by $\epsilon_- = \min(\epsilon, 0)$. Then, the conditional ES function can equivalently be written as 
$$
    g_0(\bx) =  \frac{1}{\alpha} \EE ( \epsilon_- | X =\bx)  + f_0(\bx)   = \frac{1}{\alpha}\EE \{   \epsilon_- + \alpha f_0(X)| X =\bx \}.
$$
In light of this, for each $f: [0, 1]^d \to \RR$, we define the surrogate response variable
\#
Z_i(f) :=  \min\{ Y_i - f(X_i)  , 0  \}   + \alpha f(X_i) . \label{def:z.definition}
\#
When $f= f_0$, the oracle response $Z_i(f_0) = \epsilon_- + \alpha f_0(X_i)$ satisfies $\EE\{ Z_i (f_0) | X_i \} = \alpha g_0(X_i)$. By plugging-in $f=\hat{f}_n$, we propose the following two-step estimator
\#
& \hat g_n \in \argmin_{g \in \cG_n} \hat \cR(\hat f_n, g) ,   ~~\mbox{ where }~  \hat \cR( f, g) :=  \frac{1}{2n} \sn  \{Z_i( f) - \alpha g(X_i)  \}^2, \label{def:deep.ES.estimator}
\#
and $\cG_n$ is a class of truncated ReLU neural networks. We refer to $\hat g_n$ as the \underline{D}eep least squares \underline{ES} regression (DES) estimator.

The two-step estimator $\hat g_n$ as defined in~\eqref{def:deep.ES.estimator} can be regarded as a nonparametric least squares estimator (LSE) with response variables $Z_i(\hat f_n)$ generated nonparametrically. The underlying model is $Z_i(f_0) = \alpha g_0(X_i) + \omega_i$, where $\omega_i = \epsilon_{i,-} - \EE(\epsilon_{i,-} | X_i)$ and $\epsilon_{i,-}=\min(\epsilon_i, 0)$ is the negative part of the quantile regression error $\epsilon_i := Y_i - f_0(X_i)$. 

Due to the sensitivity of the quadratic loss to outliers, the performance of LSE deteriorates when $\omega_i$ has heavy tails, which correspond to the left tails of $Y_i$. From a non-asymptotic perspective, the $L_2$-error of the LSE exhibits an exponential-type deviation (high probability) bound under light-tailed noise distributions, while it only demonstrates a polynomial-type deviation bound under heavy-tailed distributions. Furthermore, in contrast to the parametric setting where LSEs achieve the same convergence rates in terms of mean squared error (MSE) under both (exponentially) light-tailed errors and errors with bounded $p$-th ($p > 1$) moments, recent studies have shown that heavy-tailed errors can degrade the convergence rate of nonparametric LSEs, resulting in a slower convergence rate~\citep{HW2019,KP2022,FGZ2022}. Therefore, the LSE $\hat g_n$ may exhibit a slower convergence rate when the noise follows a heavy-tailed distribution.

To address this issue, we propose an alternative approach by replacing the quadratic loss with a robust loss function that exhibits both global Lipschitz continuity and local quadratic behavior near 0, ensuring insensitivity to heavy-tailed noises. Specifically, we employ the Huber loss~\citep{H1964}, defined as
\#
\ell_\tau(u) := \frac{1}{2} u^2 \mathbbm{1}(|u|\leq \tau) + (\tau|u| - \tau^2/2 )  \mathbbm{1}(|u| > \tau)  .\label{def:Huber.loss}
\#
Here, $\tau>0$ is a robustification parameter that separates
its quadratic and linear components. 
Then, given an initial estimator $\hat f_n$ of $f_0$, a nonparametric robust ES regression estimator is defined as
\#
&\hat g_{n,\tau} \in \argmin_{g \in \cG_n} \hat \cR_\tau(\hat f_n, g), ~~\mbox{ where }~~ \hat \cR_\tau(f , g) := \frac{1}{n} \sn \ell_\tau\big(Z_i(f) - \alpha g(X_i)\big) . \label{def:Huber.ES.estimator}
\#
We refer to  $\hat g_{n,\tau}$ as the \underline{D}eep \underline{R}obust \underline{ES} regression (DRES) estimator.

The choice of the robustification parameter $\tau$ plays a crucial role in achieving a balance between robustness and bias. To investigate the effect of employing the Huber loss, we define the global minimizer of the population Huber risk as
\#
g_{0,\tau} \in \argmin_{ g:\,\mathbb{R}^d \to \mathbb{R} } \cR_\tau(f_0,g) := \EE \ell_\tau\big(Z_i(f_0) - \alpha g(X_i)\big), \label{def:population.minimzer}
\#
where the minimization is taken over all measurable functions $g$. Recalling $\epsilon = Y - f_0(X)$ and $\epsilon_- = \min(\epsilon, 0)$, we have
\$
\ell_\tau \big(\alpha f_0(X) + \{Y - f_0(X)\}\mathbbm{1}\{Y \leq f_0(X)\} - \alpha g_0(X) \big) = \ell_\tau\big(\epsilon_- - \EE(\epsilon_-|X) \big).
\$
Note that the zero-mean random variable $\epsilon_- - \EE(\epsilon_-|X)$ is generally asymmetric (with respect to zero), particularly left-skewed, which leads to a deviation between $g_{0,\tau}$ and $g_0$. This represents the approximation error or bias incurred due to robustification. The following proposition provides an upper bound for this robustification bias, defined as $\|g_{0,\tau} - g_0\|_2$, when $\epsilon_-$ has a finite $p$-th moment for some $p > 1$.

\begin{cond} [Moment conditions]
\label{cond:moment.condition}
The negative part of the quantile residual, $\epsilon_- = \min(\epsilon, 0)$, has uniformly bounded (conditional) $p$-th central moments for some $p > 1$, that is, there exists $\nu_p > 0$ such that $\EE\{|\epsilon_- - \EE(\epsilon_-|X)|^p|X\} \leq \nu_p$ almost surely.
\end{cond}

\begin{proposition} \label{prop:Huber.bias}
Assume that the quantile residual $\epsilon$ satisfies Condition~\ref{cond:moment.condition} for some $p > 1$.
Then the global minimizer $g_{0,\tau}$ defined in \eqref{def:Huber.ES.estimator} is unique up to sets of probability zero with respect to $X$, and satisfies
\$
\alpha \|g_{0,\tau} - g_0 \|_\infty \leq  2 \nu_p   \tau^{1-p},
\$
provided that $\tau \geq (4\nu_p)^{1/p}$ for $1 < p < 2$, or $\tau \geq (4\nu_2)^{1/2}$ for $p \geq 2$. 
\end{proposition}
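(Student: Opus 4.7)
The plan is to reduce the minimization to a pointwise M-estimation problem and control the minimizer via a first-order condition. Writing $W := \epsilon_- - \EE(\epsilon_- \mid X)$, which is conditionally mean zero given $X$, we have $Z(f_0) - \alpha g(X) = W + \alpha\{g_0(X) - g(X)\}$. Since $g$ is unrestricted in \eqref{def:population.minimzer}, the optimization decouples across $x \in [0,1]^d$:
\$
g_{0,\tau}(x) = g_0(x) - t^*(x)/\alpha, \quad t^*(x) \in \argmin_{t\in\RR}\varphi_x(t), \quad \varphi_x(t) := \EE[\ell_\tau(W+t) \mid X = x].
\$
Existence of $t^*(x)$ follows from coercivity and convexity of $\ell_\tau$; uniqueness up to $\PP_X$-null sets follows from strict convexity of $\varphi_x$ near $t=0$, which holds because $\varphi_x''(0) = \PP(|W|\leq \tau \mid X=x) \geq 1 - \nu_p/\tau^p > 0$ under the stated threshold, via Markov's inequality.

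Next, I would extract the first-order condition $\varphi_x'(t^*(x)) = \EE[\psi_\tau(W+t^*(x)) \mid X=x] = 0$, where $\psi_\tau = \ell_\tau'$ is the Huber score. Using $\EE(W\mid X) = 0$ and the identity $\tau\,{\rm sgn}(W) - W = -{\rm sgn}(W)(|W| - \tau)$ on $\{|W|>\tau\}$, the bias at the origin admits the simple estimate
\$
|\varphi_x'(0)| \leq \EE[|W|\mathbbm{1}(|W|>\tau) \mid X=x] \leq \nu_p\tau^{1-p},
\$
by applying $|W| \leq |W|^p\tau^{1-p}$ on the event $\{|W|>\tau\}$ together with Condition~\ref{cond:moment.condition}. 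Setting $T := 2\nu_p\tau^{1-p}$ and assuming WLOG $\varphi_x'(0) < 0$ (so $t^*(x) > 0$), integrating $\varphi_x''$ yields the identity
\$
|\varphi_x'(0)| = \int_0^{t^*(x)}\PP(|W+s|\leq\tau \mid X=x)\,ds,
\$
and by monotonicity of $\varphi_x'$ it suffices to verify $\varphi_x'(T) \geq 0$ to conclude $t^*(x) \leq T$.

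The main obstacle is this concluding probability-bookkeeping step, and it is where the piecewise hypothesis on $\tau$ enters. For $1 < p < 2$, the threshold $\tau\geq(4\nu_p)^{1/p}$ forces $T\leq\tau/2$; combined with the inclusion $\{|W+s|>\tau\}\subseteq\{|W|>\tau-s\}$ for $s\geq 0$, this localizes the analysis to $|W|$ bounded away from zero, where the $p$th-moment Markov inequality gives the required control $\int_0^T\PP(|W+s|\leq\tau \mid X)\,ds \geq T/2$. For $p\geq 2$, one instead invokes Lyapunov's inequality $\nu_2\leq \nu_p^{2/p}$ and applies Chebyshev directly to $W+s$: because $\EE(W\mid X)=0$, one has $\EE[(W+s)^2\mid X] = \nu_2 + s^2 \leq 2\nu_2$ for $|s|\leq\sqrt{\nu_2}$, and the threshold $\tau\geq(4\nu_2)^{1/2}$ is chosen precisely so that $T\leq\sqrt{\nu_2}$ and $\PP(|W+s|>\tau\mid X)\leq 2\nu_2/\tau^2\leq 1/2$ throughout this interval. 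Combining either regime of probability bound with the integral identity yields $\alpha|g_{0,\tau}(x) - g_0(x)| = |t^*(x)| \leq T = 2\nu_p\tau^{1-p}$ uniformly in $x$, as claimed.
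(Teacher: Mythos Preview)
Your reduction to a pointwise problem and the first-order-condition framework are correct and mirror the paper's setup. The gap is in the final ``probability-bookkeeping'' step: the direct Markov/Chebyshev route does not deliver the required lower bound on $\varphi_x''$ under the stated thresholds. Concretely, for $1<p<2$ you correctly obtain $T\le\tau/2$, but the inclusion $\{|W+s|>\tau\}\subseteq\{|W|>\tau-s\}$ followed by Markov yields only $\PP(|W+s|>\tau\mid X)\le 2^p\nu_p/\tau^p\le 2^{p-2}$, which exceeds $1/2$ for every $p>1$; hence $\int_0^T\PP(|W+s|\le\tau\mid X)\,ds\ge T/2$ does not follow. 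For $p>2$, the claim $T\le\sqrt{\nu_2}$ is unjustified: $T=2\nu_p\tau^{1-p}$ depends on $\nu_p$, and there is no upper bound on $\nu_p$ in terms of $\nu_2$, so the threshold $\tau\ge(4\nu_2)^{1/2}$ alone does not force $T$ into the interval where your Chebyshev bound applies. Your scheme would go through with a slightly larger threshold (roughly $\tau\ge(2^{p+1}\nu_p)^{1/p}$ in the first regime), but not with the constants in the proposition.

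The paper's proof sidesteps this difficulty by a convexity trick rather than an a~priori localization. Writing $L_x(a)=\EE[\ell_\tau(\omega+\alpha g_0(x)-\alpha a)\mid X=x]$, it applies the mean value theorem between the minimizer $a_\tau^*$ and the target $a^*=g_0(x)$, so that the second derivative is evaluated at some intermediate $\tilde a$. The key observation is that convexity gives $L_x(\tilde a)\le L_x(a^*)$, and $L_x(a^*)=\EE[\ell_\tau(\omega)\mid X=x]$ is explicitly bounded by $\mu_p\tau^{\max(2-p,0)}$ with $\mu_p=\nu_p$ for $p<2$ and $\mu_p=\nu_2$ for $p\ge2$. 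From this upper bound on the Huber loss at $\tilde a$ one extracts $\PP(|\tilde\omega|>\tau\mid X)\le 2\mu_p\tau^{-\min(p,2)}\le1/2$ directly, with no need to know where $\tilde a$ sits. The same device handles uniqueness: strict convexity at $t=0$ alone (your argument) does not rule out a second minimizer far away, but bounding $L_x$ at any point between two candidate minimizers by $L_x(a^*)$ again yields $L_x''\ge\alpha^2/2$ there.
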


Proposition~\ref{prop:Huber.bias} reveals that the upper bound on bias depends on the robustification parameter $\tau$ and the moment index $p$. Thus, to mitigate the bias induced by using the Huber loss, it is necessary to employ a sufficiently large $\tau$. However, a large value of $\tau$ will increase the statistical error, as demonstrated in Theorem~\ref{thm:oracle.type.DHES}. Therefore, it is crucial to carefully calibrate the value of $\tau$ in order to strike a balance between robustness and bias.

\begin{remark}[Examples of heavy-tailed distributions]
    In Condition 2.1, we assume that the noise $\omega := \epsilon_- - \EE(\epsilon_- \mid X)$ has a bounded (conditional) $p$-th moment for some $p > 1$, rather than imposing a light-tailed condition such as sub-Gaussianity. Many heavy-tailed distributions commonly used to model or approximate real-world data satisfy this condition. To illustrate, consider the location–scale model in Example 1. 
    In this setting, bounded $p$-th central moments of $\epsilon_-$ are equivalent to bounded $p$-th moments of $(\eta - q_\alpha(\eta))_-$, provided that $h_2(X)$ is uniformly bounded. Concrete examples of such heavy-tailed distributions are provided below.
    \begin{enumerate}
        \item[(i)] If $\eta \sim t_k$ with $k > 1$, then $\omega$ has bounded $p$-th central moments for $1 < p < k$.  
        \item[(ii)] If $\eta$ follows a Pareto (Type I) distribution with minimum value $s_m > 0$ and shape parameter $k > 1$, i.e., $\PP(\eta > s) = (s_m/s)^k$ for $s > 0$, then $\omega$ has bounded $p$-th central moments for $1 < p < k$.  
        \item[(iii)] If $\eta$ follows a Fr{\'e}chet distribution with $\PP(\eta \leq s) = e^{-s^{-k}}$ for $s > 0$ and $k > 1$, then $\omega$ has bounded $p$-th central moments for $1 < p < k$.  
        \item[(iv)] If $\eta$ follows a Burr Type XII (Singh–Maddala) distribution \citep{singh1976afunction} with shape parameters $k_1, k_2 > 1$ and density $f(x) = k_1 k_2 x^{k_1 - 1}/(1 + x^{k_1})^{k_2 + 1}$ for $x > 0$, then $\omega$ has bounded $p$-th central moments for $1 < p < k_1 k_2$. 
    \end{enumerate}
\end{remark}

\subsection{Connections to  existing methods}
\label{sec:2.3}

We begin with a brief review of the joint loss minimization framework introduced in \cite{FZ2016}.
Consider a class of strictly consistent joint loss functions for the pair of quantile and ES (with slight modifications)
\begin{align}
    L_\alpha(q, e; Y)  & = \{  \alpha - \mathbbm{1}(Y\leq q) \} \{ G_1(Y) - G_1(q) \} \label{joint.loss} \\
    &~~~~ -   \{ \underbrace{ \alpha q +  (Y - q)\mathbbm{1}(Y\leq q) -\alpha e }_{=: S_\alpha(q, e; Y)} \}  G_2(e) /\alpha  - \mathcal{G}_2(e) , \quad e\leq q , \nn
\end{align}
where $G_1$ is an increasing and integrable function, $\mathcal{G}_2$ is a three-times continuously differentiable function such that both $G_2 = \mathcal{G}_2'$ and $G_2'$ are strictly positive. 
Given $n$ independent samples $\{ (Y_i, X_i) \}_{i=1}^n$, a nonparametric estimator of the function pair $(f_0, g_0)$ can be estimated as
\begin{align}
    (\tilde{f}_n, \tilde{g}_n ) \in \argmin_{ f \in \cF_n, \, g\in \cG_n } \, \frac{1}{n} \sn L_\alpha\big(  f(X_i), g(X_i)  ; Y_i \big) ,  \label{joint.mest}
\end{align}
where $\cF_n$ and  $\cG_n$ are pre-determined classes of functions $[0, 1]^d \to \RR$. 
Building on this, \cite{FMW2023} employs the aforementioned consistent joint loss functions with a shared neural network for quantile and ES estimation. However, the non-differentiability and non-convexity of the objective function in \eqref{joint.mest} introduce numerical challenges, particularly for complex neural network architectures and large-scale data. Moreover, the theoretical properties of this approach remain undeveloped, and how to adapt the loss function for heavy-tailed error distributions remains unclear.

In the context of the joint conditional quantile and ES model \eqref{model:Q.ES}, our primary objective is to estimate the conditional ES function $g_0$, treating the conditional quantile function $f_0$ as a nuisance function parameter. While the objective function $(q, e) \to L_\alpha(q, e; Y)$ may lack desirable properties, \cite{B2020} observed that
\$
    \frac{\partial^2  \EE\{ L_\alpha( f(X), g(X) ; Y |X ) \}   \}}{\partial q\, \partial e} \bigg|_{f = f_0} 
    & = -  G_2'(g(X)) \frac{\partial \EE \{  S_\alpha(f(X), g(X); Y) /\alpha |X \} }{\partial q}\bigg|_{f = f_0}    \\
&    = G_2'(g(X)) \frac{F_{Y|X}( f(X)) - \alpha }{\alpha} \bigg|_{f = f_0} = 0  
\$
for any function $g$ as long as the conditional distribution function of $Y$ given $X$, denoted by $F_{Y|X}$, is continuous. Equivalently, we have
\#
\partial_q \EE\{S_{\alpha}(q,e;Y)|X\}\big|_{q = f_{0}(X)} =   \alpha - F_{Y|X}(f_0(X)) = 0. \label{orthogonality}
\#
This implies that the partial derivative of the score function $(q, e) \to \mathbb{E} \{ S_\alpha(q, e; Y) | X \}$ with respect to $q$, evaluated at the true conditional quantile function, is zero. Moreover, based on the definition of (conditional) expected shortfall, this score function satisfies the moment condition $\mathbb{E}\{ S_{\alpha}(f_0(X), g_0(X); Y) | X \} = 0$. Due to this orthogonality property, both two-step estimators proposed in \cite{B2020} and \cite{PW2023} exhibit local robustness to prior quantile estimation under a joint linear model. Our method accommodates complex nonlinear structures and, through the use of Huber loss with an adaptively chosen hyper-parameter, achieves robustness against heavy-tailed errors. The estimation in \cite{PW2023} is based on a more complex objective function. However, the effectiveness of its nonparametric extension using neural networks remains unclear from both statistical and computational perspectives. On the other hand, the use of surrogate responses makes it more convenient to obtain a robust estimator.

\subsection{Practical Implementation}
\label{sec:2.4}

We implement all neural network based estimators in \texttt{Python} using the \texttt{PyTorch} library.
We first obtain a deep QR estimator $\hat f_n$ by solving \eqref{def:deep.quantile.regression}, and then  compute the deep ES regression estimator $\hat g_{n,\hat \tau}$ by solving~\eqref{def:Huber.ES.estimator}.
The estimator $\hat g_{n,\hat \tau}$ involves a robustification parameter $\hat \tau = \hat \tau(n)$, which we select using a data-driven approach guided by the theoretical results from the next section.

Recall from Section~\ref{sec:2.2} that $\epsilon$ is the quantile regression residual and that $\epsilon_- = \min(\epsilon, 0)$. Assume that the (conditional) variance of $\epsilon_-$ is bounded by some positive constant $\nu_2$. 
In light of Theorem~\ref{thm:DHES.with.DQR}, ideally,  $\hat\tau$ should be selected to be of order $\nu_2^{1/2}(n/\log n)^{2\gamma^*/(6\gamma^* + 2)}$. However, such a choice is practically infeasible because the intrinsic smoothness parameter $\gamma^*$ defined in~\eqref{def:gamma} is unknown. As a trade-off, in practice, we replace the exponent $2\gamma^*/(6\gamma^* + 2)$ by 0.3, which serves as a reasonable approximation provided that $\gamma^*$ is sufficiently large. On the other hand, we use the sample variance estimator of the fitted negative QR residuals $\{\hat \epsilon_{i,-}:= \min\{Y_i - \hat f_n(X_i), 0\}\}_{i = 1}^n$, denoted by $\hat \nu_2$, as a proxy for the unknown noise scale $\nu_2$. Consequently, we use the rule-of-thumb robustification parameter $\hat \tau = \hat \nu_2^{1/2} (n/\log n)^{0.3}$ throughout the numerical studies in Section~\ref{sec:4}.

We remark that the true quantile and ES functions satisfy a monotonicity condition such that, for any $\bx$, 
\$
g_0(\bx) = \alpha^{-1}\EE(\epsilon_-|X = x) + f_0(\bx) \leq f_0(\bx).
\$
Although this monotonicity holds for the true target functions, even consistent estimators may violate it because of finite-sample variability. Such violations can become particularly evident when estimating quantile and ES functions at multiple levels $0 < \alpha_1 < \cdots < \alpha_K < 1$ for some $K \geq 1$, potentially deteriorating estimator performance--analogous to the well-known crossing issue encountered in the simultaneous estimation of multiple quantile functions. Specifically, for any $0 < \alpha < \alpha^\prime < 1$, we have $q_\alpha(Y|X) \leq q_{\alpha^\prime}(Y|X)$, which implies
\$
\alpha^\prime e_{\alpha^\prime}(Y|X) & = \EE[Y\mathbbm{1}\{Y \leq q_{\alpha^\prime}(Y|X)\}|X] \\
& = \EE[Y\mathbbm{1}\{Y \leq q_\alpha(Y|X)\}|X] + \EE[Y\mathbbm{1}\{q_\alpha(Y|X) < Y \leq q_{\alpha^\prime}(Y|X)\}|X] \\
& = \alpha e_\alpha(Y|X) + (\alpha^\prime - \alpha) \EE[Y|X, q_\alpha(Y|X) < Y \leq q_{\alpha^\prime}(Y|X)] \\
& \geq \alpha e_\alpha(Y|X) + (\alpha^\prime - \alpha) q_\alpha(Y|X) \geq \alpha^\prime e_\alpha(Y|X).
\$
Hence, $e_{\alpha^\prime}(Y|X) \geq e_\alpha(Y|X)$. However, when each ES function is estimated independently at different quantile levels, the resulting estimators may fail to preserve this monotonicity. To address this issue, inspired by the recent works of \citet{PTC2022} and \citet{shen2025deep} on non-crossing deep quantile estimation, we propose a variant of the DRES method that enforces monotonicity both between the quantile and ES estimators and across ES estimators at multiple levels.

Let $q_{\alpha_k}(Y\mid X) = f_{0,k}(X)$ and $e_{\alpha_k}(Y\mid X) = g_{0,k}(X)$. Given the true quantile regression function $f_{0,k}$, estimating $g_{0,k}$ is equivalent to estimating $h_{0,k} = g_{0,k} - f_{0,k}$, where $h_{0,k}(\bx) = \alpha_k^{-1}\EE\{\min(\epsilon_k, 0)|X = x\}$, and $\epsilon_k = Y - f_{0,k}(X)$ denotes the QR residual at level $\tau_k$. We are now prepared to introduce the non-crossing ES estimators. As a preliminary step, we construct non-crossing QR estimators following the approach of \citet{shen2025deep}, which guarantees that the estimated quantile functions are monotonic and do not cross. Specifically, for $K$ quantile levels, let $v(\cdot)$ denote the \textit{mean function} and $w = (w_1, \ldots, w_K)^\top$ the vector of \textit{pre-activated gap functions}. Define
\$
\iota(u) = \mathbbm{1}(u \geq 0)\cdot u  + \mathbbm{1}(u > 0)(e^u - 1) + 1.
\$
The quantile function at level $\tau_k$ is then constructed as
\$
f_k = v - \bar w + \sum_{j = 1}^k \iota(w_j), 
\$
where $\bar w = K^{-1} \sum_{k = 1}^K (K + 1 - k)\iota(w_k)$.
Since $\iota(\cdot) > 0$, it follows that $f_{k+1} - f_k = \iota(w_{k+1}) > 0$, thereby ensuring the monotonicity of the quantile functions. Based on this construction, we define the class of non-crossing quantile functions as
\$
\mathcal{F}_{\rm NC} = \mathcal{F}_{\rm NC}(d, L, N, M, K) = \bigg\{(f_1, \ldots, f_K) & : f_k = v - \bar w + \sum_{j = 1}^k \iota(w_j) ~\mbox{ for } 1 \leq k \leq K, \\
& v, w_j \in \mathcal{F}_{{\rm DNN}}(d, L, N, M) ~\mbox{ for }~ 1 \leq j \leq K \bigg\}.
\$
The corresponding estimator is defined as
\$
(\hat f_{1, {\rm NC}}, \ldots, \hat f_{K, {\rm NC}})  \in \argmin_{(f_1, \ldots, f_K) \in \mathcal{F}_{\rm NC}} \frac{1}{nK} \sn \sum_{k = 1}^K \rho_{\alpha_k}(Y_i - f_k(X_i)).
\$

Next, to construct the non-crossing ES estimators, let $\tilde v$ and $(r_1, \ldots, r_K)^\T$ denote the \textit{ES mean} and \textit{ES pre-activated gap} functions, respectively. Analogous to the construction of the quantile functions, the ES function at level $\tau_k$ is defined as
\$
g_k = \tilde v - \bar r + \sum_{j = 1}^k \iota(r_j),
\$
where $\bar r  = K^{-1}\sum_{k = 1}^K (K + 1 - k)\iota (r_k)$. Since $\iota(\cdot)>0$, it follows that $g_{k+1} - g_k = \iota(r_{k+1}) > 0$, ensuring the monotonicity of the ES functions. Based on this construction, we define
\$
(\tilde g_1, \ldots, \tilde g_K)  \in \argmin_{(g_1, \ldots, g_K) \in \cF_{\rm NC}} \frac{1}{nK} \sum_{k = 1}^K \sn \ell_{\tau_k}(Z_i(\hat f_{k, {\rm NC}}) - \alpha g_k(X_i)).
\$
By definition, the resulting estimators satisfy $\tilde g_{k+1}(\bx) > \tilde g_k(\bx)$ for all $\bx \in \cX$.
However, it is possible that $\tilde g_k(\bx) > \hat f_{k, {\rm NC}}(\bx)$ for some $\bx$, thereby violating the joint monotonicity between the quantile and ES regression functions. 
To address this issue, we define $\hat g_{k, {\rm NC}} = \min\{ \tilde g_k, \hat f_k \}$ for $1 \leq k \leq K$.
By construction, we have $\hat g_{k, {\rm NC}} \leq \hat f_{k, {\rm NC}}$ for all $k$, and since both $\{\tilde g_{k, {\rm NC}}\}_{k = 1}^K$ and $\{\hat f_{k, {\rm NC}}\}_{k = 1}^K$ are monotone, the sequence $\{\hat g_k\}_{k = 1}^K$ is also non-decreasing in $k$.

\begin{remark} \label{rmk:upper.es}
In this section and throughout the rest of the paper, we focus on the conditional left/lower tail average $e_\alpha(Y|X) = \EE \{ Y | Y\leq q_\alpha(Y | X), X\}$ for some $\alpha \in (0, 1)$. 
If the conditional right/upper tail of the random variable $Y$ is of interest, we can consider the conditional right/upper ES regression denoted as $e_\alpha^+(Y|X):= \EE \{ Y | Y\geq q_{1-\alpha}(Y|X), X\}$.
Noting that $e_\alpha^+(Y|X) = -e_\alpha(-Y|X)$, to estimate $e_\alpha^+(Y|X)$ given $n$ data points $\{(Y_i, X_i)\}_{i = 1}^n$, we can apply the same method and implementation proposed in this section by considering the transformed data $\{(-Y_i, X_i)\}_{i = 1}^n$ and then negating the final estimator.
\end{remark}

\section{Statistical Theory}
\label{sec:3}

In this section, we analyze the statistical properties of the proposed nonparametric quantile and expected shortfall regression estimators using ReLU neural networks, with a focus on the latter. 
In the two-step approach, estimating ES involves the use of (surrogate) response variables that are not directly observable but need to be estimated from data in a preliminary step. 
The first challenge is characterizing their impact on the statistical properties of the ES estimator in the second stage. 
The second challenge arises when analyzing the robustified estimator for ES, even when the ``noise'' variable is heavy-tailed and skewed, despite having a zero conditional mean. 
In this case, even with the oracle surrogate response variables incorporated into the procedure, the existing results and techniques from \cite{FLM2021}, \cite{PTC2022}, and \cite{SJLHH2021} do not apply.

We begin by introducing the hierarchical interaction model \citep{BK2019} in Section~\ref{sec:function.class}, which defines the function class to which our target functions belong.
In Section~\ref{sec:deep.quantile}, we revisit the deep QR estimators given in \eqref{def:deep.quantile.regression} and examine their non-asymptotic statistical guarantees. Notably, we improve the existing results in the literature by employing different proof techniques and leveraging the new approximation result, Proposition~\ref{prop:approx.error}; see Remark~\ref{rmk:comparison.DQR} for a comprehensive comparison with two existing related works.
In Section~\ref{sec:generic.bound}, we derive a generic upper bound on the estimation error for the robust ES estimator defined in \eqref{def:Huber.ES.estimator}. 
Our focus is on the case where the noise distribution has a finite $p$-th moment ($p > 1$). 
We consider both the DRES and DES estimators with various configurations of deep ReLU neural networks, as well as any DQR estimator $\hat f_n$. 
We also derive non-asymptotic error bounds for the DRES estimators under light-tailed noise distributions. This demonstrates that using a proper robust estimator leads to minimal to no efficiency loss from a non-asymptotic perspective, in comparison to least squares estimators.
Finally, in Section~\ref{sec:error.bounds.ES}, we combine the results from Sections \ref{sec:generic.bound} and \ref{sec:deep.quantile} to establish the convergence rate of deep ES estimators when a DQR estimator is used to construct the surrogate responses.

\subsection{Function class}
\label{sec:function.class}

We start with the definition of H{\"o}lder smooth classes \citep{S1982}.

\begin{definition}[H{\"o}lder class of functions $\cH^\beta(\cX, M_0)$]
\label{def:holder}
Let $\beta = r + s$ for a nonnegative integer $r = \lfloor \beta \rfloor$ and $0 < s \leq 1$, where $\lfloor a \rfloor$ denotes the largest integer that is strictly smaller than $a  \in \RR$.
Given a subset $\cX \subseteq \RR^d$ and a constant $M_0 > 0$, a function $f : \cX \to \RR$ is called $(\beta, M_0)$-smooth on $\cX$ if for every $\balpha = (\alpha_1, \dots, \alpha_d)^\T \in \NN_0^d$ with $\sum_{j = 1}^d \alpha_j \leq r$, the partial derivative $\partial^{\balpha} f = (\partial f)/(\partial x_1^{\alpha_1}\cdots \partial x_d^{\alpha_d})$ exists and satisfies $\max_{\|\balpha\|_1 \leq r} \| \partial^{\balpha} f \|_\infty \leq M_0$ and  $\max_{\|\balpha\|_1 = r} \sup_{\bx_1 \neq \bx_2} |\partial^{\balpha} f(\bx_1) - \partial^{\balpha} f(\bx_2) | / \|\bx_1 - \bx_2\|_2^s  \leq M_0$, where $\|\balpha\|_1 = \sum_{j = 1}^d \alpha_j$. We then use $\cH^\beta(\cX, M_0)$ to denote collection of all $(\beta, M_0)$-smooth functions on $\cX$.
\end{definition}
Without loss of generality,  we assume $M_0 \geq 1$ throughout the paper. Note that the definition of H{\"o}lder class implies that if a function $f$ belongs to $\mathcal{H}^\beta(\mathcal{X}, M_0)$, then $f$ is bounded in magnitude by $M_0$. This can be derived by considering $\boldsymbol{\alpha} = \boldsymbol{0}$ in Definition~\ref{def:holder}.
Nonparametric estimation of a function within H{\"o}lder classes exhibits significantly slower convergence rates as the dimension $d$ becomes large.
For example, it has been well established that the minimax rate of convergence for estimating a mean regression function within $\mathcal{H}^\beta(\mathcal{X}, M_0)$ is of order $n^{-\beta/(2\beta + d)}$ \citep{S1982}. This phenomenon is commonly recognized as the curse of dimensionality. 

To circumvent the curse of dimensionality, we focus on functions that have a compositional structure, also known as the hierarchical interaction model \citep{BK2019,KL2021}. 
\begin{definition}[Hierarchical interaction model] 
\label{def:hierarchical.model}
Let $l,d \in \NN^+$, $M_0 \geq 1$ and $\cP$ be a subset of $[1,\infty) \times \NN^+$ with $\sup_{(\beta, t) \in \cP}(\beta \vee t) < \infty$. The hierarchical interaction model $\cH(d,l, M_0, \cP)$ is defined recursively as follows.
\begin{enumerate}
\item[(i)] We say that a function $h : \RR^d \to \RR$ satisfies the model $\cH(d, 1, M_0, \cP)$  if there exist some $(\beta, t) \in \cP$, $h_0 \in \cH^\beta(\RR^t, M_0)$ and $\{j_1, \dots, j_t \} \subseteq \{1,\dots, d\}$ such that $h(\bx) = h_0(x_{j_1}, \dots, x_{j_t})$ for all $\bx = (x_1, \ldots, x_d)^\T \in \RR^d$.

\item[(ii)] For $l > 1$, we say that a function $h$ satisfies the hierarchical interaction model $\cH(d, l, M_0, \cP)$ if there exist some $(\beta, t) \in \cP$, $h_0 \in \cH^{\beta}(\RR^t, M_0)$ and $u_1, \dots, u_t \in \cH(d, l-1, M_0, \cP)$ such that $h(\bx) = h_0(u_1(\bx), \dots, u_t(\bx))$ for all $\bx \in \RR^d$.
\end{enumerate}
\end{definition}

As discussed in~\cite{KL2021}, the hierarchical interaction model encompasses various well-known nonparametric and semiparametric models, including additive models \citep{S1985}, single index models \citep{HHI1993} and the projection pursuit \citep{FS1981}. 
Extensive research \citep{BK2019, KL2021, S2020} has established that the minimax optimal convergence rate for the hierarchical composition model is determined by the most challenging (least smooth) component within the composition. This challenging component is characterized by the quantity
\#
\gamma^* = \frac{\beta^*}{t^*}, ~\mbox{ where }~ (\beta^*, t^*) = \argmin_{(\beta, t) \in \cP} \frac{\beta}{t}.  \label{def:gamma}
\#
We refer to the ratio $\beta/t$ as the dimension-adjusted degree of smoothness.
Additionally, let $t_{\max} = \max_{(t,\beta) \in \cP}t$.

The following result provides an error bound for approximating functions within a hierarchical interaction model using truncated deep ReLU neural networks. 
For a given index set $\cP \subseteq [1,\infty) \times \NN^+$, recall the definition of $\gamma^*$ in \eqref{def:gamma}.

\begin{proposition}[Neural network approximation error for $\cH(d,l,M_0,\cP)$] \label{prop:approx.error}
Given a hierarchical interaction model $\cH(d,l, M_0, \cP)$, there exist universal constants $c_1$--$c_3$ such that, for any $L_0, N_0 \geq 3$ and a probability measure $\mu$ on $[0,1]^d$ that is absolutely continuous with respect to the Lebesgue measure, it holds
\$
\sup_{f_0 \in \cH(d,l,M_0, \cP)}\inf_{f^* \in \cF_{{\rm DNN}}(d, L, N, M_0)} \bigg\{\int_{[0,1]^d}|f^*(\bx) - f_0(\bx)|^2\mu({\rm d}\bx)\bigg\}^{1/2} \leq c_3 (L_0 N_0)^{-2\gamma^*},
\$
where $\gamma^*$ is defined in \eqref{def:gamma},
\#
L = c_1 \lceil L_0 \log L_0 \rceil ~\mbox{ and }~ N = c_2 \lceil N_0 \log N_0 \rceil, \label{neural.network.construction}
\#
and $\lceil a \rceil$ denotes the smallest integer no less than $a \in \RR$.
Here, the constants $c_1$--$c_3$ depend on $t_{\max}$ polynomially.
\end{proposition}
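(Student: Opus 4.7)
The plan is to proceed by induction on the hierarchy depth $l$, reducing the problem at each level to a composition of one-step H\"older approximations. A useful structural observation is that, by definition, $\cP \subseteq [1,\infty)\times \NN^+$, so every component function in the hierarchy is at least Lipschitz; this lets approximation errors propagate additively through composition rather than being raised to a H\"older exponent smaller than one.

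For the base case $l = 1$, any target has the form $f_0(\bx) = h_0(x_{j_1},\ldots,x_{j_t})$ with $h_0 \in \cH^\beta([0,1]^t, M_0)$ for some $(\beta,t) \in \cP$. I would invoke an off-the-shelf deep ReLU approximation theorem (e.g.\ of Lu--Shen--Yang--Zhang type, or a refined Yarotsky construction) which guarantees, for any $L_0,N_0 \geq 3$, a network $\hat h_0 \in \cF_{{\rm DNN}}(t, c_1 L_0\log L_0, c_2 N_0\log N_0)$ with $\|\hat h_0 - h_0\|_\infty \lesssim (L_0 N_0)^{-2\beta/t} \leq (L_0 N_0)^{-2\gamma^*}$, using $\beta/t \geq \gamma^*$ and $L_0 N_0 \geq 1$. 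Restricting to the active coordinates is implemented by a trivial linear map absorbed into the first layer, and the $L^2(\mu)$ norm is controlled by the $L^\infty$ bound.

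For the inductive step, suppose the bound holds for all functions in $\cH(d,l-1,M_0,\cP)$, and write $f_0(\bx) = h_0(u_1(\bx),\ldots,u_t(\bx))$ with $h_0 \in \cH^\beta(\RR^t, M_0)$ and $u_j \in \cH(d,l-1,M_0,\cP)$. By the induction hypothesis each $u_j$ admits a truncated-network approximant $\hat u_j$ of depth $c_1 L_0\log L_0$ and width $c_2 N_0\log N_0$ with $L^2(\mu)$ error $O((L_0 N_0)^{-2\gamma^*})$; because the same construction can be carried out in $L^\infty$ (the inputs live in a compact cube), I get the matching sup-norm rate needed for composition. Choose $\hat h_0$ as in the base case on a slightly enlarged cube containing the range of $\hat u = (\hat u_1,\ldots,\hat u_t)$. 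Since $\beta \geq 1$, $h_0$ is Lipschitz with constant $\lesssim M_0\sqrt{t}$, so
\$
|f_0(\bx) - \hat h_0(\hat u(\bx))| \leq M_0 \sqrt{t}\, \max_{1 \leq j \leq t} |u_j(\bx) - \hat u_j(\bx)| + \|h_0 - \hat h_0\|_\infty,
\$
and each summand is $O((L_0 N_0)^{-2\gamma^*})$. Stacking the $\hat u_j$ in parallel and feeding the output into $\hat h_0$ produces a network whose depth grows additively by $c_1 L_0\log L_0$ and whose width scales by at most $t_{\max}$, both of which are absorbed into the universal constants.

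The final step is to apply the truncation $\cT_{M_0}$: since $|f_0| \leq M_0$ (taking $\balpha = \mathbf{0}$ in Definition~\ref{def:holder}), we have $|\cT_{M_0}\hat h - f_0| \leq |\hat h - f_0|$ pointwise, so truncation does not enlarge either the $L^\infty$ or $L^2(\mu)$ approximation error. The principal technical obstacle is the bookkeeping: to ensure the final network stays within depth $c_1 L_0\log L_0$ and width $c_2 N_0\log N_0$ after $l$ levels of composition, I must carefully reuse the logarithmic factors across levels rather than multiplying them, and show that the accumulated constants in depth, width, and the error bound all remain polynomial in $t_{\max}$ (and at worst exponential in the fixed hierarchy depth $l$). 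This in turn forces the use of a sufficiently sharp one-step theorem --- namely one that yields the product rate $(L_0 N_0)^{-2\beta/t}$ rather than the cruder $L^{-2\beta/t}$ or $N^{-2\beta/t}$ rates from early ReLU approximation work --- because only the former is stable under the $l$-fold composition.
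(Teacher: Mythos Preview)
Your inductive composition argument is the right skeleton, but there is a genuine gap at the step where you say ``because the same construction can be carried out in $L^\infty$ \ldots\ I get the matching sup-norm rate needed for composition.'' The one-step ReLU approximation theorem you invoke (Lu--Shen--Yang--Zhang / JSLH2023 type) does \emph{not} deliver a uniform bound on all of $[0,1]^t$ with width polynomial in $t$: it only guarantees $|\hat h_0 - h_0| \lesssim (L_0 N_0)^{-2\beta/t}$ on $[0,1]^t \setminus \Omega(B,\delta)$, where $\Omega(B,\delta)$ is a union of thin slabs of total Lebesgue measure $O(\delta)$. Upgrading this to a genuine $L^\infty$ bound requires enlarging the width by a factor that is \emph{exponential} in $t$ (roughly $3^{t}$; see the paper's Remark~\ref{rmk:comparison.prefactor}). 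If you take that route, the constants $c_2$ and $c_3$ end up depending on $t_{\max}$ exponentially, contradicting the very feature the proposition is advertising.

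The paper's proof avoids this by working with the ``good-set'' approximation throughout and never claiming a global sup-norm bound. At level $i$ it chooses the slab thickness $\delta_j^{(i)}$ so small that the preimage (under the already-constructed inner networks $\hat h^{(i-1)}$) of the bad set has Lebesgue measure at most $\delta_0/l$; these preimages accumulate into a set $\Xi_0 \subseteq [0,1]^d$ with Lebesgue measure $< \delta_0$. The Lipschitz propagation you wrote down is then valid for all $\bx \notin \Xi_0$, giving a pointwise bound off $\Xi_0$. Finally, absolute continuity of $\mu$ is used to make $\mu(\Xi_0)$ arbitrarily small, and on $\Xi_0$ the truncated network and $f_0$ are both bounded by $M_0$, so the $L^2(\mu)$ error picks up only an $O(\mu(\Xi_0)^{1/2})$ contribution that can be sent to zero. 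This is precisely why the proposition is stated in $L^2(\mu)$ with $\mu \ll$ Lebesgue rather than in $L^\infty$, and why your sup-norm shortcut does not go through while keeping the constants polynomial in $t_{\max}$.
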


The above approximation result holds for general neural networks without imposing any structural assumptions. Proposition~\ref{prop:approx.error} demonstrates the validity of the approximation results across a wide range of neural networks, regardless of sparsity, boundedness of the network weights, or specific architectural characteristics, such as being thin and deep or wide and shallow.

\begin{remark} \label{rmk:comparison.prefactor}
In comparison to the result in~\cite{FGZ2022}, our approximation error bound features a polynomial dependence on $t_{\max}$ through the prefactor $c_3$. Specifically, our prefactor $c_3$ depends on $t_{\max}$ through the expression $t_{\max}^{\lfloor \beta_{\max} \rfloor + \beta_{\max}/2}(1 + M_0t_{\max}^{1/2})^{l-1}$, where $\beta_{\max} = \max_{(\beta, t) \in \cP} \beta$. 
In contrast, the prefactor of the approximation bound in Proposition 3.4 of \cite{FGZ2022} depends on $t_{\max}$ through $(\lfloor \beta_{\max} \rfloor + 1)^{t_{\max}}(1 + M_0t_{\max}^{1/2})^{l-1}$. 
Hence, our approximation error bound is more favorable when $t_{\max}$ is larger than $\beta_{\max}$, while still being comparable to the result of~\cite{FGZ2022} if $\beta_{\max}$ and $t_{\max}$ are of similar magnitudes. Nevertheless, it should be noted that Proposition~\ref{prop:approx.error} establishes an $L_2$ approximation error bound, while~\cite{FGZ2022} derived a uniform ($L_\infty$) bound. By applying a similar line of arguments in the proof of Proposition~\ref{prop:approx.error} combined with Corollary 3.1 of~\cite{JSLH2023}, we can derive an $L_\infty$-approximation error bound that also features a polynomial dependence on $t_{\max}$. 
However, this comes at the cost of requiring a larger network width $N$. In detail, our prefactor $c_2$ of the network width in Proposition~\ref{prop:approx.error} depends on $t_{\max}$ through the expression $(\lfloor \beta_{\max} \rfloor + 1)^2t_{\max}^{\lfloor \beta_{\max}\rfloor + l}$, whereas the prefactor of the network width required for $L_\infty$ bound will depend on $t_{\max}$ exponentially through $(\lfloor \beta_{\max} \rfloor + 1)^2t_{\max}^{\lfloor \beta_{\max}\rfloor + l}3^{t_{\max}}$. 
Therefore, if we employ these neural networks with enlarged network width to define an estimator, the error bound will exhibit exponential dependence on $t_{\max}$; see Theorem~\ref{thm:oracle.type.deep.quantile} and Theorem~\ref{thm:oracle.type.DHES}.
The exact values of $c_1$--$c_3$ are specified in the proof of Proposition~\ref{prop:approx.error}.
\end{remark}

\subsection{Error bounds for deep QR estimators}
\label{sec:deep.quantile}

In this section, we provide concentration bounds for the DQR estimator defined in \eqref{def:deep.quantile.regression}, which is the nonparametric QR estimator obtained through empirical risk minimization over truncated ReLU neural networks using the check loss. 
As is common in the QR literature, we begin by imposing certain regularity conditions on the conditional density function of $\epsilon$ given $X$.

\begin{cond} [Conditional density] \label{cond:conditional.density} 
\hspace{0.5\baselineskip}
\begin{itemize}
    \item[(i)] The conditional density function of $\epsilon = Y - f_0(X)$ given $X$, denoted by $p_{\epsilon|X}$, exists and is continuous and bounded on its support.
    That is, there exists a constant $\bar p > 0$ satisfying $\sup_{u \in \RR}p_{\epsilon|X}(u) \leq \bar p$ almost surely (over $X$).
    \item[(ii)] The function $p_{\epsilon|X}$ is lower-bounded at $0$, that is, there exists a constant $\underline p > 0$ such that $p_{\epsilon|X}(0) \geq \underline p$ almost surely (over $X$).
    \item[(iii)] The function $p_{\epsilon|X}$ is Lipschitz continuous, that is, there exists a constant $l_0 > 0$ such that $|p_{\epsilon|X}(u_1) - p_{\epsilon|X}(u_2)| \leq l_0|u_1 - u_2|$ for all $u_1, u_2 \in \RR$ almost surely.
\end{itemize}
\end{cond}

Condition~\ref{cond:conditional.density} is a standard assumption for the analysis of quantile regression estimators, especially from a non-asymptotic perspective. 
Recall that the empirical quantile loss $\hat\cQ_\alpha$ is defined as $\hat\cQ_\alpha(f) = n^{-1}\sn \rho_\alpha\{Y_i - f(X_i)\}$ for any real-valued function $f$.
In the following theorem, we present an oracle-type error bound for the DQR estimator with an arbitrary ReLU neural network configuration.

\begin{theorem}[Oracle-type inequality for the DQR estimator] 
\label{thm:oracle.type.deep.quantile}
Assume Condition~\ref{cond:conditional.density} holds and $\|f_{0}\|_\infty \leq M_0$ for some $M_0 \geq 1$.  Let $\cF_n = \cF_{{\rm DNN}}(d, L_q, N_q, M_0)$ with $L_q, N_q \in \{3,4,\dots\}$, $\delta_{\rm a} = \inf_{f \in \cF_n} \|f - f_{0}\|_2$ and $\delta_{\rm s} = L_q N_q \sqrt{\{d \log(d L_q N_q)\log n\}/n}$. Then, there exists some universal constant $c_4 > 0$ independent of $(N_q,L_q,n,d,\alpha)$ and $f_{0}$ such that for any $u \geq 1$, the DQR estimator $\hat f_n$ in \eqref{def:deep.quantile.regression} satisfies
\#
\PP\Big\{  \| \hat f_n - f_{0}\|_2 \geq c_4 \big(\delta_{{\rm s}} + \delta_{{\rm a}} + \sqrt{u/n} \, \big)  \Big\} \lesssim e^{-u}. \label{oracle.type.quantile.bound}
\#
\end{theorem}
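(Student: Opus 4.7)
The plan is to follow a standard two-sided oracle analysis for empirical risk minimization with a nonsmooth loss, combining sharp curvature of the population check risk with a localized empirical process bound over the truncated ReLU class $\cF_n$. Write $\cQ_\alpha(f) = \EE\rho_\alpha(Y-f(X))$ and, for $f\in\cF_n$, define the centered empirical process $\nu(f) := (\hat\cQ_\alpha-\cQ_\alpha)(f) - (\hat\cQ_\alpha-\cQ_\alpha)(f_0)$, so $\nu(f_0)=0$. Pick $f^*\in\cF_n$ with $\|f^*-f_0\|_2$ arbitrarily close to $\delta_{\rm a}$; the basic inequality $\hat\cQ_\alpha(\hat f_n)\le \hat\cQ_\alpha(f^*)$ rearranges to
\$
\cQ_\alpha(\hat f_n)-\cQ_\alpha(f_0)\le \{\cQ_\alpha(f^*)-\cQ_\alpha(f_0)\} + \nu(f^*) - \nu(\hat f_n).
\$

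Next I would establish two-sided quadratic control of the population excess risk. Conditioning on $X$ and applying Knight's identity gives $\cQ_\alpha(f)-\cQ_\alpha(f_0) = \EE\int_0^{f(X)-f_0(X)}\{F_{\epsilon|X}(t)-\alpha\}\,{\rm d}t$. Condition~\ref{cond:conditional.density}(i) yields the upper bound $\cQ_\alpha(f)-\cQ_\alpha(f_0)\le (\bar p/2)\|f-f_0\|_2^2$, and in particular $\cQ_\alpha(f^*)-\cQ_\alpha(f_0)\le (\bar p/2)\delta_{\rm a}^2$. For the lower bound, Condition~\ref{cond:conditional.density}(ii)-(iii) forces $p_{\epsilon|X}(t)\ge \underline p/2$ on $|t|\le \underline p/(2l_0)$, which yields a quadratic minorant for the inner integral on that interval; combined with the uniform boundedness $\|f\|_\infty\le M_0$ built into the truncated network class, this extends to a global quadratic bound $\cQ_\alpha(\hat f_n)-\cQ_\alpha(f_0)\ge c_5\|\hat f_n-f_0\|_2^2$ with $c_5$ depending on $(\underline p, \bar p, l_0, M_0)$.

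The most technical piece is the localized empirical process bound. Since $\rho_\alpha$ is $1$-Lipschitz, the Ledoux-Talagrand contraction principle reduces the Rademacher complexity of the centered loss class to that of $\cF_n$ itself. Using the VC/pseudo-dimension bound of Bartlett et al.\ for deep ReLU networks of depth $L_q$ and width $N_q$, followed by Dudley's chaining with $L_\infty$ covers on $[0,1]^d$, the Rademacher complexity of $\cF_n$ is of order $\delta_{\rm s}$. Combining this with Talagrand's concentration inequality and a peeling argument in $r = \|f-f_0\|_2$ yields, with probability at least $1-e^{-u}$,
\$
|\nu(f)|\lesssim \delta_{\rm s}(\|f-f_0\|_2\vee \delta_{\rm s}) + \sqrt{u/n}\,(\|f-f_0\|_2\vee \delta_{\rm s}) + u/n, \quad \text{uniformly over } f\in\cF_n.
\$

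Substituting the curvature bounds and this uniform control of $\nu(f^*)-\nu(\hat f_n)$ into the decomposition produces a scalar quadratic inequality of the form $c_5 x^2 \le C\delta_{\rm a}^2 + C'(\delta_{\rm s}+\sqrt{u/n})(x+\delta_{\rm a}) + C''u/n$ with $x=\|\hat f_n-f_0\|_2$, which solves in the standard way to give $x \lesssim \delta_{\rm a}+\delta_{\rm s}+\sqrt{u/n}$. The main obstacle is the empirical process step: obtaining the sharp linear (rather than quadratic) dependence on $L_q N_q$ with the precise logarithmic factors of $\delta_{\rm s}$ requires tight covering-number estimates tailored to truncated ReLU networks and a carefully tuned peeling scheme. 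A subtler secondary point is preserving the quadratic curvature lower bound globally on the full class $\cF_n$ rather than only in a vanishing neighborhood of $f_0$, which is precisely where the truncation level $M_0$ enters the constant $c_5$.
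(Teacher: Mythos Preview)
Your proposal is correct and follows essentially the same route as the paper. Both arguments rest on the identical two ingredients: (i) two-sided quadratic control of the population excess check risk via Knight's identity and Condition~\ref{cond:conditional.density} (this is exactly the paper's Lemma~\ref{lem:quantile.lower.upper.bound}, including the use of the truncation level $M_0$ to globalize the lower bound), and (ii) a localized empirical process bound over $\cF_n$ driven by the Lipschitz property of $\rho_\alpha$, the Bartlett et al.\ pseudo-dimension estimate for ReLU networks, and Talagrand's inequality (the paper's Lemma~\ref{lem:quantile.empirical.process}). The only differences are organizational: the paper peels at the outer level---partitioning $\{\|\hat f_n-f_0\|_2\ge\delta_*\}$ into dyadic shells and applying the localized process bound on each shell with a matched deviation level $x=2^{2j}u$---whereas you peel inside the process bound to obtain a single uniform ratio-type inequality and then close with a scalar quadratic; and the paper controls the expected supremum via a Chernozhukov--Chetverikov--Kato maximal inequality together with the Lipschitz property applied directly to covering numbers, rather than symmetrization plus contraction and Dudley's integral. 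These choices are interchangeable and produce the same rate with the same constants up to absolute factors.
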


The non-asymptotic deviation bound, as presented in \eqref{oracle.type.quantile.bound}, comprises two main components: the stochastic error $\delta_{{\rm s}}$ and the approximation error $\delta_{{\rm a}}$ concerning $f_{0}$. Here, the statistical error term $\delta_{{\rm s}}$ increases as the network hyper-parameters $L_q$ and $N_q$ grow, while the approximation error term $\delta_{{\rm a}}$ decreases; see Proposition~\ref{prop:approx.error}.
Furthermore, it is important to note that exponential-type concentration inequalities naturally apply to nonparametric QR estimators even without requiring moment conditions on $\epsilon_i$. However, specific regularity conditions on its (conditional) density function are still necessary. This underscores the robustness of quantile regression, particularly in handling the tails of the response variable.
 
By selecting suitable values for $L_q$ and $N_q$ to balance the stochastic and approximation errors, we demonstrate in the following result that the DQR estimator achieves optimal convergence rates when $f_0$ has a hierarchical interaction structure.

\begin{theorem}[Convergence rate for the DQR estimator]
\label{thm:convergence.rate.deep.quantile}
Assume Condition~\ref{cond:conditional.density} holds and that $\PP_{X}$ is absolutely continuous with respect to the Lebesgue measure on $[0,1]^d$. Given $\cH(d,l,M_0,\cP)$, let $\gamma^*$ be as in~\eqref{def:gamma}, and $L_0, N_0 \geq 3 $ be such that $L_0 N_0 \asymp (n/\log^6 n)^{1/(4\gamma^* + 2)}$. Consider the function class $\cF_n = \cF_{{\rm DNN}}(d, L, N, M_0)$, where $(L,N)$ satisfies \eqref{neural.network.construction}.
Set $\delta_n  = \{n/\log^6 (n)\}^{-\gamma^*/(2\gamma^* + 1)}$. Then, for any $u\geq 1$, it holds uniformly over $f_0 \in \cH(d,l,M_0,\cP)$ and for all sufficiently large $n$ that
\$
\PP\Bigg[ \| \hat f_n - f_{0}\|_2 \geq c_5 \bigg\{ \bigg(\frac{\log^6 n}{n}\bigg)^{\gamma^*/(2\gamma^* + 1)} +  \sqrt{\frac{u}{n}} \bigg\} \Bigg] \lesssim e^{-u} ,
\$
where $c_5>0$ is a universal constant depending polynomially on $t_{\max}$ and $d$.
\end{theorem}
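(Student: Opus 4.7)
The plan is to combine the oracle-type inequality in Theorem~\ref{thm:oracle.type.deep.quantile} with the approximation result in Proposition~\ref{prop:approx.error}, and then tune the width--depth product $L_0 N_0$ so that the stochastic error $\delta_{\rm s}$ and the approximation error $\delta_{\rm a}$ are of the same order. The choice $L_0 N_0 \asymp (n/\log^6 n)^{1/(4\gamma^*+2)}$ is precisely what equates these two contributions (up to polynomial-in-$\log n$ factors, for which the exponent $6$ inside $\delta_n$ is exactly calibrated).

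First, I would apply Proposition~\ref{prop:approx.error} with $\mu = \PP_X$, which is admissible because $\PP_X$ is absolutely continuous with respect to the Lebesgue measure on $[0,1]^d$. Since $\cF_n = \cF_{{\rm DNN}}(d, L, N, M_0)$ is built from the sizes $L = c_1 \lceil L_0 \log L_0 \rceil$ and $N = c_2 \lceil N_0 \log N_0 \rceil$ that appear in \eqref{neural.network.construction}, the proposition yields
\[
\delta_{\rm a} = \inf_{f \in \cF_n}\|f - f_0\|_2 \leq c_3(L_0 N_0)^{-2\gamma^*}
\]
uniformly over $f_0 \in \cH(d,l,M_0,\cP)$. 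Substituting the chosen $L_0 N_0$ then gives $\delta_{\rm a} \lesssim (n/\log^6 n)^{-\gamma^*/(2\gamma^*+1)} = \delta_n$.

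Next, for the stochastic piece $\delta_{\rm s} = L N \sqrt{d\log(dLN)\log n/n}$, I would use $LN \lesssim L_0 N_0 (\log L_0)(\log N_0)$, and, since $L_0 N_0 \leq n$ for $n$ large, conclude $\log(LN) \lesssim \log n$. This gives $\delta_{\rm s} \lesssim L_0 N_0 (\log n)^{3}/\sqrt{n}$ up to a factor polynomial in $d$, which can be absorbed into the constant $c_5$. Substituting the chosen $L_0 N_0$ and simplifying the exponents,
\[
\frac{L_0 N_0}{\sqrt{n}} \asymp n^{-\gamma^*/(2\gamma^*+1)} (\log n)^{-3/(2\gamma^*+1)},
\]
so that multiplying by $(\log n)^{3}$ yields $n^{-\gamma^*/(2\gamma^*+1)} (\log n)^{6\gamma^*/(2\gamma^*+1)} = \delta_n$. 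This is the key bookkeeping step: the exponent $6$ on $\log n$ in $\delta_n$ is exactly what is needed to absorb the cubed logarithm coming from $\sqrt{\log(LN)\log n}$ together with the polylogarithmic inflation of $L,N$ relative to $L_0,N_0$.

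Finally, plugging $\delta_{\rm a} + \delta_{\rm s} \lesssim \delta_n$ into the oracle bound \eqref{oracle.type.quantile.bound} produces $\|\hat f_n - f_0\|_2 \lesssim \delta_n + \sqrt{u/n}$ with probability at least $1 - c\, e^{-u}$, as claimed. Uniformity over $f_0 \in \cH(d,l,M_0,\cP)$ is inherited from the uniform approximation guarantee of Proposition~\ref{prop:approx.error} and from the fact that Theorem~\ref{thm:oracle.type.deep.quantile} depends on $f_0$ only through $\|f_0\|_\infty \leq M_0$; the polynomial dependence of $c_5$ on $t_{\max}$ and $d$ comes from the prefactor $c_3$ (polynomial in $t_{\max}$ by Remark~\ref{rmk:comparison.prefactor}) and from the $\sqrt{d}$ factor in $\delta_{\rm s}$. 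There is no conceptual obstacle---the argument is a straightforward balancing of two ingredients already in hand---and the only genuine delicacy is the logarithmic bookkeeping described above.
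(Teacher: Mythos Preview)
Your proposal is correct and follows essentially the same approach as the paper's own proof: bound $\delta_{\rm a}$ via Proposition~\ref{prop:approx.error}, bound $\delta_{\rm s}$ by tracking $LN \lesssim L_0 N_0 (\log L_0)(\log N_0)$ and $\log(LN) \lesssim \log n$, verify that the chosen $L_0 N_0$ balances the two, and plug into Theorem~\ref{thm:oracle.type.deep.quantile}. The logarithmic bookkeeping you describe matches the paper's calculation exactly.
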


An immediate consequence of Theorem~\ref{thm:convergence.rate.deep.quantile} is that
\$
\|\hat f_n - f_{0} \|_2 = \cO_{\PP}\Big(n^{-\gamma^*/(2\gamma^* + 1)}(\log n)^{6\gamma^*/(2\gamma^* + 1)}\Big).
\$
We improve existing results in the literature by employing different proof techniques and leveraging the new approximation result in  Proposition~\ref{prop:approx.error}. By combining this upper bound with the following proposition, we establish that the DQR estimator, when equipped with an appropriately specified network structure, achieves the minimax-optimal convergence rate for the hierarchical interaction model, up to logarithmic factors. Recall the definition of $t^*$ in \eqref{def:gamma}.

\begin{proposition}[Minimax lower bound for the hierarchical interaction model]
\label{prop:minimax.quantile.estimation}
Assume that $d \geq t^*$ and that Condition~\ref{cond:conditional.density} holds. Then,
\$
\liminf_{n \to \infty} \inf_{\tilde f_n} \sup_{\substack{f_{0} \in \cH(d,l,\cP, M_0), \\ X \sim \PP_{X}}} n^{2\gamma^*/(2\gamma^* + 1)}\EE \|\tilde f_n  - f_{0} \|_2^2 > 0,
\$
where the infimum is taken over all estimators constructed from the sample $\{(X_i, Y_i)\}_{i = 1}^n$.
\end{proposition}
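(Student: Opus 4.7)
The plan is to run a standard Fano-type minimax argument. Since $\gamma^* = \beta^*/t^*$, the target rate $n^{-2\gamma^*/(2\gamma^*+1)}$ coincides with the classical rate $n^{-2\beta^*/(2\beta^*+t^*)}$ for estimating a $\beta^*$-H\"older function of $t^*$ variables, so it suffices to exhibit a hard sub-problem already contained inside the smaller H\"older class.

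\textbf{Reduction.} For any $h_0 \in \cH^{\beta^*}([0,1]^{t^*},M_0)$, the function $\bx \mapsto h_0(x_1,\ldots,x_{t^*})$ belongs to $\cH(d,1,M_0,\cP)$ by Definition~\ref{def:hierarchical.model}(i) with $(\beta,t) = (\beta^*,t^*) \in \cP$, which is feasible since $d \geq t^*$. When $l > 1$, the embedding lifts into $\cH(d,l,M_0,\cP)$ by iterated composition with coordinate-projection inner functions, recursively built as level-$(l-1)$ elements using any fixed $(\beta,t) \in \cP$; coordinate projections lie in every H\"older ball since $M_0 \geq 1$. I fix $\PP_X$ to be uniform on $[0,1]^d$ and realize the QR model as $Y = f_0(X) + \epsilon$ with $\epsilon$ independent of $X$ and drawn from a density satisfying Condition~\ref{cond:conditional.density}, for instance a Laplace density centered so that its $\alpha$-quantile equals $0$.

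\textbf{Local packing and KL control.} Following the bump-function construction in Chapter~2 of Tsybakov (2009), pick a smooth, compactly supported $\psi \in \cH^{\beta^*}([0,1]^{t^*}, 1)$ with $\|\psi\|_\infty \leq 1$ and $\|\psi\|_2 > 0$, partition $[0,1]^{t^*}$ into $m \asymp h^{-t^*}$ sub-cubes of side $h$ centered at $\{z_k\}_{k=1}^m$, and for $\omega \in \{0,1\}^m$ set
\[
f_\omega(\bx) = c\, h^{\beta^*}\sum_{k=1}^m \omega_k\, \psi\!\left(\frac{(x_1,\ldots,x_{t^*}) - z_k}{h}\right).
\]
A small constant $c > 0$ depending only on $M_0$ and $\psi$ ensures $f_\omega \in \cH^{\beta^*}([0,1]^{t^*}, M_0)$. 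The Varshamov--Gilbert lemma yields $\Omega \subseteq \{0,1\}^m$ with $\log|\Omega| \geq m/8$ and pairwise Hamming distance at least $m/8$; since the bumps have disjoint supports, a change of variables gives $\|f_\omega - f_{\omega'}\|_2^2 \gtrsim c^2 h^{2\beta^* + t^*}\cdot m \asymp c^2 h^{2\beta^*}$. Condition~\ref{cond:conditional.density} (bounded, bounded-below, Lipschitz density) yields the pointwise bound $\mathrm{KL}(p_\epsilon(\cdot - a)\,\|\,p_\epsilon(\cdot - b)) \leq C_0 (a-b)^2$ for $|a|,|b| \leq 2M_0$, with $C_0$ depending on $\bar p, \underline p, l_0$; tensorizing over $n$ i.i.d.\ samples gives $\mathrm{KL}(P_{f_\omega}^{\otimes n}\,\|\,P_{f_{\omega'}}^{\otimes n}) \lesssim n c^2 h^{2\beta^*}$.

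\textbf{Fano and the main obstacle.} Choosing $h \asymp n^{-1/(2\beta^* + t^*)}$ makes $n c^2 h^{2\beta^*} \asymp m \asymp \log|\Omega|$, and a sufficiently small prefactor $c$ secures the Fano threshold $\max_{\omega \neq \omega'} \mathrm{KL} \leq \tfrac{1}{16}\log|\Omega|$; Theorem~2.5 of Tsybakov (2009) then yields
\[
\inf_{\tilde f_n}\max_{\omega \in \Omega}\EE \|\tilde f_n - f_\omega\|_2^2 \gtrsim c^2 h^{2\beta^*} \asymp n^{-2\gamma^*/(2\gamma^* + 1)},
\]
from which the proposition follows after multiplying by $n^{2\gamma^*/(2\gamma^* + 1)}$ and taking $\liminf$. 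The main obstacle is the embedding in the reduction step when $l > 1$: encoding a single H\"older function of $t^*$ coordinates as a depth-$l$ element using only smoothness indices in $\cP$ is combinatorial rather than analytic, but must be carried out carefully so that every intermediate layer satisfies Definition~\ref{def:hierarchical.model} with a valid index in $\cP$. The remaining packing and Fano pieces are standard, and the Lipschitz density assumption is exactly what yields the clean quadratic KL bound.
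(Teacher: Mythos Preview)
Your proposal is correct. The paper takes the same first step---reducing to the embedded H\"older class $\cH^{\beta^*}([0,1]^{t^*},M_0)$ via the observation that any such function, viewed as depending only on the first $t^*$ coordinates, lies in $\cH(d,l,M_0,\cP)$---but then finishes in one line: it restricts to the median level $\alpha=0.5$, takes $\epsilon \sim \cN(0,\sigma^2)$ independent of $X$ (so that the conditional median coincides with the conditional mean), and invokes the classical mean-regression minimax lower bound, Theorem~3.2 of \cite{GKKW2002}. Your route instead carries out the full Fano/Varshamov--Gilbert bump construction for the quantile problem directly; this is more work but is self-contained, handles general $\alpha$, and makes the dependence on the noise assumptions explicit rather than hidden inside a citation.

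Two small remarks. First, your sentence that ``Condition~\ref{cond:conditional.density} \ldots\ yields the pointwise bound $\mathrm{KL}(p_\epsilon(\cdot-a)\,\|\,p_\epsilon(\cdot-b)) \le C_0(a-b)^2$'' overstates things: Condition~\ref{cond:conditional.density} only bounds $p_{\epsilon|X}$ from below \emph{at zero}, which does not by itself give a quadratic KL bound for arbitrary shifts. This is harmless because you are free to choose the noise; just state the KL bound as a direct calculation for your chosen Laplace (or Gaussian) law rather than as a consequence of Condition~\ref{cond:conditional.density}. Second, the embedding concern you flag for $l>1$ is real but easily handled exactly as you sketch: take any $(\beta,t)\in\cP$, note that the coordinate projection $(y_1,\ldots,y_t)\mapsto y_1$ lies in $\cH^\beta(\RR^t,M_0)$ since $M_0\ge1$, and recurse. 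The paper simply asserts the inclusion without spelling this out.
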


\begin{remark}
\label{rmk:comparison.DQR}
In recent years, there has been a growing interest in applying DNNs for nonparametric quantile regression problems. 
When the true conditional quantile function has a compositional structure, \cite{SJLHH2021} derived upper bounds on a hybrid of $L_1$- and $L_2$-errors of the QR estimator using ReLU neural networks. 
Their analysis is restricted to the case where the smoothness of each component function does not exceed 1. 
Moreover, assuming that the response variable, or equivalently, the regression error, has bounded $p$-th absolute moment, \cite{SJLHH2021} showed that
$$
\EE \Delta^2(\hat f, f_0) \lesssim  n^{-(2-2/p) \gamma^*/(2 \gamma^* + 1)}\log^2(n) ,
$$
where $\Delta^2(f, f_0) = \EE_{X\sim \PP_X} \min\{ | f(X)  - f_0(X) |, |f(X) - f_0(X)|^2 \}$, and $\gamma^*$ plays a similar role as that defined in \eqref{def:gamma}, which is the dimension-adjusted degree of smoothness. When both $\hat f$ and $f_0$ are bounded, as in our setting, the bound also implies an $L_2$-error bound. However, in the presence of heavy-tailed errors, the convergence rate deteriorates by a factor of $n^{(2/p)\gamma^*/(2\gamma^*+1)}$ relative to the case of exponentially light-tailed errors, making the rate suboptimal. This finding appears somewhat at odds with the commonly held view that quantile regression is robust to heavy-tailed response distributions.

Another recent work \cite{PTC2022} also explored nonparametric QR estimators using deep ReLU neural networks and established optimal convergence rates for cases where the quantile function is compositional with H{\"o}lder smooth components or belongs to a Besov space. Our results differ from \cite{PTC2022} in several aspects. 
First, \cite{PTC2022} constrained their function class to sparse neural networks with bounded weights and biases, while the function class examined in this section does not have such restrictions. As a result, our approach is more practical, as implementing the restrictions mentioned in \cite{PTC2022} necessitates various techniques like projection and dropout, as described in \cite{GBC2016}. 
Secondly, when the true quantile function is a composition of H{\"o}lder smooth functions, Theorem 2 in~\cite{PTC2022} requires the width of neural networks to increase as a power of $n$, and the depth $L$ to be $L \asymp \log n$ to attain the optimal convergence rate. In contrast, Theorem~\ref{thm:convergence.rate.deep.quantile} only necessitates an assumption regarding the product of the depth and width of neural networks, thereby offering flexibility in network design. This means that the optimal rate can be achieved with wide and shallow neural networks, thin and deep neural networks, or wide and deep neural networks as long as the product satisfies the assumption. Third, the dimension-dependent prefactor in the error bounds of \cite{PTC2022} grows exponentially, whereas in Theorem~\ref{thm:convergence.rate.deep.quantile} it increases only polynomially. Specifically, because \cite{PTC2022} relied on the approximation theory of \cite{S2020}, their bound contains a prefactor that scales as $a^d$ for some $a \ge 2$, while our prefactor grows polynomially, as discussed in Remark~\ref{rmk:comparison.prefactor}. Consequently, in modest-dimensional settings, the exponential prefactor in \cite{PTC2022} may dominate the overall error bound.
\end{remark}

\begin{remark}
We remark that our convergence rate results for DQR estimators assume a fixed quantile level and therefore do not extend to the extremal setting $\alpha \to 0$. As observations beyond the target quantile become increasingly scarce in this regime, existing methods typically rely on extreme value theory. In particular, they assume that the conditional distribution of $Y \mid X$ belongs to a maximum (or minimum) domain of attraction \citep{de2006extreme}; see, for example, \citet{chernozhukov2005} and \citet{wang2012estimation}. After estimating intermediate conditional quantiles, extremal quantiles are typically extrapolated by fitting a conditional generalized Pareto distribution, rather than directly estimating each conditional quantile function.

Our Condition 3.1 is weaker than such distributional tail assumptions, and our procedure directly targets the conditional quantile function rather than tail-parameter functions. Consequently, our estimation framework does not directly extend to extremal quantile regression where $\alpha \to 0$; addressing this regime would require a specialized approach tailored to the extremal setting. Most existing methods are linear or rely on classical nonparametric techniques (e.g., kernel smoothing), and few results are available for flexible learners such as neural networks. A notable exception is \citet{gnecco2024extremal}, who employ random forests and establish consistency but not convergence rates. Extending deep-learning–based quantile regression to the extremal regime and developing its theoretical foundations represents an important direction for future research, but it lies beyond the scope of the present paper.
\end{remark}

\subsection{A generic upper bound of deep ES estimator}
\label{sec:generic.bound}

In this section, we provide an oracle-type inequality for the DRES estimator. Under the finite moment assumption for the negative part of $\epsilon$, this inequality provides an upper bound on the $L_2$-error of the DRES estimator for any truncated deep ReLU network architecture, any robustification parameter $\tau \geq c_6 = 2\max\{4M_0, (2\nu_p)^{1/p}\}$, and any DQR estimator.

\begin{theorem}[Oracle-type inequality for the DRES estimator]
\label{thm:oracle.type.DHES}
Assume that Condition~\ref{cond:moment.condition}  with $p > 1$ and Condition~\ref{cond:conditional.density}--(i) hold, and $\max( \|f_{0}\|_\infty , \|g_0\|_\infty) \leq M_0$ for some $M_0 \geq 1$. Let $L_q, L_e, N_q, N_e \in \{3,4,\dots\}$, $\cF_n = \cF_{{\rm DNN}}(d, L_q, N_q, M_0), \cG_n = \cF_{{\rm DNN}}(d, L_e, N_e, M_0)$ and $\tau \geq c_6$.
Define
\#
\left\{ \begin{array}{ll}
\eta_{{\rm b}} = \frac{\nu_p}{(\tau/2)^{p-1}}, & \eta_{{\rm a}} = \inf_{g \in \cG_n}\|g - g_{0}\|_2, \\
\eta_{{\rm s}} = \{\tau^{\max(1-p/2, 0)}\nu_p^{\min(1/2, 1/p)} + \sqrt{\tau}\} \, V_{n,\tau,\nu_p}, & \delta_{{\rm s}} = L_q N_q\sqrt{\{d \log(d L_q N_q)\log n\}/n},
\end{array} \right.  \label{def:eta.rate}
\#
and $V_{n,\tau,\nu_p} = L_e N_e \sqrt{ d\log(d L_e N_e)\log (n^2\tau/\nu_p^{1/p})/n}$.
Then, there exists a universal constant $c_7>0$ such that, for any $u \geq 1$, the DRES estimator $\hat g_{n, \tau}$ in \eqref{def:Huber.ES.estimator} satisfies the bound
\#
    \|\hat g_{n,\tau} - g_{0}\|_2 \leq  \frac{c_7}{\alpha}\bigg\{ \eta_{{\rm s}} + \eta_{{\rm b}} + \eta_{{\rm a}} + \delta_{{\rm s}} + \delta_4^2  + (\nu_p^{1/p} + \sqrt{\tau}) \sqrt{\frac{u}{n}} \bigg\}  \label{oracle.type.ES.bound}
\#
with probability at least $1-C e^{-u }$ conditioning on the event $\{ \hat f_n \in \cF_0(\delta_4) \}$ for some $\delta_4 > 0$, where $\cF_0(\delta):= \{ f \in \cF_n : \| f - f_0\|_{4} \leq \delta  \}$.
\end{theorem}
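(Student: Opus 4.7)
The plan is to pass through the population Huber-optimal ES function $g_{0,\tau}$ by writing
\$
\|\hat g_{n,\tau}-g_0\|_2 \leq \|\hat g_{n,\tau}-g_{0,\tau}\|_2 + \|g_{0,\tau}-g_0\|_2,
\$
since Proposition~\ref{prop:Huber.bias} already bounds the second term by $\eta_{\rm b}/\alpha$. The work then reduces to an oracle-type inequality for $\|\hat g_{n,\tau}-g_{0,\tau}\|_2$, which I would obtain from an excess-risk argument against the population Huber risk $\cR_\tau(f_0,\cdot)$.

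The first technical ingredient is a \emph{local quadratic lower bound}
\$
\cR_\tau(f_0,g) - \cR_\tau(f_0,g_{0,\tau}) \gtrsim \alpha^2\|g-g_{0,\tau}\|_2^2
\$
on a neighborhood of $g_{0,\tau}$. Since $\ell_\tau''(u)=\mathbbm{1}(|u|\leq\tau)$ and $g_{0,\tau}$ satisfies the first-order condition $\EE[\ell_\tau'(Z(f_0)-\alpha g_{0,\tau})\mid X]=0$, a second-order Taylor expansion combined with Markov's inequality (using Condition~\ref{cond:moment.condition} and the calibration $\tau\geq c_6$) shows that $\PP(|Z(f_0)-\alpha g|\leq \tau\mid X)$ is uniformly bounded away from zero, delivering the required curvature. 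Picking $g^*\in\cG_n$ with $\|g^*-g_{0,\tau}\|_2\lesssim \eta_{\rm a}+\eta_{\rm b}/\alpha$ and invoking the basic inequality $\hat\cR_\tau(\hat f_n,\hat g_{n,\tau})\leq\hat\cR_\tau(\hat f_n,g^*)$, I would decompose the excess risk at $\hat g_{n,\tau}$ relative to $g^*$ into (i) an approximation contribution of order $\eta_{\rm a}^2+\eta_{\rm b}^2$, (ii) an empirical-process deviation $\{\cR_\tau(\hat f_n,\cdot)-\hat\cR_\tau(\hat f_n,\cdot)\}$ evaluated at both $\hat g_{n,\tau}$ and $g^*$, and (iii) a quantile plug-in bias $\{\cR_\tau(\hat f_n,\cdot)-\cR_\tau(f_0,\cdot)\}$ evaluated at the same two functions, plus the nonpositive optimization gap.

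For the empirical-process piece, since $\ell_\tau$ is $\tau$-Lipschitz and the variance of $\ell_\tau(Z(f)-\alpha g)$ in its quadratic regime scales as $\tau^{2-p}\nu_p$ when $p<2$ and as $\nu_p^{2/p}$ when $p\geq 2$, I would apply a Talagrand-type inequality to $\{(x,y)\mapsto\ell_\tau(Z(f)-\alpha g(x)):f\in\cF_n\cap\cF_0(\delta_4),\,g\in\cG_n\}$ combined with covering-number bounds for deep ReLU networks. The uniform supremum over $f$ (needed because sample splitting is avoided) is what brings the $\delta_{\rm s}$ contribution alongside $V_{n,\tau,\nu_p}$, producing the mixed rate $\eta_{\rm s}$. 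The nuisance-bias difference admits the integral representation
\$
\Delta^{\rm nui}_{\hat g_{n,\tau}}-\Delta^{\rm nui}_{g^*}=\EE\biggl[\int_0^1\bigl\{\ell_\tau'(U_{\hat g_{n,\tau}}+t\Delta Z)-\ell_\tau'(U_{g^*}+t\Delta Z)\bigr\}\Delta Z\,dt\biggr],
\$
with $\Delta Z=Z(\hat f_n)-Z(f_0)$ and $U_g=Z(f_0)-\alpha g$. Writing $h=\hat f_n-f_0$ and using the elementary identity $\Delta Z=h(\alpha-\mathbbm{1}(\epsilon<0))+e$ with $|e|\leq|h|\mathbbm{1}(|\epsilon|\leq|h|)$, together with the bounded density Condition~\ref{cond:conditional.density}(i), yields $\|\Delta Z\|_2\lesssim\|h\|_2$. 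By the $1$-Lipschitz property of $\ell_\tau'$, the integrand is bounded by $\alpha|\hat g_{n,\tau}-g^*|\cdot|\Delta Z|$, and the displayed quantity is controlled by $\alpha\|\hat g_{n,\tau}-g^*\|_2\|h\|_2$; an AM-GM inequality absorbs the part quadratic in $\|\hat g_{n,\tau}-g^*\|_2$ into the left-hand curvature, leaving a remainder $\lesssim\|h\|_2^2\leq\|h\|_4^2\leq\delta_4^2$ by Jensen's inequality. Solving the resulting quadratic in $\|\hat g_{n,\tau}-g_{0,\tau}\|_2$ and combining with Proposition~\ref{prop:Huber.bias} yields \eqref{oracle.type.ES.bound}.

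The main obstacle is the absence of an exact Neyman orthogonality for the Huber loss: unlike the squared-loss setting of \cite{B2020} and \cite{PW2023}, where $\partial_q\EE[S_\alpha\mid X]=0$ translates directly into loss-level orthogonality, $\ell_\tau'$ couples nontrivially to $\mathbbm{1}(\epsilon<0)$ through the truncation, so a naive expansion of $\Delta^{\rm nui}$ at $\hat g_{n,\tau}$ or $g^*$ individually produces a nonvanishing $O(\|h\|_2)$ term. The resolution is to work only with the \emph{difference} of nuisance biases at the two $g$-values, exploiting the Lipschitz continuity of $\ell_\tau'$ to cancel the offending leading contribution and convert the first-order dependence on $h$ into a cross term that AM-GM splits between the curvature and the $\delta_4^2$ remainder. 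A secondary challenge is the sharp non-asymptotic empirical-process bound without sample splitting: the supremum over $f\in\cF_n\cap\cF_0(\delta_4)$ must be handled by chaining with a localized variance proxy, combining the Huber Lipschitz envelope $\tau$ with the $p$-th moment bound $\nu_p$, so that one obtains the mixed rate $\eta_{\rm s}$ rather than the crude $\tau V_{n,\tau,\nu_p}$.
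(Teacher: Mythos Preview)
Your strategy is sound and would lead to the claimed bound, but it differs from the paper's route in two structural ways worth flagging.

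First, the paper never passes through $g_{0,\tau}$. It works directly with the excess risk centered at $g_0$: Lemma~\ref{lem:joint.lower.upper.bound} shows
\[
\cR_\tau(f,g)-\cR_\tau(f,g_0)\;\geq\;\tfrac{\alpha^2}{4}\|g-g_0\|_2^2-\alpha\|g-g_0\|_2\Bigl(\tfrac{\bar p}{2}\|f-f_0\|_4^2+\eta_{\rm b}\Bigr),
\]
so both the robustification bias and the nuisance effect enter as first-order corrections to a single curvature inequality. The $\|f-f_0\|_4^2$ there is obtained from the orthogonality-type identity $|\EE_{X}\{Z(f)-Z(f_0)\}|\leq(\bar p/2)\{f(X)-f_0(X)\}^2$ (this is exactly where Condition~\ref{cond:conditional.density}(i) is used). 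Your Lipschitz-plus-AM-GM handling of $\Delta^{\rm nui}_{\hat g_{n,\tau}}-\Delta^{\rm nui}_{g^*}$ delivers the same $\delta_4^2$ remainder via $\|h\|_2^2\leq\|h\|_4^2\leq\delta_4^2$ \emph{without} invoking that identity, which is a genuine simplification; the paper's route has the advantage of making the Neyman-orthogonality mechanism explicit and producing the $L_4$ dependence directly rather than through $L_2\leq L_4$.

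Second, and more substantively, your empirical-process step is underspecified. Applying Talagrand to the raw loss class $\{\ell_\tau(Z(f)-\alpha g)\}$ does not exploit localization in $g$; what must be controlled is the centered difference $h_{f,g}=\ell_\tau(Z(f)-\alpha g)-\ell_\tau(Z(f)-\alpha g_0)$ over shells $\{g:\|g-g_0\|_2\leq\eta\}$, followed by a peeling argument. The paper obtains the sharp rate $\eta_{\rm s}$ not by a single variance bound on $h_{f,g}$ but by the FTC decomposition \eqref{Huber.supremum.upper.bound} into three pieces---a multiplier process $\alpha\Delta_g\psi_\tau(\omega)$, a ``square'' process, and a ``product'' process depending on $f$---each handled by a dedicated Talagrand argument (Lemmas~\ref{lem:multiplier.empirical.process}--\ref{lem:product.empirical.process}). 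The multiplier piece carries the variance $\tau^{\max(2-p,0)}\nu_p^{\min(1,2/p)}\eta^2$ that yields $\eta_{\rm s}$; the product piece is where the supremum over $f\in\cF_n$ (and hence $\delta_{\rm s}$) enters. Your single-shot Talagrand can in principle recover the same variance by writing $\psi_\tau(Z(f)-s)^2\leq 2\min(\tau^2,\omega^2)+O(M_0^2)$ uniformly over the bounded shift, but that is not what you wrote (you quoted the variance of the \emph{loss}, not of the difference), and making it precise essentially reconstructs the paper's decomposition.
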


Theorem~\ref{thm:oracle.type.DHES} establishes a non-asymptotic error bound for approximate DRES estimators using a plugged-in QR estimator $\hat f_n$. This upper bound consists of six distinct terms: two stochastic error terms $\eta_{{\rm s}}$ and $\delta_{{\rm s}}$ that correspond to the (conditional) quantile and ES estimation respectively, the bias $\eta_{{\rm b}}$ induced by the Huber loss, the neural network approximation error $\eta_{{\rm a}}$ for the underlying ES regression function $g_0$, and the squared $L_4$-error $\delta_4^2$ for the QR estimator $\hat f_n$.

The two terms $\eta_{{\rm s}}$ and $\eta_{{\rm a}}$ highlight the trade-off between the complexity of the network function class and its approximation power. Moreover, the term $\delta_s+\delta_4^2+\eta_{{\rm b}} + \eta_{{\rm s}}$ explicitly reveals the impact of nonparametric QR estimation in stage one and the use of the Huber loss. The former is quantified by $\delta_{{\rm s}}$ and $\delta_4^2$. Thanks to the orthogonality condition \eqref{orthogonality}, the squared $L_4$-error of the nonparametric QR estimator contributes to the $L_2$-error bound for the two-step ES estimator. Consequently, even if the QR estimator converges at a sub-optimal rate (under the $L_4$-norm), the ES estimator can still achieve the optimal convergence rate under the $L_2$-norm, as if the true quantile function $f_0$ were known. We note that $L_4$-error bounds for the nuisance parameter, rather than the more common $L_2$-error bounds, also appear in \citet{FS2023}. The term $\eta_{{\rm b}} + \eta_{{\rm s}}$ clarifies the role of the robustification parameter $\tau$. A larger $\tau$ reduces bias, resulting in a smaller $\eta_{{\rm b}}$. However, this reduction comes at the expense of compromising robustness, leading to a larger $\eta_{{\rm s}}$. Therefore, it is crucial to properly tune the robustification parameter $\tau$ to balance bias and robustness.

As a benchmark, we also derive non-asymptotic deviation bounds for DES estimators using any truncated ReLU network architecture along with a DQR estimator.

\begin{theorem}[Oracle-type inequality for the DES estimator]
\label{thm:oracle.type.DES}

Assume Conditions~\ref{cond:moment.condition} and \ref{cond:conditional.density} hold with $p > 1$, and $\max( \|f_{0}\|_\infty , \|g_0\|_\infty)\leq M_0$ for some $M_0 \geq 1$.
Let $\cF_n = \cF_{{\rm DNN}}(d,L_q,N_q,M_0), \cG_n = \cF_{{\rm DNN}}(d,L_e,N_e,M_0)$ with $L_q, L_e, N_q, N_e \geq 3$.
Define
\$
\eta_{{\rm a}} = \inf_{g \in \cG_n}\|g - g_{0}\|_2, ~\eta_{{\rm s}} = \nu_p^{1/p}V_n + \nu_p^{1/(2p)}V_n^{1 - 1/p}  \mbox{ and } \delta_{{\rm s}} = L_qN_q\sqrt{\frac{d\log(d L_qN_q) \log n}{n}}, 
\$
where $V_n =L_e N_e \sqrt{ d\log(d L_eN_e) \log (n)/n}$. For sufficiently large $n$ such that $V_n \leq 1$,  there exists a universal constant $c_8 > 0$ such that, for any $u \geq 1$, the DES estimator $\hat g_n$ in \eqref{def:deep.ES.estimator} satisfies 
\#
\alpha\|\hat g_n - g_{0}\|_2 \leq  c_8  ( \eta_{{\rm s}} + \eta_{{\rm a}} + \delta_{{\rm s}} + \delta_4^2)\sqrt{u} \label{oracle.type.DES.bound}
\#
with probability at least $1-C (e^{-nV_n^2} + u^{- p})$, conditioned on the event $\{ \hat f_n \in \cF_0(\delta_4) \}$.
\end{theorem}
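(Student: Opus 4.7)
The plan is to combine a basic inequality with empirical-process arguments that separately handle the plug-in quantile error, the heavy-tailed ES noise, and the empirical-to-population transfer. Starting from the optimality of $\hat g_n$ against an arbitrary $g^\ast \in \cG_n$, expanding $\hat\cR(\hat f_n, \hat g_n) \leq \hat\cR(\hat f_n, g^\ast)$ and rearranging yields
\[
\frac{\alpha^2}{2n}\sum_{i=1}^n (\hat g_n - g_0)^2(X_i) \leq \frac{\alpha^2}{2n}\sum_{i=1}^n (g^\ast - g_0)^2(X_i) + \frac{\alpha}{n}\sum_{i=1}^n u_i(\hat g_n - g^\ast)(X_i),
\]
where $u_i = Z_i(\hat f_n) - \alpha g_0(X_i)$ decomposes as $u_i = \omega_i + \Delta_i$ with $\omega_i = Z_i(f_0) - \alpha g_0(X_i)$ and $\Delta_i = Z_i(\hat f_n) - Z_i(f_0)$. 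Choosing $g^\ast$ to (nearly) attain $\eta_{\rm a}$ takes care of the approximation term.

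For the plug-in remainder $\Delta_i$, exploit the orthogonality \eqref{orthogonality}. Setting $H_i(t) := \EE[\min(Y_i - t, 0) \mid X_i] + \alpha t$, one has $H_i'(t) = \alpha - F_{Y|X_i}(t)$, which vanishes at $t = f_0(X_i)$, while $|H_i''| \leq \bar p$ by Condition~\ref{cond:conditional.density}(i). A second-order Taylor expansion then gives $|\EE[\Delta_i \mid X_i]| \leq (\bar p/2)(\hat f_n - f_0)^2(X_i)$, and Cauchy--Schwarz yields a contribution of order $\delta_4^2\,\|\hat g_n - g^\ast\|_n$ from this conditional mean part. The centered fluctuation $\Delta_i - \EE[\Delta_i \mid X_i]$ is uniformly bounded by $(1+\alpha)\|\hat f_n - f_0\|_\infty \leq 4M_0$, so a Talagrand--Bousquet inequality applied uniformly over the product class $\cF_n \times \cG_n$, combined with the VC-dimension bound of order $L_qN_q d\log(dL_qN_q)$ for ReLU networks, produces the $\delta_s$ contribution.

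The heavy-tailed noise term $n^{-1}\sum \omega_i(\hat g_n - g^\ast)(X_i)$ is the crux. Truncate $\omega_i$ at a cutoff $R$, writing $\omega_i = \omega_i^{\le} + \omega_i^{>}$ with $\omega_i^{\le} = \omega_i\mathbbm{1}(|\omega_i|\leq R)$, so that $|\omega_i^{\le}| \leq R$ and $\EE(\omega_i^{\le})^2 \leq \nu_p R^{2-p}$ for $1 < p \leq 2$ (and $\leq \nu_p$ for $p > 2$). A Bousquet-type inequality on $\cG_n$ controls $\sup_{g \in \cG_n}|n^{-1}\sum \omega_i^{\le}(g - g^\ast)(X_i)|$ by something of order $(\sqrt{\nu_p R^{2-p}}\,V_n + RV_n^2) \cdot \|g - g^\ast\|_2$. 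For the un-truncated tail $\omega_i^{>}$, Minkowski bounds its $L^p$-norm by $\nu_p^{1/p}$, and Markov's inequality on the $p$-th moment of the supremum-process produces a polynomial tail, which is the source of the $u^{-p}$ probability factor and of the $\sqrt u$ inflation of the deviation bound when the Markov estimate is inverted. Optimizing over $R \asymp (\nu_p/V_n)^{1/p}$ balances truncation variance against truncation bias, yielding the stated rate $\eta_s = \nu_p^{1/p} V_n + \nu_p^{1/(2p)}V_n^{1-1/p}$.

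Finally, I would convert the empirical $L_2$-norm to the population $L_2$-norm via a standard uniform ratio inequality for the uniformly bounded class $\cG_n$: with probability at least $1 - Ce^{-nV_n^2}$, $\|g - g_0\|_2^2 \leq 2\|g - g_0\|_n^2 + CV_n^2$ for all $g \in \cG_n$, which accounts for the $e^{-nV_n^2}$ term in the stated probability. Assembling the four contributions and solving the resulting quadratic inequality in $\|\hat g_n - g_0\|_2$ produces \eqref{oracle.type.DES.bound}. The main technical obstacle will be the heavy-tailed empirical process step: Talagrand's inequality in its usual form requires boundedness or sub-Gaussianity, so one must simultaneously juggle the variance gain from truncation, the polynomial-tail cost of the residual, and the sharp ReLU-network complexity $V_n$; the delicate calibration among these three sources is what pins down both the form of $\eta_s$ and the $(\sqrt u, u^{-p})$ deviation/probability trade-off, which is strictly weaker than the exponential tails obtainable under sub-Gaussian noise.
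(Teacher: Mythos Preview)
Your overall strategy is close to the paper's: both arguments exploit the orthogonality \eqref{orthogonality} to reduce the plug-in quantile error to a squared contribution ($\delta_4^2$), control the bounded pieces via Talagrand/Bousquet concentration over the ReLU classes, and treat the heavy-tailed multiplier process $n^{-1}\sum \omega_i\Delta_g(X_i)$ by truncation. The paper organizes the argument around a peeling in population $L_2$-norm (so no separate empirical-to-population conversion is needed), whereas you work from the basic inequality in empirical norm; either route can be made to work.

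The substantive gap is in your truncation step. Two points. First, the truncation level must depend on $u$: the paper takes $R = u\,\nu_p^{1/p} V_n^{-2/p}$, not your $(\nu_p/V_n)^{1/p}$, and this $u$-scaling is precisely what produces the $\sqrt{u}$ inflation and the $u^{-p}$ tail simultaneously. Second, for the tail piece $\omega_i^{>}$ the paper does \emph{not} attempt a $p$-th moment bound on the supremum-process (which would be delicate for $1<p<2$); instead it uses the crude sup-norm bound $|\Delta_g|\leq 2M_0$ followed by first-moment Markov,
\[
\PP\Big(\sup_{g}\big|n^{-1}\textstyle\sum_i \omega_i^{>}\Delta_g(X_i)\big| > y\Big)\;\leq\; \frac{4M_0}{y}\,\EE|\omega^{>}|\;\leq\; \frac{4M_0\,\nu_p}{R^{p-1}\,y}.
\]
With the peeling target $y\asymp \eta_*^2\asymp u\,\eta_s^2$ and $R^{p-1}\asymp u^{p-1}\nu_p^{1-1/p}V_n^{-2(p-1)/p}$, one checks $R^{p-1}y\gtrsim u^{p}\nu_p$, which gives the $u^{-p}$ tail. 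At the same time, the Talagrand bound for the truncated part $\omega_i^{\le}$ picks up a factor $\sqrt{R}\asymp \sqrt{u}\,\nu_p^{1/(2p)}V_n^{-1/p}$, and $\sqrt{R}\cdot V_n$ is exactly the $\nu_p^{1/(2p)}V_n^{1-1/p}$ term in $\eta_s$ together with the global $\sqrt{u}$. Your stated $R$ and ``$p$-th moment Markov'' do not reproduce this trade-off, so the calibration needs to be redone along these lines.
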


In contrast to the result of Theorem~\ref{thm:oracle.type.DHES}, Theorem~\ref{thm:oracle.type.DES} demonstrates that the deviation bound of the DES estimator does not include the bias term. This is because the population Huber loss minimizer $g_{0,\tau}$, defined in~\eqref{def:population.minimzer}, coincides with $g_{0}$ when $\tau = \infty$, resulting in $\eta_{{\rm b}} = 0$. However, since DES uses the $L_2$-loss, the corresponding estimator exhibits only a polynomial-type deviation bound, as shown in Theorem~\ref{thm:oracle.type.DES}. This is in contrast to the exponential-type deviation bound achieved by the DRES estimator.

To complement our analysis, we investigate the non-asymptotic error bound for the DRES estimator under light-tailed noise distributions.
Specifically, we assume that the negative part of the quantile residual $\epsilon$ follows a sub-Gaussian distribution, as described below.

\setcounter{condition}{2}
\begin{cond} [Light-tailed noise]\label{cond:light-tailed.noise}
The conditional density function of $\epsilon$ given $X$, denoted by $p_{\epsilon|X}$, exists and satisfies $\sup_{u\in \RR} p_{\epsilon|X}(u) \leq \bar p$ for some constant $\bar p > 0$ almost surely (over $X$).
Moreover, there exists a constant $\sigma_0 > 0$ such that the negative part of the QR residual satisfies
$\EE[  e^{ \{ \epsilon_- - \EE(\epsilon_-|X)\}^2/\sigma_0^2 } |X ] \leq 2$ almost surely over $X$. 
\end{cond} 

\begin{theorem} [Oracle-type inequality for the DRES estimator with sub-Gaussian errors] \label{thm:oracle.type.subgaussian}
Assume Condition~\ref{cond:light-tailed.noise} holds for some $\sigma_0 > 0$, and $\max(\|f_{0}\|_\infty, \|g_0\|_\infty ) \leq M_0$ for some $M_0 \geq 1$. Let $L_q, L_e, N_q, N_e \geq 3$, $\cF_n = \cF_{{\rm DNN}}(d, L_q, N_q, M_0), \cG_n = \cF_{{\rm DNN}}(d, L_e, N_e, M_0)$ and $\tau \geq c_9 := 2\max\{4M_0, (\log 4)^{1/2}\sigma_0\}$.
Define
\#
\left\{ \begin{array}{ll}
\eta_{{\rm b}} = 2(2M_0 + \sigma_0) e^{-\tau^2/(2\sigma_0^2)}, & \eta_{{\rm a}} = \inf_{g \in \cG_n}\|g - g_{0}\|_2, \\
\eta_{{\rm s}} = \sigma_0 L_e N_e \sqrt{ \{\{d \log(dL_eN_e)\log n\}/n}, & \delta_{{\rm s}} = L_q N_q \sqrt{ \{d \log(dL_qN_q)\log n\}/n}.
\end{array} \right.   \label{def:eta.rate.subgaussian}
\#
Then, there exists some universal constant $c_{10} > 0$ such that for any  $\delta_4 > 0$ and $ u \geq 1$, 
$\hat g_{n,\tau}$ in \eqref{def:Huber.ES.estimator} satisfies
\#
   \|\hat g_{n,\tau} - g_{0}\|_2 \leq  \frac{c_{10}}{\alpha}\bigg( \eta_{{\rm s}} + \eta_{{\rm b}} + \eta_{{\rm a}} + \delta_{{\rm s}} + \delta_4^2  + \sigma_0 \sqrt{\frac{u}{n}} \, \bigg) \label{oracle.type.ES.subgaussian.bound}
\#
with probability at least $1-C e^{-u }$ conditioning on the event $\{ \hat f_n \in \cF_0(\delta_4) \}$.
\end{theorem}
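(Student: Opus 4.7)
The plan is to follow the same overall architecture as the proof of Theorem~\ref{thm:oracle.type.DHES}, with two modifications that exploit the sub-Gaussian tail: replace the polynomial robustification bias bound of Proposition~\ref{prop:Huber.bias} by an exponential one, and tighten the variance proxy of the Huber score from the $\tau$-dependent scales $\tau^{\max(1-p/2,0)}\nu_p^{\min(1/2,1/p)} + \sqrt{\tau}$ down to the $\tau$-free scale $\sigma_0$. The remaining skeleton (basic inequality, orthogonal plug-in of $\hat f_n$, peeling over $L_2$-shells, local strong convexity) is re-traced essentially verbatim.

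First, I would establish the sub-Gaussian analogue of Proposition~\ref{prop:Huber.bias}. Writing $\omega := \epsilon_- - \EE(\epsilon_-|X)$ and $\psi_\tau := \ell_\tau'$, the first-order optimality condition for the conditional population Huber risk gives $\EE[\psi_\tau(\omega - \Delta(X))|X] = 0$ with $\Delta(X) := \alpha(g_{0,\tau}(X) - g_0(X))$. Using the identity $\psi_\tau(u) - u = -(u-\tau)_+ + (-u-\tau)_+$ together with $\EE(\omega|X) = 0$, we obtain $|\EE[\psi_\tau(\omega)|X]| \leq \EE[|\omega|\mathbbm{1}(|\omega| > \tau)|X]$, and a mean-value step based on $\psi_\tau'(u) = \mathbbm{1}(|u|\leq \tau)$ yields $|\Delta(X)| \lesssim (2M_0 + \sigma_0)\, e^{-\tau^2/(2\sigma_0^2)}$ whenever $\tau \geq c_9$, which is precisely the bias term $\eta_{\rm b}$.

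Next, by the optimality of $\hat g_{n,\tau}$, for any $g^* \in \cG_n$ the basic inequality $\hat \cR_\tau(\hat f_n, \hat g_{n,\tau}) \leq \hat \cR_\tau(\hat f_n, g^*)$ holds. I would choose $g^*$ as the $L_2$-best approximation of $g_{0,\tau}$ inside $\cG_n$ (contributing $\eta_{\rm a}$) and decompose $\hat \cR_\tau(\hat f_n, g) - \hat \cR_\tau(\hat f_n, g_{0,\tau})$ into its population part, a centered empirical process part, and a plug-in discrepancy in $\hat f_n - f_0$. The orthogonality identity \eqref{orthogonality}, evaluated at $(f_0, g_{0,\tau})$, ensures that the first-order term in the Taylor expansion of $f \mapsto \cR_\tau(f, g_{0,\tau})$ vanishes at $f = f_0$, so the residual plug-in contribution is second-order in $\|\hat f_n - f_0\|_4$ and thus bounded by $\delta_4^2$ on $\{\hat f_n \in \cF_0(\delta_4)\}$; the stochastic error of the QR step contributes the $\delta_{\rm s}$ piece via the same bracketing argument used in Theorem~\ref{thm:oracle.type.DHES}.

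The main technical step is a uniform Bernstein-type bound for the centered Huber-score empirical process over $\cG_n$. Because $\psi_\tau$ is $1$-Lipschitz with $|\psi_\tau| \leq \tau$, the increments $\psi_\tau(\omega - \alpha g(X)) - \psi_\tau(\omega - \alpha g_{0,\tau}(X))$ are pointwise bounded by $\alpha|g(X) - g_{0,\tau}(X)|$, while the contraction inequality $\EE[\psi_\tau(\omega)^2|X] \leq \EE[\omega^2|X] \leq \sigma_0^2$ keeps the variance proxy at $\sigma_0$. Combining Bernstein's inequality with the pseudo-dimension covering bound $\log \cN(\cG_n) \lesssim L_e N_e d\log(d L_e N_e)\log n$ and peeling over $L_2$-shells around $g_{0,\tau}$ produces $\eta_{\rm s}$ with the clean prefactor $\sigma_0$, and the Bernstein tail delivers the $\sigma_0\sqrt{u/n}$ term. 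The proof concludes with the local strong convexity of $g \mapsto \cR_\tau(f_0, g)$ around $g_{0,\tau}$ (curvature $\gtrsim \alpha^2$, from $\ell_\tau''(u) = \mathbbm{1}(|u|\leq \tau)$ together with Condition~\ref{cond:light-tailed.noise} and $\tau \geq c_9$) and the triangle inequality $\|\hat g_{n,\tau} - g_0\|_2 \leq \|\hat g_{n,\tau} - g_{0,\tau}\|_2 + \|g_{0,\tau} - g_0\|_\infty$. The main obstacle I anticipate is ensuring that the empirical process bound truly scales as $\sigma_0$ rather than $\tau$: this requires using the Lipschitz contraction of $\psi_\tau$ inside both the envelope and the variance proxy, and a Talagrand-type refinement that preserves sub-Gaussian concentration throughout chaining; once that refinement is in place, the remaining pieces are a direct re-tracing of Theorem~\ref{thm:oracle.type.DHES} with sub-Gaussian moment generating functions replacing bounded $p$-th moments.
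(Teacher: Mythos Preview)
Your overall architecture is correct and matches the paper's: the proof re-traces Theorem~\ref{thm:oracle.type.DHES} with two sub-Gaussian upgrades, one for the bias and one for the multiplier empirical process. A minor organizational difference is that the paper centers the entire argument at $g_0$ rather than at $g_{0,\tau}$: the exponential bias bound is absorbed directly into an upper-and-lower sandwich for the excess risk $\cR_\tau(f,g) - \cR_\tau(f,g_0)$ (Lemma~\ref{lem:joint.lower.upper.bound.light}, the sub-Gaussian analogue of Lemma~\ref{lem:joint.lower.upper.bound}), so no separate triangle-inequality step through $g_{0,\tau}$ is needed. Your route through $g_{0,\tau}$ also works, but note that $\eta_{\rm a}$ is defined relative to $g_0$, not $g_{0,\tau}$, so an extra $\eta_{\rm b}$ must be absorbed when you choose $g^*$.

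The one place where your proposal is under-specified in a way that matters is the multiplier process bound. You correctly anticipate the obstacle --- keeping the scale at $\sigma_0$ rather than $\tau$ --- but your proposed fix, ``using the Lipschitz contraction of $\psi_\tau$ inside both the envelope and the variance proxy,'' does not remove the $\tau$-dependence: the pointwise envelope of $\psi_\tau(\omega_i)\Delta_g(X_i)$ is still $2M_0\tau$, so a direct application of Talagrand (as in Lemma~\ref{lem:multiplier.empirical.process}) leaves a term $\tau x/n$ that forces the restriction $x \lesssim n\eta^2/\tau$ and reintroduces $\sqrt{\tau}\sqrt{u/n}$ in the final bound. The paper's way out (Lemma~\ref{lem:multiplier.subgaussian.empirical.process}) is to bypass the envelope entirely: condition on $(X_1,\ldots,X_n)$, observe that $|\psi_\tau(\omega_i)| \leq |\omega_i|$ makes each $\psi_\tau(\omega_i)$ conditionally sub-Gaussian with parameter $\sigma_0$, so the process $g \mapsto n^{-1/2}\sum_i \psi_\tau(\omega_i)\Delta_g(X_i)$ has sub-Gaussian increments with metric $\sigma_0\|g-g'\|_n$, and apply Dudley's sub-Gaussian chaining bound rather than a bounded-difference inequality. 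This yields $\sigma_0 v(V_n + \sqrt{x/n})$ with no envelope term; one then relates $\|\cdot\|_n$ to $\|\cdot\|_2$ via Lemma~\ref{lem:square.empirical.process} with $\tau=\infty$, and handles the centering bias $|\EE_X\psi_\tau(\omega_i)| \leq 2\sigma_0 e^{-\tau^2/(2\sigma_0^2)}$ separately. Once this lemma is in place, the remaining peeling (using Lemmas~\ref{lem:square.empirical.process} and~\ref{lem:product.empirical.process} unchanged for ${\rm P}_2$ and ${\rm P}_3$) goes through exactly as you describe.
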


In contrast to the setting where $\epsilon_-$ only possesses a bounded (conditional) $p$-th central moment, Theorem~\ref{thm:oracle.type.subgaussian} reveals that the bias term  $\eta_{{\rm b}}$ decays exponentially in $\tau$ when $\epsilon_-$ is (conditional) sub-Gaussian.
In particular, we have $\eta_{{\rm b}} \leq \sigma_0n^{-1/2}$ as long as $\tau \geq \sigma_0\sqrt{\log n}$.
Consequently, the impact of the robustification bias becomes negligible compared to the statistical error $\eta_{{\rm s}}$ in~\eqref{def:eta.rate.subgaussian}, which is unaffected by $\tau$.

\begin{remark}[Sample splitting and cross-fitting] \label{rmk:sample.splitting}
We can eliminate the statistical error term $\delta_{{\rm s}}$, induced by the estimation of the conditional QR function, from the error bounds of the proposed estimator by incorporating a sample-splitting algorithm. 

Specifically, we first split the entire dataset into two parts: $\{(X_1, Y_1), \dots, (X_{n_1}, Y_{n_1})\}$ and $\{(X_{n_1 + 1}, Y_{n_1 + 1}), \dots, (X_n, Y_n)\}$, where $n_1 = \lceil  n/2 \rceil$.
The first subsample is used to train a QR estimator $\hat f_n$, while the remaining subsample, together with $\hat f_n$, is employed to compute the ES regression estimator $\hat g_{{\rm split}}$.
Following similar arguments as in the proofs of Theorems \ref{thm:oracle.type.DHES}--\ref{thm:oracle.type.subgaussian}, it can be established that under the same conditions as outlined in Theorems~\ref{thm:oracle.type.DHES}--\ref{thm:oracle.type.subgaussian}, $\hat g_{{\rm split}}$ satisfies concentration bounds that are similar to \eqref{oracle.type.ES.bound}, \eqref{oracle.type.DES.bound} and \eqref{oracle.type.ES.subgaussian.bound}, without the appearance of $\delta_{{\rm s}}$. As a result, the impact of QR estimation is only reflected by $\delta_4^2$.

Nevertheless, using only half of the data to compute $\hat g_{{\rm split}}$ may result in a loss of statistical efficiency. 
To mitigate this issue, the widely recognized approach is cross-fitting as discussed in \cite{CCDDHNR2018}.
Nonetheless, it remains uncertain whether the cross-fitting method improves the statistical efficiency over the basic sample-splitting method in our case.
As pointed out by \cite{FS2023}, establishing this improvement typically requires the demonstration of asymptotic normality or linear approximation of the nonparametric estimator in the literature.
However, it remains an open question whether a DNN estimator exhibits an asymptotic linear approximation, which in turn leads to asymptotic normality.
As a result, from a theoretical perspective, it remains unclear whether the use of cross-fitting can enhance the statistical efficiency over the basic sample-splitting in our setting.
\end{remark}

\subsection{Error bounds for deep ES regression estimators}
\label{sec:error.bounds.ES}

Building on the results from Sections~\ref{sec:deep.quantile} and \ref{sec:generic.bound}, along with the neural network approximation result in Proposition~\ref{prop:approx.error}, we establish the convergence rates of two-step DRES and DES estimators in this section.
The key insight, based on the findings from the previous subsections, is that properly tuning the hyperparameters is essential for achieving an optimal balance among the various error terms. We note that Theorem~\ref{thm:oracle.type.deep.quantile} and Corollary~\ref{thm:convergence.rate.deep.quantile} establish only $L_2$-error bounds for DQR estimators, while theoretical results on their $L_4$-error bounds remain scarce. To partially address this limitation, we apply a simple inequality to control the $L_4$-error. For any function $f$ satisfying $\|f\|_{\infty} \le M_0$, $\|f\|_4^4 = \EE\{f^4(X) \}\leq M_0^2\EE\{f^2(X) \} = M_0^2\|f\|_2^2$. Hence, the $L_4$-error of the DQR estimator can be bounded by its corresponding $L_2$-error bound. To improve upon this straightforward result, it would be desirable to establish convergence rates under the $L_\infty$-norm. However, such results remain an open question for neural network-based estimators.

We first analyze the DRES estimator defined in \eqref{def:Huber.ES.estimator} under heavy-tailed noise. By leveraging previously established results and selecting appropriate tuning parameters, we derive its convergence rate as follows

\begin{theorem}[Convergence rate for the DRES estimator]
\label{thm:DHES.with.DQR}
Assume Conditions~\ref{cond:moment.condition} and \ref{cond:conditional.density} hold for some $p > 1$, and that $\PP_{X}$ is absolutely continuous with respect to the Lebesgue measure on $[0,1]^d$. Let $\gamma^*$ be as in~\eqref{def:gamma}, and $L_0, N_0 \geq 3$ be such that 
\$
L_0 N_0 \asymp \{n/\log^6 (n)\}^{\zeta_p/(4\gamma^* + 2\zeta_p)} ~\mbox{ with }~ \zeta_p = 1 - \frac{1}{2p-1}.
\$ 
Consider the function classes $\cF_n = \cG_n = \cF_{{\rm DNN}}(d,L, N, M_0)$ with depth $L$ and width $N$ satisfying \eqref{neural.network.construction}.
Set 
\$
\eta_{n}^{\rm AH} \asymp \max(\nu_p^{1/p}, 1 )\cdot \bigg\{ \frac{\log^6 (n)}{n}\bigg\}^{\gamma^*\zeta_p/(2\gamma^* + \zeta_p)} ~\mbox{ and }~ \tau \asymp \nu_p^{1/p}\bigg\{\frac{n}{\log^6(n)}\bigg\}^{2\gamma^*(1-\zeta_p)/(2\gamma^* + \zeta_p)}.
\$
Then, for any $u \geq 1$ and sufficiently large $n$, the DRES estimator $\hat g_{n,\tau}$ with the plugged-in DQR estimator $\hat f_n$ satisfies
\$
\PP \Bigg\{   \|\hat g_{n,\tau} - g_{0}\|_2 \geq  \frac{c_{11}}{\alpha } \bigg[ \eta_{n}^{\rm AH} +  \max\big\{\nu_p^{1/(2p)},1\big\} \sqrt{\frac{u}{n^{\zeta_p}}}  \, \bigg] \Bigg\} \lesssim e^{-u }.
\$
Here, $c_{11} > 0$ is independent of $(n,u,p,\nu_p)$ and depends polynomially on $t_{\max}$ and $d$.
\end{theorem}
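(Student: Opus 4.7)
The plan is to combine the oracle-type inequality for the DRES estimator (Theorem~\ref{thm:oracle.type.DHES}) with the DQR error bound (Theorem~\ref{thm:oracle.type.deep.quantile}) and the neural-network approximation bound (Proposition~\ref{prop:approx.error}), then calibrate $\tau$ and the network size so that the five error components in \eqref{oracle.type.ES.bound} balance at the claimed rate.

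By \eqref{neural.network.construction}, $LN \asymp L_0 N_0 \cdot \log^2(L_0 N_0)$, and the prescribed $L_0 N_0 \asymp \{n/\log^6 n\}^{\zeta_p/(4\gamma^*+2\zeta_p)}$. Applying Proposition~\ref{prop:approx.error} to $g_0 \in \cH(d,l,M_0,\cP)$ gives $\eta_{\rm a} \lesssim (L_0 N_0)^{-2\gamma^*} \asymp \eta_n^{\rm AH}$. Applying the same approximation bound to $f_0$ and invoking Theorem~\ref{thm:oracle.type.deep.quantile} for the DQR estimator $\hat f_n$ over $\cF_n$, the stochastic term $LN\sqrt{\log n/n} \asymp n^{-\gamma^*/(2\gamma^*+\zeta_p)}$ is dominated by the approximation term $\eta_n^{\rm AH}$ (since $\zeta_p \leq 1$), so $\|\hat f_n - f_0\|_2 \lesssim \eta_n^{\rm AH} + \sqrt{u/n}$ with probability at least $1 - Ce^{-u}$. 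Combining $\|\hat f_n - f_0\|_\infty \leq 2M_0$ with the elementary bound $\|f\|_4^4 \leq \|f\|_\infty^2\|f\|_2^2$ yields $\delta_4^2 \lesssim \eta_n^{\rm AH}$ on this event, so the conditioning event $\{\hat f_n \in \cF_0(\delta_4)\}$ of Theorem~\ref{thm:oracle.type.DHES} holds with $\delta_4^2 \lesssim \eta_n^{\rm AH}$.

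Next I would verify that each term on the right-hand side of \eqref{oracle.type.ES.bound} is at most a constant multiple of $\eta_n^{\rm AH}$. The critical calculation is for the robustification bias $\eta_{\rm b} = \nu_p(\tau/2)^{1-p}$: with the prescribed $\tau$ and the algebraic identity $(p-1)\cdot 2(1-\zeta_p) = \zeta_p$, equivalent to $\zeta_p = (2p-2)/(2p-1)$, this gives $\eta_{\rm b} \lesssim \nu_p^{1/p}\eta_n^{\rm AH}$. The dominant piece of $\eta_{\rm s}$, namely $\sqrt{\tau}\,V_{n,\tau,\nu_p}$, evaluates to $\nu_p^{1/(2p)}\eta_n^{\rm AH}$ up to log factors, whereas the alternative piece $\tau^{\max(1-p/2,0)}\nu_p^{\min(1/2,1/p)}V_{n,\tau,\nu_p}$ is of smaller order; the term $\delta_{\rm s}$ is likewise dominated by $\eta_n^{\rm AH}$. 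For the $u$-dependent deviation, note $\tau/n \asymp \nu_p^{1/p} n^{-\zeta_p(2\gamma^*+1)/(2\gamma^*+\zeta_p)}$, and since $(2\gamma^*+1)/(2\gamma^*+\zeta_p) \geq 1$ this gives $\sqrt{\tau u/n} \lesssim \nu_p^{1/(2p)}\sqrt{u/n^{\zeta_p}}$, which dominates the other contribution $\nu_p^{1/p}\sqrt{u/n}$ for $u \geq 1$. A union bound over the DQR-typical event and the high-probability event of Theorem~\ref{thm:oracle.type.DHES} completes the argument.

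The main obstacle is the exponent bookkeeping needed to verify that the bias, approximation, and stochastic ES errors simultaneously equal $\eta_n^{\rm AH}$ at the prescribed $\tau$ and $L_0 N_0$, which hinges on the identity $\zeta_p = (2p-2)/(2p-1)$. A subtler difficulty is the $L_4$ control of $\hat f_n - f_0$: absent a direct $L_4$ or $L_\infty$ rate for the DQR estimator, we must rely on the crude bound $\|\hat f_n - f_0\|_4^2 \lesssim M_0\|\hat f_n - f_0\|_2$, which loses a square root in the rate; fortunately, the exponent $\zeta_p \leq 1$ leaves enough slack for $\delta_4^2$ to remain negligible compared with $\eta_n^{\rm AH}$.
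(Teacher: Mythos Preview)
Your proposal is correct and follows essentially the same route as the paper: combine Theorem~\ref{thm:oracle.type.DHES} with Theorem~\ref{thm:oracle.type.deep.quantile} and Proposition~\ref{prop:approx.error}, use the crude bound $\|\hat f_n - f_0\|_4^2 \leq 2M_0\|\hat f_n - f_0\|_2$ for the nuisance term, and verify via the identity $2(p-1)(1-\zeta_p)=\zeta_p$ that the prescribed $\tau$ and $L_0N_0$ balance all five error components at $\eta_n^{\rm AH}$. The only imprecision is that $\delta_4^2$ also inherits a $\sqrt{u/n}$ term from the DQR bound, but since $\sqrt{u/n}\leq \sqrt{u/n^{\zeta_p}}$ this is absorbed in the final deviation term.
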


Next, we investigate the DES estimator defined in~\eqref{def:deep.ES.estimator} in the presence of heavy-tailed noises.
By combining Theorem~\ref{thm:oracle.type.deep.quantile}, Proposition~\ref{prop:approx.error}, and Theorem~\ref{thm:oracle.type.DES}, we derive the convergence rate for the DES estimators as follows.

\begin{theorem}[{Convergence rate for the DES estimator}] \label{thm:DES.with.DQR}
Under the same conditions as in Theorem~\ref{thm:DHES.with.DQR}, let $L_0, N_0 \geq 3$ be such that 
\$
L_0N_0 \asymp \bigg\{\frac{n}{\log^6 (n)}\bigg\}^{\xi_p/(4\gamma^* + 2\xi_p)} ~\mbox{ with }~ \xi_p = (p-1)/p.
\$
Consider the function classes $\cF_n = \cG_n = \cF_{{\rm DNN}}(d,L,N,M_0)$ with depth $L$ and width $N$ satisfying~\eqref{neural.network.construction}.
Set $\eta_{n}^{{\rm LS}} \asymp \max(\nu_p^{1/p},1) \cdot \{ \log^6(n)/n \}^{\gamma^* \xi_p/(2\gamma^* + \xi_p)}$. Then, for any $u \geq 1$ and sufficiently large $n$, the DES estimator $\hat g_{n}$ with the plugged-in DQR estimator $\hat f_n$ satisfies $ \PP \big( \|\hat g_n - g_{0}\|_2 \geq c_{12}  \alpha^{-1}  \sqrt{u}  \,\eta_{n}^{{\rm LS}}  \big) \lesssim u^{-p}$.
Here, $c_{12} > 0$ is independent of $(n,u,p,\nu_p)$ and depends polynomially on $t_{\max}$ and $d$.
\end{theorem}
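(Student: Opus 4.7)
The plan is to combine the oracle-type inequality of Theorem~\ref{thm:oracle.type.DES} with the approximation bound of Proposition~\ref{prop:approx.error} and the DQR concentration result of Theorem~\ref{thm:oracle.type.deep.quantile}, choosing the common architecture $L_0N_0$ so as to optimally balance the resulting error terms. This mirrors the argument behind Theorem~\ref{thm:DHES.with.DQR}, with the Huber oracle inequality replaced by its quadratic-loss counterpart and with the corresponding loss in concentration (exponential tails replaced by polynomial ones). Applying Theorem~\ref{thm:oracle.type.DES} to $\hat g_n$ with the plug-in $\hat f_n$ gives, on the event $\{\hat f_n\in\cF_0(\delta_4)\}$ and with probability at least $1-C(e^{-nV_n^2}+u^{-p})$,
\[
  \alpha\,\|\hat g_n - g_0\|_2 \;\le\; c_8\bigl(\eta_{\rm s}+\eta_{\rm a}+\delta_{\rm s}+\delta_4^2\bigr)\sqrt{u}.
\]
Since $\cF_n=\cG_n$, both $V_n$ and $\delta_{\rm s}$ scale as $L_0N_0\sqrt{\log^c(n)/n}$ up to the polylog factors induced by \eqref{neural.network.construction}, while Proposition~\ref{prop:approx.error} yields $\eta_{\rm a}\lesssim(L_0N_0)^{-2\gamma^*}$.

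For the prescribed $L_0N_0$, we have $V_n\to 0$, so $V_n^{\xi_p}\ge V_n$ and $\nu_p^{1/(2p)}V_n^{\xi_p}$ dominates $\eta_{\rm s}$ (because $\xi_p<1$). Balancing $(L_0N_0)^{\xi_p}n^{-\xi_p/2}\asymp(L_0N_0)^{-2\gamma^*}$ produces $(L_0N_0)^{\xi_p+2\gamma^*}\asymp n^{\xi_p/2}$, reproducing the prescribed $L_0N_0\asymp n^{\xi_p/(4\gamma^*+2\xi_p)}$. At this choice, $\eta_{\rm s}$ and $\eta_{\rm a}$ are both of order $\eta_n^{\rm LS}$ (with $\max(\nu_p^{1/p},1)$ absorbing the moment prefactor), and $\delta_{\rm s}\asymp V_n$ is of strictly smaller order $n^{-\gamma^*/(2\gamma^*+\xi_p)}$. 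To control $\delta_4^2$, I apply Theorem~\ref{thm:oracle.type.deep.quantile} to $\hat f_n$ with the same architecture and deviation parameter $u_q=p\log u$, invoking Proposition~\ref{prop:approx.error} for its approximation term. This yields $\|\hat f_n-f_0\|_2\lesssim\eta_n^{\rm LS}+\sqrt{u_q/n}$ with probability at least $1-u^{-p}$, and the elementary bound $\|h\|_4^2\le\|h\|_\infty\|h\|_2\le 2M_0\|h\|_2$ then gives $\delta_4^2\lesssim M_0(\eta_n^{\rm LS}+\sqrt{u_q/n})$. Because $\gamma^*\xi_p/(2\gamma^*+\xi_p)<1/2$, the residual $\sqrt{p\log u/n}$, once multiplied by the outer $\sqrt{u}$, is absorbed into $\sqrt{u}\,\eta_n^{\rm LS}$ for $n$ sufficiently large. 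A union bound over the QR and the ES events yields total failure probability $\lesssim u^{-p}+e^{-nV_n^2}\lesssim u^{-p}$, since $nV_n^2$ grows polynomially in $n$ so $e^{-nV_n^2}$ is negligible.

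The main technical obstacle is that the LS-ES balance forces a common network size strictly smaller than the DQR-optimal one (since $\xi_p<1$), so the plug-in DQR estimator is under-parameterized and converges at a sub-optimal $L_2$-rate. What rescues the argument is the $L_4$-to-$L_2$ reduction $\|\cdot\|_4^2\le 2M_0\|\cdot\|_2$: the squared $L_4$-error of $\hat f_n$ only needs to be $\lesssim\eta_n^{\rm LS}$, not to attain the faster DQR-optimal rate, and the extra $M_0$ factor merely inflates the prefactor $c_{12}$. A secondary delicate point is calibrating the DQR deviation $u_q=p\log u$ so that its failure probability $e^{-u_q}=u^{-p}$ matches the polynomial tail from the LS step, together with verifying that the resulting residual $\sqrt{p\log u/n}$ is dominated by $\eta_n^{\rm LS}$ on the relevant range of $u$; this relies on the inequality $p/n\lesssim(\eta_n^{\rm LS})^2$, which holds for $n$ large because $\gamma^*\xi_p/(2\gamma^*+\xi_p)<1/2$.
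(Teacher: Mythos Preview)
Your approach is essentially the same as the paper's: combine the oracle inequality (Theorem~\ref{thm:oracle.type.DES}), the approximation bound (Proposition~\ref{prop:approx.error}), and the DQR concentration (Theorem~\ref{thm:oracle.type.deep.quantile}), with the common network size $L_0N_0$ chosen to balance $\eta_{\rm s}$ against $\eta_{\rm a}$. The only notable difference is how you calibrate the DQR deviation level. The paper takes the DQR parameter large, namely $x = n\cdot u\,\{\log^6(n)/n\}^{2\gamma^*\xi_p/(2\gamma^*+\xi_p)}\ge n^{1/p}$, so that the DQR failure probability $e^{-x}\le e^{-n^{1/p}}$ is negligible compared to $u^{-p}$; this makes $\delta_4^2\lesssim \sqrt{u}\,\eta_n^{\rm LS}$. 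You instead take $u_q=p\log u$ so that the DQR tail matches $u^{-p}$ exactly, yielding the smaller residual $\delta_4^2\lesssim \eta_n^{\rm LS}+\sqrt{p\log u/n}$. Your calibration is cleaner and delivers the $\sqrt{u}$ scaling in the statement more directly.

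There is one small gap in your write-up. You claim the residual $\sqrt{p\log u/n}$ is absorbed into $\eta_n^{\rm LS}$ because ``$p/n\lesssim(\eta_n^{\rm LS})^2$'', but what you actually need is $p\log u/n\lesssim(\eta_n^{\rm LS})^2$, and $\log u$ is unbounded over $u\ge 1$. The fix (which the paper also uses) is to note that the bound is only non-trivial when $\sqrt{u}\,\eta_n^{\rm LS}\le 2M_0$, i.e.\ when $u\lesssim(\eta_n^{\rm LS})^{-2}$, so in particular $u\le n^2$ suffices; on this range $\log u\lesssim\log n$, and since $n(\eta_n^{\rm LS})^2\asymp n^{1-2\gamma^*\xi_p/(2\gamma^*+\xi_p)}$ grows polynomially, the absorption holds for all large $n$. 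A second trivial point: Theorem~\ref{thm:oracle.type.deep.quantile} requires its deviation parameter to be at least $1$, so take $u_q=\max(p\log u,1)$; for $u$ near $1$ the target probability $u^{-p}$ is bounded below by a constant and the bound is automatic.
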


Given $\nu_p \asymp 1$, $\eta_{n}^{\rm LS}$ is larger than $\eta_{n}^{\rm AH}$ as $\xi_p < \zeta_p$. 
Therefore, the DES estimator converges at a slower rate than the DRES estimator.
More importantly, the deviation bounds in Theorem~\ref{thm:DHES.with.DQR} and Theorem~\ref{thm:DES.with.DQR} confirm that, from a non-asymptotic perspective, the DRES estimator is significantly more robust against heavy tails.

\begin{remark}
    When $\epsilon_-$ has a (conditional) bounded $p$-th ($p \geq 2$) moment and $f_0, g_0 \in \cH(d,l,M_0,\cP)$, Theorem~\ref{thm:DHES.with.DQR} and Theorem~\ref{thm:DES.with.DQR} establish that after selecting an appropriate robustification parameter and network structures, the two-step robust estimator $\hat g_{n,\tau}$ satisfies
    \#
    \alpha \|\hat g_{n,\tau} - g_0\|_2 = \cO_{\PP}\Big(n^{-\gamma^*\zeta_p/(2\gamma^* + \zeta_p)}(\log n)^{6\gamma^*\zeta_p/(2\gamma^* + \zeta_p)}\Big), \label{DRESR.l2.bound}
    \#
    and the two-step LSE $\hat g_n$ achieves the following convergence rate
    \#
    \alpha \|\hat g_{n} - g_0\|_2 = \cO_{\PP}\Big(n^{-\gamma^*\xi_p/(2\gamma^* + \xi_p)}(\log n)^{6\gamma^*\xi_p/(2\gamma^* + \xi_p)}\Big),\label{DESR.l2.bound}
    \#
    respectively.
    We remark that when the function class $\cH(d,l,M_0, \cP)$ satisfies $d \geq t^*$, these upper bounds are sharp up to a logarithmic factor of $n$.
    In detail, for given depth $L$ and width $N$ of neural networks, define
    \$
\cT_{n,\tau}^{{\rm AH}}(\eta_{{\rm opt}}) := \Bigg\{ & g \in \cF_n(d,L,N,1) : \hat \cR_\tau(f_0, g) \leq \inf_{g \in \cF_n(d,L,N,1)} \hat \cR_\tau(f_0, g) + n^{-100} \mbox{ or }\\
& ~~~~~~~~~\hat \cR_\tau(f_0, g) \leq \hat \cR_\tau(f_0, g_{0,\tau}) \vee \bigg\{ \inf_{g \in \cF_n(d,L,N,1)} \hat \cR_\tau(f_0, g) + C_1\eta_{{\rm opt}}^2 \bigg\} \Bigg\},
    \$
    where $g_{0,\tau}$ is defined in~\eqref{def:population.minimzer}. 
    Furthermore, for a fixed function $f_0 : [0,1]^d \to \RR$, and a function class $\cH \subseteq \{g : \RR^d \to [-1,1]\}$, define the family of data generating processes $\cU(d,p,\cH)$ as follows: (i) Each coordinate of $X \in [0,1]^d$ follows the uniform distribution, (ii) $Y = f_0(X) + \epsilon$ with $\PP(\epsilon \leq 0 | X) = \alpha$, (iii) $e_{\alpha}(Y|X) = g_0(X) \in \cH$, and (iv) $\EE\{|\epsilon_- - \EE(\epsilon_-|X)|^p|X\} \leq 1$.
    We denote $\eta_{n,*} \asymp n^{-\gamma^*\zeta_p/(2\gamma^* + \zeta_p)}(\log n)^{-\gamma^*(3\zeta_p + 4)/(2\gamma^* + \zeta_p)}$ for a given $\cH(d,l,1, \cP)$.
    Then, by combining Lemma 4.1~\cite{FGZ2022} and Theorem 4.1 in~\cite{FGZ2022}, we have
    \$
\liminf_{n \to \infty} \inf_{N,L \geq C_2, \tau \geq C_3} \sup_{(X,Y) \in \cU(d,p,\cH)} \PP\big\{\exists \hat g \in \cT_{n,\tau}^{{\rm AH}}(\eta_{n,*}) \mbox{ such that } \alpha\|\hat g - g_0\|_2 \geq \eta_{n,*} \big\} = 1,
    \$
    where $\cH = \cH(d,l,\cP,1)$ with $d \geq t^*$.
    Therefore, the $L_2$ error bound~\eqref{DRESR.l2.bound} for $\hat g_{n,\tau}$ is sharp up to logarithmic terms.
    In a similar manner, it can be shown that the bound~\eqref{DESR.l2.bound} of $\hat g_n$ is also sharp up to logarithmic terms by Theorem 4.2 in~\cite{FGZ2022}.
\end{remark}

Finally, we consider the case where the noise is sub-Gaussian. The following theorem shows that with a sufficiently large robustification parameter, the DRES estimator achieves the same convergence rate as the DES estimator.

\begin{theorem}[Convergence rate for the DRES estimator using a plugged-in DQR estimate under sub-Gaussian noise] \label{thm:DHES.with.DQR.subgaussian}
Assume Conditions~\ref{cond:conditional.density} and~\ref{cond:light-tailed.noise} hold.
Moreover, assume that $\PP_X$ is absolutely continuous with respect to the Lebesgue measure on $[0,1]^d$.
Let $\gamma^*$ be as in~\eqref{def:gamma}, and $L_0, N_0 \geq 3$ be such that $ 
L_0 N_0 \asymp \{ n/ \log^6(n) \}^{1/(4\gamma^* + 2)}$. Consider the function classes $\cF_n = \cG_n = \cF_{{\rm DNN}}(d,L,N,M_0)$, where the depth $L$ and width $N$ satisfy~\eqref{neural.network.construction}.
Set $\tau \in [\max(c_9, \sigma_0\sqrt{\log n}), \infty]$ and $\eta_{n}^{{\rm subG}} \asymp \{\log^6(n)/n\}^{\gamma^*/(2\gamma^* + 1)}$.
Then, for any $u \geq 1$, it holds uniformly over $f_0,g_0 \in \cH(d,l,M_0,\cP)$ that 
\$
\PP \Bigg\{ \| \hat g_{n,\tau} - g_{0}\|_2 \geq \frac{c_{13}}{\alpha} \max(\sigma_0, 1)\Bigg( \eta_n^{{\rm subG}} + \sqrt{\frac{u}{n}}\Bigg) \Bigg\} \lesssim e^{-u} ,
\$
where $c_{13}>0$ is independent of $(n,u,\sigma_0)$ and depends polynomially on $t_{\max}$ and $d$.
\end{theorem}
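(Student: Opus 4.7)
The plan is to combine the oracle-type inequality for the DRES estimator under sub-Gaussian noise (Theorem~\ref{thm:oracle.type.subgaussian}) with the DQR convergence rate (Theorem~\ref{thm:convergence.rate.deep.quantile}) and the neural network approximation bound (Proposition~\ref{prop:approx.error}). The strategy is to verify three facts: (i) the chosen hyperparameters balance the approximation error $\eta_{\rm a}$ and stochastic errors $\eta_{\rm s}, \delta_{\rm s}$ at the target rate $\eta_n^{{\rm subG}}$; (ii) the prescribed $\tau \geq \sigma_0 \sqrt{\log n}$ makes the robustification bias $\eta_{\rm b}$ negligible; and (iii) the first-stage QR error $\delta_4^2$ is also of order $\eta_n^{{\rm subG}} + \sqrt{u/n}$ with high probability, via the uniform boundedness of $\hat f_n$ and $f_0$.

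First I would apply Proposition~\ref{prop:approx.error} to $\cG_n$ with $L_0 N_0 \asymp \{n/\log^6 n\}^{1/(4\gamma^*+2)}$ and the width/depth specification in \eqref{neural.network.construction}, yielding $\eta_{\rm a} \lesssim (L_0 N_0)^{-2\gamma^*} \asymp \eta_n^{{\rm subG}}$ uniformly over $g_0 \in \cH(d,l,M_0,\cP)$. A direct computation using $LN \asymp L_0 N_0 \log L_0 \log N_0$ shows $\eta_{\rm s} + \delta_{\rm s} \lesssim \max(\sigma_0, 1)\cdot LN \sqrt{\log^3 n /n} \asymp \max(\sigma_0, 1)\eta_n^{{\rm subG}}$ after absorbing polylogarithmic factors. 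Since $\tau \geq \sigma_0 \sqrt{\log n}$, we have $\eta_{\rm b} \leq 2(2M_0 + \sigma_0) e^{-\tau^2/(2\sigma_0^2)} \leq 2(2M_0 + \sigma_0) n^{-1/2}$, which is dominated by $\max(\sigma_0, 1) \sqrt{u/n}$ for $u \geq 1$; the case $\tau = \infty$ gives $\eta_{\rm b} = 0$. For $\delta_4$, since $\|\hat f_n - f_0\|_\infty \leq 2M_0$, the elementary inequality $\|\hat f_n - f_0\|_4^2 \leq 2M_0 \|\hat f_n - f_0\|_2$ holds. Under Condition~\ref{cond:conditional.density} (which is assumed), Theorem~\ref{thm:convergence.rate.deep.quantile} yields $\|\hat f_n - f_0\|_2 \lesssim \eta_n^{{\rm subG}} + \sqrt{u/n}$ with probability at least $1 - C' e^{-u}$; setting $\delta_4^2 = 2M_0 c_5(\eta_n^{{\rm subG}} + \sqrt{u/n})$ makes the event $\{\hat f_n \in \cF_0(\delta_4)\}$ occur with the same probability, and $\delta_4^2 \lesssim \eta_n^{{\rm subG}} + \sqrt{u/n}$.

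The conclusion follows by invoking Theorem~\ref{thm:oracle.type.subgaussian} on $\{\hat f_n \in \cF_0(\delta_4)\}$, substituting the bounds above, and taking a union bound over the two high-probability events to yield the claimed inequality with probability at least $1 - (C+C')e^{-u}$. The main obstacle is the absence of sharp $L_4$- or $L_\infty$-convergence rates for DNN-based QR estimators; only the boundedness-based inequality $\delta_4^2 \leq 2M_0 \|\hat f_n - f_0\|_2$ is available, and it is just tight enough here because $\delta_4$ enters the oracle inequality through its square, matching the $L_2$-rate from Theorem~\ref{thm:convergence.rate.deep.quantile}. A secondary concern is carefully tracking the polylogarithmic factors throughout so that they combine into the exponent $6\gamma^*/(2\gamma^*+1)$ hidden inside $\eta_n^{{\rm subG}}$, which follows from the identity $\{\log^6 n/n\}^{\gamma^*/(2\gamma^*+1)} = n^{-\gamma^*/(2\gamma^*+1)} \log^{6\gamma^*/(2\gamma^*+1)} n$.
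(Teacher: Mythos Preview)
Your proposal is correct and follows essentially the same route as the paper: the paper's proof simply states that the result follows by mimicking the proof of Theorem~\ref{thm:DHES.with.DQR}, combining Proposition~\ref{prop:approx.error}, Theorem~\ref{thm:convergence.rate.deep.quantile}, and Theorem~\ref{thm:oracle.type.subgaussian}. Your explicit handling of each term---$\eta_{\rm a}$ via the approximation result, $\eta_{\rm s}+\delta_{\rm s}$ via direct computation, $\eta_{\rm b}$ via $\tau\geq\sigma_0\sqrt{\log n}$, and $\delta_4^2$ via the boundedness inequality $\|\hat f_n-f_0\|_4^2\leq 2M_0\|\hat f_n-f_0\|_2$ combined with the DQR $L_2$-rate---is exactly the intended argument.
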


\section{Numerical Studies}
\label{sec:4}

\subsection{Monte Carlo experiments}
\label{sec:4.1}

In this section, we conduct simulation studies to evaluate the performance of the proposed ES regression estimators. Specifically, we compare the proposed deep robust ES regression estimator (\texttt{DRES}) with several benchmark methods: (i) the deep least squares ES estimator (\texttt{DES}) defined in \eqref{def:deep.ES.estimator}; (ii) the oracle deep robust ES estimator (\texttt{oracle-DRES}); (iii) the oracle deep least squares ES estimator (\texttt{oracle-DES}); and (iv) the two-step local linear ES estimator (\texttt{LLES}) \citep{O2021}, in which the conditional quantile function is also estimated via local linear regression. The oracle methods, \texttt{oracle-DRES} and \texttt{oracle-DES}, respectively correspond to the two-step robust ES estimate \eqref{def:Huber.ES.estimator} and the two-step LSE \eqref{def:deep.ES.estimator}, where $\hat f_n$ is replaced by the true conditional quantile function $f_0$. The non-crossing variant of the \texttt{DRES} estimator, introduced in Section~\ref{sec:2.4}, is referred to as \texttt{NC-DRES}. All estimators are implemented using the Python package \texttt{quantes}\footnote{\href{https://pypi.org/project/quantes/}{https://pypi.org/project/quantes/}}.

To train the NN-based estimators, we employ a fully connected feedforward neural network structure with four hidden layers containing $h$, $2h$, $2h$, and $h$ neurons, respectively. In all simulations, we set $h=64$. The network parameter, including both weights and biases, are optimized using the Adam optimizer \citep{KB2014}. The learning rate, batch size, and maximum number of epochs are fixed at $10^{-4}$, 128, and 200, respectively. For a training sample of size $n$, 20\% of the observations are randomly selected as a validation set, which is used to choose the best model based on the validation loss. All other hyperparameters are kept at their default values. The local linear estimator requires two bandwidths (smoothing parameters), one for estimating $f_0$ and the other for $g_0$. Following the same procedure as in NN training, 20\% of the data is randomly drawn as a validation set to select the optimal bandwidths.

\begin{figure}[!t]
\centering
\subfloat[$\eta \sim \cN(0,1)$]
  {\includegraphics[width=0.5\textwidth]{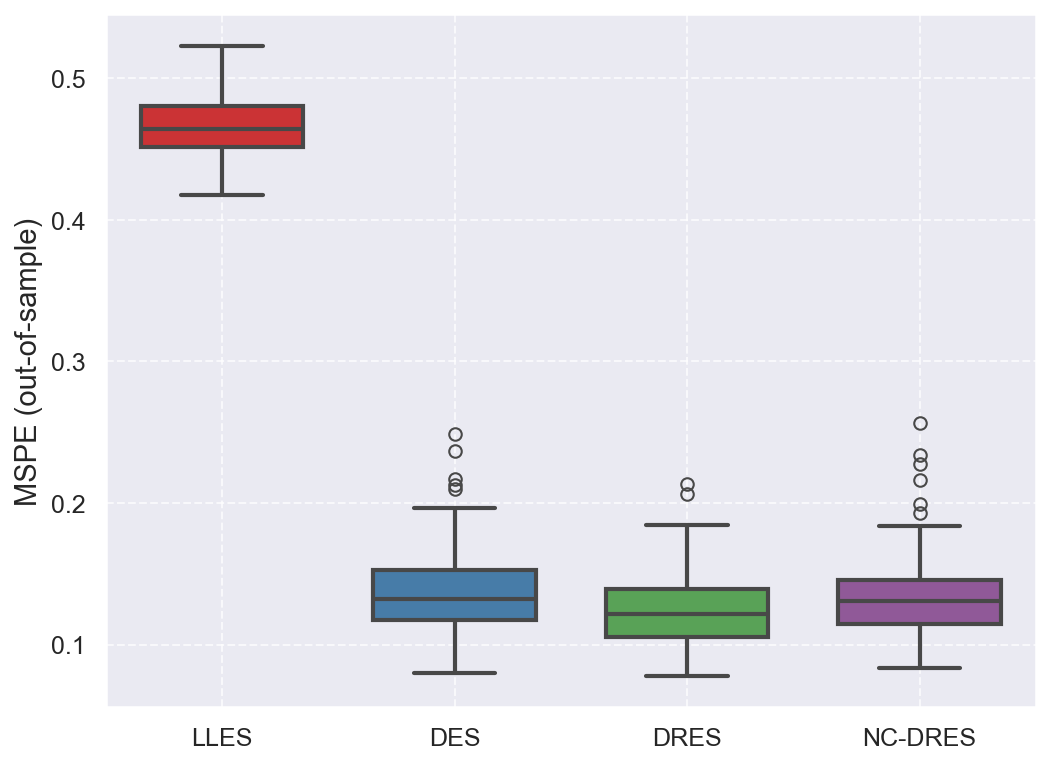}} \hfill 
\subfloat[$\eta  \sim t_{2.25}/3$]
  {\includegraphics[width=0.5\textwidth]{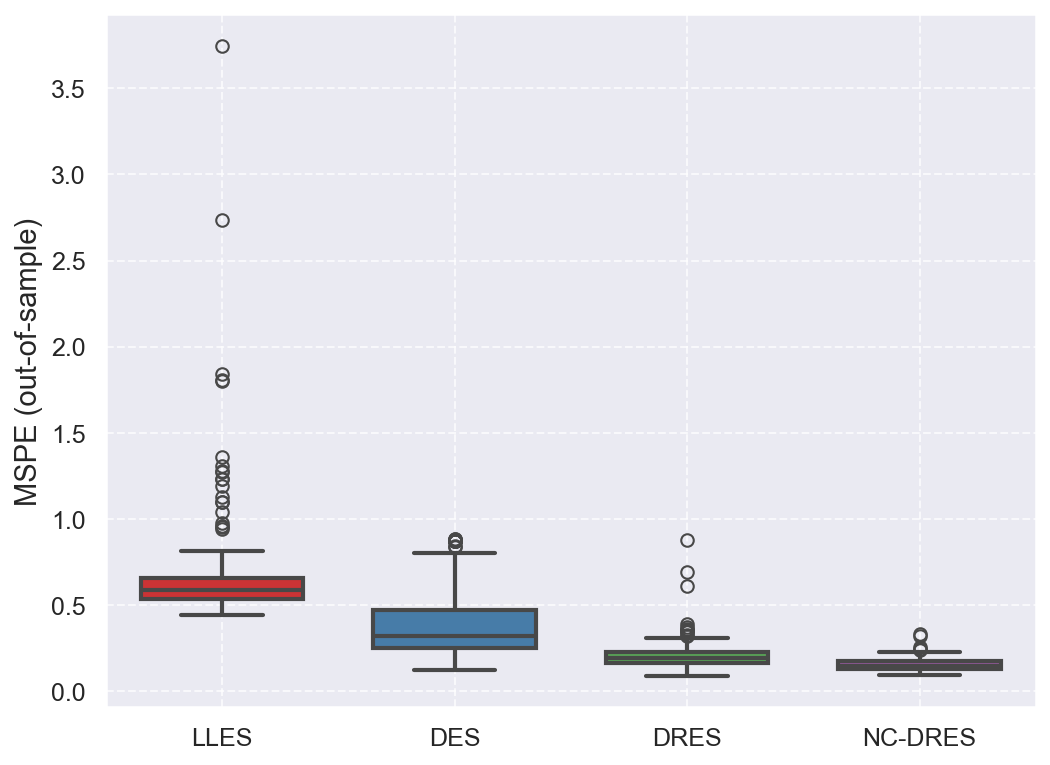}}
\caption{Boxplots of $\widehat{\mathrm{MSPE}}$ (based on 200 repetitions) for the four estimators--\texttt{LLES}, \texttt{DES}, \texttt{DRES}, and \texttt{NC-DRES}--in estimating the conditional 10\% ES function under the location–scale model $Y = h_1(X) + h_2(X)\eta$, where $X \in [0,1]^8$ and sample size is $n = 4{,}096$.}
\label{fig:boxplots}
\end{figure}

To evaluate the performance of an estimator $\hat g$, we compute the empirical (out-of-sample) mean squared prediction error (MSPE) based on a test set $\{ (X^*_t , Y_t^*) \}_{t=1}^T$, defined as
\$
\hat{\mathrm{MSPE}} = \frac{1}{T} \sum_{t = 1}^T \big\{\hat g(X^*_t) - g_0(X^*_t) \big\}^2 .
\$
The MSPE serves as an empirical approximation to the squared $L_2$-error $\|\hat g - g_0\|_2^2 = \EE_{X^* \sim \PP_X} \{ | (\hat g - g_0)(X^*) |^2\}$. We generate the training and test samples, $\{(X_i, Y_i)\}_{i = 1}^n$ and $\{ (X^*_t , Y_t^*) \}_{t=1}^T$, from a heteroscedastic model $Y = h_1(X) + h_2(X) \cdot \eta$, where $X \in \RR^8$ consists of independent Unif$(0, 1)$ entries. The two functions 
$h_1, h_2 : \RR^8 \to \RR$ are specified as
\$
h_1(\bx) &= \cos(2\pi x_1) + \frac{1}{1 + e^{-x_2 - x_3}} + \frac{1}{(1 + x_4 + x_5)^3} + \frac{1}{x_6 + e^{x_7x_8}} , \\
h_2(\bx) &= \sin\bigg(\frac{\pi(x_1 + x_2)}{2}\bigg) + \log(1 + x_3^2x_4^2x_5^2) + \frac{x_8}{1 + e^{-x_6-x_7}},  \quad \bx = (x_1, \ldots, x_8)^\T .
\$ 
Additional simulation results using alternative choices of $h_1$ and $h_2$ are provided in Appendix~\ref{appendix:simulation}.
We consider two error distributions for $\eta$: (i) the standard normal distribution $\cN(0, 1)$ (light-tailed) and (ii) the scaled $t$-distribution $t_{2.25}/3$ (heavy-tailed), both standardized to have zero mean and unit variance. Since $h_2 \geq 0$, the conditional $\alpha$-level quantile and ES functions are $f_0(\bx) = h_1(\bx) + q_\alpha(\eta)\cdot h_2(\bx)$ and $g_0(X) = h_1(\bx) + e_\alpha(\eta) \cdot h_2(\bx)$, where $q_\alpha(\eta)$ and $e_\alpha(\eta)$ denote the $\alpha$-level quantile and ES of $\eta$, respectively. We fix $T = 10^5$ and vary $n$ across different simulation settings.

\begin{figure}[!t]
\centering
\subfloat[$\eta \sim \cN(0,1)$]
  {\includegraphics[width=0.5\textwidth]{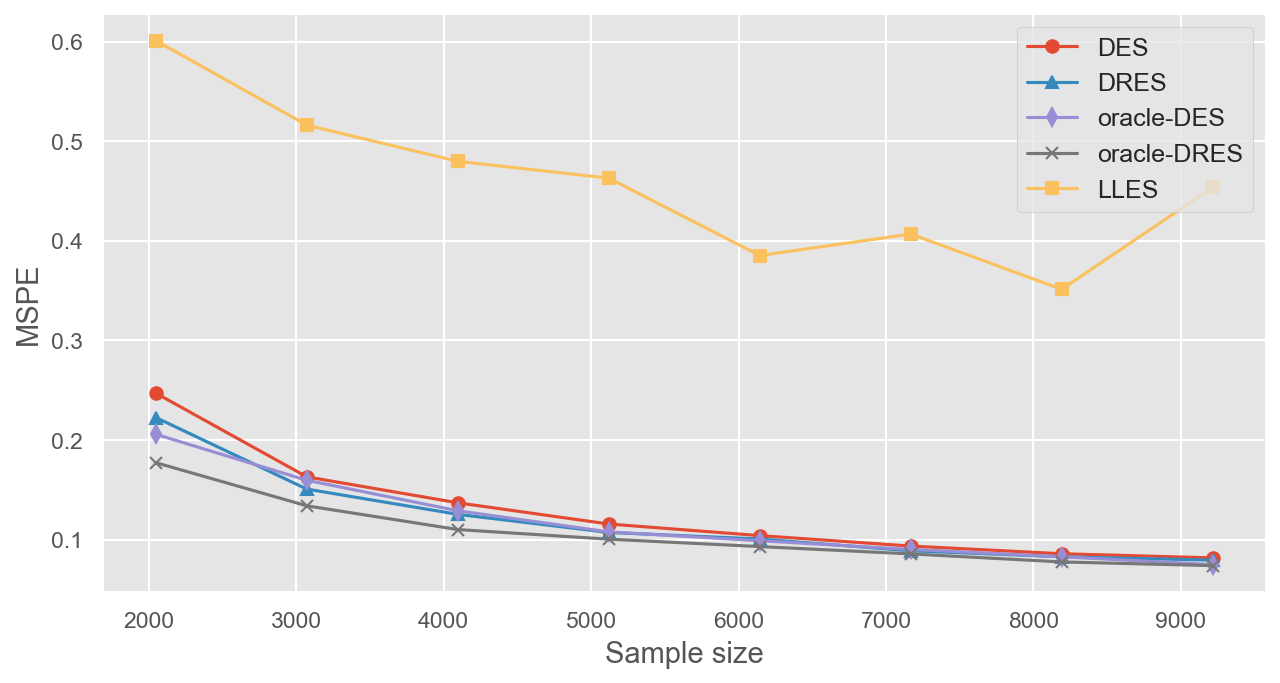}} 
\subfloat[$\eta  \sim t_{2.25}/3$]
  {\includegraphics[width=0.5\textwidth]{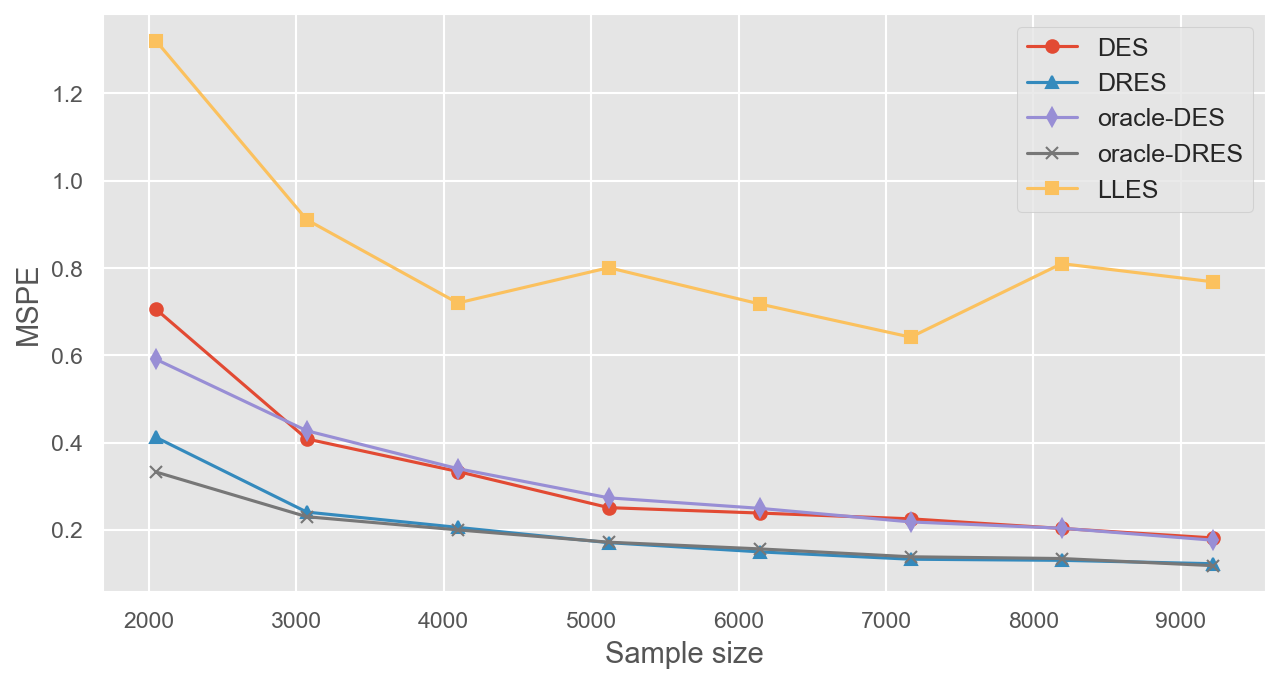}}
\caption{Plots of empirical mean squared prediction error ($\hat{{\rm MSPE}})$ versus training sample size, ranging from 2,048 to 9,216 are shown based on 100 replications. As before, the target is the conditional 10\% ES function from a location-scale model.}
\label{fig:curves}
\end{figure}

To examine the stability of Huberization in neural network regression under heavy-tailed errors, we fix the sample size at $n = 4{,}096$ and consider quantile levels $\alpha \in \Lambda := \{0.05, 0.1, 0.15, 0.2, 0.25\}$. The estimators \texttt{LLES}, \texttt{DES}, and \texttt{DRES} estimate the $\alpha$-level quantile and ES functions separately for each $\alpha \in \Lambda$, whereas \texttt{NC-DRES} jointly estimates them, as described in Section~\ref{sec:2.4}, to ensure non-crossing ES functions. Figure~\ref{fig:boxplots} presents the boxplots of the empirical MSPEs, $\widehat{\mathrm{MSPE}}$, for the four estimators at $\alpha = 0.1$. 
The robust neural network estimator (\texttt{DRES}) markedly outperforms its least-squares counterpart under heavy-tailed errors. 
Notably, it not only maintains accuracy under normal errors but also achieves modest improvements. Furthermore, all DNN-based estimators (\texttt{DES}, \texttt{DRES}, and \texttt{NC-DRES}) consistently outperform the local linear estimator (\texttt{LLES}). Results for other quantile levels are reported in Appendix~\ref{appendix:simulation}.

Furthermore, we implement the three nonparametric estimators together with two benchmark oracle DNN estimators that incorporate the true conditional quantile function in their construction. We fix $\alpha=0.1$ and vary the sample size $n$ varying from 2,048 to 9,216 in increments of 1,024. Figure~\ref{fig:curves} displays the empirical MSPE as a function of the training sample size for the five estimators. The relatively slow convergence of the \texttt{LLES} estimator can be attributed to the inherent limitation of the local linear approximation, which--being based on a first-order Taylor expansion--fails to capture the intrinsic lower-dimensional structure of the target function. In contrast, the two-step ES estimator performs almost as well as its oracle counterpart, as if $f_0$ were known in advance. This observation supports the orthogonality property of the two-step framework, under which the resulting ES estimator is largely insensitive to small perturbations in the QR estimator in the first step.

\subsection{Precipitation pattern analysis}

The El Ni{\~n}o–Southern Oscillation (ENSO) is an irregular climate phenomenon characterized by periodic fluctuations in wind patterns and sea surface temperatures across the tropical eastern Pacific Ocean. The U.S. Climate Prediction Center defines \emph{El Ni{\~n}o conditions} (or \emph{La Ni{\~n}a conditions}) as periods when the sea surface temperature in the Ni{\~n}o-$3.4$ region of the equatorial Pacific deviates more than $0.5^{\circ}\mathrm{C}$ above (below) the long-term average for the same season. Substantial anomalies in seasonal precipitation have been linked to warm (El Ni{\~n}o) and cool (La Ni{\~n}a) phases of ENSO \citep{RH1986}. Recent studies further suggest that ENSO events may amplify regional rainfall variability \citep{Yetal2021}. Accordingly, understanding the relationship between ENSO and the upper-tail behavior of precipitation is of critical importance.

We investigate the influence of El Ni{\~n}o on the upper-tail average of precipitation across the continental United States.
To this end, we apply the proposed methodology to the U.S. precipitation reanalysis dataset of \citep{Setal2019}. This dataset contains daily precipitation measurements (in millimeters) obtained from reanalysis, which integrates a wide range of observational data and numerical model outputs. It covers 819 grid points over the continental U.S. at a spatial resolution of $1^\circ \times 1^\circ$, spanning the period from 1950 to 2015. For preprocessing, we aggregate daily precipitation into monthly totals, yielding a dataset with 647,829 observations. The data are further categorized into four meteorological seasons: winter (December-February), spring (March-May), summer (June-August), and fall (September-November). The winter dataset consists of 161,343 observations, while each of the other seasons contains 162,162 observations. The left panel of Figure~\ref{fig:exploraty_data_analysis} displays the distribution of monthly precipitation, revealing pronounced right skewness, indicating that the mean alone is inadequate for characterizing extreme rainfall events. The right panel shows an approximately linear pattern in the upper tail, suggesting a moderately heavy-tailed distribution. Together, these findings underscore the need for a robust approach to estimating the upper-tail average of precipitation.

\begin{figure}[!t]
    \centering
    \begin{subfigure}[b]{0.5\textwidth}
        \includegraphics[width=\textwidth]{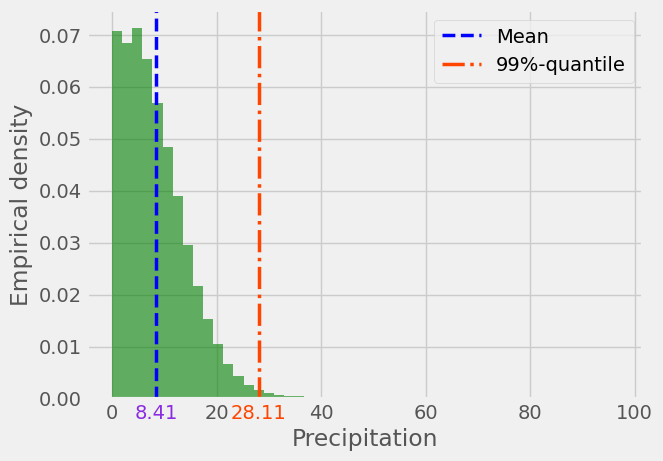}
        \caption{Histogram of precipitation}
    \end{subfigure} \hfill
    \begin{subfigure}[b]{0.49\textwidth}
        \includegraphics[width=\textwidth]{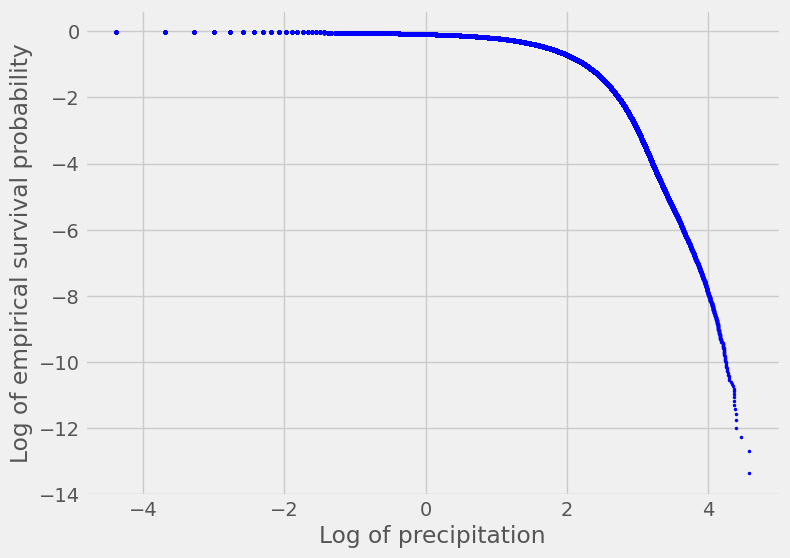}
        \caption{Log-log plot of precipitation}
    \end{subfigure}
    \caption{Histogram and log-log plot of precipitation. The blue and red horizontal lines represent the sample mean and the $99\%$ quantile of precipitation.}
    \label{fig:exploraty_data_analysis}
\end{figure}

\begin{figure}[!t]
    \centering
    {\includegraphics[width=0.9\textwidth]{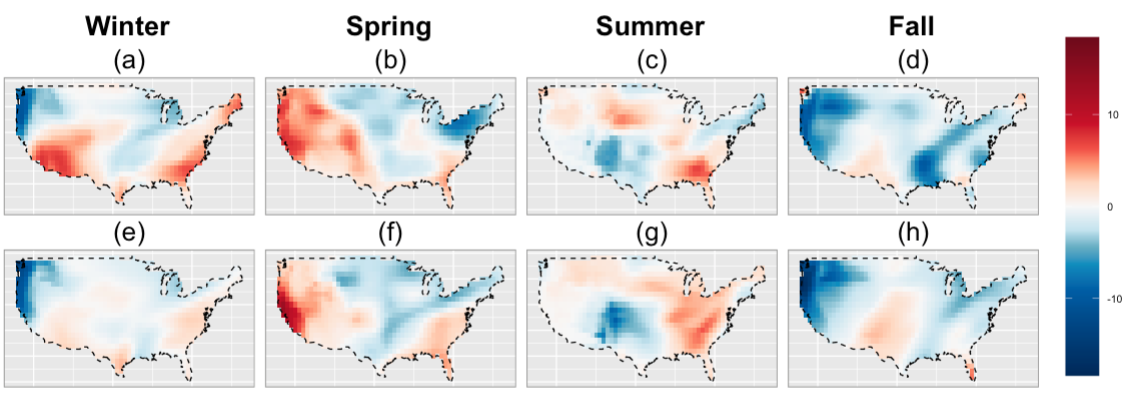}} 
    \caption{Subfigures (a)--(d) show the differences in the predicted ES of precipitation between El Ni{\~n}o and non-El Ni{\~n}o conditions across the four seasons, while subfigures (e)--(h) show the corresponding differences in the predicted mean precipitation. In the plots, red indicates increased precipitation during El Ni{\~n}o, and blue represents decreased precipitation.}
    \label{fig:difference}
\end{figure}

We fit the proposed robust ES regression at the upper-tail level $\alpha = 0.99$, with the robustification parameter selected according to the tuning procedure described in Section~\ref{sec:2.4}. The primary covariate of interest is the \texttt{ENSO} (Ni{\~n}o-$3.4$ index), while additional control variables include \texttt{YEAR}, \texttt{SEASON}, and the spatial coordinates \texttt{LAT} (latitude) and \texttt{LON} (longitude) of each grid point within the continental United States. 
To construct the Ni{\~n}o-$3.4$ index, we compute monthly averages of sea surface temperature in the Ni{\~n}o-$3.4$ region, subtract the corresponding annual mean, and normalize the resulting series. As noted in Remark~\ref{rmk:upper.es}, estimating the conditional upper-tail average at level $\alpha$ is equivalent to applying the proposed \texttt{DRES} method at level $1 - \alpha$ after negating the response variable. The conditional quantile function at level $\alpha$ is estimated using the DQR estimator defined in \eqref{def:deep.quantile.regression}. For implementation, we employ fully connected ReLU neural networks with depth $L = 4$ and width $N = 512$, implemented in \texttt{PyTorch}. The learning rate is set to $10^{-3}$, with a batch size of 1,024 and a total of 500 epochs. Following the procedure in Section~\ref{sec:4.1}, we reserve a validation set of size $n_{\mathrm{valid}} = \lceil n/5 \rceil$ to select the best-performing model across the 500 training epochs.

To assess the impact of El Ni{\~ n}o on the upper tail of precipitation, we first compute the predicted conditional ES of precipitation at level $\alpha = 0.99$ for each grid location, setting the Ni{\~n}o-3.4 index set to $1.0$ (representing El Ni{\~n}o conditions) and $0$ (representing neutral conditions).  To mitigate variability arising from random splits of the training and validation sets, the model-fitting procedure is repeated 10 times. In addition, to reduce the influence of year-specific fluctuations, we average the predicted values over the period 1991-2010. Subfigures (a)--(d) of Figure~\ref{fig:difference} display the differences in predicted ES between El Ni{\~n}o and non-El Ni{\~n}o conditions for each season. The remaining subfigures present the corresponding differences in predicted mean precipitation, obtained using the same procedure--10 repetitions and 20-year averaging--based on deep neural network mean regression.

\begin{figure}[!t]
    \centering
    \begin{subfigure}[b]{0.46\textwidth}
        \includegraphics[width=\textwidth]{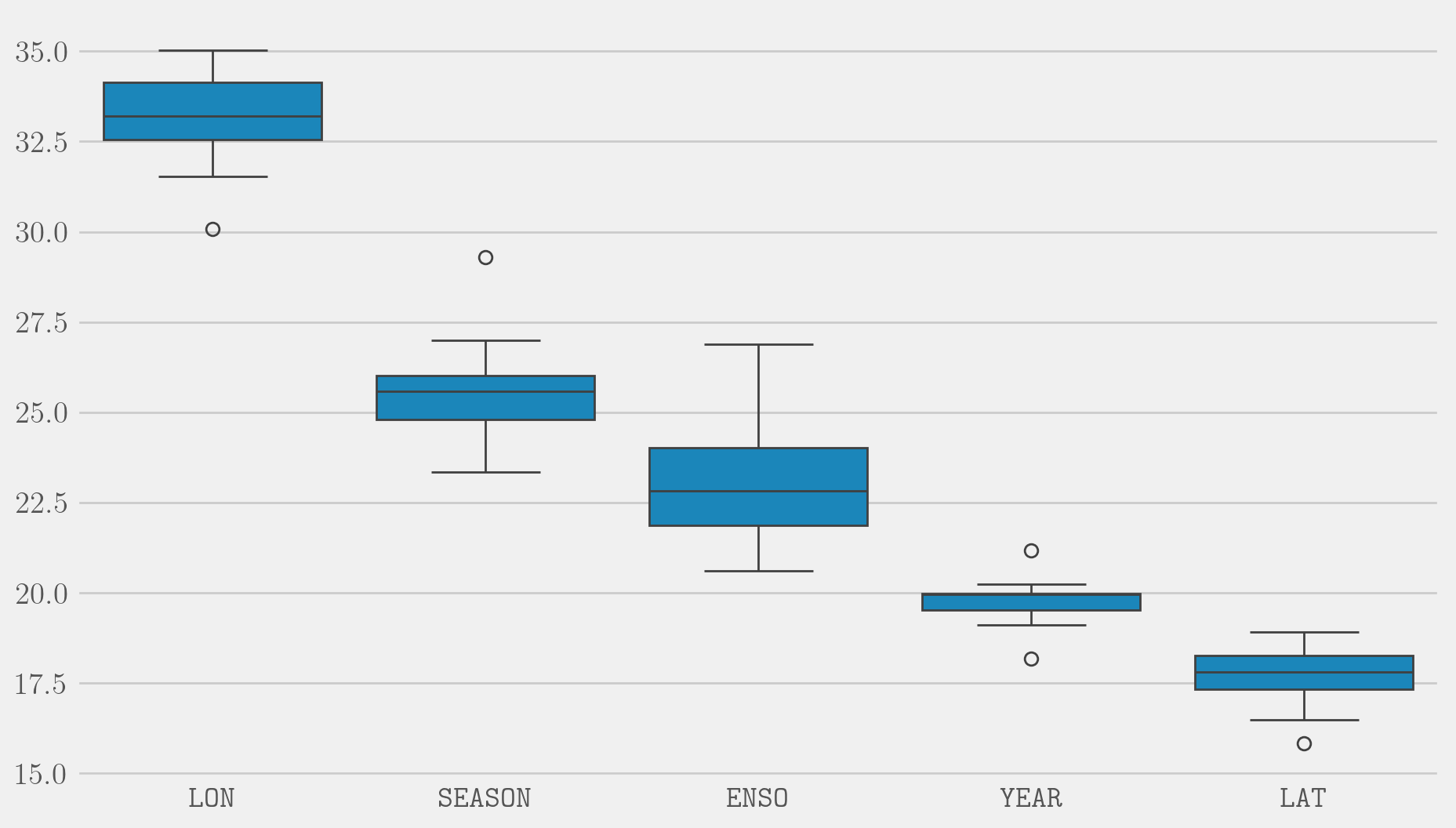}
        \caption{Mean}
    \end{subfigure}
    \hspace{0.016\textwidth} 
    \begin{subfigure}[b]{0.46\textwidth}
        \includegraphics[width=\textwidth]{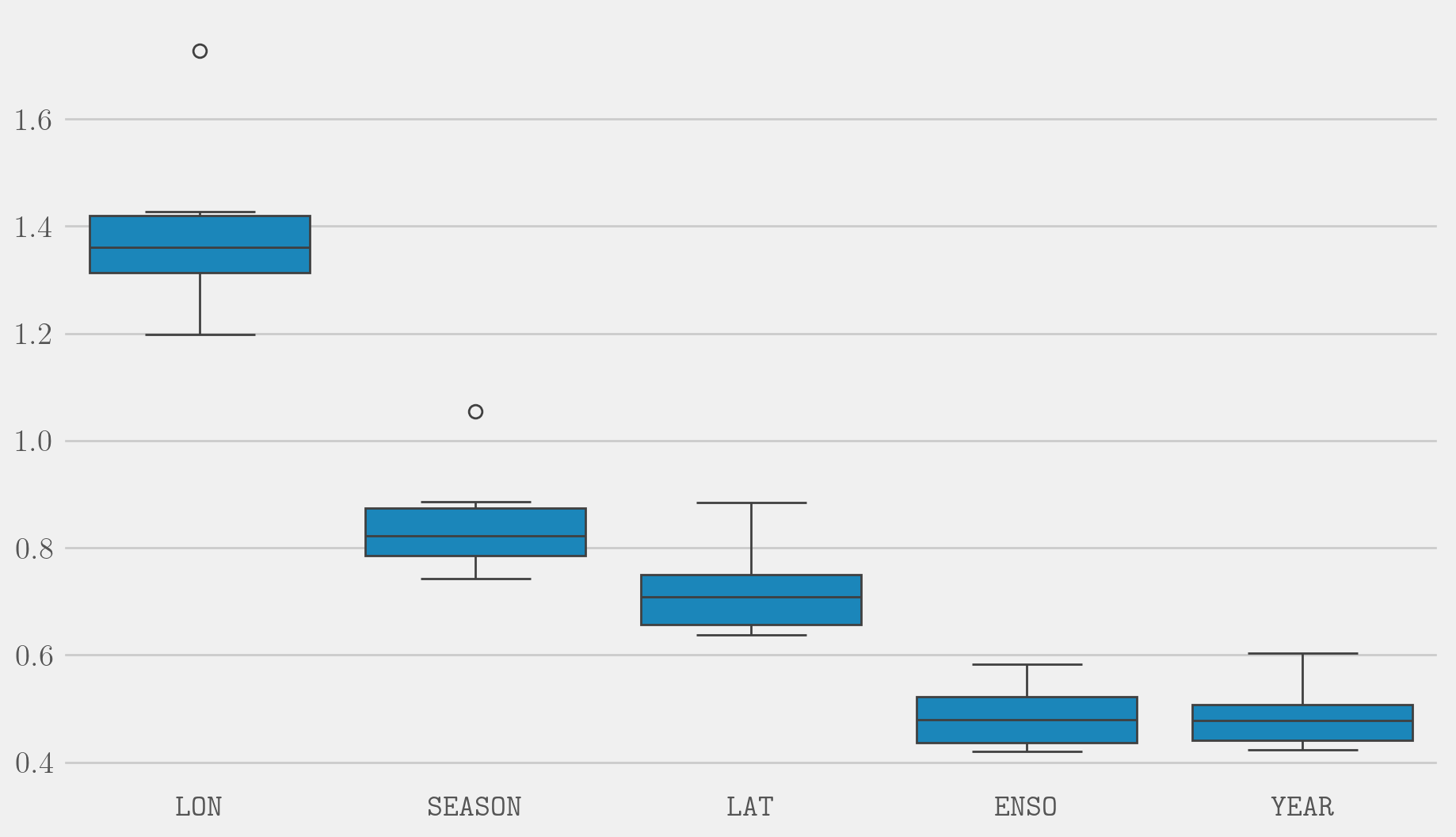}
        \caption{1\% upper ES}
    \end{subfigure}
    \caption{Variable permutation importance for conditional mean and ES regressions.}
    \label{fig:vpi}
\end{figure}

Our results reveal that the influence of El Ni{\~n}o varies substantially across both space and season. In particular, during winter and spring, both regression approaches predict drier conditions in the northern United States and wetter conditions in the southern regions under El Ni{\~n}o episodes, consistent with the well-documented north-south seesaw pattern in precipitation reported in the climate science literature \citep{DCDM1998}. Notably, in winter, the spatial teleconnections are more pronounced when examining the upper-tail average than the mean. The ES regression uncovers a stronger linkage between ENSO and precipitation in southern California and along the Gulf Coast, consistent with earlier studies \citep{KS2007}, whereas the mean regression indicates only a weak association in these areas.

Additionally, we employ Variable Permutation Importance (VPI) \citep{B2001} to quantify the contribution of individual features to predicting both mean and upper-tail average precipitation. The VPI for a given feature is computed by randomly permuting its sample values while holding all other variables fixed and then measuring the resulting relative increase in prediction loss. For mean regression, model performance is evaluated using the mean squared error (MSE) computed with respect to the original outcome $Y$. In contrast, for ES regression, the MSE is calculated with respect to the estimated surrogate response variable defined in~\eqref{def:z.definition}. Figure~\ref{fig:vpi} displays boxplots of the VPI values for the mean (left) and the 1\% upper ES (right), showing the relative increases in losses, ordered by their average magnitudes.

The feature \texttt{LON} (longitude) exhibits the highest VPI for both mean and upper-tail ES regression.
This likely reflects the influence of longitudinal geographic features--most notably, the Rocky Mountains and the Appalachian Mountains--which generate strong orographic effects on precipitation \citep{H2012}. Moreover, longitudinal variation in distance from the oceans also plays a key role in shaping precipitation patterns \citep{C2019}. Interestingly, the importance of \texttt{LAT} (latitude) increases when focusing on the upper percentiles of precipitation. Because latitude is strongly linked to temperature, which governs atmospheric circulation and air mass characteristics, it substantially contributes to regional precipitation variability across the United States. Overall, these findings provide a more nuanced understanding of how multiple factors, including El Ni{~n}o events, jointly influence the upper tails of U.S. precipitation distributions.

\acks{The authors would like to thank the Action Editor, Professor Jie Peng, and the two anonymous reviewers for their insightful comments and constructive suggestions that helped improve this work. The research of Kean Ming Tan was supported by the NSF grants DMS-2113346 and DMS-2238428. The research of Judy Wang was partially supported by the NSF grant DMS-2426174. Wen-Xin Zhou, the corresponding author, acknowledges travel support from the Australian Research Council Discovery Project Grant DP230100147.}

\vskip 0.2in
\bibliography{sample}

\newpage

\appendix

\section{Proofs of Main Theorems}

\subsection{Supporting technical lemmas}
We first introduce some basic notations that will be used throughout.
Recall that quantile regression residuals are defined as $\epsilon_i = Y_i - f_0(X_i)$ for $1 \leq i \leq n$ and $\epsilon = Y - f_0(X)$.
For any $f : [0,1]^d \to \RR$, we define
\#
Z(f)(X, \epsilon) = \{Y - f(X)\} \mathbbm{1}\{Y \leq f(X)\} + \alpha f(X) \label{def:z.notation},
\#
and denote $Z_{i}(f) := Z(f)(X_i, \epsilon_i)$ for $1 \leq i \leq n$.
Furthermore, we write 
\#
\omega_i = Z(f_0)(X_i ,\epsilon_i) - \alpha g_0(X_i).  \label{def:omega.notation}
\# 
Then, for any $\tau > 0$, we can express the empirical joint Huber loss~\eqref{def:deep.ES.estimator} as
\$
\hat \cR_\tau(f,g) = \frac{1}{n} \sn \ell_\tau(Z_i(f) - \alpha g(X_i))
\$
for real-valued functions $f,g$ on $[0,1]^d$.
Also, note that
\$
\omega_i = \epsilon_i \mathbbm{1}(\epsilon_i \leq 0) + \alpha f_0(X_i) - \alpha g_0(X_i) = \epsilon_{i,-} - \EE(\epsilon_{i,-}|X_i),
\$
where $\epsilon_{i,-} = \epsilon_i\mathbbm{1}(\epsilon_i \leq 0)$.
Throughout the proof, we assume that $\max(\|f_0\|_\infty, \|g_0\|_\infty) \leq M_0$.
For ease of notations, we write $\sn (W_i - \EE W_i) = \sn (1 - \EE)W_i$ for any sequence of random variables $\{W_i\}_{i = 1}^n$.

To establish a convergence rate for the deep quantile regression estimator, we require lower and upper bounds on the excess quantile risk.
Recalling the definition of $\hat \cQ_\alpha(f)$ in \eqref{def:deep.quantile.regression}, we define the population check loss function as 
\$
\cQ_\alpha(f) = \EE \hat \cQ_\alpha(f) = \EE \rho_\alpha(Y_i - f(X_i))
\$
for any $f : [0,1]^d \to \RR$.

\begin{lemma} \label{lem:quantile.lower.upper.bound}
Assume Condition~\ref{cond:conditional.density} holds. 
For any function $f : [0,1]^d \to [-M_0, M_0]$, the population check loss function satisfies
\$
c_{14}\|f - f_0\|_2^2 \leq \cQ_\alpha(f) - \cQ_\alpha(f_0) \leq c_{15}\| f - f_0\|_2^2 ,
\$
where $c_{14} = \min\{\underline p /(8M_0), \underline p^2/(32M_0l_0)\}$ and $c_{15} = \bar p/2$.
\end{lemma}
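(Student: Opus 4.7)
\textbf{Proof plan for Lemma \ref{lem:quantile.lower.upper.bound}.}

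The plan is to reduce the excess risk to a deterministic double-integral expression via Knight's identity and then bound that integral above by the uniform density bound $\bar p$ and below by a case analysis that exploits Condition~\ref{cond:conditional.density}~(ii)--(iii). Writing $\Delta(X) := f(X) - f_0(X)$ and $\epsilon = Y - f_0(X)$, Knight's identity gives
\[
\rho_\alpha(\epsilon - \Delta(X)) - \rho_\alpha(\epsilon) = -\Delta(X)\{\alpha - \mathbbm{1}(\epsilon < 0)\} + \int_0^{\Delta(X)} \{\mathbbm{1}(\epsilon \le s) - \mathbbm{1}(\epsilon \le 0)\}\,{\rm d}s .
\]
Since $p_{\epsilon|X}$ exists, the conditional CDF is continuous at $0$ and $\PP(\epsilon < 0 \mid X) = \alpha$ a.s., so the first term on the right has zero conditional mean given $X$. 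Taking conditional expectation given $X$ of the second term yields $\int_0^{\Delta(X)} \int_0^s p_{\epsilon|X}(t)\,{\rm d}t\,{\rm d}s =: D(X)$, and after taking outer expectation we obtain the clean identity $\cQ_\alpha(f) - \cQ_\alpha(f_0) = \EE\, D(X)$.

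The upper bound is immediate: by Condition~\ref{cond:conditional.density}~(i), $D(X) \le \bar p\,\Delta(X)^2/2$, which integrates to $(\bar p/2)\|f - f_0\|_2^2$, delivering $c_{15}$.

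For the lower bound, the combination of (ii) and (iii) gives $p_{\epsilon|X}(t) \ge \underline{p} - l_0|t| \ge \underline{p}/2$ whenever $|t| \le r_0 := \underline{p}/(2l_0)$, which is where the Lipschitz hypothesis is essential. Treating $\Delta(X) > 0$ (the other sign is symmetric by reversing both bounds of integration), I split the analysis at $r_0$. If $|\Delta(X)| \le r_0$, a direct computation yields $D(X) \ge (\underline p/4)\Delta(X)^2$. If $|\Delta(X)| > r_0$, splitting the outer integral at $r_0$ and bounding the inner integral by $(\underline p/2) s$ on $[0,r_0]$ and by $\underline p^2/(4l_0)$ on $[r_0, \Delta(X)]$ gives, after elementary simplification, $D(X) \ge \underline p^2 |\Delta(X)|/(8l_0)$. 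Converting this linear lower bound into a quadratic one via the uniform bound $|\Delta(X)| \le 2M_0$ (so $|\Delta(X)| \ge \Delta(X)^2/(2M_0)$) produces $D(X) \ge \underline p^2/(16 M_0 l_0)\,\Delta(X)^2$ in the large-$\Delta$ regime, and similarly the small-$\Delta$ bound dominates $\underline{p}/(8M_0)\,\Delta(X)^2$ since $M_0 \ge 1$. Taking the worse of the two constants yields a pointwise inequality $D(X) \ge c_{14} \Delta(X)^2$ with $c_{14} = \min\{\underline p/(8M_0), \underline p^2/(32 M_0 l_0)\}$ (absorbing a harmless factor of two for safety), and integrating over $X$ completes the proof.

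The principal obstacle is the lower bound in the regime $|\Delta(X)| > r_0$, where the pointwise density information near $0$ does not by itself yield a quadratic rate. The combined use of the Lipschitz assumption (to extend the lower bound on $p_{\epsilon|X}$ away from $0$) and the uniform boundedness $\|f\|_\infty, \|f_0\|_\infty \le M_0$ (to trade a linear bound for a quadratic one) is what makes the two cases compatible, and is the source of the $l_0$ and $M_0$ factors appearing in $c_{14}$.
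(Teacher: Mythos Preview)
Your proposal is correct and follows essentially the same route as the paper. The paper's proof also uses Knight's identity for the upper bound and, for the lower bound, invokes an external result (Lemma~S6 in the supplement of \cite{PC2022}) that delivers the bound $\cQ_\alpha(f)-\cQ_\alpha(f_0)\ge \min\{\underline p/4,\underline p^2/(16l_0)\}\,\EE\min\{|\Delta(X)|,\Delta(X)^2\}$, which is then converted to a pure $L_2$ bound via $|\Delta(X)|\le 2M_0$; your case analysis at the threshold $r_0=\underline p/(2l_0)$ is precisely what that cited lemma does internally, so you have simply made the argument self-contained rather than taking a different approach.
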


Next, we write $\cF_n = \cF_{{\rm DNN}}(d,L_q, N_q, M_0)$, and for any $\delta > 0$, 
\$
\cF_n(\delta) = \{f \in \cF_n : \|f - f_0\|_2 \leq \delta\}.
\$
The next lemma characterizes the tail probabilities of the empirical quantile process.
Recall that $\delta_{{\rm s}} = L_qN_q \sqrt{\{d \log(dL_qN_q)\log n\}/n}$.

\begin{lemma} \label{lem:quantile.empirical.process}
There exists a universal constant $c_{16} > 0$ such that for any $\delta \geq \delta_{{\rm s}}$ and $0 \leq x \leq n\delta^2$,  
\$
\PP\Bigg[\sup_{f \in \cF_n(\delta)}\bigg| \frac{1}{n}\sn (1 - \EE)\big\{\rho_\alpha(Y_i - f(X_i)) - \rho_\alpha(Y_i - f_0(X_i))\big\} \bigg| \geq c_{16} \cdot \delta \bigg( \delta_{{\rm s}} + \sqrt{\frac{x}{n}} \bigg) \Bigg] \leq e^{-x}.
\$
\end{lemma}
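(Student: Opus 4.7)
The plan is to combine a one-sided Talagrand concentration inequality for empirical processes with a localized Rademacher complexity bound for the truncated ReLU network class, exploiting the $1$-Lipschitz property of the check loss to decouple the randomness in $Y_i$ from the complexity of $\cF_n$.

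First, for $f \in \cF_n(\delta)$, set $g_f(x, y) := \rho_\alpha(y - f(x)) - \rho_\alpha(y - f_0(x))$. Because $\rho_\alpha$ is $1$-Lipschitz and $f, f_0$ are uniformly bounded by $M_0$, we have $|g_f(X_i, Y_i)| \leq 2M_0$ and $\EE g_f^2(X_i, Y_i) \leq \|f - f_0\|_2^2 \leq \delta^2$. Applying Bousquet's form of Talagrand's inequality to the centered empirical process $\Psi_n(f) := n^{-1}\sn (1-\EE) g_f(X_i, Y_i)$ over $\cF_n(\delta)$ yields, with probability at least $1 - e^{-x}$,
\begin{equation*}
\sup_{f \in \cF_n(\delta)} |\Psi_n(f)| \leq 2\, \EE \sup_{f \in \cF_n(\delta)} |\Psi_n(f)| + C \delta \sqrt{x/n} + C M_0 x/n,
\end{equation*}
for a universal constant $C > 0$. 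I would then bound the expected supremum via symmetrization and the Ledoux--Talagrand contraction inequality applied to the $1$-Lipschitz check function, giving
\begin{equation*}
\EE \sup_{f \in \cF_n(\delta)} |\Psi_n(f)| \lesssim \EE \sup_{f \in \cF_n(\delta)} \bigg| \frac{1}{n}\sn \varepsilon_i (f - f_0)(X_i) \bigg|,
\end{equation*}
where $\varepsilon_1, \dots, \varepsilon_n$ are i.i.d.\ Rademacher variables independent of the sample. This step removes the dependence on $Y_i$ and reduces the problem to bounding the localized Rademacher complexity of the shifted class $\{f - f_0 : f \in \cF_n(\delta)\}$, whose elements are uniformly bounded by $2 M_0$ and have population $L_2$-norm at most $\delta$.

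The key technical step, and the main obstacle, is to show that this localized Rademacher complexity is of order $\delta \cdot \delta_{\rm s}$. The approach is Dudley's entropy chaining combined with a covering-number bound for the truncated ReLU class $\cF_{{\rm DNN}}(d, L_q, N_q, M_0)$. Standard pseudo-dimension estimates for fully connected ReLU networks (e.g., Bartlett--Harvey--Liaw--Mehrabian) yield $\log \mathcal{N}(\epsilon, \cF_n, \|\cdot\|_{\infty}) \lesssim L_q N_q \, d \log(d L_q N_q) \log(M_0/\epsilon)$. Plugging this into Dudley's integral over the localized empirical ball of $L_2$-radius at most a constant multiple of $\delta$, together with a Gin\'e--Zinn type argument to pass from empirical to population $L_2$-norm at the localization scale, produces
\begin{equation*}
\EE \sup_{f \in \cF_n(\delta)} \bigg| \frac{1}{n}\sn \varepsilon_i (f - f_0)(X_i) \bigg| \lesssim \delta \cdot L_q N_q \sqrt{\frac{d \log(d L_q N_q) \log n}{n}} = \delta \cdot \delta_{{\rm s}}.
\end{equation*}
The $\log n$ factor arises from truncating the chaining integral at a scale polynomial in $1/n$; the delicate accounting is to check that at any localized scale $\delta \geq \delta_{{\rm s}}$, the ``sub-Gaussian'' Dudley integral is dominated by this leading term and no Bernstein-type $\log$-corrections are needed.

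Combining the three ingredients gives $\sup_{f \in \cF_n(\delta)} |\Psi_n(f)| \lesssim \delta\, \delta_{{\rm s}} + \delta \sqrt{x/n} + M_0 x/n$ with probability at least $1 - e^{-x}$. Under the stated restriction $x \leq n \delta^2$, we have $\sqrt{x/n} \leq \delta$, so $M_0 x/n \leq M_0 \delta \sqrt{x/n}$, and this boundedness term is absorbed into the second term up to an absolute constant. Using additionally $\delta \geq \delta_{{\rm s}}$ to absorb any residual additive $\delta_{{\rm s}}$-free terms yields the claimed bound $c_{16}\, \delta ( \delta_{{\rm s}} + \sqrt{x/n} )$, with $c_{16}$ depending only on $M_0$, $\alpha$, and absolute constants.
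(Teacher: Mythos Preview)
Your approach is correct and essentially mirrors the paper's: Talagrand's inequality with variance proxy $\delta^2$ and envelope $2M_0$, followed by a bound on the expected supremum via covering numbers for the ReLU class derived from the Bartlett--Harvey--Liaw--Mehrabian pseudo-dimension estimate, and finally absorption of the Bernstein term $M_0 x/n$ using the constraint $x \le n\delta^2$. The one substantive variation is in how you bound $\EE\sup|\Psi_n|$: you symmetrize, apply Ledoux--Talagrand contraction, and chain via Dudley (with a Gin\'e--Zinn step to transfer the localization from population to empirical $L_2$), whereas the paper invokes directly the Chernozhukov--Chetverikov--Kato maximal inequality on the function class $\{m_f\}$ itself, which packages both the envelope and the localized variance and sidesteps the separate empirical-to-population step. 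Both routes land on $E(\delta)\lesssim \delta\delta_{\rm s}+\delta_{\rm s}^2$, after which $\delta\ge\delta_{\rm s}$ yields $E(\delta)\lesssim\delta\delta_{\rm s}$.

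One slip to fix: your stated metric-entropy bound $\log\mathcal{N}(\epsilon,\cF_n,\|\cdot\|_\infty)\lesssim L_qN_q\,d\log(dL_qN_q)\log(M_0/\epsilon)$ is off by a factor of $L_qN_q$; the pseudo-dimension of $\cF_{{\rm DNN}}(d,L_q,N_q,M_0)$ is of order $d(L_qN_q)^2\log(dL_qN_q)$, and it is precisely this square that produces $\delta_{\rm s}=L_qN_q\sqrt{d\log(dL_qN_q)\log n/n}$ after the square root in Dudley's integral. With the bound as you wrote it, the chaining would give $\sqrt{L_qN_q}$ rather than $L_qN_q$ in $\delta_{\rm s}$, contradicting your own final display. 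This is clearly a transcription error rather than a conceptual one, but it should be corrected.
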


For the convergence rate for the DRES, recall that $\cR_\tau(\cdot, \cdot)$ represents the population joint loss function, which is the expectation of the empirical joint Huber loss function,
\$
\cR_{\tau}(f,g) = \EE \hat \cR_{\tau}(f,g) = \EE\ell_\tau(Z_i(f) - \alpha g(X_i))
\$ 
for any fixed functions $f$ and $g$.
The following two lemmas establish both lower and upper bounds for the excess Huber risk under heavy-tailed noises and light-tailed noises, respectively.

\begin{lemma} \label{lem:joint.lower.upper.bound}
Assume Condition~\ref{cond:moment.condition} with $p > 1$ and Condition~\ref{cond:conditional.density} (i) hold and let $\tau \geq c_6 = 2\max\{4M_0, (2\nu_p)^{1/p}\}$.
Then, for any $f, g : [0,1]^d \to [-M_0, M_0]$, we have
\$
\cR_\tau(f,g) - \cR_\tau(f,g_0) \geq \frac{\alpha^2}{4}\|g - g_0\|_2^2 - \alpha\|g -  g_0\|_2\bigg\{\frac{\bar p}{2}\|f - f_0\|_4^2 + \frac{\nu_p}{(\tau/2)^{p-1}}\bigg\},
\$
and
\$
\cR_\tau(f,g) - \cR_\tau(f,g_0) \leq \frac{\alpha^2}{2}\|g - g_0\|_2^2 + \alpha\|g -  g_0\|_2\bigg\{\frac{\bar p}{2}\|f - f_0\|_4^2 + \frac{\nu_p}{(\tau/2)^{p-1}}\bigg\}.
\$
\end{lemma}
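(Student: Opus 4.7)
The plan is to Taylor expand the population Huber risk in the direction $g-g_0$ around $g_0$ and then bound the first- and second-order contributions separately. Writing $U(f) := Z(f) - \alpha g_0(X) = \omega + D$ with $D := Z(f) - Z(f_0)$ and $\omega := \epsilon_- - \EE[\epsilon_-|X]$, and setting $\Delta := \alpha(g(X) - g_0(X))$ so that $\|\Delta\|_2 = \alpha\|g-g_0\|_2$, I define $\phi(\lambda) := \EE[\ell_\tau(U(f) - \lambda\Delta)]$ for $\lambda\in[0,1]$; the quantity of interest is $\phi(1) - \phi(0)$. Taylor's theorem with integral remainder combined with the identity $\ell_\tau''(z) = \mathbbm{1}(|z|<\tau)$ yields
\$
\phi(1) - \phi(0) \,=\, \underbrace{-\EE[\Delta\,\ell_\tau'(U(f))]}_{=: T_1} \,+\, \underbrace{\int_0^1 (1-\lambda)\,\EE\big[\Delta^2\,\mathbbm{1}(|U(f)-\lambda\Delta|<\tau)\big]\,d\lambda}_{=: T_2}.
\$

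For $T_1$, the crux is to exploit the orthogonality induced by the quantile score. Conditioning on $X$ and using $\ell_\tau'(\omega+D) = \ell_\tau'(\omega) + D\int_0^1 \ell_\tau''(\omega + sD)\,ds$, I decompose $\EE[\ell_\tau'(U(f))\mid X]$ into three pieces: (i) $\EE[\ell_\tau'(\omega)|X]$, which equals $\EE[(\tau\,\mathrm{sgn}(\omega)-\omega)\mathbbm{1}(|\omega|>\tau)|X]$ since $\EE[\omega|X]=0$, and therefore admits the Huber-bias bound $|\EE[\ell_\tau'(\omega)|X]| \leq \nu_p/\tau^{p-1}$ via the $p$-th moment bound of Condition~\ref{cond:moment.condition}; (ii) $\EE[D|X]$, which by direct integration against the conditional density of $\epsilon$ reduces to $\int_0^{\xi}(u-\xi)\,p_{\epsilon|X}(u)\,du$ for $\xi := (f-f_0)(X)>0$ (and the analogous expression for $\xi\leq 0$), so that $|\EE[D|X]| \leq (\bar p/2)\xi^2$ by Condition~\ref{cond:conditional.density}(i); and (iii) a residual involving $\mathbbm{1}(|\omega+sD|\geq\tau)$ that I control via the uniform bound $|D|\leq 4M_0\leq\tau/2$ and Markov's inequality on $\PP(|\omega|\geq\tau/2|X)$ under $\tau\geq c_6$. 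Writing $T_1 = -\EE[\Delta\cdot\EE[\ell_\tau'(U(f))|X]]$, Cauchy--Schwarz together with the identity $\|\xi^2\|_2 = \|f-f_0\|_4^2$ delivers
\$
|T_1| \,\leq\, \alpha\|g-g_0\|_2 \bigg\{\frac{\bar p}{2}\|f-f_0\|_4^2 + \frac{\nu_p}{(\tau/2)^{p-1}}\bigg\}.
\$

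For $T_2$, the upper bound is immediate from $\ell_\tau''\leq 1$, giving $T_2 \leq \EE[\Delta^2]/2 = \alpha^2\|g-g_0\|_2^2/2$. For the lower bound, I isolate an event on which the Huber curvature equals~$1$: because $\tau\geq c_6$ furnishes both $|D|\leq 4M_0\leq\tau/2$ and $|\Delta|\leq 2\alpha M_0\leq\tau/4$, on $\{|\omega|\leq\tau/4\}$ one has $|U(f)-\lambda\Delta|\leq |\omega| + |D| + |\Delta| < \tau$ uniformly in $\lambda\in[0,1]$, so the indicator in $T_2$ is identically $1$ there. A Markov-type bound using the $p$-th moment of $\omega$ together with the lower bound on $\tau$ inherent in $c_6$ gives $\PP(|\omega|\leq\tau/4\mid X)\geq 1/2$, whence $T_2 \geq \alpha^2\|g-g_0\|_2^2/4$. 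Combining the two bounds and adding the upper/lower estimates for $T_1$ produces both inequalities claimed in the lemma.

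The principal obstacle is step~(ii) in the analysis of $T_1$: establishing the orthogonality identity $|\EE[D|X]| \leq (\bar p/2)\xi^2$ via a second-order expansion of $\xi\mapsto\EE[D|X]$ around $\xi=0$, using that the first derivative vanishes thanks to the quantile constraint $\PP(\epsilon\leq 0|X) = \alpha$. This identity is precisely the mechanism producing the $\|f-f_0\|_4^2$ (rather than $\|f-f_0\|_2$) dependence and is what gives the two-step estimator its first-order insensitivity to nuisance quantile estimation. The secondary challenge is careful constant tracking: the residual correction in step~(iii) of $T_1$ and the Markov estimate in the lower bound on $T_2$ must both absorb cleanly into the bias term $\nu_p/(\tau/2)^{p-1}$ under the explicit threshold $\tau\geq c_6$.
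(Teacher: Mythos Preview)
Your approach is essentially identical to the paper's: both Taylor-expand $\ell_\tau$ around $g_0$, split the linear term into the Huber bias $\EE[\psi_\tau(\omega)\mid X]$, the orthogonality piece $\EE[D\mid X]$, and a tail residual, and handle the quadratic term by restricting to a high-probability event for $\omega$.

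There is one concrete constant-tracking slip. Your bound $|D|\leq 4M_0$ is too crude; the paper shows by a short case analysis that in fact $|Z(f)-Z(f_0)|\leq |f(X)-f_0(X)|\leq 2M_0$. This matters for your $T_2$ lower bound: with $|D|\leq \tau/2$ and $|\Delta|\leq \tau/4$ you are forced onto the event $\{|\omega|\leq\tau/4\}$, and Markov then gives $\PP(|\omega|>\tau/4\mid X)\leq 4^p\nu_p/\tau^p$, which is $\leq 1/2$ only when $\tau\geq 4(2\nu_p)^{1/p}$---but $c_6$ guarantees only $\tau\geq 2(2\nu_p)^{1/p}$. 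Using the sharp bound $|D|\leq 2M_0\leq\tau/4$ (and $|\Delta|\leq 2\alpha M_0\leq\tau/4$) lets you work on $\{|\omega|\leq\tau/2\}$ instead, and then $\PP(|\omega|>\tau/2\mid X)\leq 2^p\nu_p/\tau^p\leq 1/2$ holds under $\tau\geq c_6$. The same sharpening is needed in your step~(iii) to keep the residual inside $\nu_p/(\tau/2)^{p-1}$.
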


\begin{lemma} \label{lem:joint.lower.upper.bound.light}
Assume Condition~\ref{cond:light-tailed.noise} holds for some $\sigma_0 > 0$ and let 
\$
\tau \geq c_9 = 2\max\{4M_0, \sigma_0(\log 4)^{1/2}\}.
\$
For any functions $f, g : [0,1]^d \to [-M_0, M_0]$, we have
\$
\cR_\tau(f,g) - \cR_\tau(f,g_0) \geq \frac{\alpha^2}{4}\|g - g_0\|_2^2 - \alpha\|g -  g_0\|_2\bigg(\frac{\bar p}{2}\|f - f_0\|_4^2 + c_{17}e^{-\tau^2/(2\sigma_0^2)} \bigg) 
\$
and
\$
\cR_\tau(f,g) - \cR_\tau(f,g_0) \leq \frac{\alpha^2}{2}\|g - g_0\|_2^2 + \alpha\|g -  g_0\|_2\bigg(\frac{\bar p}{2}\|f - f_0\|_4^2 + c_{17}e^{-\tau^2/(2\sigma_0^2)} \bigg) ,
\$
where $c_{17} = 4M_0 + 2\sigma_0$.
\end{lemma}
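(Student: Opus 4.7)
The overall strategy mirrors the proof of Lemma~\ref{lem:joint.lower.upper.bound}, with the moment-based tail control replaced by sub-Gaussian concentration. I would introduce the shorthand $h(X)=g(X)-g_0(X)$ and $V:=Z(f)(X,\epsilon)-\alpha g_0(X)=\omega+\Delta(f)$, where $\Delta(f):=Z(f)-Z(f_0)$; then $Z(f)-\alpha g(X)=V-\alpha h(X)$. Using $\max(\|f\|_\infty,\|g\|_\infty)\leq M_0$ one checks $|\alpha h(X)|\leq 2\alpha M_0$ and $|\Delta(f)|\leq 2|f-f_0|(X)\leq 4M_0$; the assumption $\tau\geq c_9\geq 8M_0$ then gives the working bounds $|\alpha h|\leq\tau/4$ and $|\Delta(f)|\leq\tau/2$.

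Next I would apply the integral-form second-order expansion of the Huber loss,
\[
\ell_\tau(V-\alpha h)-\ell_\tau(V) = -\alpha h\,\psi_\tau(V) + R,\qquad R=(\alpha h)^2\int_0^1(1-t)\,\mathbbm{1}(|V-t\alpha h|<\tau)\,dt,
\]
with $\psi_\tau=\ell_\tau'$. The remainder satisfies $0\leq R\leq(\alpha h)^2/2$ pointwise, and on $\{|V|\leq\tau/2\}$ each integrand equals $1$ (since $|V-t\alpha h|\leq\tau$), so $R=(\alpha h)^2/2$. For the linear term, I would condition on $X$ and split $\EE[\psi_\tau(V)\mid X]=\EE[\Delta(f)\mid X]+\EE[\psi_\tau(V)-V\mid X]$. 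The first piece is handled by the orthogonality property already invoked in the proof of Lemma~\ref{lem:joint.lower.upper.bound}: the defining property of $f_0$ together with $\sup|p_{\epsilon|X}|\leq\bar p$ yields $|\EE[\Delta(f)\mid X]|\leq\tfrac{\bar p}{2}(f-f_0)^2(X)$. The second piece is at most $\EE[(|V|-\tau)_+\mid X]\leq\EE[(|\omega|-\tau/2)_+\mid X]$, using $|V|\leq|\omega|+4M_0$ and $\tau-4M_0\geq\tau/2$.

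The new ingredient is the sub-Gaussian control of $\EE[(|\omega|-\tau/2)_+\mid X]$. From Condition~\ref{cond:light-tailed.noise}, Markov's inequality applied to $e^{\omega^2/\sigma_0^2}$ yields $\PP(|\omega|>t\mid X)\leq 2e^{-t^2/\sigma_0^2}$; integrating over $t\geq\tau/2$ and using $\tau\geq 2\sigma_0\sqrt{\log 4}$ to absorb the lower-order $\sigma_0^2/\tau$ prefactor, one arrives at $\EE[(|\omega|-\tau/2)_+\mid X]\leq c_{17}\,e^{-\tau^2/(2\sigma_0^2)}$ with $c_{17}=4M_0+2\sigma_0$. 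Cauchy--Schwarz then delivers
\[
\bigl|\EE[\alpha h(X)\psi_\tau(V)]\bigr|\leq\alpha\|g-g_0\|_2\Bigl(\tfrac{\bar p}{2}\|f-f_0\|_4^2+c_{17}\,e^{-\tau^2/(2\sigma_0^2)}\Bigr).
\]

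Finally, for the quadratic piece, $\EE[R]\leq\alpha^2\|g-g_0\|_2^2/2$ combined with the linear bound yields the stated upper inequality. For the lower bound, $\EE[R]\geq\tfrac12\EE[(\alpha h)^2\mathbbm{1}(|V|\leq\tau/2)]=\tfrac{\alpha^2}{2}\|g-g_0\|_2^2-\tfrac12\EE[(\alpha h)^2\mathbbm{1}(|V|>\tau/2)]$, and I would bound the correction by $2\alpha^2 M_0^2\PP(|V|>\tau/2)$, which is again exponentially small in $\tau^2/\sigma_0^2$ via the same sub-Gaussian tail, and for $\tau\geq c_9$ splits as a factor of $\alpha^2\|g-g_0\|_2^2/4$ subtracted from the leading term plus a residual absorbable into the exponential bias. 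The main technical obstacle I anticipate is exactly this bookkeeping of constants: arranging that the same exponential rate $e^{-\tau^2/(2\sigma_0^2)}$ (rather than the sharper $e^{-\tau^2/\sigma_0^2}$ one obtains naively) governs both the truncation bias in the linear term and the correction $\EE[(\alpha h)^2\mathbbm{1}(|V|>\tau/2)]$, under the single constant $c_{17}=4M_0+2\sigma_0$.
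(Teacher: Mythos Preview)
Your overall strategy matches the paper's: the same Taylor decomposition into a linear term $\mathrm{I}$ and a quadratic remainder $\mathrm{II}$, with sub-Gaussian tails replacing the moment bounds of Lemma~\ref{lem:joint.lower.upper.bound}. There are, however, two concrete gaps in the execution.

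First, the bound $|\Delta(f)|\leq 2|f-f_0|(X)$ is off by a factor of two: the paper's display~\eqref{z.difference} gives $|Z(f)-Z(f_0)|\leq |f-f_0|(X)\leq 2M_0$, hence $|\Delta(f)|\leq\tau/4$ (not $\tau/2$) once $\tau\geq 8M_0$. This matters for the exponent in the linear term: with the correct bound you get $(|V|-\tau)_+\leq(|\omega|-3\tau/4)_+$, and integrating the sub-Gaussian tail from $3\tau/4$ produces $e^{-9\tau^2/(16\sigma_0^2)}\leq e^{-\tau^2/(2\sigma_0^2)}$ as stated. Your route through $(|\omega|-\tau/2)_+$ yields only $e^{-\tau^2/(4\sigma_0^2)}$, which does not match $c_{17}e^{-\tau^2/(2\sigma_0^2)}$. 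The paper, incidentally, obtains the constant $c_{17}=4M_0+2\sigma_0$ by splitting $\EE_{X}[\psi_\tau(V)]$ differently---into $\EE_X[\psi_\tau(\omega)]$ (contributing $2\sigma_0e^{-\tau^2/(2\sigma_0^2)}$ via \eqref{omega.tail.expectation.light}) and the increment $\EE_X[\psi_\tau(V)-\psi_\tau(\omega)]$ (contributing the $4M_0$ piece)---rather than through $\psi_\tau(V)-V$ directly.

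Second, and more seriously, your lower bound on the remainder $R$ does not close under the stated hypothesis $\tau\geq c_9$. You condition on $\{|V|\leq\tau/2\}$, but to get $\PP(|V|>\tau/2\mid X)\leq 1/2$ you need $\PP(|\omega|>\tau/4\mid X)\leq 1/2$, i.e.\ $\tau\geq 4\sigma_0\sqrt{\log 4}$---twice the assumed threshold. Moreover, your proposed ``residual absorbable into the exponential bias'' would be an \emph{additive} term of order $\alpha^2 M_0^2\,e^{-c\tau^2/\sigma_0^2}$, which cannot be bounded by $\alpha\|g-g_0\|_2\cdot c_{17}e^{-\tau^2/(2\sigma_0^2)}$ when $\|g-g_0\|_2$ is small. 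The paper avoids this by splitting the indicator inside the integral as
\[
\mathbbm{1}\bigl(|\omega+\Delta(f)+t|>\tau\bigr)\leq \mathbbm{1}(|\omega|>\tau/2)+\mathbbm{1}\bigl(|\Delta(f)|+|t|>\tau/2\bigr),
\]
where the second indicator vanishes deterministically since $|\Delta(f)|+|\alpha h|\leq 4M_0\leq\tau/2$. This reduces directly to $\PP(|\omega|>\tau/2\mid X)\leq 2e^{-\tau^2/(4\sigma_0^2)}\leq 1/2$ under exactly $\tau\geq 2\sigma_0\sqrt{\log 4}$, and yields the \emph{pointwise} bound $\EE_X[R]\geq\tfrac{(\alpha h(X))^2}{4}$, so the correction is absorbed multiplicatively into the leading $\tfrac{\alpha^2}{4}\|g-g_0\|_2^2$ with nothing left over. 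If you prefer to keep the $|V|$-based conditioning, enlarge the event to $\{|V|\leq 3\tau/4\}$: then $|V-t\alpha h|\leq\tau$ still holds (since $|\alpha h|\leq\tau/4$), and $|V|>3\tau/4$ forces $|\omega|>\tau/2$, recovering the same bound.
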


For the truncated neural network function class $\cG_n = \cF_{{\rm DNN}}(d,L_e,N_e,M_0)$,  define
\$
\cG_n(\eta) = \{g \in \cG_n : \|g - g_0\|_2 \leq \eta\} ,  \ \ \eta > 0.
\$
Moreover, for any function pair $(f, g)$, we denote the difference of Huber losses as 
\#
h_{f,g}(X, \epsilon) = \ell_\tau(Z(f)(X,\epsilon) - \alpha g(X)) - \ell_\tau(Z(f)(X,\epsilon) - \alpha g_0(X)) .\label{def:h.notation}
\# 
In order to obtain the convergence rate of the ES estimator $\hat g_n$ given a DQR estimate $\hat f_n \in \cF_n$, it is necessary to derive concentration inequalities for the supremum of local empirical processes that are of the form 
\$
\sup_{f \in \cF_n}\sup_{g \in \cG_n(\eta)}\bigg|\frac{1}{n} \sn \big\{  h_{f,g}(X_i, \epsilon_i) - \EE h_{f,g}(X_i, \epsilon_i) \big\} \bigg|
\$
for some $\eta > 0$.
To this end, by the fundamental theorem of calculus and the triangle inequality, the supremum is upper bounded by a sum of three suprema, namely,
\#
& \sup_{f \in \cF_n}\sup_{g \in \cG_n(\eta)}\bigg| \frac{1}{n}\sn (1 - \EE)h_{f,g}(X_i, \epsilon_i) \bigg| \nn \\
& = \sup_{f \in \cF_n}\sup_{g \in \cG_n(\eta)}\bigg| \frac{1}{n} \sn (1 - \EE)\bigg\{ \int_0^{\alpha \Delta_g(X_i)} \psi_{\tau}(\omega_i + Z_i(f) - Z_i(f_0) + t) {\rm d}t \bigg\}\bigg| \nn \\
& \leq \sup_{g \in \cG_n(\eta)}\bigg| \frac{1}{n} \sn (1 - \EE)\bigg\{ \int_0^{\alpha \Delta_g(X_i)}\psi_{\tau}(\omega_i){\rm d}t\bigg\} \bigg| \label{Huber.supremum.upper.bound} \\
&~~~~ + \sup_{g \in \cG_n(\eta)}\bigg| \frac{1}{n} \sn (1 - \EE)\bigg[ \int_0^{\alpha \Delta_g(X_i)}\{\psi_{\tau}(\omega_i + t) - \psi_{\tau}(\omega_i)\}{\rm d}t\bigg] \bigg| \nn \\
&~~~~ + \sup_{f \in \cF_n}\sup_{g \in \cG_n(\eta)} \bigg| \frac{1}{n}\sn (1 - \EE)\bigg[\int_0^{\alpha \Delta_g(X_i)}\big\{\psi_{\tau}(\omega_i + Z_i(f) - Z_i(f_0) + t) - \psi_{\tau}(\omega_i + t)\big\}{\rm d}t \bigg] \bigg|, \nonumber
\#
where $\Delta_g(X) = g_0(X) - g(X)$ and $\psi_\tau = \ell_\tau^\prime$.
The following three lemmas give concentration inequalities for the above three suprema. 
Recall that 
\$
V_{n,\tau, \nu_p} = L_eN_e\sqrt{\frac{d \log(dL_eN_e)\log(n^2\tau\nu_p^{-1/p})}{n}} ~\mbox{ and }~ V_n = L_eN_e\sqrt{\frac{d\log(dL_eN_e)\log n}{n}}.
\$

\begin{lemma} \label{lem:multiplier.empirical.process}
Assume $\EE(  |\omega_i|^p | X_i ) \leq \nu_p < \infty$ almost surely (over $X_i$) for $p > 1$.
Then, there exists a universal constant $c_{18} > 0$ such that, for any $\eta \geq \max(\sqrt{\tau}V_{n,\tau,\nu_p},1/n), 0 \leq x \leq n\eta^2/\tau$ and $\tau \geq \nu_p^{1/p}$,
\$
\PP\Bigg\{ \sup_{g \in \cG_n(\eta)}\bigg| \frac{1}{n} \sn (1 - \EE)m_g(X_i, \epsilon_i) \bigg| \geq c_{18} \cdot & \eta\{\tau^{\max(1 - p/2, 0)}\nu_p^{\min(1/2, 1/p)} + \sqrt{\tau}\} \\
& ~~~~~ \cdot\bigg(V_{n,\tau,\nu_p} + \sqrt{\frac{x}{n}}\bigg) \Bigg\} \leq e^{-x} .
\$
\end{lemma}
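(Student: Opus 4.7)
The plan is to view the supremum as a bounded multiplier empirical process and combine a Talagrand-type concentration inequality with chaining-based control of the expected supremum via a covering-number bound for the truncated ReLU network class $\cG_n$. Since the integrand in the definition of $m_g$ does not depend on the integration variable, the summand reduces to
\$
m_g(X_i,\epsilon_i) := \int_0^{\alpha \Delta_g(X_i)} \psi_\tau(\omega_i)\,{\rm d}t = \alpha\, \Delta_g(X_i)\,\psi_\tau(\omega_i), \quad \Delta_g := g_0 - g,
\$
which is a multiplier process with multipliers $\{\psi_\tau(\omega_i)\}$ independent of $g$.

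First, I would establish the key moment bound on the multiplier. Using $|\psi_\tau(u)| \leq |u| \wedge \tau$, the regime $1 < p \leq 2$ gives $|\psi_\tau(u)|^2 \leq \tau^{2-p}|u|^p$, hence $\EE(\psi_\tau(\omega_i)^2\mid X_i) \leq \tau^{2-p}\nu_p$; the regime $p > 2$ gives $\EE(\psi_\tau(\omega_i)^2\mid X_i) \leq \EE(\omega_i^2\mid X_i) \leq \nu_p^{2/p}$ by Jensen. Both cases are summarized by $\sigma_\psi^2 := \tau^{2\max(1-p/2,0)}\nu_p^{\min(1,2/p)}$, whose square root matches the factor $\tau^{\max(1-p/2,0)}\nu_p^{\min(1/2,1/p)}$ in the statement. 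Combined with $\|\Delta_g\|_2 \leq \eta$ and $\|\Delta_g\|_\infty \leq 2M_0$, this yields $\sup_{g \in \cG_n(\eta)} \EE m_g(X,\epsilon)^2 \leq \alpha^2\sigma_\psi^2\eta^2$ and an envelope $\|m_g\|_\infty \leq U := 2\alpha M_0 \tau$.

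Next, I would bound the expected supremum by symmetrization followed by Dudley's entropy integral, using the standard uniform covering-number estimate $\log \cN(\epsilon, \cG_n, \|\cdot\|_\infty) \lesssim L_e N_e\, d\log(dL_e N_e)\log(M_0/\epsilon)$ for truncated ReLU networks, which follows from their pseudo-dimension bound. Setting the lower endpoint of the entropy integral proportional to $\nu_p^{1/p}/(n\tau)$ produces the logarithmic factor $\log(n^2\tau\nu_p^{-1/p})$ appearing in $V_{n,\tau,\nu_p}$ and yields
\$
\EE \sup_{g \in \cG_n(\eta)} \bigg|\frac{1}{n}\sn (1-\EE)m_g(X_i,\epsilon_i)\bigg| \lesssim \alpha \sigma_\psi\,\eta\, V_{n,\tau,\nu_p} + \alpha M_0 \tau\, V_{n,\tau,\nu_p}^2.
\$
Applying Bousquet's version of Talagrand's inequality with envelope $U$ and weak variance $\alpha^2\sigma_\psi^2\eta^2$ then gives, with probability at least $1-e^{-x}$,
\$
\sup_{g \in \cG_n(\eta)}|S_n(g)| \leq 2\,\EE\sup_{g}|S_n(g)| + C\Big(\alpha\sigma_\psi\,\eta\sqrt{x/n} + U x/n\Big), \quad S_n(g) := \frac{1}{n}\sn (1-\EE)m_g(X_i,\epsilon_i).
\$

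The factor $\sqrt{\tau}$ in the final bound emerges from absorbing the boundedness contributions using the two hypotheses. The constraint $\eta \geq \sqrt{\tau}\,V_{n,\tau,\nu_p}$ gives $\tau\,V_{n,\tau,\nu_p}^2 = \sqrt{\tau}\,V_{n,\tau,\nu_p}\cdot \sqrt{\tau}\,V_{n,\tau,\nu_p} \leq \sqrt{\tau}\,\eta\,V_{n,\tau,\nu_p}$, so the $U$-contribution to $\EE \sup$ collapses into $\alpha M_0\sqrt{\tau}\,\eta\,V_{n,\tau,\nu_p}$. Likewise, $x \leq n\eta^2/\tau$ gives $\sqrt{\tau x/n} \leq \eta$, hence $Ux/n = \alpha M_0\sqrt{\tau x/n}\cdot \sqrt{\tau x/n} \leq \alpha M_0\sqrt{\tau}\,\eta\sqrt{x/n}$. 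Collecting the variance-driven pieces $\sigma_\psi\eta V_{n,\tau,\nu_p}$ and $\sigma_\psi\eta\sqrt{x/n}$ together with these two absorbed pieces produces exactly the target bound $\eta(\sigma_\psi + \sqrt{\tau})(V_{n,\tau,\nu_p} + \sqrt{x/n})$, up to a universal constant.

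The main obstacle is threefold: (i) obtaining an $L_\infty$-covering-number bound for truncated ReLU networks whose logarithmic scale matches $\log(n^2\tau\nu_p^{-1/p})$ precisely, so that the entropy integral reproduces $V_{n,\tau,\nu_p}$ rather than a cruder logarithmic factor; (ii) unifying the two moment regimes $p \leq 2$ and $p > 2$ into the single expression $\sigma_\psi$; and (iii) ensuring the two quantitative hypotheses $\eta \geq \sqrt{\tau}\,V_{n,\tau,\nu_p}$ and $x \leq n\eta^2/\tau$ are used sharply to absorb every residual envelope-type contribution without introducing additional factors. Everything else is a standard application of Talagrand concentration plus Dudley chaining for empirical processes indexed by uniformly bounded function classes.
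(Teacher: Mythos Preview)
Your proposal is correct and follows essentially the same route as the paper: reduce $m_g$ to the multiplier form $\alpha\Delta_g(X_i)\psi_\tau(\omega_i)$, compute the envelope $U\asymp M_0\tau$ and weak variance $\sigma_\psi^2\eta^2$ with $\sigma_\psi^2 = \tau^{\max(2-p,0)}\nu_p^{\min(1,2/p)}$, apply Talagrand/Bousquet, bound the expected supremum via an entropy/maximal inequality using the pseudo-dimension covering bound for $\cG_n$, and then absorb the envelope-driven terms with the hypotheses $\eta \geq \sqrt{\tau}\,V_{n,\tau,\nu_p}$ and $x \leq n\eta^2/\tau$ exactly as you describe. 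One cosmetic slip: the pseudo-dimension of $\cF_{{\rm DNN}}(d,L_e,N_e,M_0)$ is of order $d(L_eN_e)^2\log(dL_eN_e)$, not $dL_eN_e\log(dL_eN_e)$ as you wrote, but your subsequent bounds in terms of $V_{n,\tau,\nu_p}$ already incorporate the correct square, so this is only a typo in the intermediate covering-number line.
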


\begin{lemma} \label{lem:square.empirical.process}
There exists a universal constant $c_{19} > 0$ such that for any $\tau > 0, \eta \geq V_n$ and $0 \leq x \leq n\eta^2$, 
\$
& \PP\Bigg[ \sup_{g \in \cG_n(\eta)}\bigg| \frac{1}{n} \sn (1 - \EE)\bigg[ \int_0^{\alpha \Delta_g(X_i)}\{\psi_{\tau}(\omega_i + t) - \psi_{\tau}(\omega_i)\}{\rm d}t\bigg] \bigg| \geq  c_{19} \cdot  \alpha^2 \eta\bigg( V_n + \sqrt{\frac{x}{n}}\bigg) \Bigg] \\
& \leq e^{-x}.
\$
\end{lemma}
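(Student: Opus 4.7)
The plan is to exploit the $1$-Lipschitz continuity of $\psi_\tau$ to convert the inner process into one with a deterministic envelope and variance bound free of $\omega_i$, and then apply a Talagrand-type concentration inequality combined with a standard covering number bound for truncated ReLU networks.

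First, since $\psi_\tau(u) = u\,\mathbbm{1}(|u|\leq \tau) + \tau\,\mathrm{sgn}(u)\,\mathbbm{1}(|u|>\tau)$ is $1$-Lipschitz, we have $|\psi_\tau(\omega+t)-\psi_\tau(\omega)|\leq |t|$. Writing
\[
F_g(X,\epsilon) := \int_0^{\alpha \Delta_g(X)}\{\psi_\tau(\omega+t) - \psi_\tau(\omega)\}\,\mathrm{d}t, \qquad \Delta_g := g_0 - g,
\]
this yields the deterministic pointwise bound $|F_g(X,\epsilon)|\leq \alpha^2\Delta_g(X)^2/2$. Since both $g$ and $g_0$ are bounded by $M_0$, we obtain the uniform envelope $U:=2\alpha^2 M_0^2$ and, for $g\in\cG_n(\eta)$, the variance bound $\EE F_g^2 \leq (\alpha^4/4)\,\EE\Delta_g(X)^4 \leq \alpha^4 M_0^2\,\EE\Delta_g(X)^2 \leq \alpha^4 M_0^2 \eta^2$. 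Crucially, neither bound depends on the tail of $\omega_i$; it is precisely to exploit this Lipschitz cancellation that the decomposition \eqref{Huber.supremum.upper.bound} isolates this piece.

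Next, I would control the expected supremum via Dudley's entropy integral. Differentiating $F_g$ with respect to the upper limit shows that $g\mapsto F_g(X,\epsilon)$ is pointwise $2\alpha^2 M_0$-Lipschitz in the sup norm on $g$, so covering numbers of $\{F_g:g\in\cG_n(\eta)\}$ transfer directly from those of $\cG_n$. Using standard $L_\infty$-covering estimates for truncated ReLU networks, $\log\mathcal{N}(u,\cG_n,\|\cdot\|_\infty)\lesssim L_e N_e\, d\log(L_e N_e M_0/u)$, together with this Lipschitz reduction, the entropy integral at the intrinsic $L_2$-scale $\sigma:=\alpha^2 M_0\eta$ gives
\[
\EE\sup_{g\in\cG_n(\eta)}\bigg|\frac{1}{n}\sum_{i=1}^n(1-\EE)F_g(X_i,\epsilon_i)\bigg| \lesssim \frac{1}{\sqrt{n}}\int_0^{\sigma}\sqrt{\log\mathcal{N}(u/(2\alpha^2 M_0),\,\cG_n,\,\|\cdot\|_\infty)}\,\mathrm{d}u \lesssim \alpha^2 M_0\,\eta\, V_n,
\]
where the assumption $\eta\geq V_n$ ensures that integration down to a scale commensurate with $1/n$ produces exactly the $\log n$ factor built into $V_n$.

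Finally, I would invoke Talagrand's concentration inequality (in Bousquet's form) for uniformly bounded empirical processes: with envelope $U$ and total weak variance $n\sigma^2$, with probability at least $1-e^{-x}$,
\[
\sup_{g\in\cG_n(\eta)}\bigg|\frac{1}{n}\sum_{i=1}^n(1-\EE)F_g(X_i,\epsilon_i)\bigg| \leq C_1\cdot\EE[\,\cdot\,] + C_2\,\sigma\sqrt{x/n} + C_3\,U\,x/n.
\]
Plugging in the expectation bound and using the constraint $x\leq n\eta^2$, which forces $U x/n \leq 2\alpha^2 M_0^2\,\eta\sqrt{x/n}$, we obtain the claimed bound $\lesssim \alpha^2\eta(V_n+\sqrt{x/n})$ after absorbing the $M_0$ factors into the universal constant $c_{19}$. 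The main technical obstacle lies in the Dudley step: one must recover the precise form of $V_n$, in particular its $\log n$ factor, which requires chaining in the intrinsic $L_2(\PP_n)$ metric (not merely $L_\infty$) and integrating down to a scale proportional to $1/n$, while carefully tracking the Lipschitz reduction from $F_g$ to $g$ so that the effective entropy at the $L_2$-scale $\eta$ is controlled rather than at the cruder $L_\infty$-scale $2M_0$.
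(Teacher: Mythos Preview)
Your approach is essentially the same as the paper's: exploit the $1$-Lipschitz property of $\psi_\tau$ to get a deterministic envelope $2\alpha^2 M_0^2$ and variance bound $\alpha^4 M_0^2\eta^2$ independent of the tails of $\omega_i$, bound the expected supremum via covering-number estimates for $\cG_n$ (transferred through the pointwise Lipschitz map $g\mapsto F_g$), and finish with Talagrand/Bousquet together with $x\leq n\eta^2$ to absorb the Bernstein term. The paper applies the maximal inequality of Chernozhukov--Chetverikov--Kato in place of a direct Dudley integral, but this is a cosmetic difference.

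The one concrete slip is your covering-number input: you write $\log\mathcal{N}(u,\cG_n,\|\cdot\|_\infty)\lesssim L_eN_e\,d\log(L_eN_eM_0/u)$, which is linear in $L_eN_e$. The correct bound, obtained via the pseudo-dimension estimate ${\rm Pdim}(\cF_{\rm DNN}(d,L_e,N_e,M_0))\lesssim d(L_eN_e)^2\log(dL_eN_e)$ together with the uniform covering-number bound $\log N_\infty(\epsilon,\cG_n,n)\leq {\rm Pdim}(\cG_n)\log(enM_0/\epsilon)$, is quadratic in $L_eN_e$. With your stated entropy you would recover $\sqrt{L_eN_e}$ rather than $L_eN_e$ in the definition of $V_n$, so the final bound would not match the statement. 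Note also that the paper works with \emph{uniform} covering numbers $N_\infty(\epsilon,\cG_n,n)$ (supremum over $n$-point restrictions), not true $L_\infty$ covering numbers of the whole class; the former are what pseudo-dimension controls. Once you substitute the correct entropy, your argument goes through verbatim and yields the claimed $c_{19}\,\alpha^2\eta(V_n+\sqrt{x/n})$.
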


\begin{lemma} \label{lem:product.empirical.process}
There exists a universal constant $c_{20} > 0$ such that for any $\tau > 0, \eta \geq (\delta_{{\rm s}} + V_n)$ and $0 \leq x \leq n\eta^2$, 
\$
& \PP\Bigg\{\sup_{f \in \cF_n}\sup_{g \in \cG_n(\eta)}\bigg| \frac{1}{n}\sn (1 - \EE)\bigg[\int_0^{\alpha \Delta_g(X_i)}\big\{\psi_{\tau}(\omega_i + Z_i(f) - Z_i(f_0) + t) - \psi_{\tau}(\omega_i + t)\big\}{\rm d}t \bigg] \bigg| \\
& ~~~~~~~~~~~~~~~~~~~~~~~~~~~~~~~~~~~~~~~~~~~~~~~~~~~~~~~~~~~~~~~~\geq c_{20}\cdot \alpha \eta \bigg( \delta_{{\rm s}} + V_n + \sqrt{\frac{x}{n}} \bigg)\Bigg\} \leq e^{-x}.
\$
\end{lemma}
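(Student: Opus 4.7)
The plan is to bound the supremum via Talagrand's concentration inequality after extracting a product structure from the integrand, and then to control the expected supremum by chaining over the two neural network classes.

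The starting point is a deterministic envelope $|Z_i(f) - Z_i(f_0)| \leq |f(X_i) - f_0(X_i)|$, which follows from a three-case analysis on the position of $\epsilon_i$ relative to $0$ and $f(X_i) - f_0(X_i)$, exploiting the $\alpha f$ term in the definition of $Z(f)$. Combined with the $1$-Lipschitz property of $\psi_\tau = \ell_\tau'$, this gives the pointwise bound
\[
\bigg| \int_0^{\alpha\Delta_g(X_i)} \{\psi_\tau(\omega_i + Z_i(f) - Z_i(f_0) + t) - \psi_\tau(\omega_i + t)\} \mathrm{d}t \bigg| \leq \alpha\, |g(X_i) - g_0(X_i)| \cdot |f(X_i) - f_0(X_i)|.
\]
Writing $H_i(f,g)$ for the integral on the left, this yields a uniform envelope $|H_i| \leq 4\alpha M_0^2$ on $\cF_n \times \cG_n(\eta)$ and a variance bound $\mathrm{Var}(H_i(f,g)) \leq \alpha^2 \|f-f_0\|_\infty^2 \|g-g_0\|_2^2 \leq 4\alpha^2 M_0^2 \eta^2$. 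The same Lipschitz reasoning also produces the $(f,g)$-Lipschitz estimates $|H_i(f_1, g) - H_i(f_2, g)| \leq 2\alpha M_0 |f_1(X_i) - f_2(X_i)|$ and $|H_i(f, g_1) - H_i(f, g_2)| \leq 2\alpha M_0 |g_1(X_i) - g_2(X_i)|$, which will transfer covering numbers from the function classes to the empirical process class.

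Next, I would invoke Talagrand's inequality (Bousquet's version) for the class $\cH = \{H(\cdot, f, g) : f \in \cF_n, g \in \cG_n(\eta)\}$. With the envelope and variance above, the tail of $\sup_\cH |n^{-1} \sum_i (1-\EE) H_i|$ is controlled by $2\,\EE \sup_\cH + C \alpha M_0 \eta \sqrt{x/n} + C \alpha M_0^2 x/n$, and the last term is absorbed into $\alpha \eta \sqrt{x/n}$ under the range $x \leq n\eta^2$ with $M_0$ treated as a universal constant. Bounding the expected supremum then proceeds by symmetrization and Dudley's chaining: using the $(f,g)$-Lipschitz estimates, empirical $L_2$-covers of $\cF_n$ and $\cG_n$ at scale $\epsilon$ induce $L_2$-covers of $\cH$ at scale $2\alpha M_0 \epsilon$, and the standard log-covering bound $\log \mathcal{N}(\epsilon, \cF_{{\rm DNN}}(d,L,N,M_0), \|\cdot\|_\infty) \lesssim LNd\log(dLN/\epsilon)$ for truncated ReLU networks plugs into the entropy integral. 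The two coordinates of the product class contribute $\alpha \eta \delta_{\rm s}$ (from $\cF_n$) and $\alpha \eta V_n$ (from $\cG_n$), giving the advertised $\alpha \eta (\delta_{\rm s} + V_n)$ bound on the expected supremum.

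The main technical obstacle is producing the factor $\eta$ in the expected supremum rather than $M_0$. The pointwise Lipschitz control gives a sup-norm cover of $\cH$, but the $\eta$-scaling only appears through the $L_2$-variance envelope $\sqrt{\EE H_i^2} \lesssim \alpha M_0 \eta$. A \emph{local} Dudley integral, truncated at the variance scale $\alpha M_0 \eta$ with the uniform-norm cover used only beyond this cutoff, is needed to avoid losing this factor; this is a standard device but must be applied once for each coordinate of the product class, and careful bookkeeping of the chaining diameters and of the Lipschitz constants transferring from $\cF_n$ and $\cG_n$ to $\cH$ is the most error-prone part of the argument. The assumptions $\eta \geq \delta_{\rm s} + V_n$ and $x \leq n\eta^2$ are precisely what is needed for the chaining integrals to converge at the stated rate and for the sub-exponential Bernstein term to be dominated by the sub-Gaussian one.
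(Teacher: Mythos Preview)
Your plan is essentially the paper's proof: establish the pointwise bound $|Z_i(f)-Z_i(f_0)|\le |f(X_i)-f_0(X_i)|$, use the $1$-Lipschitz property of $\psi_\tau$ to get the envelope $4\alpha M_0^2$ and the variance proxy $4\alpha^2 M_0^2\eta^2$, derive Lipschitz estimates in the $f$- and $g$-coordinates separately, apply Talagrand's inequality, and bound the expected supremum by entropy over the product class $\cF_n\times\cG_n$. The paper does exactly this; the only packaging difference is that it invokes the Chernozhukov--Chetverikov--Kato maximal inequality (their Lemma~B.9) in place of a hand-rolled local Dudley integral, which automatically produces the $\sigma\asymp\alpha\eta$ scaling you identify as the delicate point.

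Two minor corrections. First, your stated covering bound $\log\mathcal N(\epsilon,\cF_{\rm DNN}(d,L,N,M_0),\|\cdot\|_\infty)\lesssim LNd\log(dLN/\epsilon)$ is too small; the correct order via pseudo-dimension (Bartlett et al.\ 2019) is $d(LN)^2\log(dLN)\log(n/\epsilon)$, and this is precisely what yields $\delta_{\rm s}$ and $V_n$ as defined in the paper. Your weaker bound would give a strictly smaller rate than claimed, so the argument still closes, but the intermediate step is misstated. Second, your description of the ``local Dudley integral, truncated at the variance scale $\alpha M_0\eta$ with the uniform-norm cover used only beyond this cutoff'' has the truncation backwards: the chaining runs from $0$ up to the $L_2$-diameter $\sigma$, not beyond it. The CCK inequality handles this localization cleanly and avoids the bookkeeping you flag as error-prone.
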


Assuming that the random variables $\omega_i$ defined in \eqref{def:omega.notation} are sub-Gaussian, we can derive a more refined tail inequality for the supremum of local empirical processes. 

\begin{lemma}\label{lem:multiplier.subgaussian.empirical.process}
Assume that $\omega_i$ satisfies
\#
\EE\big(   e^{\omega_i^2/\sigma_0^2}| X_i \big) \leq 2 ~~\mbox{ almost surely (over $X_i$)} \label{subgaussian.noise}
\#
for some $\sigma_0 > 0$.
Then, there exists a universal constant $c_{21} > 0$ such that for any $\eta \geq V_n$ and $0 \leq x \leq n\eta^2$, the following bound
\$
  \sup_{g \in \cG_n(\eta)}\bigg|\frac{1}{n} \sn (1 - \EE) \bigg\{\int_0^{\alpha \Delta_g(X_i)} \psi_\tau(\omega_i){\rm d}t\bigg\}\bigg| \leq  c_{21} \cdot \alpha \sigma_0 \eta\bigg\{ V_n + e^{-\tau^2/(2\sigma_0^2)} + \sqrt{\frac{x}{n}} \bigg\}  
\$
holds with probability at least $1-3e^{-x}$.
\end{lemma}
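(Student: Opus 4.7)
The integrand $\psi_\tau(\omega_i)$ does not depend on $t$, so the integral collapses to $\alpha\Delta_g(X_i)\psi_\tau(\omega_i)$ with $\Delta_g := g_0 - g$, and the quantity to control is the multiplier empirical process $T := \sup_{g\in\cG_n(\eta)} |n^{-1}\sum_{i=1}^n(1-\EE)[\alpha\Delta_g(X_i)\psi_\tau(\omega_i)]|$. My plan is to split the multiplier into a conditionally centered piece and a deterministic-given-$X$ bias. Set $\mu_\tau(X) := \EE[\psi_\tau(\omega)|X]$ and $\tilde\psi_i := \psi_\tau(\omega_i) - \mu_\tau(X_i)$, so $\EE[\tilde\psi_i|X_i] = 0$. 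Since $\EE[\omega|X] = 0$ and $\psi_\tau$ coincides with the identity on $[-\tau,\tau]$, we get $\mu_\tau(X) = \EE[(\psi_\tau(\omega)-\omega)|X]$, hence $|\mu_\tau(X)| \leq \EE[|\omega|\mathbbm{1}\{|\omega|>\tau\}|X] \leq C\sigma_0 e^{-\tau^2/(2\sigma_0^2)}$ by a one-line tail-integral computation from $\EE[e^{\omega^2/\sigma_0^2}|X] \leq 2$. Writing $T \leq T_1 + T_2$ along this decomposition, I would bound $T_1$ and $T_2$ separately and union-bound.

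For $T_2 = \sup_g|\alpha n^{-1}\sum_i (1-\EE)[\Delta_g(X_i)\mu_\tau(X_i)]|$, the class $\{\alpha\Delta_g\mu_\tau:g\in\cG_n(\eta)\}$ is bounded with envelope $\lesssim \alpha M_0\sigma_0 e^{-\tau^2/(2\sigma_0^2)}$ and $L_2(\PP_X)$-radius $\lesssim \alpha\sigma_0 e^{-\tau^2/(2\sigma_0^2)}\eta$. Bousquet's version of Talagrand's inequality, combined with the Dudley entropy integral for truncated ReLU networks that contributes the $V_n$-factor already used in Lemmas~\ref{lem:multiplier.empirical.process}--\ref{lem:product.empirical.process}, yields $T_2 \lesssim \alpha\sigma_0 e^{-\tau^2/(2\sigma_0^2)}\eta(V_n + \sqrt{x/n}) + \alpha M_0\sigma_0 e^{-\tau^2/(2\sigma_0^2)} x/n$. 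Under $\eta\geq V_n$ and $x\leq n\eta^2$, this fits comfortably inside the $\alpha\sigma_0\eta \cdot e^{-\tau^2/(2\sigma_0^2)}$ portion of the target bound.

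For $T_1 = \sup_g|\alpha n^{-1}\sum_i\Delta_g(X_i)\tilde\psi_i|$, the crucial observation is that although $|\tilde\psi_i| \leq \tau + \|\mu_\tau\|_\infty$ pointwise, its natural scale is $\sigma_0$: from $e^z \geq 1+z$ we obtain $\EE[\tilde\psi_i^2|X_i] \leq \EE[\omega_i^2|X_i] \leq \sigma_0^2$, and more generally $\tilde\psi_i$ is conditionally sub-Gaussian of parameter $\asymp \sigma_0$. Symmetrizing with Rademachers, conditioning on the $X_i$, and applying sub-Gaussian chaining (Dudley) in the empirical $L_2$-metric on $\cG_n(\eta)$ yields $\EE T_1 \lesssim \alpha\sigma_0\eta V_n$. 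For the deviation, I would invoke Adamczak's Talagrand-type inequality for empirical processes with sub-Gaussian (unbounded) terms, with variance proxy $\alpha^2\sigma_0^2\eta^2$; this produces the concentration term $\alpha\sigma_0\eta\sqrt{x/n}$ together with a subleading Bernstein-type remainder carrying an extra $\sqrt{\log n}$ factor that is absorbed under the assumed ranges $\eta\geq V_n$ and $x\leq n\eta^2$. Combining with $T_2$ and distributing the exceptional probability over the two concentration events and the max-envelope event gives the claimed $3e^{-x}$.

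The main obstacle is $T_1$. A naive application of bounded-function Talagrand with the pointwise envelope $\alpha M_0\tau$ would inject $\tau$ into the leading stochastic terms and spoil the target rate, which is free of $\tau$ apart from the explicit $e^{-\tau^2/(2\sigma_0^2)}$ bias. The remedy is to treat $\tilde\psi_i$ as a genuine sub-Gaussian multiplier and perform the chaining in the conditional-sub-Gaussian metric with scale $\sigma_0$ rather than $\tau$, so that the pointwise envelope enters only through the subleading $x/n$ tail term. This is the sub-Gaussian counterpart of the multiplier argument behind Lemma~\ref{lem:multiplier.empirical.process} in the polynomial-moment regime, and is considerably cleaner here thanks to the availability of exponential moments.
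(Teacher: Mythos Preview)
Your decomposition into a conditionally centered multiplier term $T_1$ and a bias term $T_2$ is exactly the right structure, and the expectation bound $\EE T_1\lesssim \alpha\sigma_0\eta V_n$ via conditional sub-Gaussian chaining is correct. The handling of $T_2$ by bounded-function Talagrand also goes through.

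The gap is in the deviation step for $T_1$. Adamczak's inequality produces a remainder $\|\max_i F_i\|_{\psi_1}\,x/n$ with $F_i=2\alpha M_0|\tilde\psi_i|$, hence $\|\max_i F_i\|_{\psi_1}\asymp \alpha M_0\sigma_0\sqrt{\log n}$. At the upper end $x=n\eta^2$ this gives a term of order $\alpha\sigma_0\sqrt{\log n}\,\eta^2$, whereas the target allows only $c_{21}\alpha\sigma_0\eta\sqrt{x/n}=c_{21}\alpha\sigma_0\eta^2$. The ratio is $\sqrt{\log n}$, so the remainder is \emph{not} absorbed under $\eta\geq V_n$ and $x\leq n\eta^2$ with a universal $c_{21}$; the same obstruction arises if you use bounded Talagrand with the pointwise envelope $\alpha M_0\tau$, which injects $\tau$ instead. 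This is precisely the phenomenon you flagged as the main obstacle, and Adamczak does not resolve it.

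The paper avoids this by keeping the conditioning also at the deviation stage: condition on $(X_1,\ldots,X_n)$, note that $g\mapsto n^{-1/2}\sum_i\tilde\psi_i\Delta_g(X_i)$ has sub-Gaussian increments with constant $\lesssim\sigma_0\|g-g'\|_n$, and apply a high-probability Dudley/generic-chaining bound (Theorem~8.1.6 in \cite{V2018}). This yields, conditionally on $X$ and with probability at least $1-2e^{-x}$, a bound $C\sigma_0 v(V_n+\sqrt{x/n})$ where $v=\sup_{g\in\cG_n(\eta)}\|g-g_0\|_n$ is the empirical radius; there is no Bernstein term because the process is purely sub-Gaussian once $X$ is fixed. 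The remaining work is to pass from $v$ to $\eta$, which is done by invoking Lemma~\ref{lem:square.empirical.process} at $\tau=\infty$ to get $\sup_{g\in\cG_n(\eta)}|\,\|g-g_0\|_n^2-\|g-g_0\|_2^2\,|\lesssim\eta^2$ with probability at least $1-e^{-x}$. The three events (two from the conditional chaining and one from the empirical-to-population bridge) account for the $3e^{-x}$ in the statement. The bias piece $\mu_\tau(X_i)\Delta_g(X_i)$ is then handled by Cauchy--Schwarz against $\|g-g_0\|_n$ rather than a separate Talagrand, which is slightly more economical than your $T_2$ route but equivalent in outcome.
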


\subsection{Proof of Proposition~\ref{prop:Huber.bias}}

In the proof, we first construct a function $g_{0,\tau}$ satisfying 
\#
g_{0, \tau} \in \argmin_{g} \EE\{\ell_\tau(Z_i(f_0) - \alpha g(X_i))\}. \label{Huber.loss.proof}
\#
Then, we show that $g_{0,\tau}$ satisfies the bound in Proposition~\ref{prop:Huber.bias}.
Finally, we show that $g_{0, \tau}$ is the unique minimizer of \eqref{Huber.loss.proof} almost surely over $X$.

To begin with, recall that $Z_i(f_0) = \alpha g_0(X) + \omega_i$ with $\omega_i = \epsilon_{i, -} - \EE(\epsilon_{i, -}|X)$.
For each $x \in \cX$, consider
\$
\min_{a \in \RR} \EE\{\ell_\tau(Z_i(f_0) - \alpha a)|X = x\} = \EE\{\ell_\tau(\omega_i + \alpha g_0(x) - \alpha a)|X = x\} =: L_x(a).
\$
Since (i) $L_x(a)$ is convex and (ii) $L_x(a) \to \infty$ as $a \to \pm \infty$, the set of minimizers is nonempty for each $x$. Let $\Psi(x)$ denote this set. 
By the Kuratowski–Ryll-Nardzewski measurable selection theorem (Theorem 6.9.3 in \cite{bogachev2007measure}), there exists a measurable function $g_{0,\tau}$ such that $g_{0,\tau}(x) \in \Psi(x)$ for all $x \in \mathcal{X}$.
For any measurable function $g$ (not necessarily bounded by $M_0$), we then have
\#
\EE\{\ell_\tau(Z_i(f_0) - \alpha g(X))|X = x\} \geq \EE\{\ell_\tau(Z_i(f_0) - \alpha g_{0, \tau}(X))|X = x\}, \label{bias.bound.proof1}
\#
which implies
\#
\EE\{\ell_\tau(Z_i(f_0) - \alpha g(X))\} \geq \EE\{\ell_\tau(Z_i(f_0) - \alpha g_{0,\tau}(X))\}. \label{bias.bound.proof2}
\#
Hence, $g_{0, \tau}$ is a minimizer. 

To obtain a bound for $g_{0,\tau}$, we follow a similar argument as in the proof of Proposition 1 in \citet{sun2020adaptive}. 
Specifically, define $\Delta_x = g_0(x) - g_{0,\tau}(x)$ for each $x \in \mathcal{X}$. 
For convenience, fix $x \in \mathcal{X}$ and denote $g_{0,\tau}(x) = a_\tau^*$ and $g_0(x) = a^*$. 
By the optimality of $a_\tau^*$ and the mean value theorem, we have $L_x'(a_\tau^*) = 0$ and
\#
L_x^{\prime\prime}(\tilde a)\Delta_x^2 = \{L_x^\prime(a^*) - L_x^\prime(a_\tau^*)\}\Delta_x = L_x^\prime(a^*)\Delta_x = -\alpha\EE\{\psi_\tau(\omega)|X = x\}\Delta_x, \label{bias.bound}
\#
where $\tilde a = \lambda a^* + (1-\lambda) a_\tau^*$ for some $0 \leq \lambda \leq 1$.
Since $\EE(\omega|X = x) = 0$ and 
\$
|\EE\{\omega|X = x\} - \EE\{\psi_\tau(\omega)|X = x\}| = \tau|\PP(|\omega| > \tau | X = x)| \leq \frac{\EE\{|\omega|^p|X = x\}}{\tau^{p-1}},
\$
we have $|\EE\{\psi_\tau(\omega)|X = x\}| \leq \nu_p/\tau^{p-1}$. 
Now, letting $\tilde \omega = Z_i(f_0) - \alpha \tilde a$, we have
\$
L_x^{\prime \prime}(\tilde a) = \alpha^2\{1 - \PP(|\tilde \omega| > \tau | X = x)\}.
\$
To bound $\PP(|\tilde \omega| > \tau | X = x)$, we remark that 
\$
\PP(|\tilde \omega| > \tau | X = x) \leq \tau^{-1}\EE\{|\tilde \omega|\mathbbm{1}(|\tilde \omega| > \tau)|X = x\}. 
\$
Now, since $L_x$ is a convex function and minimized at $a_\tau^*$, we have
\#
L_x(\tilde a) \leq \lambda L_x(a^*) + (1 - \lambda) L_x(a_\tau^*) \leq L_x(a^*). \label{convexity}
\#
By definition, we have
\$
L_x(a^*) & = \EE\bigg\{\frac{|\omega|^2}{2}\mathbbm{1}(|\omega| \leq \tau) + \bigg( \tau|\omega| - \frac{\tau^2}{2}\bigg)\mathbbm{1}(|\omega| > \tau)\bigg| X = x\bigg\}.
\$
Therefore, it follows that
\$
L_x(a^*) & \leq \EE\bigg\{\frac{\tau^{\max(2-p, 0)}|\omega|^{\min(p, 2)}}{2}\mathbbm{1}(|\omega| \leq \tau) + \tau^{\max(2-p, 0)}|\omega|^{\min(p, 2)} \mathbbm{1}(|\omega| > \tau) \bigg| X = x\bigg\} \\
& \leq \mu_p \tau^{\max(2-p, 0)},
\$
where $\mu_p = \nu_p$ if $1 < p < 2$ and $\mu_p = \nu_2$ if $p \geq 2$.
Thus, \eqref{convexity} implies $L_x(\tilde a) \leq \mu_p \tau^{\max(2-p, 0)}$.
Since we have
\$
L_x(\tilde a) \geq \EE\bigg\{ \bigg(\tau|\tilde \omega| - \frac{\tau^2}{2}\bigg)\mathbbm{1}(|\tilde \omega| > \tau)\bigg\},
\$
it follows that
\$
\tau\EE\{|\tilde \omega|\mathbbm{1}(|\tilde \omega| > \tau) | X = x\} & \leq \frac{\tau^2}{2}\PP(|\tilde\omega| > \tau|X = x) + \mu_p \tau^{\max(2-p, 0)} \\
& \leq \frac{\tau}{2} \EE\{|\tilde \omega|\mathbbm{1}(|\tilde \omega| > \tau) | X = x\} + \mu_p \tau^{\max(2-p, 0)}.
\$
Therefore, we have
\$
\EE\{|\tilde \omega|\mathbbm{1}(|\tilde \omega| > \tau) | X = x\} \leq 2\mu_p \tau^{\max(1 - p, -1)},
\$
and
\$
\PP(|\tilde\omega| > \tau|X = x) \leq 2\mu_p \tau^{\max(-p, -2)}.
\$
As a result, given that $\tau \geq (4\mu_p)^{1/\min(p, 2)}$, we have $L_x^{\prime \prime}(\tilde t) \geq \alpha^2/2$, and combining this with \eqref{bias.bound} gives
\$
\alpha|g_0(x) - g_{0, \tau}(x)| \leq 2|\EE\{\psi_\tau(\omega)|X = x\}| \leq \frac{2\nu_p}{\tau^{p-1}}.
\$
Since the bound holds for any $x \in \cX$, so we obtain $\alpha \|g_0 - g_{0, \tau}\|_\infty \leq 2\nu_p/\tau^{p-1}$.

Finally, assume that $g^\dagger$ is another minimizer of \eqref{Huber.loss.proof}.
Since any function $g$ satisfies \eqref{bias.bound.proof2}, it follows that
\$
0 & = \EE\{\ell_\tau(Z_i(f_0) - \alpha g^\dagger(X))\} - \EE\{\ell_\tau(Z_i(f_0) - \alpha g_{0, \tau}(X))\} \\
& = \EE[\EE\{\ell_\tau(Z_i(f_0) - \alpha g^\dagger(X))|X\} - \EE\{\ell_\tau(Z_i(f_0) - \alpha g_{0, \tau}(X))|X\}] \geq 0.
\$
Therefore, combining with \eqref{bias.bound.proof1}, we obtain
\$
\EE\{\ell_\tau(Z_i(f_0) - \alpha g^\dagger(X))|X\} - \EE\{\ell_\tau(Z_i(f_0) - \alpha g_{0, \tau}(X))|X\} = 0
\$
almost surely over $X$.
Let $x \in \cX$ be any element such that the above equation holds for $X = x$.
Denoting $a^\dagger = \tilde g(x)$ and $a_\tau^* = g_{0, \tau}(x)$, we have $L_x(a^\dagger) = L_x(a_\tau^*)$ and $a^\dagger$ is also a minimizer of $L_x$.
Denoting $\Delta_x^\dagger = a^\dagger - a_\tau^*$, applying the mean value theorem gives 
\$
L_x^{\prime\prime}(\tilde a^\dagger)\Delta_x^{\dagger 2} = \{L_x^\prime(a^\dagger) - L_x^\prime(a_\tau^*)\}\Delta_x^{\dagger} = 0,
\$
where $\tilde a^\dagger$ lies between $a^\dagger$ and $a_\tau^*$. 
Since both $a^\dagger$ and $a_\tau^*$ are minimizers of $L_x$, it follows that $L_x(\tilde a^\dagger) \leq L_x(a^*)$. 
By applying a similar argument as in the previous step, we further obtain that $L_x(\tilde a^\dagger)^{\prime\prime} \geq 1/2$ under the given condition on $\tau$. 
Hence, $g^\dagger = g_{0,\tau}$, and the minimizer is unique almost surely with respect to the probability measure of $X$.
This completes the proof.

\qed

\subsection{Proof of Theorem~\ref{thm:oracle.type.deep.quantile}}
To begin with, let $\delta_* = c_4(\delta_{{\rm s}} + \delta_{{\rm a}} + \sqrt{u/n})$ for given $u \geq 1$, where $c_4$ is given by
\#
c_4 = \max\{(\sqrt{8c_{15}/c_{14}}, \sqrt{2/c_{14}}, 16c_{16}/c_{14}\} \geq 1. \label{delta.constant}
\#
Here, $c_{14}$ and $c_{15}$ are given in Lemma~\ref{lem:quantile.lower.upper.bound} and $c_{16}$ is given in Lemma~\ref{lem:quantile.empirical.process}.
We then define the donut-shaped sets for integers $j = 1,2,\ldots$ as
\$
\cD_{n,j} := \cF_n(2^j\delta_*)\setminus\cF_n(2^{j-1}\delta_*) = \{f \in \cF_n : 2^{j-1}\delta_* < \| f - f_0 \|_2 \leq 2^j \delta_*\},
\$
so that we can write
\#
\PP\{ \| \hat f_n - f_0\|_2 \geq \delta_* \} \leq \sum_{j = 1}^\infty \PP\{ \hat f_n \in \cD_{n,j} \}. \label{quantile.peeling.argument}
\#
Therefore, it reduces to bounding each probability $\PP\{ \hat f_n \in \cD_{n,j} \}$ separately. 
Following Lemma~\ref{lem:quantile.lower.upper.bound}, any $f \in \cD_{n,j}$ satisfies
\#
c_{14} 2^{2j-2}\delta_*^2 \leq c_{14} \| f - f_0\|_2^2 \leq \cQ_\alpha(f) - \cQ_\alpha(f_0). \label{excess.quantile.lower.bound}
\#
We next derive an upper bound of the right-hand side of~\eqref{excess.quantile.lower.bound}. By the definition of $\delta_{{\rm a}}$, there exists $f_n \in \cF_n$ such that $\|f_n - f_0\|_2 \leq 2\delta_{{\rm a}}$. Now, if $\hat f_n \in \cD_{n,j}$, we have
\$
& \cQ_\alpha(\hat f_n) - \cQ_\alpha(f_0)
\\
& = \cQ_\alpha(\hat f_n) - \hat \cQ_\alpha(\hat f_n) + \hat \cQ_\alpha(\hat f_n) - \hat \cQ_\alpha(f_n) + \hat \cQ_\alpha(f_n) - \cQ_\alpha(f_n) + \cQ_\alpha(f_n) - \cQ_\alpha(f_0)
\\
& \leq \cQ_\alpha(\hat f_n) - \hat \cQ_\alpha(\hat f_n) + \hat \cQ_\alpha(f_n) - \cQ_\alpha(f_n) + \cQ_\alpha(f_n) - \cQ_\alpha(f_0),
\$
where the last line follows from the definition of $\hat f_n$. 
By Lemma~\ref{lem:quantile.lower.upper.bound}, it follows that $\cQ_\alpha(f_n) - \cQ_\alpha(f_0) \leq 4c_{15}\delta_{{\rm a}}^2$. 
Denoting
\$
\Delta_n(f) = \frac{1}{n} \sn (1 - \EE)\big\{\rho_\alpha(Y_i - f(X_i) - \rho_\alpha(Y_i - f_0(X_i) \big\} ,
\$
the earlier inequality is further bounded as
\#
\cQ_\alpha(\hat f_n) - \cQ_\alpha(f_0) \leq \Delta_n(f_n) - \Delta_n(\hat f_n) + 4c_{15}\delta_{{\rm a}}^2. \label{quantile.upper.bound}
\#
Note that $f_n \in \cF_n(2^j\delta_*)$ for any $j \geq 1$ because $2\delta_{{\rm a}} \leq 2^j \delta_*$ for any $j \geq 1$. 
Combining this with~\eqref{quantile.upper.bound} and~\eqref{excess.quantile.lower.bound}, we obtain upper bounds of the probability $\PP\{ \hat f_n \in \cD_{n,j} \}$ as
\#
\PP\{ \hat f_n \in \cD_{n,j} \} & \leq \PP\bigg\{  \exists f \in \cD_{n,j} \mbox{ such that }\Delta_n(f_n) - \Delta_n(f) \geq \frac{c_{14}}{4} 2^{2j}\delta_*^2 - 4c_{15}\delta_{{\rm a}}^2 \bigg\}  \nn
\\
& \leq \PP\Bigg\{ \sup_{f \in \cF_n(2^j \delta_*)} |\Delta_n(f)| \geq \frac{c_{14}}{16}2^{2j} \delta_*^2\Bigg\}, \label{quantile.peeling.argument2}
\#
where the last line follows from the choice of $c_4$ in~\eqref{delta.constant}.

We next bound the probability $\PP\{ \sup_{f \in \cF_n(2^j \delta_*)} |\Delta_n(f)| \geq c_{14} 2^{2j}\delta_*^2/16\}$ via Lemma~\ref{lem:quantile.empirical.process}. 
To this end, we choose $\delta = 2^{j}\delta_*$ and $x = 2^{2j}u$. 
Since $c_4 \geq 1$, we have $\delta \geq \delta_{{\rm s}}$ and $0 \leq x \leq n\delta^2$.
Then, Lemma~\ref{lem:quantile.empirical.process} yields
\$
& \PP\Bigg\{ \sup_{f \in \cF_n(2^j \delta_*)} |\Delta_n(f)| \geq \frac{c_{16}}{c_4}2^{2j}\delta_*^2\Bigg\} \\
& = \PP\Bigg[ \sup_{f \in \cF_n(2^j \delta_*)} \bigg|\frac{1}{n}\sn (1 - \EE)\big\{\rho_\alpha(Y_i - f(X_i)) - \rho_\alpha(Y_i - f_0(X_i))\big\} \bigg| \geq \frac{c_{16}}{c_4}2^{2j}\delta_*^2\Bigg] \\
& \leq \PP\Bigg[ \sup_{f \in \cF_n(2^j \delta_*)} \bigg|\frac{1}{n}\sn (1 - \EE)\big\{\rho_\alpha(Y_i - f(X_i)) - \rho_\alpha(Y_i - f_0(X_i))\big\} \bigg| \geq c_{16} \delta \bigg(\delta_{{\rm s}} + \sqrt{\frac{x}{n}} \bigg)\Bigg] \\
& \leq \exp(- x) = \exp(-2^{2j}u).
\$
Since $c_4$ satisfies $c_{16}/c_4 \leq c_{14}/16$, the above probability bound yields
\$
\PP\Bigg\{ \sup_{f \in \cF_n(2^j \delta_*)} |\Delta_n(f)| \geq \frac{c_{14}}{16}2^{2j}\delta_*^2\Bigg\} \leq \exp(- 2^{2j} u).
\$ 
Combining this with~\eqref{quantile.peeling.argument} and~\eqref{quantile.peeling.argument2} implies
\$
\PP\{ \|\hat f_n - f_0\|_2 \geq \delta_* \} & \leq \sum_{j = 1}^\infty \exp(- 2^{2j} u) \leq \sum_{j = 1}^\infty \exp(- j u) \leq (1 - e^{-1})^{-1}e^{- u},
\$
where the last inequality uses the fact that $u \geq 1$.  
This proves the claim. \qed

\subsection{Proof of Theorem~\ref{thm:oracle.type.DHES}}
Following a similar line to the proof of Theorem~\ref{thm:oracle.type.DHES}, we start with the peeling argument.
To begin with, denote for any $u\geq 1$ fixed that
\$
\eta_* = c_{7}\bigg\{\eta_{{\rm s}} + \eta_{{\rm b}} + \eta_{{\rm a}} +  \delta_{{\rm s}}  + \delta_4^2 + (\nu_p^{1/p} + \sqrt{\tau})\sqrt{\frac{u}{n}}\bigg\},
\$
where $c_{7}$ is given by
\#
c_{7} = \max(\sqrt{24\cdot 7}\bar p, \sqrt{28\cdot 24}, 192 c_{18}, 192c_{19}, 192c_{20}) \geq 1. \label{eta.constant}
\#
Here, $c_{18}, c_{19}$ and $c_{20}$ are defined in Lemma~\ref{lem:multiplier.empirical.process}, Lemma~\ref{lem:square.empirical.process} and Lemma~\ref{lem:product.empirical.process}, respectively.
For integers $j = 1, 2, \dots,$ define donut-shaped sets
\$
\cD_{n,j} := \cG_n(2^j \eta_*/\alpha) \setminus \cG_n(2^{j-1}\eta_*/\alpha) = \big\{ g \in \cG_n : 2^{j-1}\eta_* < \alpha\|g - g_0\|_2 \leq 2^j \eta_* \big\}.
\$
Since we have
\#
& \PP\big(  \alpha \|\hat g_n - g_0\|_2 \geq \eta_*   \big) 
 \leq \sum_{j = 1}^\infty \PP \big(  \hat g_n \in \cD_{n,j} \big) ,
\label{joint.peeling.argument}
\#
it suffices to bound each probability on the right-hand side of the above inequality.  
Recall the local function class $\cF_0(\delta) = \{f \in \cF_n : \| f - f_0\|_{4} \leq \delta\}$ for any $\delta > 0$. Conditioning on the event $\{\hat f_n \in \cF_0(\delta_4)\}$, Lemma~\ref{lem:joint.lower.upper.bound} implies that if $\hat g_n \in \cD_{n,j}$, we have
\#
\frac{2^{2j-2}}{4}\eta_*^2 & \leq \cR_{\tau}(\hat f_n, \hat g_n) - \cR_{\tau}(\hat f_n, g_0) + 2^j \eta_* \bigg(\frac{\bar p}{2}\delta_4^2 + \eta_{{\rm b}} \bigg)  \nn
\\
& \leq \cR_{\tau}(\hat f_n, \hat g_n) - \cR_{\tau}(\hat f_n, g_0) + 6 \bar p^2\delta_4^4 + 24\eta_{{\rm b}}^2 +  \frac{2^{2j}}{48}\eta_*^2, \label{excess.joint.lower.bound}
\#
where the last inequality follows from the basic inequalities  $ab \leq 12a^2 + b^2/48$ and $(a + b)^2 \leq 2(a^2 + b^2)$ for any $a, b  \in \RR$.

We will now establish an upper bound for $\cR_\tau(\hat f_n, \hat g_n) - \cR_\tau(\hat f_n, g_0)$, which appears on the right-hand side of inequality \eqref{excess.joint.lower.bound}.
The definition of $\eta_{{\rm a}}$ in \eqref{def:eta.rate} allows us to choose $g_n \in \cG_n$ such that $\|g_n - g_0\|_2 \leq 2\eta_{{\rm a}}$. When we condition on the event $\{\hat f_n \in \cF_0(\delta_4)\}$, $\hat g_n$ satisfies that
\$
& \cR_\tau(\hat f_n, \hat g_n) - \cR_\tau(\hat f_n, g_0)
\\
& \leq \cR_\tau(\hat f_n, \hat g_n) - \hat \cR_\tau(\hat f_n, \hat g_n) + \hat \cR_\tau(\hat f_n, g_n) - \cR_\tau( \hat f_n, g_n) + \cR_\tau(\hat f_n, g_n) - \cR_\tau(\hat f_n, g_0).
\$
The upper bound in Lemma~\ref{lem:joint.lower.upper.bound} with $g = g_n$ implies
\$
\cR_\tau(\hat f_n, g_n) - \cR_\tau(\hat f_n, g_0) & \leq 2\alpha^2\eta_{{\rm a}}^2 + \alpha \cdot \eta_{{\rm a}} \bigg(\frac{\bar p}{2}\delta_4^2 + \eta_{{\rm b}}\bigg) \\
& \leq \frac{17}{8}\eta_{{\rm a}}^2 + \bar p^2\delta_4^4 + 4\eta_{{\rm b}}^2,
\$
which, combined with the earlier inequality, further yields
\#
\cR_\tau(\hat f_n, \hat g_n) - \cR_\tau(\hat f_n, g_0) & \leq \cR_\tau(\hat f_n, \hat g_n) - \hat \cR_\tau(\hat f_n, \hat g_n) + \hat \cR_\tau(\hat f_n, g_n) - \cR_\tau(\hat f_n, g_n) \nn \\
& ~~~~~~~~~~~~~~~~~~~~~~~~~~~~~~~~~~~~~+ 3 \eta_{{\rm a}}^2 + \bar p^2\delta_4^4 + 4\eta_{{\rm b}}^2. \label{excess.joint.upper.bound2}
\#
For any $f \in \cF_n$ and $g \in \cG_n$, recall the definition of $h_{f,g}(X,\epsilon)$ in~\eqref{def:h.notation}.
Moreover, define
\#
\Delta_n(f,g) = \frac{1}{n}\sn \big\{h_{f,g}(X_i, \epsilon_i) - \EE h_{f,g}(X_i, \epsilon_i)\big\},  \label{def:Delta.notation}
\#
so that we can write
\$
\cR_\tau(\hat f_n, \hat g_n) - \hat \cR_\tau(\hat f_n, \hat g_n) + \hat \cR_\tau(\hat f_n, g_n) - \cR_\tau(\hat f_n, g_n) = \Delta_n(f,g_n) - \Delta_n(f, g).
\$
Combining this with the bounds~\eqref{excess.joint.lower.bound} and~\eqref{excess.joint.upper.bound2}, we have conditioning on the event $\{\hat f_n \in \cF_0(\delta_4)\}$ that
\#
& \PP \big(  \hat g_n \in \cD_{n,j} \big)  \nn
\\
& \leq \PP \bigg\{ \exists f \in \cF_0(\delta_4), \exists g \in \cD_{n,j} \mbox{ such that }  \nn \\
& ~~~~~~~~~~~~~~~~~~~~~~~~~~~~~\Delta_n(f,g_n) - \Delta_n(f,g) \geq \frac{2^{2j}}{24}\eta_*^2 - 3\eta_{{\rm a}}^2 - 7 \bar p^2\delta_4^4 - 28\eta_{{\rm b}}^2 \bigg\} \nn
\\
& \overset{{\rm (i)}}{\leq} \PP \bigg\{ \exists f \in \cF_0(\delta_4), \exists g \in \cD_{n,j} \mbox{ such that } \Delta_n(f,g_n) - \Delta_n(f,g) \geq \frac{2^{2j}}{32}\eta_*^2\bigg\} \nn
\\
& \overset{{\rm (ii)}}{\leq} \PP \bigg\{\sup_{f \in \cF_n} \sup_{g \in \cG_n(2^j \eta_*/\alpha)} |\Delta_n(f,g)| \geq \frac{1}{64}2^{2j}\eta_*^2 \bigg\}, \label{joint.peeling.argument2}
\#
where the second inequality (i) follows from the definition of $c_{7}$ in~\eqref{eta.constant} that
\$
3\eta_{{\rm a}}^2 + 7\bar p^2 \delta_4^4 + 28\eta_{{\rm b}}^2 & \leq \frac{1}{24}c_{7}^2(\eta_{{\rm a}}^2 + \delta_4^4 + \eta_{{\rm b}}^2) \leq \frac{2^{2j}}{96}\eta_*^2 ~\mbox{ for } j \geq 1,
\$
and the last inequality (ii) follows from the choice of $g_n$, which satisfies
\$
\|g_n - g_0\|_2 \leq 2\eta_{{\rm a}} \leq 2^{j}\eta_*/\alpha
\$
for any $j \geq 1$.

So, the key task is to derive a concentration inequality for the supremum of the empirical process $\{\Delta_n(f, g) : f \in \cF_n, g \in \cG_n(2^j \eta_*/\alpha)\}$.
From the bound \eqref{Huber.supremum.upper.bound}, we can see that
\#
& \PP \bigg\{\sup_{f \in \cF_n} \sup_{g \in \cG_n(2^j \eta_*/\alpha)} |\Delta_n(f,g)| \geq \frac{1}{64}2^{2j}\eta_*^2 \bigg\} \nn
\\
&  \leq \PP \bigg[ \sup_{g \in \cG_n(2^j \eta_*/\alpha)} \bigg| \frac{1}{n} \sn (1 - \EE) \bigg\{ \int_0^{\alpha \Delta_g(X_i)} \psi_\tau(\omega_i){\rm d}t \bigg\} \bigg| \geq \frac{1}{192}2^{2j}\eta_*^2 \bigg] \nn
\\
&  + \PP \bigg\{ \sup_{g \in \cG_n(2^j \eta_*/\alpha)} \bigg| \frac{1}{n} \sn (1 - \EE) \bigg[ \int_0^{\alpha \Delta_g(X_i)} \big\{\psi_\tau(\omega_i + t) - \psi_\tau(\omega_i)\big\}{\rm d}t \bigg] \bigg| \geq \frac{1}{192}2^{2j}\eta_*^2 \bigg\} \nn
\\
&  + \PP \bigg\{ \sup_{f \in \cF_n}\sup_{g \in \cG_n(2^j\eta_*/\alpha)} \bigg| \frac{1}{n}\sn (1 - \EE)\bigg[\int_0^{\alpha \Delta_g(X_i)}\big\{\psi_{\tau}(\omega_i + Z_i(f) - Z_i(f_0) + t) - \psi_{\tau}(\omega_i + t)\big\}{\rm d}t \bigg] \bigg| \nn \\
& ~~~~~~~~~~~~~~~~~~~~~~~~~~~~~~~~~~~~~~~~~~~~~~~~~~~~~~~~~~~~~~~~~~~~~~~~~~~~~~~~~~~~~~~~\geq \frac{1}{192} 2^{2j}\eta_*^2\bigg\} \nn \\
& =: {\rm P}_1 + {\rm P}_2 + {\rm P}_2. \label{joint.peeling.argument3}
\#

We proceed to bound the three probabilities ${\rm P}_1, {\rm P}_2$ and ${\rm P}_3$, separately.
To apply Lemma~\ref{lem:multiplier.empirical.process}, we choose $\eta = 2^j \eta_*/\alpha$ and $x = 2^{2j}u$ for the given $u \geq 1$.
Note that $\eta \geq \max((\nu_p^{1/p} + \sqrt{\tau})V_{n,\tau,\nu_p},1/n)$ and $0 \leq x \leq n\eta^2/\tau$ as $0 < \alpha < 1$ and $c_{7} > 1$.
Furthermore, $\tau \geq c_1$ implies $\tau/\nu_p^{1/p} \geq 1$.
Therefore, applying Lemma~\ref{lem:multiplier.empirical.process} gives
\$
& \PP \Bigg[ \sup_{g \in \cG_n(2^j \eta_*/\alpha)} \bigg| \frac{1}{n} \sn (1 - \EE) \bigg\{ \int_0^{\alpha \Delta_g(X_i)} \psi_\tau(\omega_i){\rm d}t \bigg\} \bigg| \geq \frac{c_{18}}{c_{7}}2^{2j}\eta_*^2 \Bigg] \\
& \leq \PP \Bigg[ \sup_{g \in \cG_n(2^j \eta_*/\alpha)} \bigg| \frac{\alpha}{n} \sn (1 - \EE) \big\{  \psi_\tau(\omega_i)\Delta_g(X_i) \big\} \bigg| \geq c_{18} \cdot \alpha \eta(\nu_p^{1/p} + \sqrt{\tau})\bigg(V_{n,\tau,\nu_p} + \sqrt{\frac{x}{n}}\bigg)  \Bigg] \\
& \leq e^{-x} = e^{-2^{2j}u}.
\$ 
Here, we remark that the choice of $c_{7}$ in~\eqref{eta.constant} is such that $c_{18}/c_{7} \leq 1/192$.
Thus, the above probability bound implies
\$
{\rm P}_1 \leq \exp(-2^{2j}u).
\$
Similarly, for $\eta = 2^j\eta_*/\alpha$ and $x = 2^{2j}u$, it follows that $\eta \geq \delta_{{\rm s}} + V_n$ and $x \leq n\eta^2$. 
Combining Lemma~\ref{lem:square.empirical.process}, Lemma~\ref{lem:product.empirical.process} with the choice of $c_{7}$ in~\eqref{eta.constant} yields
\$
{\rm P}_2 \leq \exp(-2^{2j}u) ~~\mbox{ and }~~ {\rm P}_3 \leq \exp(-2^{2j}u).
\$

Together, the above bounds on ${\rm P}_1, {\rm P}_2$ and ${\rm P}_3$, \eqref{joint.peeling.argument}, \eqref{joint.peeling.argument2} and \eqref{joint.peeling.argument3} imply
\$
\PP\{ \alpha\|\hat g_n - g_0\|_2 \geq \eta_*  \}  \leq \sum_{j = 1}^\infty 3\exp(-2^{2j}u) \leq \sum_{j = 1}^\infty 3e^{-j u}  \leq 3(1 - e^{-1})^{-1}e^{-u},
\$
which completes the proof.  \qed

\subsection{Proof of Theorem~\ref{thm:oracle.type.DES}}

The proof employs the truncation argument as in~\cite{KP2022} and~\cite{FGZ2022}, and the peeling argument as in the proof of Theorem~\ref{thm:oracle.type.DHES}.

For any $u \geq 1$, define
\$
\eta_{*} := c_8\cdot \sqrt{u}(\eta_{{\rm s}} + \eta_{{\rm a}} + \delta_{{\rm s}} + \delta_{4}^2),
\$
where $c_8$ is given by
\#
c_8 = \max\big(\sqrt{24\cdot 4}\bar p, \sqrt{72}, 4\cdot 192 c_{18}, 2\cdot 192c_{19}, 2\cdot 192c_{20}\big) \geq 4. \label{eta.constant2}
\#
We note that it is sufficient to consider the case where $u \leq n$ and $\nu_p \leq n^p$.
Otherwise, $\eta_* \gtrsim 1$, so that the deviation bound becomes trivial due to the uniform bounded property of $g_0$ and $\cG_n$.
Denote $\cR(f,g) = \EE \hat \cR(f,g)$ for any $f, g$, where $\hat \cR$ is given in~\eqref{def:deep.ES.estimator}, and define $\cD_{n,j} = \cG_n(2^j\eta_*/\alpha)\setminus \cG_n(2^{j-1}\eta_*/\alpha)$ for any $j \geq 1$. 
Taking $\tau = \infty$ in Lemma~\ref{lem:joint.lower.upper.bound}, $\hat g_n \in \cD_{n,j}$ implies 
\#
\frac{2^{2j-2}}{4}\eta_*^2 & \leq \cR(\hat f_n,\hat g_n) - \cR(\hat f_n,g_0) + 2^j\eta_*\cdot \frac{\bar p}{2}\delta_4^2 \nn
\\
& \leq \cR(\hat f_n,\hat g_n) - \cR(\hat f_n,g_0) + 3\bar p^2\delta_4^4 + \frac{2^{2j}}{48}\eta_*^2, \label{LS.excess.joint.ubd}
\#
conditioning on the event $\{\hat f_n \in \cF_0(\delta_4)\}$.
Choose $g_n \in \cG_n$ satisfying $\|g_n - g_0\|_2 \leq 2\eta_{{\rm a}}$, which is possible by the definition of $\eta_{{\rm a}}$. 
Conditioning on the event $\{\hat f_n \in \cF_0(\delta_4)\}$, we have
\$
& \cR(\hat f_n,\hat g_n) - \cR(\hat f_n,g_0) \\
& \leq \cR(\hat f_n,\hat g_n) - \hat \cR(\hat f_n,\hat g_n) + \hat \cR(\hat f_n,g_n) - \cR(\hat f_n,g_n) + \cR(\hat f_n,g_n) - \cR(\hat f_n,g_0) \\
& \leq \cR(\hat f_n,\hat g_n) - \hat \cR(\hat f_n,\hat g_n) + \hat \cR(\hat f_n,g_n) - \cR(\hat f_n,g_n) + 2\alpha^2\eta_{{\rm a}}^2 + \alpha \eta_{{\rm a}}\cdot \frac{\bar p}{2}\delta_4^2\\ 
& \leq \cR(\hat f_n, \hat g_n) - \hat \cR(\hat f_n,\hat g_n) + \hat \cR(\hat f_n,g_n) - \cR(\hat f_n,g_n) + \frac{33}{16}\alpha^2\eta_{{\rm a}}^2 + \bar p^2\delta_4^4,
\$
where the second inequality follows from the upper bound in Lemma~\ref{lem:joint.lower.upper.bound}.
Combining this bound with~\eqref{LS.excess.joint.ubd}, $\hat g_n \in \cD_{n,j}$ implies
\$
\frac{2^{2j}}{24} \eta_*^2 \leq \cR(f,g) - \hat \cR(f,g) + \hat \cR(f,g_n) - \cR(f, g_n) + 4\bar p^2 \delta_4^2 + 3\eta_{{\rm a}}^2,
\$
conditioning on the same event.
From the choice of $c_6$ in~\eqref{eta.constant2}, we have
\$
4\bar p^2 \delta_4^2 + 3\eta_{{\rm a}}^2 \leq \frac{2^{2j}}{96}\eta_*^2.
\$
Then, by employing the peeling argument and following a similar line of reasoning that leads to~\eqref{joint.peeling.argument2}, we can obtain
\#
& \PP\{ \alpha \|\hat g_n - g_0\|_2 \geq \eta_*  \} \leq \sum_{j = 1}^{\infty} \PP \Bigg\{\sup_{f \in \cF_n} \sup_{g \in \cG_n(2^j \eta_*/\alpha)} |\Delta_n(f,g)| \geq \frac{1}{64}2^{2j}\eta_*^2 \Bigg\}, \label{LS.joint.peeling.argument}
\#
where $\Delta_n(f,g)$ is defined as
\$
\Delta_n(f,g) = \frac{1}{2n} \sn (1 - \EE)[\{Z_i(f) - \alpha g(X_i)\}^2 - \{Z_i(f) - \alpha g_0(X_i)\}^2].
\$

The bound~\eqref{Huber.supremum.upper.bound} with $\tau = \infty$ gives
\#
& \PP \Bigg\{\sup_{f \in \cF_n} \sup_{g \in \cG_n(2^j \eta_*/\alpha)} |\Delta_n(f,g)| \geq \frac{1}{64}2^{2j}\eta_*^2 \Bigg\} \nn
\\
&  \leq \PP \Bigg[ \sup_{g \in \cG_n(2^j \eta_*/\alpha)} \bigg| \frac{1}{n} \sn (1 - \EE) \bigg\{ \int_0^{\alpha \Delta_g(X_i)} \omega_i{\rm d}t \bigg\} \bigg| \geq \frac{1}{192}2^{2j}\eta_*^2 \Bigg] \nn
\\
&  + \PP \Bigg[ \sup_{g \in \cG_n(2^j \eta_*/\alpha)} \bigg| \frac{1}{n} \sn (1 - \EE) \bigg\{ \int_0^{\alpha \Delta_g(X_i)} t{\rm d}t \bigg\} \bigg| \geq \frac{1}{192}2^{2j}\eta_*^2 \Bigg] \nn
\\
&  + \PP \Bigg\{ \sup_{f \in \cF_n}\sup_{g \in \cG_n(2^j\eta_*/\alpha)} \bigg| \frac{1}{n}\sn (1 - \EE)\bigg[\int_0^{\alpha \Delta_g(X_i)}\big\{Z_i(f) - Z_i(f_0)\big\}{\rm d}t \bigg] \bigg| \geq \frac{1}{192} 2^{2j}\eta_*^2\Bigg\} \nn \\
& =: {\rm P}_1 + {\rm P}_2 + {\rm P}_2. \label{LS.joint.peeling.argument2}
\#

For $\eta = 2^{j}\eta_*/\alpha$ and $x = 2^{2j}n u V_n^2$, it follows that $\eta \geq V_n$ and $x \leq n\eta^2$.
Then, Lemma~\ref{lem:square.empirical.process} implies 
\$
& \PP \Bigg[ \sup_{g \in \cG_n(2^j \eta_*/\alpha)} \bigg| \frac{1}{n} \sn (1 - \EE) \bigg\{ \int_0^{\alpha \Delta_g(X_i)} t{\rm d}t \bigg\} \bigg| \geq \frac{2c_{19}}{c_8}2^{2j}\eta_*^2 \Bigg] \\
& ~~~\leq \PP \Bigg[ \sup_{g \in \cG_n(2^j \eta_*/\alpha)} \bigg| \frac{1}{n} \sn (1 - \EE) \bigg\{ \int_0^{\alpha \Delta_g(X_i)} t{\rm d}t \bigg\} \bigg| \geq c_{19} \cdot \alpha \eta \bigg(V_n + \sqrt{\frac{x}{n}} \bigg) \Bigg]  \\
& ~~~\leq \exp(-2^{2j}nuV_n^2).
\$
By the definition of $c_8$ in~\eqref{eta.constant2}, we have $2c_{19}/c_8 \leq 1/192$ so that 
${\rm P}_2 \leq \exp(-2^{2j}nuV_n^2)$.
Similarly, applying Lemma~\ref{lem:product.empirical.process} yields ${\rm P}_3 \leq \exp(-2^{2j}nuV_n^2)$.

We next derive an upper bound of the probability ${\rm P}_1$.
Remark that
\#
& \sup_{g \in \cG_n(2^j \eta_*/\alpha)} \bigg| \frac{1}{n} \sn (1 - \EE)  \big\{\alpha \omega_i \Delta_g(X_i) \big\}\bigg| \nn \\
& ~~~~~~ \leq \sup_{g \in \cG_n(2^j \eta_*/\alpha)} \bigg| \frac{1}{n} \sn (1 - \EE) \big\{\alpha \psi_{B_j}(\omega_i) \Delta_g(X_i) \big\}\bigg| \nn \\
& ~~~~~~ + \sup_{g \in \cG_n(2^j \eta_*/\alpha)} \bigg| \frac{1}{n} \sn (1 - \EE) \big\{ \alpha \omega_i\mathbbm{1}(|\omega_i| > B_j) \Delta_g(X_i) \big\} \bigg|, \label{LS.multiplier.decomp}
\#
where we choose $B_j = u \nu_p^{1/p} V_n^{-2/p}$.
Given our assumption that $u \geq 1$ and $V_n \leq 1$, it follows that $B_j/\nu_p^{1/p} \geq 1$.
Furthermore, we only consider the case $u \leq n$ and $\nu_p \leq n^p$, implying $B_j/\nu_p^{1/p} \leq n^3$.
Thus, $\eta_*$ satisfies
\$
\eta_* \geq 4u \big(\nu_p^{1/p}V_n + \nu_p^{1/2p}V_n^{1-1/p}\big) & \geq LN(\nu_p^{1/p} + \sqrt{B_j})\sqrt{\frac{(NL)^2\log(nB_j\nu_p^{-1/p})}{n}} \\
& =: (\nu_p^{1/p} + \sqrt{B_j})V_{n,B_j,\nu_p}.
\$
Choose $\tau = B_j, \eta = 2^j \eta_*/\alpha$ and $x = 2^{2j}nV_n^2$, which satisfy $\tau \cdot x \leq n \eta^2$.
From Lemma~\ref{lem:multiplier.empirical.process}  it follows that
\$
& \PP \Bigg[ \sup_{g \in \cG_n(2^j \eta_*/\alpha)} \bigg| \frac{1}{n} \sn (1 - \EE) \big\{ \alpha  \psi_{B_j}(\omega_i) \Delta_g(X_i) \big\} \bigg| \geq \frac{2c_{18}}{c_{9}}2^{2j}\eta_*^2 \Bigg]  \\
& \leq \PP\Bigg[ \sup_{g \in \cG_n(2^j \eta_*)} \bigg| \frac{1}{n} \sn (1 - \EE) \big\{ \alpha  \psi_{B_j}(\omega_i) \Delta_g(X_i) \big\} \bigg| \geq \alpha \cdot c_{18} \eta (\nu_p^{1/p} + \sqrt{B_j})\bigg( V_{n,B_j,\nu_p} + \sqrt{\frac{x}{n}}\bigg)\Bigg]  \\
& \leq e^{-2^{2j}nV_n^2}. 
\$
Thus, from the choice of $c_6$, the above probability bound implies that
\#
\PP\Bigg[ \sup_{g \in \cG_n(2^j \eta_*/\alpha)} \bigg| \frac{1}{n} \sn (1 - \EE) \big\{\alpha \psi_{B_j}(\omega_i) \Delta_g(X_i) \big\}\bigg| \geq \frac{1}{2\cdot 192}2^{2j}\eta_*^2  \Bigg] \leq e^{-2^{2j}nV_n^2}. \label{LS.P1}
\#

Turning to the second term on the right-hand side of~\eqref{LS.multiplier.decomp}, we apply Markov's inequality to obtain that for any $y > 0$,
\$
& \PP\Bigg[ \sup_{g \in \cG_n(2^j\eta_*/\alpha)}  \bigg|\frac{1}{n} \sn (1 - \EE) \big\{ \alpha  \omega_i \mathbbm{1}(|\omega_i| > B_j) \Delta_g(X_i) \big\} \bigg| > y \Bigg] \\
& \leq \frac{1}{y}\EE\Bigg[\sup_{g \in \cG_n(2^j\eta_*/\alpha)} \bigg|\frac{1}{n}\sn(1 - \EE)\big\{ \alpha  \omega_i \mathbbm{1}(|\omega_i| > B_j) \Delta_g(X_i) \big\}\bigg|\Bigg] \\
& \leq \frac{4\alpha M_0}{y}\EE\{|\omega_i|\mathbbm{1}(|\omega_i| > B_j)\},
\$
where the last inequality follows from the uniform boundedness of $\cG_n$ and $g_0$.
Furthermore,  
\$
\EE\{|\omega_i|\mathbbm{1}(|\omega_i| > B_j)\} \leq \frac{\EE(|\omega_i|^p)}{B_j^{p-1}} \leq \frac{\nu_p}{B_j^{p-1}}.
\$
Combining this expectation bound with $y = 2^{2j}\eta_*^2/(2\cdot 192)$ in the earlier   bound gives
\$
\PP\Bigg[ \sup_{g \in \cG_n(2^j\eta_*/\alpha)}  \bigg|\frac{1}{n} \sn (1 - \EE) \big\{ \alpha  \omega_i \mathbbm{1}(|\omega_i| > B_j) \Delta_g(X_i) \big\} \bigg| > \frac{1}{2\cdot 192}2^{2j}\eta_*^2 \Bigg] &  \leq \frac{8\cdot 192 M_0\nu_p}{2^{2j}\eta_*^2B_j^{p-1}} \\
& \lesssim \frac{1}{u^{p} 2^{2j}},
\$
where the last inequality follows from the choice of $B_j$, which satisfies
\$
B_j^{p-1}\eta_*^2 \geq u^{p-1}\nu_p^{1-1/p}V_n^{-2 + 2/p}u\nu_p^{1/p}V_n^{2-2/p} = u^{p}\nu_p.
\$
Together, the above probability bound, \eqref{LS.P1} and the choice of $c_6$ yield
\$
{\rm P}_1 \lesssim \exp(-2^{2j}nV_n^2) + \frac{1}{u^{p}2^{2j}}. 
\$

Finally, it follows from \eqref{LS.joint.peeling.argument}, \eqref{LS.joint.peeling.argument2} and the upper bounds on ${\rm P}_1, {\rm P}_2$ and ${\rm P}_3$ that conditioning on the event $\{\hat f_n \in \cF_0(\delta_4)\}$,
\$
& \PP\{ \alpha \|\hat g_n - g_0\|_2 \geq \eta_*\} \lesssim \sum_{j = 1}^\infty \bigg\{ \exp(-2^{2j}nu^2V_n^2) + \exp(-2^{2j}nV_n^2) + \frac{1}{u^{p}2^{2j}} \bigg\} \lesssim e^{-nV_n^2} + \frac{1}{u^{p}},
\$
where the second inequality follows from the fact that $u \geq 1$. This proves the claim. \qed

\subsection{Proof of Theorem~\ref{thm:oracle.type.subgaussian}}

For any $u \geq 1$, denote
\$
\eta_* = c_{10}\bigg(\eta_{{\rm s}} + \eta_{{\rm b}} + \eta_{{\rm a}} + \delta_{{\rm s}} + \delta_4^2 + \sigma_0\sqrt{\frac{u}{n}}\bigg),
\$
where $c_{10}$ is given by
\$
c_{10} = \max(\sqrt{24\cdot 7}\bar p, \sqrt{28\cdot 24}, 192c_{19}, 192c_{20}, 192 c_{21}) \geq 1.
\$
Recall the definition of notations $\cF_0(\delta)$ and $\Delta_n$ in the proof of Theorem~\ref{thm:oracle.type.DHES}.
By employing the peeling argument and following a similar line of reasoning that leads to~\eqref{joint.peeling.argument2} in conjunction with Lemma~\ref{lem:joint.lower.upper.bound.light} and the definition of $\eta_*$, it can be shown that conditioning on the event $\{\hat f_n \in \cF_0(\delta_4)\}$,
\#
& \PP\{  \alpha \|g - g_0\|_2 \geq \eta_*  \} \leq \PP \Bigg\{\sup_{f \in \cF_n} \sup_{g \in \cG_n(2^j \eta_*/\alpha)} |\Delta_n(f,g)| \geq \frac{1}{64}2^{2j}\eta_*^2 \Bigg\}, \label{joint.peeling.argument.light}
\#
where $\Delta_n$ is defined in~\eqref{def:Delta.notation}.
Moreover, we have for each $j \geq 1$ that
\$
& \PP \Bigg\{\sup_{f \in \cF_n} \sup_{g \in \cG_n(2^j \eta_*/\alpha)} |\Delta_n(f,g)| \geq \frac{1}{64}2^{2j}\eta_*^2 \Bigg\} 
\\
&  \leq \PP \Bigg[ \sup_{g \in \cG_n(2^j \eta_*)} \bigg| \frac{1}{n} \sn (1 - \EE) \bigg\{ \int_0^{\alpha \Delta_g(X_i)} \psi_\tau(\omega_i){\rm d}t \bigg\} \bigg| \geq \frac{1}{192}2^{2j}\eta_*^2 \Bigg] \\
&~~~~ + \PP \Bigg\{ \sup_{g \in \cG_n(2^j \eta_*)} \bigg| \frac{1}{n} \sn (1 - \EE) \bigg[ \int_0^{\alpha \Delta_g(X_i)} \big\{\psi_\tau(\omega_i + t) - \psi_\tau(\omega_i)\big\}{\rm d}t \bigg] \bigg| \geq \frac{1}{192}2^{2j}\eta_*^2 \Bigg\}
\\
& ~~~~ + \PP \Bigg\{ \sup_{f \in \cF_n}\sup_{g \in \cG_n(\eta)} \bigg| \frac{1}{n}\sn (1 - \EE)\bigg[\int_0^{\alpha \Delta_g(X_i)}\big\{\psi_{\tau}(\omega_i + Z_i(f) - Z_i(f_0) + t) - \psi_{\tau}(\omega_i + t)\big\}{\rm d}t \bigg] \bigg|  \\
& ~~~~~~~~~~~~~~~~~~~~~~~~~~~~~~~~~~~~~~~~~~~~~~~~~~~~~~~~~~~~~~~~~~~~~~~~~~~~~~~~~~~~~~~~\geq \frac{1}{128} 2^{2j}\eta_*^2\Bigg\} \\
& =: {\rm P}_1 + {\rm P}_2 + {\rm P}_3.
\$

To bound ${\rm P}_1$, we choose $\eta = 2^j\eta_*/\alpha$ and $x = 2^{2j}u$.
Then, $\eta \geq V_n$, $0 \leq x \leq n\eta^2$ and $V_n + e^{-\tau^2/(2\sigma_0^2)} + \sqrt{x/n} \leq 2^j\eta_*/c_{10}$.
Thus, applying Lemma~\ref{lem:multiplier.subgaussian.empirical.process} yields
\$
\PP \Bigg\{ \sup_{g \in \cG_n(2^j \eta_*/\alpha)} \bigg| \frac{1}{n} \sn (1 - \EE) \bigg\{ \int_0^{\alpha \Delta_g(X_i)} \psi_\tau(\omega_i){\rm d}t \bigg\} \bigg| \geq \frac{c_{21}}{c_{10}}2^{2j}\eta_*^2 \Bigg\} \leq e^{-2^{2j} u^2},
\$
which, combined with the choice of $c_{8}$, further implies
\$
{\rm P}_1 \leq e^{-2^{2j} u}.
\$
Moreover, for the same choice of $\eta$ and $x$, Lemma~\ref{lem:square.empirical.process} and  Lemma~\ref{lem:product.empirical.process} imply that ${\rm P}_2 \leq e^{-2^{2j} u}$ and ${\rm P}_3 \leq e^{-2^{2j} u}$, respectively.

Combining the upper bounds on ${\rm P}_1, {\rm P}_2$ and ${\rm P}_3$ with~\eqref{joint.peeling.argument.light} implies
\$
\PP\{ \alpha \|g - g_0\|_2 \geq \eta_*  \} & \leq 3\sum_{j = 1}^\infty e^{-2^{2j} u} \leq 3\sum_{j = 1}^{\infty} e^{-j u}  \leq 3(1 - e^{-1})^{-1}e^{-u},
\$
which completes the proof. \qed

\subsection{Proof of Theorem~\ref{thm:convergence.rate.deep.quantile}}

The proof proceeds by specifying each term in the error bound in Theorem~\ref{thm:oracle.type.deep.quantile}. 
For the approximation error, we can utilize Proposition~\ref{prop:approx.error} since the probability measure of $X_i$ is absolutely continuous with respect to the Lebesgue measure.
Applying Proposition~\ref{prop:approx.error} with our chosen values of $L_0$ and $N_0$, there exists a universal constant $C_1 > 0$ such that for any $f_0 \in \cH(d,l,M_0, \cP)$,
\$
\delta_{{\rm a}} = \inf_{f \in \cF_n}\|f - f_{0}\|_2 \leq c_3 (L_0 N_0)^{-2\gamma^*} \leq C_1 c_3 \bigg(\frac{\log^6 n}{n}\bigg)^{\gamma^*/(2\gamma^* + 1)}.
\$
Furthermore, from the choice of $L$ and $N$, we have
\$
LN \leq c_1c_2 \lceil L_0 \log L_0 \rceil \lceil N_0 \log N_0 \rceil \leq 4c_1c_2 (L_0 N_0)\log L_0 \log N_0.
\$
Then, it follows that $\log(LN) \leq C_2 c_1 c_2 \log (L_0 N_0)$ for some universal constant $C_2 > 0$.
Combining this with the choice of $L_0$ and $N_0$ gives
\$
\delta_{{\rm s}} = LN\sqrt{\frac{d\log(d L N)\log n}{n}} & \leq C_3 (c_1 c_2)^{3/2}d\frac{(L_0 N_0\log L_0 \log N_0)\{\log(L_0 N_0) \log n\}^{1/2}}{\sqrt{n}}
\\
& \leq C_4 (c_1 c_2)^{3/2} d\frac{(L_0 N_0)\log^3 n}{\sqrt{n}} \leq C_5 (c_1 c_2)^{3/2} d\bigg(\frac{\log^6 n}{n}\bigg)^{\gamma^*/(2\gamma^* + 1)}
\$
for some universal positive constants $C_3$ -- $C_5$. 
Remark that the prefactors $c_1$ -- $c_3$ have a polynomial dependence on $t_{\max}$ so that the prefactors $C_1c_3$ and $C_5(c_1c_2)^{3/2}$ in the bounds of $\delta_{{\rm a}}$ and $\delta_{{\rm s}}$ also demonstrate a polynomial dependence on $t_{\max}$.
Therefore, there exists a positive constant $c_{6} > 0$, which depends on $t_{\max}$ 
and $d$ polynomially and satisfies that for any $u \geq 1$,
\$
c_4\bigg( \delta_{{\rm s}} + \delta_{{\rm a}} + \frac{u}{\sqrt{n}}\bigg) \leq c_5 \bigg( \delta_n + \sqrt{\frac{u}{n}} \bigg).
\$
Plugging these values into the deviation bound in Theorem~\ref{thm:oracle.type.deep.quantile} establishes the claim. \qed

\subsection{Proof of Theorem~\ref{thm:DHES.with.DQR}}

In a similar manner to the proof of Theorem~\ref{thm:convergence.rate.deep.quantile}, we proceed to specify each term that constitutes the bound in Theorem~\ref{thm:oracle.type.DHES}.
To begin with, recall that we choose $\tau$ as
\$
\tau \asymp \nu_p^{1/p}\bigg(\frac{n}{\log^6n}\bigg)^{2\gamma^*(1-\zeta_p)/(2\gamma^* + \zeta_p)} ~\mbox{ with }~ \zeta_p = 1- \frac{1}{2p-1}.
\$
Then, for all sufficiently large $n$ satisfying
\#
\bigg( \frac{n}{\log^6 n} \bigg)^{\frac{\gamma^*}{2\gamma^* + \zeta_p}} \gtrsim   \max\big\{\nu_p^{1/p} , M_0 / \nu_p^{1/p} \big\}^{p-1/2} , \label{scaling.condition}
\#
we have $\tau \gtrsim \nu_p^{2/p}$ and $\tau \geq c_1$.
Furthermore, we have $\tau\nu_p^{-1/p} \lesssim n$.
Thus, following a similar argument in the proof of Theorem~\ref{thm:convergence.rate.deep.quantile} yields
\$
\eta_{{\rm s}} = (\nu_p^{1/p} + \sqrt{\tau})\sqrt{\frac{d(LN)^2\log(dLN)\log (n^2\tau\nu_p^{-1/p})}{n}} & \leq C_1(c_1c_2)^{3/2}d\sqrt{\tau}\frac{(L_0N_0)\log^3 n}{\sqrt{n}} \\
& \leq C_2(c_1c_2)^{3/2} d\nu_p^{1/(2p)}\bigg( \frac{\log^6 n}{n} \bigg)^{\frac{\gamma^*\zeta_p}{2\gamma^* + \zeta_p}}
\$
for some universal constants $C_1, C_2 > 0$.
In addition, we have
\$
\eta_{{\rm b}} = \frac{\nu_p}{(\tau/2)^{p-1}} \lesssim \nu_p^{1/p} \bigg(\frac{\log^6 n}{n} \bigg)^{\frac{2\gamma^*(1-\zeta_p)(p-1)}{2\gamma^* + \zeta_p}} = \nu_p^{1/p}\bigg( \frac{\log^6 n}{n} \bigg)^{\frac{\gamma^*\zeta_p}{2\gamma^* + \zeta_p}},
\$
and there exist universal constants $C_3 > 0$ such that 
\#
\delta_{{\rm s}} = \sqrt{\frac{d(LN)^2d\log(LN)\log n}{n}} & \leq C_3(c_1c_2)^{3/2}d\bigg( \frac{\log^6 n}{n} \bigg)^{\frac{\gamma^*}{2\gamma^* + \zeta_p}} \nn \\
& \leq C_3(c_1c_2)^{3/2}d\bigg( \frac{\log^6 n}{n} \bigg)^{\frac{\gamma^*\zeta_p}{2\gamma^* + \zeta_p}}. \label{statistical.error.bound}
\#
Regarding the approximation error $\eta_{{\rm a}}$, Proposition~\ref{prop:approx.error} implies that there exists a universal positive constant $C_4$ satisfying
\$
\eta_{{\rm a}} = \inf_{g \in \cG_n}\|g - g_0\|_2 \leq c_4(L_0 N_0)^{-2\gamma^*} \leq C_4 c_3 \bigg( \frac{\log^6 n}{n} \bigg)^{\gamma^*\zeta_p/(2\gamma^* + \zeta_p)} ~~~\forall g_0 \in \cH(d,l,M_0, \cP).
\$
Next, we apply Theorem~\ref{thm:oracle.type.deep.quantile} to find an upper bound of $\|\hat f_n - f_0\|_4$.
Following the same argument for deriving an upper bound of the approximation $\eta_{{\rm a}}$ gives
\$
\inf_{f \in \cF_n}\|f - f_0\|_2 \leq C_4 c_3 \bigg( \frac{\log^6 n}{n} \bigg)^{\gamma^*\zeta_p/(2\gamma^* + \zeta_p)}.
\$
Combining the two bounds with~\eqref{statistical.error.bound} and applying Theorem~\ref{thm:oracle.type.deep.quantile}, we have
\$
\PP\Bigg[ \|\hat f_n - f_0\|_2 \geq C_5\bigg\{ \max\big( \nu_p^{1/p}, 1 \big) \cdot \bigg(\frac{\log^6 n}{n} \bigg)^{\gamma^*\zeta_p/(2\gamma^* + \zeta_p)} + \sqrt{\frac{u}{n}} \bigg\} \Bigg] \lesssim e^{-u},
\$
where $C_5$ has a polynomial dependence on $t_{\max}$ and $d$.
Since $\|f_0\|_\infty \leq M_0$ and $\|f\|_\infty \leq M_0$ for any $f \in \cF_n$, this implies
\$
\PP\Bigg[ \|\hat f_n - f_0\|_4^2 \geq 2M_0\cdot C_5\bigg\{\max\big( \nu_p^{1/p}, 1 \big) \cdot \bigg( \frac{\log^6 n}{n} \bigg)^{\gamma^*\zeta_p/(2\gamma^* + \zeta_p)} + \sqrt{\frac{u}{n}} \bigg\} \Bigg] \lesssim e^{-u}.
\$
Finally, under the scaling condition~\eqref{scaling.condition}, we have
\$
(\nu_p^{1/p} + \sqrt{\tau})\sqrt{\frac{u}{n}} \lesssim \sqrt{\tau}\sqrt{\frac{u}{n}} \lesssim \nu_p^{1/(2p)}\frac{\sqrt{u}}{n^{(2\gamma^* + 1)\zeta_p/(4\gamma^* + 2\zeta_p)}} \leq \nu_p^{1/(2p)}\sqrt{\frac{u}{n^{\zeta_p}}}.
\$
Putting the pieces together into the bound~\eqref{oracle.type.ES.bound}, there exists $c_{11} > 0$ with a polynomial dependence on $t_{\max}$ and $d$ satisfying 
\$
\PP \Bigg\{ \alpha \|\hat g_n - g_{0}\|_2 \geq c_{11} \Bigg[ \eta_n^{{\rm AH}} + \max\big\{\nu_p^{1/(2p)}, 1\big\}\sqrt{\frac{u}{n^{\zeta_p}}} \Bigg]  \Bigg\} \lesssim e^{-u}.
\$
This concludes the proof of the claim. \qed

\subsection{Proof of Theorem~\ref{thm:DES.with.DQR}}

To apply Theorem~\ref{thm:oracle.type.DES}, we follow a similar line of argument in the proof of Theorem~\ref{thm:DHES.with.DQR}.
To begin with, from the choice of $L$ and $N$, there exist some positive constants $C_1,C_2 > 0$ satisfying
\#
V_n = \sqrt{\frac{d(LN)^2\log(dLN)\log n}{n}} & \leq C_1(c_1c_2)^{3/2}d\frac{(L_0N_0)\log^3 n}{\sqrt{n}} \nn \\
& \leq C_2(c_1c_2)^{3/2}d\bigg( \frac{\log^6 n}{n} \bigg)^\frac{\gamma^*}{2\gamma^* + \xi_p}. \label{statistical.error.bound2}
\#
Given that $n$ is sufficiently large so that $V_n \leq 1$, we have
\$
\eta_{{\rm s}} = \nu_p^{1/p}V_n + \nu_p^{1/(2p)}V_n^{1 - 1/p} & \leq 2C_2(c_1c_2)^{3/2}d\max\big\{\nu_p^{1/p}, \nu_p^{1/(2p)}\big\}\cdot \bigg( \frac{\log^6 n}{n} \bigg)^{\frac{\gamma^*\xi_p}{2\gamma^* + \xi_p}}  \\
& \leq 2C_2(c_1c_2)^{3/2}d\max\big(\nu_p^{1/p}, 1\big) \cdot \bigg( \frac{\log^6 n}{n} \bigg)^{\frac{\gamma^*\xi_p}{2\gamma^* + \xi_p}}.
\$
Furthermore, Proposition~\ref{prop:approx.error} implies
\$
\eta_{{\rm a}} \leq C_3c_3\bigg( \frac{\log^6 n}{n} \bigg)^{\gamma^* \xi_p/(2\gamma^* + \xi_p)}.
\$
Turning to deriving a high-probability bound of $\|\hat f_n - f_0\|_4$, note that 
\$
\inf_{f \in \cF_n}\|f - f_0\|_2 \leq C_4 c_3 \bigg( \frac{\log^6 n}{n} \bigg)^{\gamma^*\xi_p/(2\gamma^* + \xi_p)} ~~~~~\forall f_0 \in \cH(d,l,M_0, \cP).
\$
Combining these two bounds with~\eqref{statistical.error.bound2}, Theorem~\ref{thm:oracle.type.deep.quantile} implies that there exists a constant $C_5$ with a polynomial dependence on $t_{\max}$ and $d$ satisfying
\$
\PP\Bigg[  \|\hat f_n - f_0\|_2 \geq C_5\bigg\{ \max\big(\nu_p^{1/p}, 1\big) \cdot \bigg(\frac{\log^6 n}{n} \bigg)^{\gamma^*\xi_p/(2\gamma^* + \xi_p)} + \sqrt{\frac{x}{n}} \bigg\} \Bigg] \lesssim e^{-x}
\$
for any $x \geq 1$.
Taking $x = n\cdot u \{ \log^6(n)/n \}^{2\gamma^*\xi_p/(2\gamma^* + \xi_p)} \geq n^{1/p}$ in this bound and recalling the boundedness of $\cF_n$ and $f_0$, we further have
\$
\PP\Bigg\{ \|\hat f_n - f_0\|_4^2 \geq 4C_5 M_0 \sqrt{u} \cdot  \max\big(\nu_p^{1/p}, 1\big) \bigg(\frac{\log^6 n}{n} \bigg)^{\gamma^*\xi_p/(2\gamma^* + \xi_p)}\Bigg\} \lesssim e^{-n^{1/p}}.
\$
Remark that it suffices to consider the case $u \leq n^2$.
Otherwise, the deviation bound becomes trivial by the uniform boundedness of $g_0$ and $\cG_n$.
Then, putting the pieces together and applying Theorem~\ref{thm:oracle.type.DES}, there exists a positive constant $c_{12}$ with a polynomial dependence on $t_{\max}$ and $d$, which satisfies 
\$
\PP\Bigg[ \alpha \|\hat g_n - g_0\|_2 \geq c_{12} \cdot u \eta_n^{{\rm LS}}\Bigg] & \lesssim e^{-nV_n^2} + e^{-n^{1/p}} + \frac{1}{u^{2p}} \\
& \lesssim e^{-C_6n^{1/p}} + \frac{1}{u^{p}} \\
& \lesssim \frac{1}{n^{2p}} + \frac{1}{u^{p}} \\
& \lesssim \frac{1}{u^{p}}
\$
for sufficiently large $n$ and $1 \leq u \leq n^2$.
This proves the theorem. \qed

\subsection{Proof of Theorem~\ref{thm:DHES.with.DQR.subgaussian}}

By following a similar argument as presented in the proof of Theorem~\ref{thm:DHES.with.DQR}, the theorem can be readily derived from Proposition~\ref{prop:approx.error}, Theorem~\ref{thm:convergence.rate.deep.quantile} and Theorem~\ref{thm:oracle.type.subgaussian}. \qed

\subsection{Proof of Proposition~\ref{prop:approx.error}}

The following ReLU network approximation result for the function class $\cH^\beta([0,1]^d, M_0)$ plays a crucial role in the proof of Proposition~\ref{prop:approx.error}. 

\begin{lemma}[Theorem 3.3 in~\cite{JSLH2023}]
\label{lem:smooth.approximation}
For any $C_0 > 0$, assume that $f \in \cH^\beta([0,1]^d, C_0)$ with $\beta = r + s$, $r = \lfloor \beta \rfloor \in \NN_0$ and $s \in (0,1]$. 
For any $L_0, N_0 \in \NN$ and $\delta \in (0, 1/(3B)]$ with $B = \lceil (L_0 N_0)^{2/d} \rceil$, there exists a function $\phi \in \cF_{{\rm DNN}}(d,L, N)$ with depth $L = 21(r + 1)^2 L_0 \lceil \log_2(8L_0) \rceil$ and width $N = 38(r+1)^2 d^{r+1} N_0 \lceil \log_2(8N_0) \rceil$ such that
\$
|f(\bx) - \phi(\bx)| \leq 19C_0(r+1)^2d^{r + (\beta \vee 1)/2} (L_0 N_0)^{-2\beta/d}
\$
for all $\bx \in [0,1]^d \setminus \Omega([0,1]^d, B, \delta)$, where 
\$
\Omega([0,1]^d, B, \delta) = \bigcup_{j = 1}^d \bigg\{\bx = (x_1, \dots, x_d)^\T \in [0,1]^d : x_j \in \bigcup_{b = 1}^{B-1} (b/B - \delta, b/B)\bigg\}.
\$ 
\end{lemma}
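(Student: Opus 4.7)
The plan is to follow the Yarotsky-style constructive approximation strategy adapted in the cited reference. First I would partition $[0,1]^d$ into $B^d$ axis-aligned cubes of side length $1/B$ using grid points $\{b/B : 0\leq b\leq B\}^d$, and on each cube $Q_\bb$ with corner $\bb/B$, approximate $f$ locally by its degree-$r$ Taylor polynomial $P_\bb$ centered at $\bb/B$. Since $f \in \cH^\beta([0,1]^d, C_0)$, the integral form of the Taylor remainder combined with the Hölder bound on the top-order derivatives gives, for every interior point $\bx \in Q_\bb$,
\[
|f(\bx) - P_\bb(\bx)| \lesssim C_0 d^{r/2}(1/B)^\beta,
\]
with a constant that scales like $(r+1)^2$ after counting the number of multi-indices of order $r$ and bounding the factorials. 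The exponent $2\beta/d$ appears through $B = \lceil (L_0N_0)^{2/d}\rceil$.

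Next I would build a ReLU partition-of-unity to glue the local polynomial pieces. In each coordinate, a trapezoidal piecewise-linear function centered at $b/B$ can be expressed as a short linear combination of ReLUs; its $d$-fold tensor product localizes to a slightly inflated cube. The key subtlety, and the reason the statement excludes $\Omega([0,1]^d, B, \delta)$, is that the trapezoids have transition slabs of width $\delta$ near each interior hyperplane $x_j = b/B$; on these slabs the partition-of-unity is not identically one and the interpolation error cannot be controlled uniformly. Outside $\Omega$ the partition is exactly $1$ on each cube, so the error is purely the Taylor error above.

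Then I would assemble the network $\phi = \sum_\bb \mathbf{1}_\bb^{\rm DNN}(\bx) \cdot \widetilde P_\bb(\bx)$, where $\mathbf{1}_\bb^{\rm DNN}$ is the ReLU indicator-like gadget and $\widetilde P_\bb$ is a ReLU approximation of $P_\bb$. Evaluating a degree-$r$ polynomial in $d$ variables reduces, via Yarotsky's multiplication subnetwork, to iterated approximate products; choosing the product-gadget tolerance of order $(L_0N_0)^{-2\beta/d}$ makes this error absorb into the Taylor error. Careful sharing of subnetworks and bit-extraction encoding of the index $\bb$ (using $L_0$ stacked "selector" blocks of width $N_0$, each contributing a $\log L_0$ or $\log N_0$ factor to the final bookkeeping) yields the stated depth $L = 21(r+1)^2 L_0\lceil\log_2(8L_0)\rceil$ and width $N = 38(r+1)^2 d^{r+1} N_0\lceil\log_2(8N_0)\rceil$, with prefactor $19\,C_0(r+1)^2 d^{r+(\beta\vee 1)/2}$ on the error.

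The main obstacle is the constant-level bookkeeping through the multiplication gadgets and the selector subnetwork: one must simultaneously (i) keep depth linear in $L_0\log L_0$ and width linear in $N_0 d^{r+1}\log N_0$, (ii) ensure the product-approximation error does not dominate the Taylor residual, and (iii) verify that the trapezoidal transition width $\delta \leq 1/(3B)$ is compatible with the indicator construction so that on $[0,1]^d \setminus \Omega$ the approximation is exact up to polynomial error. This is precisely the delicate construction carried out in the cited theorem, and reproducing it in detail — rather than invoking it as a black box — is what turns the preceding sketch into a genuine proof.
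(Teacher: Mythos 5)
The paper does not prove this lemma at all: it is imported verbatim as Theorem 3.3 of the cited reference and used as a black box in the proof of Proposition~\ref{prop:approx.error}. So there is no in-paper argument to compare yours against; the only question is whether your sketch would stand on its own as a proof of the stated bound.

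It would not, and you essentially say so yourself. Your outline correctly identifies the standard construction (uniform partition into $B^d$ cubes with $B = \lceil (L_0N_0)^{2/d}\rceil$, local degree-$r$ Taylor polynomials, a ReLU trapezoidal partition of unity whose transition slabs of width $\delta$ generate exactly the excluded set $\Omega([0,1]^d,B,\delta)$, and Yarotsky-type multiplication and bit-extraction gadgets to realize the piecewise polynomial with depth $O(L_0\log L_0)$ and width $O(d^{r+1}N_0\log N_0)$). This is the right architecture and the right source of every factor in the statement. But the entire content of the lemma as quoted lies in the explicit constants --- the $21(r+1)^2$ and $38(r+1)^2 d^{r+1}$ in the network size, the $19C_0(r+1)^2 d^{r+(\beta\vee1)/2}$ prefactor, and the requirement $\delta\leq 1/(3B)$ --- and these are precisely what your sketch defers. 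Since the downstream use in Remark~\ref{rmk:comparison.prefactor} and in the proof of Proposition~\ref{prop:approx.error} hinges on the \emph{polynomial} dependence on $d$ (equivalently $t_{\max}$) of exactly these prefactors, waving at "constant-level bookkeeping" skips the one step that matters for this paper. As a blind reconstruction of the cited theorem your proposal is a faithful roadmap but not a proof; to close the gap you would need to carry out the factorial/multi-index count in the Taylor remainder, the error budget of the product gadgets at tolerance $(L_0N_0)^{-2\beta/d}$, and the width accounting of the selector network, or else simply cite the result as the paper does.
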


We also need the following lemma which is derived from the discussions in Section B.1 of~\cite{FGZ2022}.

\begin{lemma} \label{lem:composition.neural.network}
Assume that $g_i \in \cF_{{\rm DNN}}(d, L_i, N_i)$ for $1 \leq i \leq t$ for some $t \in \NN$ and $h \in \cF_{{\rm DNN}}(t, L, N)$. Then, we have
\$
h(g_1, \dots, g_t) \in \cF_{{\rm DNN}}\bigg(d, L + \max_{1 \leq i \leq t}L_i, N \vee \sum_{i = 1}^t N_i\bigg).
\$ 
\end{lemma}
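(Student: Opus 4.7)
The plan is to construct a single ReLU network that realizes $F(\bx) := h(g_1(\bx),\dots,g_t(\bx))$ in two phases: a \emph{parallel phase} that evaluates the tuple $(g_1,\dots,g_t)$, followed by a \emph{serial phase} that applies $h$. For the parallel phase, I would horizontally stack the $t$ sub-networks: at each hidden layer, the weight matrix and bias are built in block-diagonal form by placing the corresponding weight slices of the active $g_i$'s side by side. Because the activation $\sigma$ is applied component-wise, this block-diagonal structure correctly computes all active $g_i$ representations simultaneously. The width contribution at a given depth $k$ is therefore at most $\sum_{i:\,L_i\ge k} N_i \le \sum_{i=1}^t N_i$.

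The key subtlety is that the $g_i$'s may have different depths $L_i$, so some outputs must be carried forward through extra ReLU layers without modification. For this I would use the standard identity trick $u = \sigma(u) - \sigma(-u)$: once $g_i(\bx)$ has been produced by the affine map $\mathcal{L}_{L_i+1}^{(i)}$, it is propagated through each subsequent hidden layer by maintaining the pair $(\sigma(g_i(\bx)),\sigma(-g_i(\bx)))$ in two neurons and reconstructing the scalar in the next affine step. These passthrough units can be folded into the block-diagonal structure without exceeding the bound $\sum_i N_i$ on the per-layer width (if a particular $N_i=1$, one absorbs the two passthrough units into the neighboring block using the fact that we are free to round widths up to the stated maximum). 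At depth $\max_i L_i$, the final affine maps $\mathcal{L}_{L_i+1}^{(i)}$ of all $g_i$'s are applied in parallel to produce the $t$-dimensional vector $(g_1(\bx),\dots,g_t(\bx))$; this last affine map is absorbed into the first affine map of the subsequent phase.

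For the serial phase I would then feed this $t$-dimensional output into $h$, appending its $L$ hidden layers of width $N$. Stacking the two phases yields a fully-connected ReLU network of total depth $\max_i L_i + L$ and per-layer width bounded by $\max\bigl(N,\sum_i N_i\bigr) = N \vee \sum_i N_i$ (uniformizing the width across layers by zero-padding weight matrices and biases, as is standard in the definition of $\mathcal{F}_{{\rm DNN}}(d,L,N)$). The main obstacle is the precise bookkeeping that shows the identity-passthrough units plus the active blocks fit inside the claimed width bound $\sum_i N_i$; this is handled by the observation that a depth-mismatched $g_i$ contributes at most $N_i$ neurons at every intermediate layer (either its genuine hidden units while it is still active, or two passthrough units afterwards, and one can always replace the latter by the former count when $N_i \ge 2$, while the singleton case $N_i=1$ is absorbed via the zero-padding slack inherent in $N\vee\sum_i N_i$). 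Combining the depth and width counts proves the claim.
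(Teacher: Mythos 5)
Your overall strategy—parallel block-diagonal stacking of the sub-networks $g_1,\dots,g_t$, depth synchronization by identity passthrough, then serial composition with $h$, merging the affine maps at the junction—is the standard construction and is exactly the route behind the reference the paper invokes for this lemma (Section B.1 of Farrell--type arguments in the cited work). The depth count $L+\max_{1\leq i\leq t}L_i$ is correct.

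The width bookkeeping, however, has a genuine gap. You propose to handle a sub-network $g_i$ that has already terminated by first applying its output affine map $\mathcal{L}^{(i)}_{L_i+1}$ and then carrying the scalar $g_i(\bx)$ forward via the pair $\bigl(\sigma(g_i(\bx)),\sigma(-g_i(\bx))\bigr)$, which costs \emph{two} neurons per finished block per layer. When several narrow sub-networks finish before the others, this exceeds $\sum_{i=1}^t N_i$: take $t=3$, $N_1=N_2=N_3=1$ and $L_1<L_2<L_3$; between depths $L_2$ and $L_3$ your construction needs $2+2+1=5$ neurons, while the lemma promises width $N\vee 3$, which is violated whenever $N\leq 4$. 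Your suggested remedy of absorbing the extra units into "the zero-padding slack inherent in $N\vee\sum_i N_i$" only applies when $N$ happens to dominate the sum, so it does not establish the stated bound in general. The clean fix is to \emph{not} apply $\mathcal{L}^{(i)}_{L_i+1}$ early: once the last hidden layer of $g_i$ has been computed, carry forward its post-activation vector, which is entrywise nonnegative, so composing the identity affine map with $\sigma$ reproduces it exactly at a cost of exactly $N_i$ neurons per layer; all the output maps $\mathcal{L}^{(i)}_{L_i+1}$ are then applied simultaneously at the junction, where they are folded into the first affine map of $h$. With this modification every sub-network occupies at most $N_i$ neurons at every intermediate layer, the $\sigma(u)-\sigma(-u)$ trick is never needed, and the claimed width $N\vee\sum_{i=1}^t N_i$ follows.
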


The proof of Proposition~\ref{prop:approx.error} is based on and refines the argument presented in the proof of Proposition 3.5 in~\cite{FGZ2022}. The primary distinction lies in the use of Lemma~\ref{lem:smooth.approximation}, which results in a polynomial dependence on $t_{\max}$ for the prefactors in our approximation bound and the width $N$. In contrast, the approximation error from Proposition 3.5 in \cite{FGZ2022} exhibits an exponential dependence on $t_{max}$. Furthermore, the proof of Proposition~\ref{prop:approx.error} requires a more delicate analysis to manage unfavorable subsets in which the approximation bound is not valid.

\begin{proof} [Proof of Proposition~\ref{prop:approx.error}]

To begin with, define $\beta_{\max} = \sup_{(\beta, t) \in \cP} \beta$ and $t_{\max} = \sup_{(\beta, t) \in \cP} t$. 
We first show that there exist positive constants $c_2$ -- $c_4$ that depend on $t_{\max}$ polynomially such that for any $f_0 \in \cH(d,l,M_0, \cP)$ and $\delta_0 \in (0,1)$, there exists a neural network $f^* \in \cF_{{\rm DNN}}(d, c_2\lceil L_0\log L_0 \rceil, c_3 \lceil N_0 \log N_0 \rceil, M_0)$ such that 
\#
|f_0(\bx) - f^*(\bx)| \leq c_4(L_0 N_0)^{-2\gamma^*} ~\mbox{ for all }~ \bx \in [0,1]^d \setminus \Xi_0, \label{approximation.function}
\#
where $\Xi_0 \subseteq [0,1]^d$ is defined below and the Lebesgue measure of $\Xi_0$ is less than $\delta_0$.

\noindent \textsc{Step 1. Construction of neural networks.}
For a fixed $f_0 \in \cH(d,l,M_0, \cP)$ with $l > 1$, we denote $h_1^{(l)}(\bx) = f_0(\bx)$. 
By the definition of $\cH(d,l,M_0, \cP)$, $h_1^{(l)}(\bx)$ is recursively computed consisting of various hierarchical interaction models at level $i \in \{1, \dots, l-1\}$. 
Let $R_i$ denote the number of hierarchical composition models at level $i$, which are necessary to compute $h_1^{(l)}$. 
For each level $i \in \{1, \dots, l\}$, we denote $h_j^{(i)} : \RR^d \to \RR$ to be the $j$-th ($j \in \{1, \dots, R_i\}$) hierarchical composition model at the $i$-th level.  
From the definition, each function $h_j^{(i)}$ depends on functions at level $i-1$ through a function $g_j^{(i)} \in \cH^{\beta_j^{(i)}}(\RR^{t_j^{(i)}}, M_0)$ with $(\beta_j^{(i)}, t_j^{(i)}) \in \cP$. 
Then, $h_1^{(l)}$ is recursively described as
\#
h_j^{(i)}(\bx) = g_j^{(i)}\bigg( h_{\sum_{k = 1}^{j-1} t_k^{(i)} + 1}^{(i-1)}(\bx), \dots, h_{\sum_{k = 1}^{j} t_k^{(i)}}^{(i-1)}(\bx) \bigg) \label{i.layer}
\#
for $j \in \{1, \dots, R_i\}$ and $i \in \{2, \dots, l\}$, and
\$
h_j^{(1)}(\bx) = g_j^{(1)}\Big( x_{j_1}, \dots, x_{j_{t_j^{(1)}}}\Big)
\$
for some $\{j_1, \dots, j_{t_j^{(1)}}\} \subset \{1,\dots,d\}$ and $\bx \in [0,1]^d$.
Furthermore, we can recursively calculate that
\$
R_l = 1 ~\mbox{ and }~ R_i = \sum_{j = 1}^{R_{i+1}}t_j^{(i+1)} ~\mbox{ for }~i \in \{1, \dots, l-1\},
\$
so that $R_i \leq t_{\max}^{l-i}$ for $i \in \{1, \dots, l\}$.

To approximate $f_0$, we construct a sequence of deep ReLU neural networks, approximating the sequence of functions $h_j^{(i)}$. 
For the given $\delta_0$, we start with $i = 1$ and $j \in \{1, \dots, R_1\}$. 
Note that it suffices to approximate each function $g_j^{(1)}$ on the domain $[0,1]^{t_j^{(1)}}$. 
Define $B_j^{(1)} = \lceil (L_0 N_0)^{2/t_j^{(1)}} \rceil$ and choose 
\$
\delta_j^{(1)} = \delta_0/(3 l\cdot R_1 t_j^{(1)} B_j^{(1)}) \in (0, 1/(3B_j^{(1)})].
\$
By applying Lemma~\ref{lem:smooth.approximation} with $C_0 = M_0, \beta = \beta_j^{(1)}$ and $\delta = \delta_j^{(1)}$, there exists a function $\tilde g_j^{(1)}$ in $\cF_{{\rm DNN}}(t_j^{(1)}, L_j^{(1)}, N_j^{(1)})$ with some $L_j^{(1)}, N_j^{(1)} \in \NN$ such that
\#
\big| \tilde g_j^{(1)}(\by) - g_j^{s}(\by) \big| \leq C_j^{(1)}(L_0 N_0)^{-2\beta_j^{(1)}/t_j^{(1)}} \leq C_j^{(1)}(L_0 N_0)^{-2\gamma^*} \label{first.layer}
\#
for all $\by \in [0,1]^{t_j^{(1)}} \setminus \Omega([0,1]^{t_j^{(1)}}, B_j^{(1)}, u_j^{(1)})$, where $C_j^{(1)} = 19 M_0 (\lfloor \beta_j^{(1)} \rfloor + 1)^2 d^{\lfloor \beta_j^{(1)} \rfloor + \beta_j^{(1)}/2}$ and $\Omega$ is defined in Lemma~\ref{lem:smooth.approximation}. 
Here, the last inequality holds by the definition of $\gamma^*$ and recall that $\cP \in [1,\infty) \times \NN^+$ so that $\beta_j^{(1)} \geq 1$.
Remark that for any $t \in \NN, L_1 \leq L_2$ and $N_1 \leq N_2$, $\cF_{{\rm DNN}}(t, L_1, N_1) \subset \cF_{{\rm DNN}}(t, L_2, N_2)$. 
Therefore,  we can regard $\tilde g_j^{(1)}$ to be a function in $\cF_{{\rm DNN}}(t_j^{(1)}, L^\prime, N^\prime)$, where
\$
L^\prime = C_1 \lceil L_0 \log L_0 \rceil ~\mbox{ and }~ N^\prime = C_2 \lceil N_0 \log N_0 \rceil
\$
with $C_1 = 63(\lfloor \beta_{\max} \rfloor + 1)^2$ and $C_2 = 114(\lfloor \beta_{\max} \rfloor + 1)^2 t_{\max}^{\lfloor \beta_{\max} \rfloor + 1}$.
Remark that the range of $\tilde g_j^{(1)}$ may not be contained in $[-M_0, M_0]$. 
To correct this, we truncate each neural networks $\tilde g_j^{(1)}$ as
\$
\hat g_j^{(1)} := \max[\min\{\tilde g_j^{(1)}(\bz), M_0\}, -M_0] = \sigma(2M_0 - \sigma(M_0 - \tilde g_j^{(1)}(\bz))) - M_0,
\$
where $\sigma(\cdot)$ is the ReLU activation function. 
Note that if $g \in \cF_{{\rm DNN}}(t, L_1, N_1)$ for some $t, L_1, N_1 \in \NN$, then for any $a,b \in \RR$, $a \sigma(g) + b \in \cF_{{\rm DNN}}(t, L_1 + 1, N_1)$. 
Therefore, $\hat g_j^{(1)} \in \cF_{{\rm DNN}}(t_j^{(1)}, L^\prime + 2, N^\prime)$. 
Now, we define 
\$
\hat h_j^{(1)}(\bx) = \hat g_j^{(1)}\bigg(x_{j_1}, \ldots, x_{j_{t_j^{(1)}}}\bigg) ~\mbox{ for }~ \bx \in [0,1]^d.
\$
Since $\|g_j^{(1)}\|_\infty \leq M_0$,~\eqref{first.layer} implies that
\#
\big|\hat h_j^{(1)}(\bx) - h_j^{(1)}(\bx) \big| &= \bigg | \hat g_j^{(1)}\bigg(x_{j_1}, \ldots, x_{j_{t_j^{(1)}}}\bigg) - g_j^{(1)}\bigg(x_{j_1}, \ldots, x_{j_{t_j^{(1)}}}\bigg)  \bigg | \nn \\
& \leq C_j^{(1)}(L_0 N_0)^{-2\gamma^*} \nn \\
& \leq C_3(L_0 N_0)^{-2\gamma^*} \label{layer.function.approximation1}
\#
for all $\bx \in [0,1]^{d} \setminus \Xi_j^{(1)}$, 
where $C_3$ is defined as 
\#
C_3 = 19\cdot 2^{\lfloor \beta_{\max} \rfloor} M_0^{\lfloor \beta_{\max} \rfloor + 1}(\lfloor \beta_{\max} \rfloor + 1)^2 t_{\max}^{\lfloor \beta_{\max} \rfloor + \beta_{\max}/2}. \label{big.C.3},
\#
and 
\$
\Xi_j^{(1)} = \bigcup_{k = 1}^{t_j^{(1)}} \bigg\{\bx = (x_1, \dots, x_d)^\T \in [0,1]^d : x_k \in \bigcup_{b = 1}^{B_j^{(1)}-1} (b/B_j^{(1)} - \delta, b/B_j^{(1)})\bigg\}.
\$
Note that the Lebesgue measure of $\cup_{j = 1}^{R_1}\Xi_j^{(1)}$ is not larger than $\delta_0/(3l)$.

Next, we recursively construct a neural network $\hat h_j^{(i)}$ for $i \in \{2, \dots, l\}$ and $j \in \{1, \dots, R_i\}$ to approximate $h_j^{(i)}$. 
Suppose that $\hat h_{j\prime}^{(i-1)}$ is defined for $j^\prime \in \{1, \dots, R_{i-1}\}$.
Define $B_j^{(i)} = \lceil (L_0 N_0)^{2/t_j^{(i)}} \rceil$ and choose $\delta_j^{(i)} \in (0,1/(3B_j^{(i)})]$ to be determined.
Note that it suffices to approximate $g_j^{(i)}$ on the domain $[-M_0, M_0]^{t_j^{(i)}}$. 
Define the function
\$
\bar g_j^{(i)}(\bz) = g_j^{(i)}(2M_0 \bz - M_0) ~\mbox{ for }~ \bz \in [0,1]^{t_j^{(i)}}.
\$
Then, it is easy to see that $\bar g_j^{(i)}$ is contained in $\cH^{\beta_j^{(i)}}([0,1]^{t_j^{(i)}}, 2^{\beta_j^{(i)}}M_0^{\beta_j^{(i)} + 1})$, and satisfies
\#
g_j^{(i)}(\by) = \bar g_j^{(i)}\bigg(\frac{\by + M_0}{2M_0} \bigg) ~\mbox{ for }~ \by \in [-M_0, M_0]^{t_j^{(i)}}. \label{linear.transformation}
\# 
Applying Lemma~\ref{lem:smooth.approximation} with $C_0 = 2^{\beta_j^{(i)}}M_0^{\beta_j^{(i)} + 1}, \beta = \beta_j^{(i)}$ and $\delta = \delta_j^{(i)}$, a similar argument as in the case of $i = 1$ gives a function $\tilde g_j^{(i)} \in \cF_{{\rm DNN}}(t_j^{(i)}, L^\prime, M^\prime)$ such that
\$
\big | \tilde g_j^{(i)}(\bz) - \bar g_j^{(i)}(\bz) \big | \leq C_3 (L_0 N_0)^{-2\gamma^*}
\$
for all $\bz \in [0,1]^{t_j^{(i)}}\setminus \Omega([0,1]^{t_{j}^{(i)}}, B_j^{(i)}, \delta_j^{(i)})$.
To ensure that the range of the approximating neural network is in $[-M_0, M_0]$, we truncate $\tilde g_j^{(i)}$ as
\$
\hat g_j^{(i)} := \max[\min\{\tilde g_j^{(i)}(\bz), M_0\}, -M_0] = \sigma(2M_0 - \sigma(M_0 - \tilde g_j^{(i)}(\bz))) - M_0,
\$
so that $\hat g_j^{(i)} \in \cF_{{\rm DNN}}(t_j^{(i)}, L^\prime + 2, N^\prime)$. 
Also, by~\eqref{linear.transformation}, we have
\#
\bigg | \hat g_j^{(i)} \bigg(\frac{\by + M_0}{2M_0} \bigg) - \bar g_j^{(i)} \bigg(\frac{\by + M_0}{2M_0} \bigg)  \bigg | \leq C_3(L_0 N_0)^{-2\gamma^*} \label{layer.function.approximation2}
\#
for all $\by \in [-M_0, M_0]^{t_j^{(i)}}$ except the small subset.
Now, we construct a neural network as
\$
\hat h_j^{(i)}(\bx) = \hat g_j^{(i)}\Bigg(\frac{\hat h_{\sum_{k = 1}^{j-1} t_{k}^{(i)} + 1}^{(i-1)}(\bx) + M_0}{2M_0}, \dots, \frac{\hat h_{\sum_{k = 1}^{j} t_{k}^{(i)}}^{(i-1)}(\bx) + M_0}{2M_0} \Bigg),
\$
which approximates $h_j^{(i)}$ defined in~\eqref{i.layer}. 
To determine the value of $\delta_{j}^{(i)}$ given neural networks $\hat h_{j^\prime}^{(i-1)}$ for $1 \leq j^\prime \leq R_{i-1}$, consider the map $\br_j^{(i)} : [0,1]^d \to \RR^{t_j^{(i)}}$ defined as
\$
\br_j^{(i)}(\bx) := \Bigg(\frac{\hat h_{\sum_{k = 1}^{j-1} t_{k}^{(i)} + 1}^{(i-1)}(\bx) + M_0}{2M_0}, \dots, \frac{\hat h_{\sum_{k = 1}^{j} t_{k}^{(i)}}^{(i-1)}(\bx) + M_0}{2M_0} \Bigg) ~\mbox{ for }~ \bx \in [0,1]^d.
\$
Then, we choose $\delta_j^{(i)}$ such that the Lebesgue measure of $\cup_{j = 1}^{R_i} \Xi_j^{(i)}$ is less than $\delta_0/l$, where
\$
\Xi_j^{(i)} := \big(\br_j^{(i)}\big)^{-1}\big(\Omega \big([0,1]^{t_j^{(i)}}, B_j^{(i)}, \delta_j^{(i)}\big)\big).
\$
The existence of $\delta_j^{(i)}$ is guaranteed, as each $\br_j^{(i)}$ is a continuous function.
Finally, we set a neural network $f^* = \hat h_1^{(l)}$ recursively, which approximates $f_0$.
Remark that by the definition, we have $\|f^*\|_\infty \leq M_0$.
Also, denoting
\$
\Xi_0 = \cup_{i = 1}^l \cup_{j = 1}^{R_i} \Xi_j^{(i)},
\$
it follows that the Lebesgue measure of $\Xi_0$ is less than $\delta_0$ by construction.

\noindent \textsc{Step 2. Calculating widths and depths.} To calculate the width and depth of $f^*$, we sequentially specify width and depth of $\hat h_j^{(i)}$ from $i = 1$ to $i = l$. 
For $i = 1$, from the construction of $\hat h_j^{(1)}$, we have $\hat h_j^{(1)} \in \cF_{{\rm DNN}}(t_j^{(1)}, L^\prime + 2, N^\prime)$. 
Recursively, for $2 \leq i \leq l$, combining Lemma~\ref{lem:composition.neural.network} with the inequality $R_i \leq t_{\max}^{l-i}$ implies that $\hat h_j^{(i)} \in \cF_{{\rm DNN}}(t_j^{(i)}, i(L^\prime + 2), t_{\max}^{i-1}N^\prime)$.
Therefore, we have $f^* \in \cF(d, L, N)$, where the depth $L$ satisfies 
\$
l(L^\prime + 2) \leq c_2 \lceil L_0 \log L_0 \rceil =: L
\$
with $c_1 = 2lC_1$, and the width $N$ satisfies
\$
t_{\max}^{l-1}N^\prime \leq c_3 \lceil N_0 \log N_0 \rceil =: N, 
\$
where $c_2 = t_{\max}^{l-1}C_2$.

\noindent\textsc{Step 3. Calculating approximation errors.} Now, we calculate the approximation error bound of $f^*$. 
To this end, we show by induction on $i$ that 
\#
\big | \hat h_j^{(i)}(\bx) - h_j^{(i)}(\bx)\big| \leq C_3(M_0 t_{\max}^{1/2} + 1)^{i-1} (L_0 N_0)^{-2\gamma^*} ~\mbox{ for }~ \bx \in [0,1]^d\setminus \Xi_0. \label{induction}
\# 
Starting with the case of $i = 1$,~\eqref{induction} holds for $j = 1, \ldots, R_1$ by~\eqref{layer.function.approximation1}. 
Suppose that~\eqref{induction} holds for some $i-1$ and every $j = 1, \dots, R_{i-1}$. 
Denoting 
\$
\bw = \bigg(h^{(i-1)}_{\sum_{k = 1}^{j-1}t_{k}^{(i)} + 1}(\bx), \dots, h^{(i-1)}_{\sum_{k = 1}^{j}t_{k}^{(i)}}(\bx)\bigg) \mbox{ and }\hat \bw = \bigg(\hat h^{(i-1)}_{\sum_{k = 1}^{j-1}t_{k}^{(i)} + 1}(\bx), \dots, \hat h^{(i-1)}_{\sum_{k = 1}^{j}t_{k}^{(i)}}(\bx)\bigg),
\$ 
we have that for any $\bx \in [0,1]^d$, 
\$
\big| \hat h_j^{(i)}(\bx) -  h_j^{(i)}(\bx)\big| & = \Big| \hat g_j^{(i)}\Big( \frac{\hat \bw + M_0}{2M_0} \Big) - \bar g_j^{(i)} \Big( \frac{\bw + M_0}{2M_0} \Big) \Big|
\\
& \leq \Big| \hat g_j^{(i)}\Big( \frac{\hat \bw + M_0}{2M_0} \Big) - \bar g_j^{(i)} \Big( \frac{\hat \bw + M_0}{2M_0} \Big) \Big| + \Big|  \bar g_j^{(i)}\Big( \frac{\hat \bw + M_0}{2M_0} \Big) - \bar g_j^{(i)} \Big( \frac{\bw + M_0}{2M_0} \Big) \Big|.
\$ 
Now,~\eqref{layer.function.approximation2} gives that
\$
\Big| \hat g_j^{(i)}\Big( \frac{\hat \bw + M_0}{2M_0} \Big) - \bar g_j^{(i)} \Big( \frac{\hat \bw + M_0}{2M_0} \Big) \Big| \leq C_3(L_0 N_0)^{-2\gamma^*},
\$
when $\bx \in [0,1]^d\setminus \Xi_0$.
Moreover, note that $\cP \subseteq [1, \infty) \times \NN$ so that $g_j^{(i)}$ is $M_0$-Lipschitz by the definition of the H{\"o}lder function class.
Therefore, when $\bx \in [0,1]^d\setminus \Xi_0$, we have
\$
\Big|  \bar g_j^{(i)}\Big( \frac{\hat \bw + M_0}{2M_0} \Big) - \bar g_j^{(i)} \Big( \frac{\bw + M_0}{2M_0} \Big) \Big| &= |g_j^{(i)}(\hat \bw) - g_j^{(i)}(\bw)|
\\
& \leq M_0 \|\hat \bw - \bw\|_2 
\\
& \leq M_0 t_{\max}^{1/2}\|\hat \bw - \bw \|_\infty
\\
& \leq M_0 t_{\max}^{1/2}(1 + M_0 t_{\max}^{1/2})^{i-2}C_3(L_0 N_0)^{-2\gamma^*},
\$
where the last inequality follows from the induction hypothesis. 
Together with earlier inequalities, we have for $\bx \in [0,1]^d\setminus \Xi_0$ that 
\$
\big| \hat h_j^{(i)}(\bx) - h_j^{(i)}(\bx) \big| & \leq C_3(L_0 N_0)^{-2\gamma^*} + M_0 t_{\max}^{1/2}(1 + M_0 t_{\max}^{1/2})^{i-2}C_3(L_0 N_0)^{-2\gamma^*}
\\
& \leq C_3(1 + M_0 t_{\max}^{1/2})^{i-1}(L_0 N_0)^{-2\gamma^*}.
\$
Therefore, inductively, we have
\$
|f^*(\bx) - f_0(\bx)| = |\hat h_1^{(l)}(\bx) - h_1^{(l)}(\bx)| \leq \underbrace{C_3(1 + M_0 t_{\max}^{1/2})^{l-1}}_{=: c_3}(L_0 N_0)^{-2\gamma^*}
\$
for any $\bx \in [0,1]^d \setminus \Xi_0$, which establishes the claim. 
Remark that from the definition of $C_3$ in~\eqref{big.C.3}, $c_4$ also has a polynomial dependence on $t_{\max}$.

To complete the proof, fix a function $f_0 \in \cH(d,l,M_0, \cP)$ and $\epsilon > 0$.
Since the given measure $\mu$ is absolutely continuous with respect to the Lebesgue measure, there exists $\delta_0 \in (0,1)$ satisfying that any measurable set $\cE$ whose Lebesgue measure is less than $\delta_0$ satisfies $\mu(\cE) < \epsilon$.
Then, there exist a measurable set $\Xi_0$ whose Lebesgue measure is less than $\delta_0$, and a neural network $f^* \in \cF_{{\rm DNN}}(d, c_2 \lceil L_0 \log L_0 \rceil, c_3 \lceil N_0\log N_0 \rceil, M_0)$ which satisfies~\eqref{approximation.function}.
Therefore, it follows that
\$
& \bigg\{\int_{[0,1]^d}|f_0(\bx) - f^*(\bx)|^2\mu({\rm d}\bx)\bigg\}^{1/2} 
\\
&\leq \bigg\{\int_{[0,1]^d\setminus \Xi_0}|f_0(\bx) - f^*(\bx)|^2\mu({\rm d}\bx)\bigg\}^{1/2} + \bigg\{\int_{\Xi_0}|f_0(\bx) - f^*(\bx)|^2\mu({\rm d}\bx)\bigg\}^{1/2}
\\
& \leq c_4(L_0 N_0)^{-2\gamma^*} + 2M_0\epsilon^{1/2}.
\$
Since $\epsilon$ is arbitrary, this completes the proof. \end{proof}

\subsection{Proof of Proposition~\ref{prop:minimax.quantile.estimation}}

For simplicity, we only consider the case when $\tau = 0.5$. 
Also, we assume that $X_i$ follows the uniform distribution and $\epsilon_i$ is independent with $X_i$ and follows the normal distribution $\cN(0, \sigma^2)$ with $\sigma^2 = (2\pi \underline p^2)^{-1}$.
Remark that $p_{\epsilon_i|X_i}(0) = \underline p$.
When $t^* \leq d$, we have $\cH^{\beta^*}([0,1]^{t^*},M_0) \subseteq \cH(d,l,\cP, M_0)$, which implies
\$
\inf_{\hat f_n} \sup_{\substack{f_0 \in \cH(d,l,\cP,M_0) \\ X \sim \PP_{X}}}\EE \|\hat f_n - f_0\|_2 \geq \inf_{\hat f_n} \sup_{\substack{f_0 \in \cH^{\beta^*}([0,1]^{t^*}, M_0) \\ X \sim {\rm Unif}([0,1]^d)}} \EE \|\hat f_n - f_0\|_2. 
\$
Here, the supremum on the left-hand side is taken over all data-generating processes $(X, Y)$ satisfying 
\$
Y = f_0(X) + \epsilon,
\$
where $f_0 \in \cH(d,l,\cP, M_0)$, and the quantile regression noise $\epsilon$ satisfies $\PP(\epsilon \leq 0 | X) = 0.5$ and Condition~\ref{cond:conditional.density}.
The supremum on the right-hand side is taken over all data generating processes $(X, Y)$ with $f_0 \in \cH^{\beta^*}([0,1]^{t^*}, M_0)$ and $\epsilon \sim \cN(0,\sigma^2)$.
Then, applying Theorem 3.2 of~\cite{GKKW2002} establishes the claim.
\qed

\section{Proof of Technical Lemmas}

We frequently utilize Talagrand's inequality throughout the proofs of technical lemmas to obtain non-asymptotic bounds of suprema of empirical processes. 
The following refined Talagrand inequality is derived from Theorem 7.3 in~\cite{B2003} combining with the basic inequalities that $\sqrt{a + b} \leq \sqrt{a} + \sqrt{b}$ and $2\sqrt{ab} \leq a + b$ for any $a,b \geq 0$.

\begin{lemma}[Talagrand's inequality] \label{lem:talagrand.inequality}
Let $X_1, \dots, X_n$ be i.i.d. random variables from some distribution $P_X$ and $\cF$ be a measurable class of functions such that $\EE f(X) = 0$ for any $f \in \cF$. 
Assume $\sup_{f \in \cF} \| f \|_\infty \leq A$ and let $\sigma$ be a positive constant such that $\sigma^2 \geq \sup_{f \in \cF}\EE f^2(X_i)$. 
Then, for any $x > 0$, 
\$
\PP\Bigg[\sup_{f \in \cF}\bigg| \frac{1}{n} \sn f(X_i) \bigg| \geq 2\EE\Bigg\{\sup_{f \in \cF}\bigg| \frac{1}{n} \sn f(X_i) \bigg| \Bigg\} + \sigma\sqrt{\frac{2 x}{n}} + \frac{4Ax}{3n}\Bigg] \leq e^{-x}.
\$
\end{lemma}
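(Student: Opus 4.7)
The plan is to invoke Theorem 7.3 of \cite{B2003} (Bousquet's version of Talagrand's inequality for suprema of bounded empirical processes) and then massage its conclusion into the stated form by two elementary manipulations. Concretely, set $Z = \sup_{f \in \cF} |\sum_{i=1}^n f(X_i)|$. Under the centering condition $\EE f(X)=0$, the boundedness condition $\|f\|_\infty \le A$, and the variance bound $\sigma^2 \ge \sup_{f \in \cF} \EE f^2(X_i)$, Bousquet's inequality yields, for every $x > 0$,
\[
\PP\Big( Z \geq \EE Z + \sqrt{2x(n\sigma^2 + 2A\, \EE Z)} + \tfrac{Ax}{3} \Big) \leq e^{-x}.
\]
So the only work left is to decouple the $\EE Z$ term from the square root and show that the resulting bound implies the one in the lemma.

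The first step is to split the square root using $\sqrt{a+b} \le \sqrt{a} + \sqrt{b}$ applied with $a = 2xn\sigma^2$ and $b = 4Ax\,\EE Z$:
\[
\sqrt{2x(n\sigma^2 + 2A\, \EE Z)} \;\leq\; \sigma\sqrt{2nx} \;+\; 2\sqrt{A x\, \EE Z}.
\]
The second step is to linearize the remaining cross term via $2\sqrt{ab} \leq a + b$ with $a = Ax$ and $b = \EE Z$:
\[
2\sqrt{Ax\,\EE Z} \;\leq\; Ax + \EE Z.
\]
Substituting these two bounds into Bousquet's inequality gives, on the same event of probability at least $1-e^{-x}$,
\[
Z \;\leq\; 2\,\EE Z + \sigma\sqrt{2nx} + Ax + \tfrac{Ax}{3} \;=\; 2\,\EE Z + \sigma\sqrt{2nx} + \tfrac{4Ax}{3}.
\]

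Dividing through by $n$ and rewriting in terms of the averaged supremum $n^{-1}\sup_{f \in \cF}|\sum_i f(X_i)|$ produces exactly the bound stated in the lemma, since $\sigma\sqrt{2nx}/n = \sigma\sqrt{2x/n}$. There is no genuine obstacle here: Theorem 7.3 of \cite{B2003} does all the heavy lifting (it is the entropy-method proof of Talagrand's inequality that produces the sharp constants), and the remainder is two applications of AM--GM--type inequalities to separate $\EE Z$ from the deviation part. The only point requiring mild care is ensuring the constants match---specifically, that combining $Ax$ (from linearizing $2\sqrt{Ax \EE Z}$) with $Ax/3$ (from Bousquet) gives $4Ax/3$, as claimed.
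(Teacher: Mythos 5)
Your proposal is correct and follows exactly the route the paper takes: the lemma is obtained from Theorem 7.3 of Bousquet (2003) by applying $\sqrt{a+b}\le\sqrt{a}+\sqrt{b}$ to split off the $\EE Z$ contribution inside the square root and then $2\sqrt{ab}\le a+b$ to absorb the cross term, yielding the factor $2\EE Z$ and the constant $4Ax/3$. The constant bookkeeping ($Ax + Ax/3 = 4Ax/3$) and the normalization by $n$ are handled correctly.
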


We next introduce the definitions of uniform covering number and pseudo dimension followed by Lemma~\ref{lem:covering.number} which bounds the uniform covering number of a function class with the finite pseudo dimension.

\begin{definition}[Uniform covering number]
Let $n \in \NN^+$ and $\cF = \{f:\cX \to \RR\}$ be a function class. 
For a given $\epsilon > 0$, the uniform covering number under $L_\infty$-norm for the function class $\cF$ is defined as
\$
N_\infty (\epsilon, \cF, n) = \sup_{(x_1, \dots, x_n) \in \cX^n} N(\epsilon, \cF|_{x_1, \dots, x_n}, \| \cdot \|_\infty),
\$
where $\cF|_{x_1, \dots, x_n} = \{(f(x_1), \dots, f(x_n))^\T : f \in \cF\} \subset \RR^n$ and $N(\epsilon, \cW, \|\cdot\|_\infty)$ is the $\epsilon$-covering number of a subset $\cW \subset \RR^n$ under the supremum norm $\|\cdot\|_\infty$.
\end{definition}

\begin{definition}[Pseudo dimension~\citep{AB1999}] \label{def:pseudo.dimension}
Let $\cF$ be a set of real-valued functions on a domain $\cX$. The pseudo dimension of $\cF$, denoted by ${\rm Pdim}(\cF)$, is defined to be the largest integer $N$ for which there exist $\{x_1, x_2, \dots, x_N\} \in \cX^N$ and $\{r_1, r_2, \dots, r_N\} \in \RR^N$ such that for any $\bbb = (b_1, \dots, b_N)^\T \in \{0,1\}^N$, there is a function $f \in \cF$ with $\mathbbm{1}\{f(x_i) \geq r_i\} = b_i$ for $1 \leq i \leq N$.
\end{definition}

\begin{lemma}[Uniform covering number bound]\label{lem:covering.number}
Let $\cF$ be a set of real functions bounded by $A \geq 1$ with finite pseudo dimension ${\rm Pdim}(\cF) < \infty$.
For any $\epsilon \in (0,A)$, we have
\$
\log N_\infty(\epsilon, \cF, n) \leq {\rm Pdim}(\cF) \cdot \log(enA/\epsilon).
\$ 
\end{lemma}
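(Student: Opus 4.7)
The plan is to reduce the sup-norm covering problem to a combinatorial counting problem for an induced binary (threshold) class, to which the Sauer-Shelah lemma applies. Set $V = {\rm Pdim}(\cF)$, fix an arbitrary sample $x_1, \ldots, x_n \in \cX$, and introduce the auxiliary class of indicators $\cG = \{(x, r) \mapsto \mathbbm{1}(f(x) \geq r) : f \in \cF,\, r \in \RR\}$. By Definition~\ref{def:pseudo.dimension}, the VC dimension of $\cG$ equals $V$, so all the VC machinery applies to $\cG$.

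First I would discretize the range $[-A, A]$ using a grid of spacing $\epsilon$: let $K = \lceil A/\epsilon \rceil$ and consider the thresholds $r_k = k\epsilon$ for $|k| \leq K$, giving at most $2A/\epsilon + 3$ levels. To each $f \in \cF$ associate the sign pattern
\[
\Phi(f) := \bigl(\mathbbm{1}(f(x_i) \geq r_k)\bigr)_{1 \leq i \leq n,\, -K \leq k \leq K} \in \{0,1\}^{n(2K+1)}.
\]
If $\Phi(f) = \Phi(f')$, then for each $i$ the values $f(x_i)$ and $f'(x_i)$ lie between the same pair of consecutive grid points, and hence $|f(x_i) - f'(x_i)| < \epsilon$. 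Choosing one representative $f$ per equivalence class of $\Phi$ therefore produces an $\epsilon$-cover of $\cF|_{x_1,\ldots,x_n}$ in the sup norm, so $N(\epsilon, \cF|_{x_1,\ldots,x_n}, \|\cdot\|_\infty) \leq |\Phi(\cF)|$.

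Second, I would apply the Sauer-Shelah lemma to $\cG$ evaluated on the enlarged collection of $M := n(2K+1)$ design points $\{(x_i, r_k) : 1 \leq i \leq n,\, |k| \leq K\}$. Since $\mathrm{VCdim}(\cG) = V$, Sauer-Shelah gives $|\Phi(\cF)| \leq \sum_{j=0}^V \binom{M}{j} \leq (eM/V)^V$ whenever $M \geq V$ (otherwise the target bound is immediate from $|\Phi(\cF)| \leq 2^M$). Using $M \leq 3nA/\epsilon$, $V \geq 1$, and the hypothesis $\epsilon < A$ to absorb absolute constants, I obtain $\log N_\infty(\epsilon, \cF, n) \leq V\log(eM/V) \leq V\log(enA/\epsilon)$ after taking the supremum over $(x_1, \ldots, x_n)$.

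The main subtlety lies in the discretization step: one has to pick the orientation of the threshold inequalities (e.g.\ $\geq$ versus $>$) so that identical sign patterns $\Phi(f)=\Phi(f')$ genuinely confine $f(x_i), f'(x_i)$ to a single half-open grid cell of length $\epsilon$ without boundary ambiguity at the grid points. Beyond this, the remaining work is routine bookkeeping: matching the constant inside the logarithm to the exact form $\log(enA/\epsilon)$ claimed in the lemma, which can be done by slightly trimming the grid near the boundary of $[-A,A]$ and using $V \geq 1$ together with the constraint $\epsilon < A$ to absorb the leftover universal multiplicative factors.
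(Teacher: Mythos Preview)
Your approach is correct and is in fact the standard proof of the result the paper invokes, but the paper takes a shortcut: it simply cites Theorem~12.2 of \cite{AB1999}, which directly gives
\[
N_\infty(\epsilon,\cF,n)\le \sum_{i=1}^{V}\binom{n}{i}\Big(\frac{A}{\epsilon}\Big)^i,
\]
and then bounds this sum by $(enA/(\epsilon V))^V$ when $n\ge V$ (via the usual generating-function trick $t^{-V}(1+t A/\epsilon)^n$ with $t=V\epsilon/(nA)$), and by $(1+A/\epsilon)^n$ when $n<V$. Your discretization-plus-Sauer argument is precisely how that theorem is proved in \cite{AB1999}, so the two routes coincide at the level of ideas; the paper just outsources the combinatorics.

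One genuine loose end in your write-up: the step ``$V\log(eM/V)\le V\log(enA/\epsilon)$'' with $M\le 3nA/\epsilon$ does not hold as written---you are left with $V\log(3enA/\epsilon)$, and no amount of ``trimming the grid near the boundary'' removes the factor $2A/\epsilon$ coming from the diameter of the range $[-A,A]$. The sharper constant in the lemma comes from the refined Anthony--Bartlett sum $\sum_i\binom{n}{i}(A/\epsilon)^i$ rather than the cruder $\sum_j\binom{nK}{j}$ you obtain from Sauer--Shelah on the enlarged sample; your route naturally yields $V\log(c\,enA/\epsilon)$ for some absolute $c\ge 2$. This is entirely harmless for every downstream use in the paper (all applications are up to $\lesssim$), so the gap is cosmetic, but you should either accept the extra constant or cite the Anthony--Bartlett bound directly as the paper does.
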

\begin{proof}
By Theorem 12.2 of~\cite{AB1999}, we have
\$
N_\infty(\epsilon, \cF, n) \leq \sum_{i = 1}^{{\rm Pdim}(\cF)} \bigg( \begin{array}{c}
n \\ i
\end{array} \bigg) \bigg( \frac{A}{\epsilon} \bigg)^i.
\$
Therefore, when $n \geq {\rm Pdim}(\cF)$, it follows that $N_\infty(\epsilon, \cF, n) \leq \{enA/(\epsilon {\rm Pdim}(\cF)\}^{{\rm Pdim}(\cF)}$, so the inequality holds.
Meanwhile, when $n < {\rm Pdim}(\cF)$, we have
\$
N_\infty(\epsilon, \cF, n) \leq \sum_{i = 1}^{n} \bigg( \begin{array}{c}
n \\ i
\end{array} \bigg) \bigg( \frac{A}{\epsilon} \bigg)^i = \bigg( 1 + \frac{A}{\epsilon}\bigg)^{n}, 
\$
which establishes the claim since $\epsilon \in (0,A)$.
\end{proof}

We also need the following maximal inequality from Corollary 5.1 in \cite{CCK2014} to prove technical lemmas.

\begin{lemma} [A maximal inequality] \label{lem:maximal inequality}
Denote $S = [0,1]^d \times \RR$ and let $\cF$ be a measurable class of functions $S \to \RR$, to which a measurable envelope $F$ is attached. 
Assume that $\|F\|_2 < \infty$ and let $\sigma^2 > 0$ be any positive constant such that $\sup_{f \in \cF} \EE f(X, \epsilon)^2 \leq \sigma^2 \leq \|F\|_2^2$. 
Furthermore, we assume that there exists constants $A \geq e$ and $v \geq 1$ such that $\sup_Q N(\epsilon\|F\|_{Q,2}, \cF, \|\cdot\|_{Q,2}) \leq (A/\epsilon)^v$ for any $0 < \epsilon \leq 1$, where the supremum is taken over all $n$-discrete probability measures $Q$ on $\cS$ and $N(\epsilon, \cF, \|\cdot\|_{Q,2})$ is the $\epsilon$-covering number of $\cF$ under the $L_2(Q)$ norm.
Then,
\$
\EE\Bigg\{ \sup_{f \in \cF}\bigg| \frac{1}{\sqrt{n}}\sn f(Y_i, X_i) - \EE f(Y_i, X_i)  \bigg| \Bigg\} \lesssim \sigma\sqrt{v \log\bigg(\frac{A\|F\|_{2}}{\sigma}\bigg)} + \frac{v\|\bar F\|_2}{\sqrt{n}}\log\bigg( \frac{A\|F\|_2}{\sigma} \bigg),
\$
where $\bar F = \max_{1 \leq i \leq n}F(X_i, \epsilon_i)$.
\end{lemma}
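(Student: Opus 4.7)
The plan is to derive this maximal inequality through the now-standard combination of symmetrization, chaining, and concentration for the empirical $L_2$ radius. First, I would apply the symmetrization inequality to reduce the centered empirical process to a Rademacher process: letting $\xi_1, \ldots, \xi_n$ be i.i.d.\ Rademacher signs independent of the data, one has
\[
\EE \sup_{f \in \cF}\bigg|\frac{1}{\sqrt{n}}\sum_{i=1}^{n} (f - \EE f)(X_i,\epsilon_i)\bigg| \,\leq\, 2\, \EE \sup_{f \in \cF}\bigg|\frac{1}{\sqrt{n}}\sum_{i=1}^{n}\xi_i f(X_i,\epsilon_i)\bigg|.
\]
Conditional on the sample, the Rademacher process is sub-Gaussian with respect to the empirical $L_2$-pseudometric $d_n(f,g) = \{n^{-1}\sum_i (f - g)^2(X_i,\epsilon_i)\}^{1/2}$, which is controlled by the empirical envelope norm $\|F\|_{\hat{P}_n, 2}$.

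Second, I would invoke a Dudley-type chaining bound in this conditional sub-Gaussian setting, fed by the uniform covering-number hypothesis $\sup_Q N(\epsilon \|F\|_{Q,2}, \cF, \|\cdot\|_{Q,2}) \leq (A/\epsilon)^v$. A straightforward calculation of the entropy integral $\int_0^{\hat\sigma} \sqrt{v \log(A\|F\|_{\hat P_n,2}/u)}\, du$, where $\hat\sigma^2 := \sup_{f\in\cF} n^{-1}\sum_i f^2(X_i,\epsilon_i)$, yields a conditional bound of order
\[
\hat\sigma\sqrt{v\,\log\bigl(A\|F\|_{\hat P_n,2}/\hat\sigma\bigr)} \,+\, \frac{v \max_i F(X_i,\epsilon_i)}{\sqrt{n}} \log\bigl(A\|F\|_{\hat P_n,2}/\hat\sigma\bigr),
\]
where the second summand captures the coarsest scale at which the envelope, rather than $\hat\sigma$, controls the increments.

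Third, I would pass from the conditional bound to an unconditional one by taking expectations. The first summand is handled via a self-bounding argument: one writes $\hat\sigma^2 \le \sigma^2 + \{\hat\sigma^2 - \EE\hat\sigma^2\}$ and bounds the deviation of the empirical second moment uniformly over $\cF$ by (a smaller-order instance of) Talagrand's inequality (Lemma~\ref{lem:talagrand.inequality}), absorbing cross terms via $2\sqrt{ab}\le a+b$ and Jensen. This ultimately replaces $\hat\sigma$ by $\sigma$ inside the logarithm up to multiplicative constants. The envelope contribution is converted to $\|\bar F\|_2/\sqrt{n}$ via the trivial bound $\EE \max_i F(X_i,\epsilon_i) \le \|\bar F\|_2$.

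The main obstacle will be the self-bounding step: controlling $\EE \hat\sigma$ by $\sigma$ requires a uniform concentration statement for $\sup_{f\in\cF} |\hat{\EE} f^2 - \EE f^2|$, which is itself an empirical-process quantity and must be bounded by a chaining-plus-Talagrand argument at the level of $f^2$, using $\|f^2\|_\infty \le \|F\|_\infty^2$ as the envelope. One has to arrange this as a fixed-point/self-bounding inequality in the unknown quantity $\EE\hat\sigma$ so that both sides can be resolved; the resulting algebraic manipulation, together with the basic inequalities $\sqrt{a+b}\le\sqrt{a}+\sqrt{b}$ and $2\sqrt{ab}\le a+b$, gives the two-term form stated in the lemma.
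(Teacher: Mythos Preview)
The paper does not actually prove this lemma: it is stated verbatim as Corollary~5.1 of \cite{CCK2014} and invoked as a black box, with no argument supplied. So there is no ``paper's own proof'' to compare against.

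Your sketch is essentially the route by which such results are established in the cited source (and in van der Vaart--Wellner more generally): symmetrization, a conditional Dudley entropy bound driven by the uniform-in-$Q$ covering-number hypothesis, and then a self-bounding argument to replace the random empirical radius $\hat\sigma$ by the deterministic $\sigma$. The two-term structure you arrive at, with the $\|\bar F\|_2/\sqrt{n}$ envelope correction coming from the coarsest chaining scale, matches the statement. One small refinement: in the CCK argument the passage from $\hat\sigma$ to $\sigma$ is typically handled not by a separate Talagrand step on $\sup_f|\hat{\EE}f^2-\EE f^2|$, but by a direct contraction/self-bounding inequality relating the Rademacher complexity of $\cF$ to that of $\{f^2:f\in\cF\}$ via $|f^2-g^2|\le 2\|F\|_\infty|f-g|$, which feeds back into the same entropy integral and yields a quadratic inequality in the unknown $\EE\hat\sigma$. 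Your plan would work, but is slightly heavier than necessary.
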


Finally, the next lemma bounds the pseudo dimension of the class of deep ReLU neural networks, which allows us to apply Lemma~\ref{lem:covering.number} when $\cF$ is a class of ReLU deep neural networks.

\begin{lemma} \label{lem:pseudo.dimension}
Let $\cF = \cF_{{\rm DNN}}(d, L, N, M)$ be the function class of deep ReLU neural networks truncated at $M > 0$. Then, it follows that
\$
{\rm Pdim}(\cF) \lesssim d(L N)^2 \log(d L N).
\$
\end{lemma}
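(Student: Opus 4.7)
The plan is to reduce the pseudo dimension of the truncated class $\cF = \cF_{{\rm DNN}}(d,L,N,M)$ to the VC/pseudo dimension of an untruncated ReLU network with only a constant blow-up in depth and width, and then invoke the sharp VC-dimension bound of Bartlett, Harvey, Liaw and Mehrabian for piecewise-linear networks.

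\textbf{Step 1 (absorbing the truncation).} I will observe that the clipping operator $\cT_M h = {\rm sgn}(h)(|h|\wedge M)$ admits the exact representation
\[
\cT_M h = \sigma(h+M)-\sigma(h-M)-M ,
\]
since the right-hand side equals $h$ on $[-M,M]$, equals $M$ for $h>M$, and equals $-M$ for $h<-M$. Consequently, every $\cT_M h$ with $h\in\cF_{{\rm DNN}}(d,L,N)$ is itself a fully-connected ReLU network of depth at most $L+O(1)$ and width at most $N+O(1)$. Hence $\cF \subseteq \cF_{{\rm DNN}}(d, L', N')$ for $L' \lesssim L$ and $N' \lesssim N$, which is sufficient because pseudo dimension is monotone with respect to class inclusion.

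\textbf{Step 2 (from pseudo dimension to VC dimension of a slightly larger network).} By Definition~\ref{def:pseudo.dimension}, ${\rm Pdim}(\cF)$ equals the VC dimension of the subgraph class $\{(x,r)\mapsto \mathbbm{1}\{f(x)\ge r\}:f\in\cF\}$. I absorb the auxiliary threshold variable $r$ by augmenting the input to dimension $d+1$ and subtracting $r$ from the output inside the network; this costs only a constant-width, constant-depth modification. The resulting class of $\{0,1\}$-valued functions is computed by a ReLU network with sign activation at the output, depth $\tilde L \lesssim L$, width $\tilde N \lesssim N$ and total number of parameters $W \lesssim dN + L N^2$.

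\textbf{Step 3 (invoking the Bartlett--Harvey--Liaw--Mehrabian bound and arithmetic).} I will apply the VC-dimension bound for piecewise-linear networks, which states that a ReLU network with $W$ parameters and $\tilde L$ layers has VC dimension at most $C\, W\tilde L \log W$. Plugging in $W \lesssim dN+LN^2$ and $\tilde L\lesssim L$ yields
\[
{\rm Pdim}(\cF) \;\lesssim\; (dN+LN^{2})\,L\,\log(dN+LN^{2}) .
\]
A direct computation gives $(dN+LN^{2})L = dLN + L^{2}N^{2} \le (d+1)L^{2}N^{2} \lesssim d(LN)^{2}$ and $\log(dN+LN^{2})\lesssim \log(dLN)$, from which the claimed bound ${\rm Pdim}(\cF)\lesssim d(LN)^{2}\log(dLN)$ follows.

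The only technically delicate point is the correct accounting of the truncation and the threshold augmentation as constant-cost network modifications; once that is clean, the proof is essentially a citation of the VC bound combined with bookkeeping. No heavy-tailed, approximation-theoretic, or empirical-process machinery is needed here, since the lemma is purely a combinatorial statement about the function class $\cF_{{\rm DNN}}(d,L,N,M)$.
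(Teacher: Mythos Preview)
Your argument is correct and reaches the same bound, but the paper takes a slightly different route at two points. First, for the truncation, instead of realizing $\cT_M$ as two extra ReLU units and enlarging the network, the paper simply observes that $\cT_M$ is a fixed non-decreasing function and invokes Theorem~11.3 of \cite{AB1999}, which states that post-composing a class by a fixed monotone map cannot increase its pseudo dimension; this gives ${\rm Pdim}(\cT_M\cF_{{\rm DNN}})\le {\rm Pdim}(\cF_{{\rm DNN}})$ in one line and works for any monotone clipping, not just the piecewise-linear one. Second, the paper cites Theorem~7 of \cite{BHLM2019} directly as a pseudo-dimension bound ${\rm Pdim}\lesssim WL\log W$ for the untruncated class, so your Step~2 (augmenting the input by the threshold $r$ to reduce to VC dimension) is unnecessary---that reduction is already baked into the cited result. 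Your constructive embedding of $\cT_M$ is a nice alternative that keeps everything inside the ReLU framework and avoids an extra citation, while the paper's monotone-composition argument is shorter and more general; the arithmetic in Step~3 is identical in both.
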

\begin{proof}{Proof of Lemma~\ref{lem:pseudo.dimension}.}
Denote $W$ to be the number of all parameters of the network $\cF_{{\rm DNN}}(d, L, N)$. 
Then, we have ${\rm Pdim}(\cF_{{\rm DNN}}(d, L, N)) \lesssim WL \log(W)$ by Theorem 7 of~\cite{BHLM2019}.
Since $W \lesssim d L N^2$, it follows that 
\$
{\rm Pdim}(\cF_{{\rm DNN}}(d, L, N)) \lesssim d L^2 N^2 \log (d L N).
\$
To calculate the pseudo dimension of the truncated neural network, note that the truncation function $\cT_M(\cdot)$ is a non-decreasing function.
Therefore, applying Theorem 11.3 of~\cite{AB1999} completes the proof.
\end{proof}

\subsection{Proof of Lemma~\ref{lem:quantile.lower.upper.bound}}
We first prove the lower bound. 
From the Lipschitz continuity of $p_{\epsilon|X}(\cdot)$, it follows that $p_{\epsilon|X}(t) \geq \underline p/2$ when $|t| \leq \underline p/(2l_0)$. 
Then, we apply Lemma S6 in the supplement of~\cite{PC2022} to obtain
\$
\cQ_\alpha(f) - \cQ_\alpha(f_0) \geq \min\bigg(\frac{\underline p}{4}, \frac{\underline p^2}{16l_0}\bigg)\EE \min \big[|f(X) - f_0(X)|, \{f(X) - f_0(X)\}^2 \big].
\$
On the other hand, we have $\{f(X) - f_0(X)\}^2 \leq 2M_0 |f(X) - f_0(X)|$ by the definition of $\cF_n$. 
Combining this with the earlier inequality and the assumption that $M_0 \geq 1$, we have the desired lower bound of the excess quantile risk.

We next prove the upper bound. 
From Knight's inequality \citep{K1998}, for any $u,v \in \RR$, it follows that
\$
\rho_\tau(u - v) - \rho_\tau(u) = -v\{\tau - \mathbbm{1}(u \leq 0)\} + \int_0^v \{\mathbbm{1}(u \leq t) - \mathbbm{1}(u \leq 0)\}{\rm d}t.
\$
Taking expectation this equality with $u = \epsilon_i$ and $v = f(X_i) - f_0(X_i)$, we obtain
\$
L_\tau(f) - L_\tau(f_0) = \EE\int_0^{f(X) - f_0(X)}\int_0^t p_{\epsilon|X}(s){\rm d}s{\rm d}t \leq \frac{\bar p}{2}\|f - f_0\|_2^2, 
\$
where the last inequality follows from Condition~\ref{cond:conditional.density}. This concludes the proof.
\qed

\subsection{Proof of Lemma~\ref{lem:quantile.empirical.process}}

Recall that $\cF_n = \cF_{{\rm DNN}}(d, L_q, N_q, M_0)$ and $\cF_n(\delta) = \{f \in \cF_n : \|f - f_0\|_2 \leq \delta\}$.
For each $f \in \cF_n(\delta)$, denote
\$
m_f(X_i, \epsilon_i) := \rho_\alpha(Y_i - f(X_i)) - \rho_\alpha(Y_i - f_0(X_i)).
\$
Since $\rho_\alpha(\cdot)$ is a Lipschitz function, we have
\$
\sup_{f \in \cF_n(\delta)} |m_f(X_i, \epsilon_i)| \leq \sup_{f \in \cF_n(\delta)}|f(X_i) - f_0(X_i)| \leq 2  M_0.
\$
Therefore, $\sup_{f \in \cF_n(\delta)}|m_f(X_i, \epsilon_i) - \EE m_f(X_i, \epsilon_i)| \leq 4 M_0 =: A$. 
Moreover,
\$
\sup_{f \in \cF_n(\delta)} \EE\big\{m_f(X_i, \epsilon_i) \big\}^2 \leq \sup_{f \in \cF_n(\delta)} \EE \big\{ f(X_i) - f_0(X_i) \big\}^2 \leq \delta^2,
\$
which further implies
\$
\sup_{f \in \cF_n(\delta)} \EE\big\{m_f(X_i, \epsilon_i) - \EE m_f(X_i, \epsilon_i) \big\}^2 \leq \sup_{f \in \cF_n(\delta)} \EE\big\{m_f(X_i, \epsilon_i) \big\}^2 \leq \delta^2 =: \sigma^2.
\$
Denoting $E(\delta) = \EE\sup_{f \in \cF_n(\delta)} |n^{-1}\sn m_f(X_i, \epsilon_i) - \EE m_f(X_i, \epsilon_i)|$, Lemma~\ref{lem:talagrand.inequality} gives
\#
\PP\Bigg\{\sup_{f \in \cF_n(\delta)}\bigg|\frac{1}{n}\sn m_f(X_i, \epsilon_i) - \EE m_f(X_i, \epsilon_i) \bigg| \geq 2E(\delta) + \sigma\sqrt{\frac{2x}{n}} + \frac{4Ax}{3n} \Bigg\} \leq e^{-x} \label{lemA.8.tail.inequality}
\#
for any $x \geq 0$. 

Now, we find an upper bound of the expectation $E(\delta)$.
We denote $\cM_n(\delta) := \{m_f(X_i, \epsilon_i) : f \in \cF_n(\delta)\}$. 
Combining the Lipschitz continuity of $\rho_\alpha(\cdot)$ with Lemma~\ref{lem:covering.number} and Lemma~\ref{lem:pseudo.dimension} gives that for any $\epsilon \in (0, M_0) $, 
\$
\log \cN_\infty(\epsilon, \cM_n(\delta), n) \leq \log \cN_\infty(\epsilon, \cF_n, n) \lesssim d(L_qN_q)^2\log(dL_qN_q)\log( M_0 ne/\epsilon). 
\$
Also, the Lipschitz property of $\rho_\alpha(\cdot)$ implies that $F = 2M_0$ is an envelope function of $\cM_n(\delta)$.
Thus, for any $n$-discrete probability measure $Q$, 
\$
\log N(\epsilon\|F\|_{Q,2}, \cM_n(\delta), \|\cdot\|_{Q,2}) \lesssim d(L_qN_q)^2\log(dL_qN_q)\log(en/(2\epsilon)).
\$
Applying Lemma~\ref{lem:maximal inequality}, we have
\$
E(\delta) & \lesssim \sigma \sqrt{\frac{d(L_qN_q)^2\log(dL_qN_q)}{n} \log\bigg( \frac{enM_0}{\sigma} \bigg)} + 2M_0\frac{d(L_qN_q)^2\log(dL_qN_q)}{n} \log\bigg( \frac{enM_0}{\sigma} \bigg) \\
& \lesssim \delta \delta_{{\rm s}} + \delta_{{\rm s}}^2
\$
for any $\delta \geq 1/n$.
Thus, when $\delta \geq \delta_{{\rm s}}$, we have $E(\delta) \lesssim \delta \delta_{{\rm s}}$.
By combining this and~\eqref{lemA.8.tail.inequality}, there exists a universal positive constant $c_{16} > 0$ such that
\$
\PP\Bigg\{\sup_{f \in \cF_n(\delta)}\bigg|\frac{1}{n}\sn m_f(X_i, \epsilon_i) - \EE m_f(X_i, \epsilon_i) \bigg| \geq c_{16}\delta \bigg(\delta_{{\rm s}} + \sqrt{\frac{x}{n}} \bigg) \Bigg\}  \leq e^{-x}
\$
holds for any $0 \leq x \leq n\delta^2$ and $\delta \geq \delta_{{\rm s}}$. This completes the proof. \qed

\subsection{Proof of Lemma~\ref{lem:joint.lower.upper.bound}}
To begin with, we fix real-valued functions $f$ and $g$.
By the definition of the joint excess risk, we can represent $\cR_\tau$ as follows:
\$
\cR_\tau(f,g) = \EE \ell_\tau(Z_i(f) - \alpha g(X_i)).
\$

We first derive the lower bound of the excess joint risk $\cR_\tau$.
Recall that $\ell_{\tau}^\prime = \psi_\tau$, and $\psi_\tau$ is absolutely continuous and has a derivative $\psi_{\tau}^\prime(t) = \mathbbm{1}(|t|\leq \tau)$.
From the fundamental theorem of calculus, it follows that for every $a, b \in \RR$, 
\$
\ell_\tau(a + b) - \ell_\tau(a) = \psi_\tau(a)b + \int_0^b \psi_{\tau}^\prime (a + t) (b - t){\rm d}t.
\$
Therefore, denoting $\Delta_g(X_i) = g_0(X_i) - g(X_i)$, it follows that
\#
\cR_\tau(f,g) - \cR_\tau(f,g_0) &= \EE \ell_\tau(Z_i(f) - \alpha g(X_i)) - \EE \ell_\tau(Z_i(f) - \alpha g_0(X_i)) \nn \\
&= \underbrace{\EE \big\{\psi_\tau(Z_i(f) - \alpha g_0(X_i))\cdot \alpha \Delta_g(X_i)\big\}}_{ =: {\rm I}} \nn \\
&~~~+ \underbrace{\EE\Bigg[ \int_0^{\alpha \Delta_g(X_i)} \psi_{\tau}^\prime (Z_i(f) - \alpha g_0(X_i) + t)\{\alpha \Delta_g(X_i) - t\}{\rm d}t \Bigg]}_{ =: {\rm II}}. \label{joint.risk.decomp}
\#
We next bound ${\rm I}$ and ${\rm II}$ separately.

We first bound the term ${\rm I}$.
Let $\EE_{X_i}$ be the conditional expectation given $X_i$.
Observe that we can write
\#
\EE_{X_i} \big\{\psi_\tau(Z_i(f) - \alpha g_0(X_i)) \big\} = \EE_{X_i} \big\{\psi_\tau(Z_i(f) - \alpha g_0(X_i)) - \psi_\tau(\omega_i) \big\} + \EE_{X_i} \psi_\tau(\omega_i). \label{I.decomp}
\#
To bound the first term on the right-hand side of~\eqref{I.decomp}, the fundamental theorem of calculus and the definition of $\omega_i$ in~\eqref{def:omega.notation} imply
\#
\EE_{X_i} \big\{\psi_\tau(Z_i(f) - \alpha g_0(X_i)) - \psi_\tau(\omega_i) \big\} &= \EE_{X_i}\bigg\{\int_{0}^{Z_i(f) - Z_i(f_0)} \psi_\tau^\prime (\omega_i + t){\rm d}t \bigg\} \nn \\
&= \EE_{X_i}\bigg[\int_{0}^{Z_i(f) - Z_i(f_0)} \big\{1 - \mathbbm{1}(|\omega_i + t| > \tau)  \big\}{\rm d}t \bigg]. \label{I.decomp1}
\#
Denote $\Delta_f(X_i) = f_0(X_i) - f(X_i)$, and $p_{\epsilon_i|X_i}$ to be the conditional density function of $\epsilon_i$ given $X_i$.
Then, we have
\$
& \EE_{X_i}\bigg\{\int_{0}^{Z_i(f) - Z_i(f_0)} 1 \cdot {\rm d}t \bigg\} = \EE_{X_i}\big\{ Z_i(f) - Z_i(f_0) \big\} \\
&~~~~~~ = \EE\big[\{\epsilon_i + \Delta_f(X_i)\}\mathbbm{1}\{\epsilon_i \leq -\Delta_f(X_i)\} - \alpha \Delta_f(X_i) - \epsilon_i\mathbbm{1}(\epsilon_i \leq 0) \big] \\
&~~~~~~ = \int_{-\infty}^{\Delta_f(X_i)}\big\{t + \Delta_f(X_i) \big\} p_{\epsilon_i|X_i}(t){\rm d}t - \int_{-\infty}^0 tp_{\epsilon_i|X_i}(t){\rm d}t - \alpha \Delta_f(X_i) \\
&~~~~~~ = \int_0^{-\Delta_f(X_i)} tp_{\epsilon_i|X_i}(t){\rm d}t + \Delta_f(X_i)\int_0^{-\Delta_f(X_i)} p_{\epsilon_i|X_i}(t){\rm d}t,
\$
where the last line follows from the model assumption $\PP(\epsilon_i \leq 0 |X_i) = \alpha$.
Combining this with Condition~\ref{cond:conditional.density} (i) gives
\#
\Bigg| \EE_{X_i}\bigg\{\int_{0}^{Z_i(f) - Z_i(f_0)} 1\cdot {\rm d}t \bigg\} \Bigg| = \Bigg| \int_0^{-\Delta_f(X_i)} \big\{t + \Delta_f(X_i) \big\} p_{\epsilon|X_i}(t){\rm d}t \Bigg| \leq \frac{\bar p}{2}\big\{\Delta_f(X_i) \big\}^2. \label{I.decomp1.bound}
\#
To establish a bound of the remaining term on the right-hand side of~\eqref{I.decomp1}, we find an upper bound of $Z_i(f) - Z_i(f_0)$. 
We first assume that $\Delta_f(X_i) \leq 0$.
From the definition of $Z_i(f)$, we have
\$
& \big|Z_i(f) - Z_i(f_0)\big| \\
& = \big|\{Y_i - f(X_i)\}\mathbbm{1}\{Y_i \leq f(X_i)\} - \{Y_i - f_0(X_i)\}\mathbbm{1}\{Y_i \leq f_0(X_i)\} + \alpha \{f(X_i) - f_0(X_i) \}\big| \\
& \leq \big|\{\Delta_f(X_i)\}\mathbbm{1}(Y_i \leq f_0(X_i)\} + \{Y_i - f(X_i)\}\mathbbm{1}\{f_0(X_i) < Y_i \leq f(X_i)\} - \alpha \Delta_f(X_i)\big| \\
& \leq \big|\Delta_f(X_i)\big|,
\$
where the first inequality follows from the assumption $\Delta_f(X_i) \leq 0$ and the second inequality is derived from the following inequality
\$
f_0(X_i) - f(X_i) \leq \{Y_i - f(X_i)\}\mathbbm{1}\{f_0(X_i) < Y_i \leq f(X_i)\} \leq 0.
\$
Exchanging the roles of $f$ and $f_0$ gives the same inequality when $\Delta_f(X_i) > 0$, leading to
\#
\big|Z_i(f) - Z_i(f_0)\big| \leq \big|\Delta_f(X_i)\big| = \big|f(X_i) - f_0(X_i)\big|. \label{z.difference}
\#
Therefore, we have
\#
&\Bigg| \EE_{X_i}\bigg\{\int_0^{Z_i(f) - Z_i(f_0)} \mathbbm{1}(|\omega_i + t| > \tau){\rm d}t \bigg\}\Bigg| \nn \\
&\leq \EE_{X_i} \bigg\{\int_0^{|\Delta_f(X_i)|} \mathbbm{1}(|\omega_i + t| > \tau){\rm d}t \bigg\} \nn \\
& \leq \EE_{X_i} \bigg\{ \int_0^{|\Delta_f(X_i)|} \mathbbm{1}(|\omega_i| > \tau/2) + \mathbbm{1}(|\Delta_f(X_i)| > \tau/2) {\rm d}t \bigg\} \nn \\
& = \EE_{X_i} \bigg\{ \int_0^{|\Delta_f(X_i)|} \mathbbm{1}(|\omega_i| > \tau/2) {\rm d}t\bigg\}, \label{loss.first.order}
\#
where the last step follows, provided $\tau \geq 4M_0$ so that $|\Delta_f(X_i)| = |f(X_i) - f_0(X_i)| \leq 2M_0 \leq \tau/2$.
By Markov's inequality and Condition~\ref{cond:moment.condition}, it follows that
\#
\PP(|\omega_i| > \tau/2|X_i) \leq \frac{\EE_{X_i}(|\omega_i|^p)}{(\tau/2)^p} \leq \frac{2^p\nu_p}{\tau^p}. \label{omega.tail.probability}
\#
Combining this with Fubini's theorem gives
\$
\Bigg| \EE_{X_i}\bigg\{\int_0^{Z_i(f) - Z_i(f_0)} \mathbbm{1}(|\omega_i + t| > \tau){\rm d}t \bigg\}\Bigg| \leq \frac{2^p\nu_p}{\tau^p}|\Delta_f(X_i)| \leq \frac{2^{p-2}\nu_p}{\tau^{p-1}},
\$
where the last inequality follows given $\tau \geq 8M_0$.
Finally, for $\EE_{X_i}\{\psi_\tau(\omega_i)\}$, note that $|\psi_\tau(t) - t| = (|t| - \tau)\mathbbm{1}(|t| > \tau)$.
Since $\EE_{X_i}(\omega_i) = 0$, we obtain
\#
\big| \EE_{X_i}\{\psi_\tau(\omega_i)\} \big| &= \big|\EE_{X_i}\big\{\psi_\tau(\omega_i) - \omega_i \big\}\big| \nn \\
& \leq \EE_{X_i}\big\{(|\omega_i| - \tau)\mathbbm{1}(|\omega_i| > \tau)\big\}  \nn \\
& \leq \frac{\EE_{X_i}(|\omega_i|^p)}{\tau^{p-1}} \leq \frac{\nu_p}{\tau^{p-1}}. \label{omega.tail.expectation}
\#
Putting the pieces into~\eqref{I.decomp}, we have
\$
\big| \EE_{X_i}\big\{\psi_\tau(Z_i(f) - \alpha g_0(X_i))\big\} \big| \leq \frac{\bar p}{2}\big\{ \Delta_f(X_i)\big\}^2 + \frac{2^{p-1}\nu_p}{\tau^{p-1}},
\$
which, combined with H{\"o}lder's inequality, further implies
\#
|{\rm I}| \leq \alpha\|g - g_0\|_2\bigg(\frac{\bar p}{2}\|f - f_0\|_4^2 + \frac{2^{p-1}\nu_p}{\tau^{p-1}}\bigg), \label{I.bound}
\#
when $\tau \geq 8M_0$.

We next turn to bound ${\rm II}$.
By the definition of $\psi_\tau^\prime$ and $\omega_i$, we have
\$
{\rm II} &= \EE \bigg[ \int_0^{\alpha \Delta_g(X_i)} \mathbbm{1}\{|Z_i(f) - \alpha g_0(X_i) + t| \leq \tau\}\{\alpha \Delta_g(X_i) - t\} {\rm d}t \bigg] \\
& = \EE \bigg\{ \int_0^{\alpha \Delta_g(X_i)} \big[ 1 - \mathbbm{1}\{|\omega_i + Z_i(f) - Z_i(f_0) + t| > \tau\} \big] \{\alpha \Delta_g(X_i) - t\} {\rm d}t \bigg\}.
\$
Furthermore, $|Z_i(f) - Z_i(f_0)| \leq |f(X_i) - f_0(X_i)| \leq 2M_0$ from~\eqref{z.difference}.
Therefore, we obtain
\#
& \EE_{X_i} \bigg\{ \int_0^{\alpha \Delta_g(X_i)} \big[ 1 - \mathbbm{1}\{|\omega_i + Z_i(f) - Z_i(f_0) + t| > \tau\} \big] \{\alpha \Delta_g(X_i) - t\} {\rm d}t \bigg\} \nn \\
& \geq \EE_{X_i} \bigg\{ \int_0^{\alpha \Delta_g(X_i)} \big[ 1 - \mathbbm{1}\{|\omega_i| > \tau/2\} - \mathbbm{1}\{|\Delta_f(X_i)| + |\alpha \Delta_g(X_i)| > \tau/2\} \big] \{\alpha \Delta_g(X_i) - t\} {\rm d}t \bigg\} \nn \\
& = \EE_{X_i} \bigg\{ \int_0^{\alpha \Delta_g(X_i)} \big[ 1 - \mathbbm{1}\{|\omega_i| > \tau/2\} \big] \{\alpha \Delta_g(X_i) - t\} {\rm d}t \bigg\}, \label{II.lower.bound}
\#
as long as $\tau \geq 8M_0$.
By Markov's inequality and Condition~\ref{cond:moment.condition}, 
\$
\PP(|\omega_i| > \tau/2|X_i) \leq \frac{2^p \nu_p}{\tau^p} \leq \frac{1}{2},
\$
provided that $\tau \geq 2(2\nu_p)^{1/p}$. 
Therefore, taking the expectation, we obtain
\#
{\rm II} \geq \EE\bigg[ \alpha^2\frac{\big\{ \Delta_g(X_i) \big\}^2}{2}\{1 -  \PP(|\omega_i| > \tau/2|X_i)\} \bigg] \geq \EE\bigg[ \frac{\alpha^2\big\{ \Delta_g(X_i) \big\}^2}{4} \bigg] = \frac{\alpha^2\|g - g_0\|_2^2}{4}, \label{II.bound}
\#
as long as $\tau \geq \max\{8M_0, 2(2\nu_p)^{1/p}\}$. 

Combining~\eqref{I.bound} with~\eqref{II.bound} yields that when $\tau \geq \max\{8M_0, 2(2\nu_p)^{1/p}\}$,
\$
\cR_\tau(f,g) - \cR_\tau(f,g_0) \geq \frac{\alpha^2}{4}\|g - g_0\|_2^2 - \alpha\|g -  g_0\|_2\bigg(\frac{\bar p}{2}\|f - f_0\|_4^2 + \frac{2^{p-1}\nu_p}{\tau^{p-1}} \bigg).
\$ 

Next, we derive the upper bound of the excess joint risk.
From the decomposition~\eqref{joint.risk.decomp}, we have the upper bound of the term $|{\rm I}|$ as in~\eqref{I.bound}.
In addition, $0 \leq \psi_\tau^\prime(\cdot) \leq 1$, so that
\#
{\rm II} & = \EE\Bigg[ \int_0^{\alpha \Delta_g(X_i)} \psi_{\tau}^\prime (Z_i(f) - \alpha g_0(X_i) + t)\{\alpha \Delta_g(X_i) - t\}{\rm d}t \Bigg] \nn \\
& \leq \EE\Bigg[ \int_0^{\alpha \Delta_g(X_i)} \{\alpha \Delta_g(X_i) - t\}{\rm d}t \Bigg] = \frac{\alpha^2}{2}\|g - g_0\|_2^2. \label{II.upper.bound}
\# 
Therefore, we obtain
\$
\cR_\tau(f,g) - \cR_\tau(f,g_0) &\leq \EE\Bigg[ \int_0^{\alpha \Delta_g(X_i)} \psi_{\tau}^\prime (Z_i(f) - \alpha g_0(X_i) + t)\{\alpha \Delta_g(X_i) - t\}{\rm d}t \Bigg] \\
& ~~~+ \big| \EE \big\{\psi_\tau(Z_i(f) - \alpha g_0(X_i))\cdot \alpha \Delta_g(X_i)\big\} \big| \\
& \leq \frac{\alpha^2}{2}\|g - g_0\|_2^2 + \alpha\|g -  g_0\|_2\bigg(\frac{\bar p}{2}\|f - f_0\|_4^2 + \frac{2^{p-1}\nu_p}{\tau^{p-1}} \bigg),
\$
which completes the proof.
\qed

\subsection{Proof of Lemma~\ref{lem:joint.lower.upper.bound.light}}

The proof follows a similar structure to that of Lemma~\ref{lem:joint.lower.upper.bound} with the exception that we employ more refined bounds for~\eqref{omega.tail.probability} and~\eqref{omega.tail.expectation} by utilizing the sub-Gaussian property of $\omega_i$.

Recall that $\EE_{X_i}$ represents the conditional expectation given $X_i$.
By Markov's inequality, we have
\#
\PP(|\omega_i| > \tau/2 | X_i) & = \PP\{\exp(\omega_i^2/\sigma_0^2) > \exp(\tau^2/(4\sigma_0^2))|X_i\} \nn
\\
& \leq e^{-t^2/(4\sigma_0^2)}\EE_{X_i}\{\exp(\omega_i^2/\sigma_0^2)\} \leq 2e^{-\tau^2/(4\sigma_0^2)}, \label{omega.tail.probability.light}
\#
where the last inequality follows from Condition~\ref{cond:light-tailed.noise}.
To find a refined bound of $\EE_{X_i}\psi_\tau(\omega_i)$, note that $x e^{x^2/2} \leq e^{x^2}$ for any $x \geq 0$.
Since $\EE_{X_i} \omega_i = 0$, it follows that
\#
\big| \EE_{X_i}\psi_\tau(\omega_i) \big| & = \big| \EE_{X_i}\big\{ \psi_\tau(\omega_i) - \omega_i \big\} \big| \leq  \EE_{X_i}\big\{|\omega_i|\mathbbm{1}(|\omega_i| > \tau) \big\} \nn
\\
& = \sigma_0\EE_{X_i}\big\{|\omega_i/\sigma_0| \mathbbm{1}(|\omega_i/\sigma_0| > \tau/\sigma_0)  \big\} \nn
\\
& = \sigma_0 \EE_{X_i}\bigg[ |\omega_i/\sigma_0|\mathbbm{1}\bigg\{\exp\bigg(\frac{\omega_i^2}{2\sigma_0^2}\bigg) > \exp\bigg(\frac{\tau^2}{2\sigma_0^2}\bigg) \bigg\} \bigg] \nn
\\
& \leq \sigma_0 e^{-\tau^2/(2\sigma_0^2)} \EE_{X_i}\bigg\{|\omega_i/\sigma_0| \exp\bigg(\frac{\omega_i^2}{2\sigma_0^2}\bigg) \bigg\} \nn
\\
& \leq \sigma_0 e^{-\tau^2/(2\sigma_0^2)} \EE_{X_i}\{ \exp(\omega_i^2/\sigma_0^2)\} \leq 2\sigma_0 e^{-\tau^2/(2\sigma_0^2)}. \label{omega.tail.expectation.light}
\#

Based on these two bounds, we prove the lemma.
Provided that $\tau \geq 4M_0$,~\eqref{loss.first.order} and~\eqref{omega.tail.probability.light} give
\$
\Bigg| \EE_{X_i}\bigg\{\int_0^{Z_i(f) - Z_i(f_0)} \mathbbm{1}(|\omega_i + t| > \tau){\rm d}t \bigg\}\Bigg| & \leq \EE_{X_i} \bigg\{ \int_0^{|\Delta_f(X_i)|} \mathbbm{1}(|\omega_i| > \tau/2) {\rm d}t\bigg\}
\\
& \leq 2 e^{-\tau^2/(2\sigma_0^2)}|\Delta_f(X_i)| \\
& \leq 4M_0 e^{-\tau^2/(2\sigma_0^2)},
\$
which, together with~\eqref{I.decomp},~\eqref{I.decomp1},~\eqref{I.decomp1.bound} and~\eqref{omega.tail.expectation.light}, further implies
\#
& \big|\EE \big\{\psi_\tau(Z_i(f) - \alpha g_0(X_i))\cdot \alpha \Delta_g(X_i)\big\} \big| \nn \\
& ~~~~~~~~~~~~~~~~~~~~~~~~~~\leq \alpha \|g - g_0\|_2 \bigg\{\frac{\bar p}{2}\|f - f_0\|_4^2 + (4M_0 + 2\sigma_0)e^{-\tau^2/(2\sigma_0^2)} \bigg\}. \label{I.bound.light}
\#

Next, we have from~\eqref{II.lower.bound} that
\$
& \EE \bigg[ \int_0^{\alpha \Delta_g(X_i)} \mathbbm{1}\{|Z_i(f) - \alpha g_0(X_i) + t| \leq \tau\}\{\alpha \Delta_g(X_i) - t\} {\rm d}t \bigg] \\
& ~~~~~~~~~~\geq \EE\Bigg\{ \EE_{X_i} \bigg[ \int_0^{\alpha \Delta_g(X_i)} \big\{ 1 - \mathbbm{1}(|\omega_i| > \tau/2) \big\} \{\alpha \Delta_g(X_i) - t\} {\rm d}t \bigg] \Bigg\},
\$ 
as long as $\tau \geq 8M_0$.
By~\eqref{omega.tail.probability.light}, note that $\PP(|\omega_i| > \tau/2|X_i) \leq 1/2$ provided that $\tau \geq 2\sigma_0\sqrt{\log 4}$.
Therefore, the earlier expectation bound is further lower bounded as
\$
& \EE \bigg[ \int_0^{\alpha \Delta_g(X_i)} \mathbbm{1}\{|Z_i(f) - \alpha g_0(X_i) + t| \leq \tau\}\{\alpha \Delta_g(X_i) - t\} {\rm d}t \bigg] \\
& \geq \EE\bigg[\alpha^2 \frac{\{\Delta_g(X_i) \}^2}{2} \{1 - \PP(|\omega_i| > \tau/2|X_i) \} \bigg] \geq \frac{\alpha^2\|g - g_0\|_2}{4}.
\$
Together, this bound,~\eqref{I.bound.light} and~\eqref{joint.risk.decomp} give the lower bound of joint Huber loss.

For the upper bound of joint Huber loss, combining the decomposition~\eqref{joint.risk.decomp} with~\eqref{II.upper.bound} and~\eqref{I.bound.light} yields the upper bound.  \qed

\subsection{Proof of Lemma~\ref{lem:multiplier.empirical.process}}
Recall the definition of $\omega_i$ in~\eqref{def:omega.notation} and $\cG_n = \cF_{{\rm DNN}}(d, L_e, N_e, M_0)$. 
Denote
\$
m_g(X_i,\epsilon_i) = \Delta_g(X_i)\psi_\tau(\omega_i)
\$
for any $g \in \cG_n$.
From the definition of $\psi_\tau(\cdot)$ and the boundedness of $g \in \cG_n$ and $g_0$, we obtain
\$
\sup_{g \in \cG_n(\eta)}|m_g(X_i,\epsilon_i)| = \sup_{g \in \cG_n(\eta)}|\Delta_g(X_i)| \cdot \tau \leq 2M_0 \tau,
\$
which further implies $\sup_{g \in \cG_n(\eta)}|m_g(X_i,\epsilon_i) - \EE m_g(X_i,\epsilon_i)| \leq 4 M_0 \tau =: A$.
Moreover, since $\psi_{\tau}(\omega_i) \leq \max(\tau, |\omega_i|)$, it follows that
\$
\sup_{g \in \cG_n(\eta)}\EE\big\{m_g(X_i,\epsilon_i)\big\}^2 & \leq  \sup_{g \in \cG_n(\eta)}\EE \{\Delta^2_g(X_i) \omega_i^2\} \\
& \leq \tau^{\max(2 - p, 0)}\nu_p^{\min(1, 2/p)} \sup_{g \in \cG_n(\eta)}\EE\{g(X_i) - g_0(X_i)\}^2 \\
& \leq \tau^{\max(2 - p, 0)}\nu_p^{\min(1, 2/p)} \eta^2.
\$
We thus have
\$
\sup_{g \in \cG_n(\eta)} \EE\big\{m_g(X_i,\epsilon_i) - \EE m_g(X_i,\epsilon_i) \big\}^2 & \leq \sup_{g \in \cG_n(\eta)} \EE\{m_g(X_i,\epsilon_i)\}^2 \\
& \leq \tau^{\max(2 - p, 0)}\nu_p^{\min(1, 2/p)} \eta^2 =: \sigma^2.
\$
Denoting $E(\eta) := \EE \sup_{g \in \cG_n(\eta)}|n^{-1}\sn m_g(X_i) - \EE m_g(X_i)|$, Lemma~\ref{lem:talagrand.inequality} implies
\#
\PP\Bigg\{ \sup_{g \in \cG_n(\eta)} \bigg|\frac{1}{n} \sn m_g(X_i, \epsilon_i) - \EE m_g(X_i, \epsilon_i) \bigg| \geq 2 E(\eta) + \sigma\sqrt{\frac{2x}{n}} + \frac{4Ax}{3n} \Bigg\} \leq e^{-x} \label{multiplier.process.bound}
\#
for any $x \geq 0$.

To establish an upper bound of $E(\eta)$, we first find an upper bound of the uniform covering number for the function class $\cM_n(\eta) := \{m_g : g \in \cG_n(\eta)\}$.
For any $g, g^\prime \in \cG_n(\eta)$, it follows that
\$
\big| m_g(X_i,\epsilon_i) - m_{g^\prime}(X_i,\epsilon_i)\big| = \big| \psi_{\tau}(\epsilon_i) \big\{ \Delta_g(X_i) - \Delta_{g^\prime}(X_i)\big\}\big| \leq  \tau |g(X_i) - g^\prime(X_i)|.
\$
Combining this with Lemma~\ref{lem:covering.number} and Lemma~\ref{lem:pseudo.dimension} yields
\#
\log N_\infty(\epsilon, \cM_n(\eta), n) & \leq \log N_\infty(\epsilon/( \tau), \cG_n, n) \nn \\
& \lesssim \log\bigg( \frac{2e  nM_0\tau}{\epsilon} \bigg)d(N_eL_e)^2\log (dN_eL_e). \label{multiplier.covering.number}
\#
Now, let $F(X_i,\epsilon_i) := 2 M_0 \tau$.
Then, $F$ is an envelop function of the function class $\cM_n(\eta)$.
Denoting $\bar F := \max_{1 \leq i \leq n} F(X_i, \epsilon_i) = F$, we have 
\$
\|F\|_{Q,2} = \| F \|_2 = 2 M_0 \tau, ~~\mbox{ and }~~ \|\bar F\|_2 = 2 M_0 \tau 
\$
for any $n$-discrete probability measure $Q$.
Therefore, for any $n$-discrete probability measure $Q$,
\$
\log N(\epsilon\|F\|_{Q,2}, \cM_n(\eta), \|\cdot\|_{Q,2}) & \leq \log N_\infty(\epsilon \|F\|_{Q,2}, \cM_n(\eta), n) \\
& \lesssim d(L_eN_e)^2\log(dL_eN_e)\log\bigg( \frac{en}{\epsilon} \bigg).
\$
Combining this with Lemma~\ref{lem:maximal inequality}, it follows that for any $\eta \geq 1/n$ and $\tau/\nu_p^{1/p} \geq 1$,
\$
& E(\eta) \\
& \lesssim \sigma \cdot L_eN_e\sqrt{\frac{d\log(dL_eN_e)}{n}\log\bigg( \frac{en \cdot 2 M_0 \tau }{\sigma} \bigg)} + \frac{d(L_eN_e)^2\log(dL_eN_e)\cdot 2 M_0 \tau}{n}\log\bigg( \frac{en \cdot 2 M_0 \tau }{\sigma} \bigg) \\
& \lesssim \bigg\{ \sigma \cdot L_eN_e\sqrt{\frac{d\log(dL_eN_e)}{n}\log\bigg( \frac{n\tau }{\nu_p^{1/p}\eta} \bigg)} + \frac{d(L_eN_e)^2\log(dL_eN_e)\tau}{n} \log\bigg( \frac{n\tau }{\nu_p^{1/p}\eta} \bigg)\bigg\} \\
& \lesssim \bigg\{ \sigma \cdot L_eN_e\sqrt{\frac{d\log(dL_eN_e)\log(n^2\tau\nu_p^{-1/p})}{n}} + \frac{\tau d(L_eN_e)^2\log(dL_eN_e)\log(n^2\tau\nu_p^{-1/p})}{n} \bigg\} \\
& = \{\eta\tau^{\max(1 - p/2, 0)}\nu_p^{\min(1/2, 1/p)} V_{n,\tau,\nu_p} + \tau V_{n,\tau,\nu_p}^2\}.
\$
Therefore, there exists a universal constant $C_1 > 0$ such that
\$
E(\eta) \leq C_1 \cdot \eta \{\tau^{\max(1 - p/2, 0)}\nu_p^{\min(1/2, 1/p)} + \sqrt{\tau}\}V_{n,\tau, \nu_p}
\$
for any $\eta \geq \max(\sqrt{\tau}V_{n,\tau,\nu_p}, 1/n)$.
Also, if $0 \leq x \leq n\eta^2/\tau$, we have $\tau x/n \leq \eta \sqrt{\tau}\sqrt{x/n}$.
Putting the pieces together in~\eqref{multiplier.process.bound}, there exists a universal constant $c_{18} > 0$ such that for any $\tau/\nu_p^{1/p} \geq 1, \eta \geq \max(\sqrt{\tau} V_{n,\tau,\nu_p},1)$ and $0 \leq x \leq n \eta^2 / \tau$,
\$
\PP\Bigg\{ \sup_{g \in \cG_n(\eta)}\bigg| \frac{1}{n} \sn (1 - \EE)m_g(X_i, \epsilon_i) \bigg| \geq c_{18} \cdot & \eta\{\tau^{\max(1 - p/2, 0)}\nu_p^{\min(1/2, 1/p)} + \sqrt{\tau}\} \\
& ~~~~~ \cdot\bigg(V_{n,\tau,\nu_p} + \sqrt{\frac{x}{n}}\bigg) \Bigg\} \leq e^{-x},
\$
which completes the proof.  \qed

\subsection{Proof of Lemma~\ref{lem:square.empirical.process}}

For each $g \in \cG_n(\eta)$, define
\$
m_g(X_i, \epsilon_i) := \int_0^{\alpha \Delta_g(X_i)} \big\{\psi_\tau(\omega_i + t) - \psi_\tau(\omega_i)\big\} {\rm d}t,
\$
and let $\cM_n(\eta) := \{m_g : g \in \cG_n(\eta)\}$.
To employ Lemma~\ref{lem:talagrand.inequality}, note that
\$
\sup_{m \in \cM_n(\eta)} |m(X_i, \epsilon_i)| & = \sup_{g \in \cG_n(\eta)} \bigg| \int_0^{\alpha \Delta_g(X_i)} \big\{ \psi_\tau(\omega_i + t) - \psi_\tau(\omega_i) \big\} {\rm d}t \bigg| \\
& \leq \sup_{g \in \cG_n(\eta)} \bigg| \int_0^{\alpha \Delta_g(X_i)} |t| {\rm d}t \bigg| \leq 2\alpha^2 M_0^2,
\$
where the first inequality follows from the Lipschitz property of $\psi_\tau(\cdot)$.
Thus, we have $\sup_{m \in \cM_n(\eta)} |m(X_i, \epsilon_i) - \EE m(X_i, \epsilon_i)| \leq 4\alpha^2 M_0^2 =: A$.
Also, by the Lipschitz property of $\psi_\tau(\cdot)$ and the boundedness, we have
\$
\sup_{m \in \cM_n(\eta)} \EE\big\{ m(X_i, \epsilon_i) \big\}^2 & = \sup_{g \in \cG_n(\eta)} \EE\bigg[ \int_0^{\alpha \Delta_g(X_i)} \big\{ \psi_\tau(\omega_i + t) - \psi_\tau(\omega_i) \big\} {\rm d}t \bigg]^2 \\
& \leq \sup_{g \in \cG_n(\eta)} \EE\bigg\{ \int_0^{\alpha \Delta_g(X_i)} |t| {\rm d}t \bigg\}^2 = \sup_{g \in \cG_n(\eta)} \frac{\alpha^4}{4} \EE\big\{g(X_i) - g_0(X_i) \big\}^4\\
& \leq \alpha^4 M_0^2 \sup_{g \in \cG_n(\eta)} \EE\big\{g(X_i) - g_0(X_i) \big\}^2 \leq \alpha^4 M_0^2 \eta^2,
\$
which further implies
\$
\sup_{m \in \cM_n(\eta)} \EE\big\{m(X_i, \epsilon_i) - \EE m(X_i, \epsilon_i) \big\}^2 \leq \sup_{m \in \cM_n(\eta)} \EE\big\{m(X_i, \epsilon_i) \big\}^2 \leq \alpha^4 M_0^2 \eta^2 =: \sigma^2.
\$
Then, applying Lemma~\ref{lem:talagrand.inequality} yields
\#
\PP\Bigg\{\sup_{m \in \cM_n(\eta)} \bigg|\frac{1}{n} \sn m(X_i, \epsilon_i) - \EE m(X_i, \epsilon_i) \bigg| \geq 2E(\eta) + \sigma\sqrt{\frac{2x}{n}} + \frac{4Ax}{3n}\Bigg\} \leq e^{-x} \label{multiplier.subgaussian.talagrand.bound}
\#
for any $x \geq 0$, where $E(\eta) := \EE\sup_{m \in \cM_n(\eta)}|n^{-1}\sn (1 - \EE)m(X_i, \epsilon_i) |$.

We follow a similar argument as in the proofs of Lemma~\ref{lem:multiplier.empirical.process} to derive a bound of $E(\eta)$.
By the Lipschitz property of $\psi_\tau$, we have for any $g, g^\prime \in \cG_n$ that
\$
\big|m_g(X_i, \epsilon_i) - m_{g^\prime}(X_i, \epsilon_i) \big| & = \bigg| \int_{\alpha \Delta_{g^\prime}(X_i)}^{\alpha \Delta_g(X_i)} \big\{\psi_\tau(\omega_i + t) - \psi_\tau(\omega_i) \big\} {\rm d}t \bigg|
\\
& \leq \bigg| \int_{\alpha \Delta_{g^\prime}(X_i)}^{\alpha \Delta_g(X_i)} |t|{\rm d}t \bigg| = \frac{\alpha^2}{2}\big|\big\{\Delta_g(X_i)\}^2 - \big\{\Delta_{g^\prime}(X_i) \big\}^2 \big| \\
& \leq 2\alpha^2 M_0 \big| g(X_i) - g^\prime(X_i) \big|.
\$
Together with Lemma~\ref{lem:covering.number} and Lemma~\ref{lem:pseudo.dimension}, we obtain for any $0 < \epsilon < 4\alpha^2M_0^2$ that
\$
\log N_\infty(\epsilon, \cM_n(\eta), n) \leq \log N_\infty(\epsilon/(2\alpha^2 M_0), \cG_n, n) \lesssim d(L_eN_e)^2\log(dL_eN_e)\log(4\alpha^2M_0^2 n e/\epsilon).
\$
We choose an envelope function $F := 4\alpha^2 M_0^2$. 
Then, for any $n$-discrete probability $Q$, 
\$
\log N(\epsilon\|F\|_{Q,2}, \cM_n(\eta), \|\cdot\|_{Q,2}) \lesssim d(L_eN_e)^2\log(dL_eN_e)\log(n e/\epsilon).
\$
Thus, by Lemma~\ref{lem:maximal inequality}, we have
\$
& E(\eta) \\
& \lesssim \sigma \cdot L_eN_e\sqrt{\frac{d\log(dL_eN_e)}{n}\log\bigg( \frac{en 4\alpha^2M_0^2}{\sigma} \bigg)} + \frac{d(L_eN_e)^2\log(dL_eN_e) \cdot 4\alpha^2M_0^2}{n} \log\bigg( \frac{en 4\alpha^2M_0^2}{\sigma} \bigg) \\
& \lesssim \alpha^2 \bigg\{ \eta \cdot L_eN_e \sqrt{\frac{d\log(L_eN_e)}{n}\log\bigg( \frac{en}{\eta} \bigg)} + \frac{d(L_eN_e)^2\log(dL_eN_e)}{n}\log\bigg( \frac{en}{\eta} \bigg)\bigg\} \\
& \leq \alpha^2(\eta V_n + V_n^2),
\$
as long as $\eta \geq 1/n$.
Thus, we have $E(\eta) \lesssim \alpha^2 \eta\cdot V_n$ for $\eta \geq V_n$.
Also, if $0 \leq x \leq n\eta^2$, then $x/n \leq \eta\sqrt{x/n}$.
Putting the pieces together in~\eqref{multiplier.subgaussian.talagrand.bound}, there exists a universal constant $c_{19} > 0$ satisfying
\$
& \PP\bigg\{ \sup_{g \in \cG_n(\eta)} \bigg| \frac{1}{n} \sn (1 - \EE) \bigg[ \int_0^{\alpha \Delta_g(X_i)} \big\{\psi_\tau(\omega_i + t) - \psi_\tau(\omega_i) \big\}{\rm d}t \bigg] \bigg|   \\
& ~~~~~~~~~~~~~~~~~~~~~~~~~~~~~~~~~~~~~~~~~~~~~~~~~~~~~~~~~~\geq  c_{19} \alpha^2 \eta \bigg( V_n + \sqrt{\frac{x}{n}} \bigg) \bigg\} \leq e^{-x}
\$
for $\eta \geq V_n$.
This establishes the claim.   \qed

\subsection{Proof of Lemma~\ref{lem:product.empirical.process}}

Recall that $\cF_n = \cF_{{\rm DNN}}(d, L_q, N_q, M_0)$ and $\cG_n = \cF_{{\rm DNN}}(d, L_e, N_e, M_0)$.
For each given $g \in \cG_n(\eta)$ and $f \in \cF_n$, define
\$
m_{f, g}(X_i,\epsilon_i) = \int_0^{\alpha \Delta_g(X_i)}\big\{ \psi_\tau(\omega_i + Z_i(f) - Z_i(f_0) + t) - \psi_\tau(\omega_i + t) \big\} {\rm d}t,
\$
and let $\cM_n(\eta) = \{m_{f,g} : f \in \cF_n \mbox{ and } g \in \cG_n(\eta)\}$. 
Then, we need to find a high probability bound of the following empirical process,
\$
& \sup_{ m \in \cM_n(\eta)} \bigg| \frac{1}{n} \sn (1 - \EE) m(X_i, \epsilon_i) \bigg|.
\$

To apply Lemma~\ref{lem:talagrand.inequality}, it follows from the bounded property of $\cF_n, \cG_n, f_0$ and $g_0$ that
\$
\sup_{m \in \cM_n(\eta)} |m(X_i,\epsilon_i)| & = \sup_{f \in \cF_n} \sup_{g \in \cG_n(\eta)} \Bigg| \int_0^{\alpha \Delta_g(X_i)} \big\{ \psi_\tau(\omega_i + Z_i(f) - Z_i(f_0) + t) - \psi_\tau(\omega_i + t) \big\} {\rm d}t\Bigg| \\
& \leq  \sup_{f \in \cF_n} \sup_{g \in \cG_n(\eta)} \alpha |\Delta_g(X_i)||f(X_i) - f_0(X_i)|   \leq 4 \alpha M_0^2,
\$
where the first inequality follows from~\eqref{z.difference} and 
\$
\big|\psi_\tau(\omega_i + Z_i(f) - Z_i(f_0) + t) - \psi_\tau(\omega_i + t)\big| \leq |Z_i(f) - Z_i(f_0)|.
\$
Therefore, we obtain $\sup_{m \in \cM_n(\eta)}|m(X_i, \epsilon_i) - \EE m(X_i, \epsilon_i)| \leq 8 \alpha M_0^2 =: A$.
Moreover, it follows from~\eqref{z.difference} that
\$
\sup_{m \in \cM_n(\eta)}\EE\big\{m(X_i, \epsilon_i)\big\}^2 \leq 4M_0^2 \alpha^2 \sup_{g \in \cG_n(\eta)}\EE\big\{g(X_i) - g_{0}(X_i)\}^2 \leq 4M_0^2 \alpha^2 \eta^2, 
\$
which further implies
\$
\sup_{m \in \cM_n(\eta)}\EE\big\{m(X_i, \epsilon_i) - \EE m(X_i, \epsilon_i) \big\}^2 \leq \sup_{m \in \cM_n(\eta)}\EE\big\{m(X_i, \epsilon_i)\big\}^2 \leq 4M_0^2 \alpha^2 \eta^2 =: \sigma^2.
\$
Denoting $E(\eta) = \EE\sup_{m \in \cM_n(\eta)}|n^{-1}\sn m(X_i, \epsilon_i) - \EE m(X_i, \epsilon_i)|$, Lemma~\ref{lem:talagrand.inequality} gives
\#
\PP\Bigg\{\sup_{m \in \cM_n(\eta)}\bigg|\frac{1}{n}\sn m(X_i, \epsilon_i) - \EE m(X_i, \epsilon_i) \bigg| \geq 2E(\eta) + \sigma \sqrt{\frac{2x}{n}} + \frac{4Ax}{3n} \Bigg\} \leq e^{-x} \label{product.empirical.bound}
\#
for any $x \geq 0$. 

Next, we turn to bounding the expectation, $E(\eta)$. 
We choose $F = 4 \alpha M_0^2$ to be an envelope function of $\cM_n(\eta)$. 
To calculate the uniform covering number of the function class $\cM_n(\eta)$, note that following a similar argument which leads to~\eqref{z.difference} gives that
\$
|Z_i(f) - Z_i(f^\prime)| \leq |f(X_i) - f^\prime(X_i)|
\$
for any $f, f^{\prime} \in \cF_n$.
Thus, given $f, f^{\prime} \in \cF_n$ and $g, g^\prime \in \cG_n$, we have
\$
& |m_{f,g}(X_i, \epsilon_i) - m_{f^\prime, g^\prime}(X_i, \epsilon_i)| \\
& \leq |m_{f,g}(X_i, \epsilon_i) - m_{f^\prime, g}(X_i, \epsilon_i)| + |m_{f^\prime,g}(X_i, \epsilon_i) - m_{f^\prime, g^\prime}(X_i, \epsilon_i)| \\
& = \bigg| \int_0^{\alpha \Delta_g(X_i)} \big\{\psi_\tau(\omega_i + Z_i(f) - Z_i(f_0) + t) - \psi_\tau(\omega_i + Z_i(f^\prime) - Z_i(f_0) + t) \big\}{\rm d}t \bigg| \\
& ~~~~~~~~~~~~~~~~~~~~~~~~~~~~~~~~~~~+ \bigg| \int_{\alpha \Delta_{g^\prime}(X_i)}^{\alpha \Delta_g(X_i)} \big\{ \psi_\tau(\omega_i + Z_i(f^\prime) - Z_i(f_0) + t) - \psi_\tau(\omega_i + t) \big\}{\rm d}t \bigg| \\
& \leq \alpha|g(X_i) - g_0(X_i)||Z_i(f) - Z_i(f^\prime)| + \alpha|g(X_i) - g^\prime(X_i)||Z_i(f^\prime) - Z_i(f_0)| \\
& \leq \alpha \cdot 2M_0 |f(X_i) - f^\prime(X_i)| + \alpha \cdot 2M_0|g(X_i) - g_0(X_i)|,
\$
where the second inequality follows from the Lipschitz property of $\psi_\tau$, and the last inequality holds by the bounded property.
Thus, it follows that
\$
N_\infty(\epsilon \cdot 4M_0^2 \alpha, \cM_n(\eta), n) \leq N_\infty(\epsilon \cdot M_0 , \cF_n, n) \cdot N_\infty(\epsilon \cdot M_0 , \cG_n, n),
\$
which, combined with Lemma~\ref{lem:covering.number} and Lemma~\ref{lem:pseudo.dimension}, implies that
\$
\log N(\epsilon\|F\|_{Q,2}, \cM_n(\eta), \|\cdot\|_{Q,2}) \lesssim \{d(L_qN_q)^2\log(dL_qN_q) + d(L_eN_e)^2\log(dL_eN_e)\}\log(en/\epsilon)
\$
for any $n$-discrete probability $Q$.
Together, this and Lemma~\ref{lem:maximal inequality} give
\$
E(\eta) & \lesssim \sigma\sqrt{\frac{d(L_qN_q)^2\log(dL_qN_q) + d(L_eN_e)^2\log(dL_eN_e)}{n}\log\bigg( \frac{en \cdot 4\alpha M_0^2}{\sigma} \bigg) } \\
& ~~~~~~~+ \frac{4\alpha M_0^2 \cdot \{d(L_qN_q)^2\log(dL_qN_q) + d(L_eN_e)^2\log(dL_eN_e)\}}{n} \log \bigg( \frac{en \cdot 4\alpha M_0^2}{\sigma} \bigg) \\
& \lesssim \alpha\{\eta (\delta_{{\rm s}} + V_n) + (\delta_{{\rm s}} + V_n)^2),
\$
as long as $\eta \geq 1/n$.
Therefore, there exists a universal constant $C_1 > 0$ satisfying $E(\eta) \leq C_1 \alpha \eta (\delta_{{\rm s}} + V_n)$ for $\eta \geq (\delta_{{\rm s}} + V_n)$.
Combining this with~\eqref{product.empirical.bound}, there exists a universal positive constant $c_{20}$ such that for any $0 \leq x \leq n\eta^2$,
\$
& \PP\Bigg\{\sup_{h \in \cH_n(\eta)}\bigg|\frac{1}{n}\sn h(X_i) - \EE h(X_i) \bigg| \geq c_{20} \alpha \eta \bigg( \delta_{{\rm s}} + V_n + \sqrt{\frac{x}{n}}\bigg) \Bigg\} \leq e^{-x}.
\$ 
This completes the proof.  \qed

\subsection{Proof of Lemma~\ref{lem:multiplier.subgaussian.empirical.process}}

To begin with, note that
\#
\sup_{g \in \cG_n(\eta)} \big| \EE\big[\psi_\tau(\omega_i) \big\{g(X_i) - g_0(X_i) \big\}\big] \big| & \leq \sup_{g \in \cG_n(\eta)} \EE\big[ \big|\EE\big\{ \psi_\tau(\omega_i) \big| X_i\big\}\big| \cdot \big| g(X_i) - g_0(X_i) \big| \big]  \nn
\\
& \leq 2\sigma_0 e^{-\tau^2/(2\sigma_0^2)} \cdot \eta, \label{multiplier.subgaussian.bias}
\#
where the last inequality follows from~\eqref{omega.tail.expectation.light}.
Therefore, it suffices to derive a bound for the tail probabilities of
\$
\sup_{g \in \cG_n(\eta)} \bigg| \frac{1}{n} \sn \psi_\tau(\omega_i) \big\{ g(X_i) - g_0(X_i) \big\} \bigg|.
\$
To this end, we first fix covariates $(X_1, \ldots, X_n)$ and let $\EE_X$ and $\PP_X$ be the conditional expectation and conditional probability given $(X_1, \ldots, X_n)$, respectively.
Consider the stochastic process $\{S_g : g \in \cG_n \cup \{g_0\}\}$, where $S_g$ is defined as
\$
S_g := \frac{1}{\sqrt{n}} \sn \big\{ \psi_\tau(\omega_i) - \EE_X \psi_\tau(\omega_i) \big\} \big\{ g(X_i) - g_0(X_i) \big\}.
\$
Since $|\psi_\tau(t)| \leq |t|$, the assumption~\eqref{subgaussian.noise} implies that $\EE_X \exp(\psi_\tau^2(\omega_i)/\sigma_0^2) \leq 2$.
Combining this with Proposition 2.6.1 and Lemma 2.6.8 in~\cite{V2018}, there exists a universal constant $C_1 > 0$ such that
\$
\PP_X(|S_g - S_g^\prime| \geq x) \leq 2\exp\bigg( -\frac{x^2}{C_1 \sigma_0^2\|g - g^\prime \|_n^2} \bigg) ~~\mbox{ for } g, g^\prime \in \cG_n \cup \{g_0\},
\$
where $\| \cdot \|_n^2$ is the empirical $L_2$ norm defined as
\$
\|g - g^\prime\|_n^2 := \frac{1}{n} \sn \big\{ g(X_i) - g^\prime(X_i)\big\}^2.
\$
Now, we denote
\$
\cM_n(v) := \cM_n(v;(X_1,\ldots, X_n)) = \{ g \in \cG_n \cup \{g_0\} : \|g - g_0\|_n \leq v\}
\$
for any $v \geq 0$.
Applying Theorem 8.1.6 in~\cite{V2018}, there exists an absolute constant $C_2 > 0$ such that for every $v,x \geq 0$, 
\#
\PP_X\Bigg[\sup_{g, g^\prime \in \cM_n(v)} |S_g - S_{g^\prime}| \geq C_2 \sigma_0 \bigg\{ \int_0^{2v} \sqrt{\log N(\epsilon, \cG_n \cup \{g_0\}, \| \cdot \|_n)}{\rm d}\epsilon +  v\sqrt{x} \bigg\} \Bigg] \leq 2e^{-x}. \label{multiplier.subgaussian.Dudley.bound}
\#
For any $(X_1, \ldots, X_n)$, it follows that
\$
N(\epsilon, \cG_n \cup \{g_0\}, \| \cdot \|_n) \leq 1 + N_\infty(\epsilon, \cG_n, n).
\$
Then, by combining Lemma~\ref{lem:covering.number} and Lemma~\ref{lem:pseudo.dimension}, it follows that
\$
& \int_0^{2v} \sqrt{\log N(\epsilon, \cG_n \cup \{g_0\} , \| \cdot \|_n)}{\rm d}\epsilon  \\
& \leq \int_0^{2v} \sqrt{1 + \log N_\infty(\epsilon, \cG_n, n)} {\rm d}\epsilon 
\\
& \lesssim \int_0^{2v}\sqrt{1 + d(L_eN_e)^2\log(dL_eN_e)\log(enM_0/\epsilon)}{\rm d}\epsilon \\
& \lesssim L_eN_e\sqrt{d\log(dL_eN_e)}\bigg\{v + v\sqrt{\log(enM_0)} + \int_0^{2v}\sqrt{\log(1/\epsilon) \vee 0} \bigg\}.
\$
By the inequality $\int_0^x \sqrt{\log(1/\epsilon)\vee 0}{\rm d}\epsilon \leq x\sqrt{(1/x)\vee 1}$, we obtain
\$
\int_0^{2v}\sqrt{\log(1/\epsilon)\vee 0} \lesssim v\sqrt{(1/v) \vee 1} \leq v\sqrt{\log n}
\$
for any $v \geq 1/n$.
Thus, the earlier inequality gives
\$
\int_0^{2v} \sqrt{\log N(\epsilon, \cG_n \cup \{g_0\} , \| \cdot \|_n)}{\rm d}\epsilon \lesssim v\cdot L_eN_e\sqrt{d\log(dL_eN_e)\log n},
\$
which, combined with~\eqref{multiplier.subgaussian.Dudley.bound}, further implies that for any $x \geq 0$, 
\$
\PP_X\bigg[ \sup_{g, g^\prime \in \cM(v)}|S_g - S_{g^\prime}| \geq C_3 \sigma_0 \big\{v\sqrt{d(L_eN_e)^2 \log(dL_eN_e)\log n} + v\sqrt{x}\big\} \bigg] \leq 2e^{-x},
\$
as long as $v \geq 1/n$, where $C_3$ is a universal constant.
Since $g_0 \in \cM(v)$ for any $v \geq 0$, this tail probability further implies with probability at least $1 - 2e^{-x}$ (conditioned on $(X_1, \ldots, X_n)$) that
\$
\sup_{\substack{g \in \cG_n \\ \|g - g_0\|_n \leq v}}\bigg|\frac{1}{n}\sn \big\{ \psi_\tau(\omega_i) - \EE_X \psi_\tau(\omega_i) \big\} \big\{g(X_i) - g_0(X_i)\big\}  \bigg| \leq C_3 \sigma_0 v\bigg(V_n + \sqrt{\frac{x}{n}}\bigg)
\$
for $v \geq 1/n$ and $x \geq 0$.
Moreover, it follows from the Cauchy-Schwartz inequality that
\$
\sup_{\|g - g_0\|_n \leq v} \bigg| \frac{1}{n} \sn \big\{ \EE_X \psi_\tau(\omega_i) \big\} \big\{ g(X_i) - g_0(X_i) \big\} \bigg| & \leq \bigg[ \frac{1}{n} \sn \big\{ \EE_X\psi_\tau(\omega_i)\big\}^2 \bigg]^{1/2} v \\
& \leq 2\sigma_0 v e^{-\tau^2/(2\sigma_0^2)},
\$
where the last inequality follows from~\eqref{omega.tail.expectation.light}.
Together, this bound and the earlier tail probability imply that with probability at least $1 - 2e^{-x}$,
\#
\sup_{\substack{g \in \cG_n \\ \|g - g_0\|_n \leq v}}\bigg|\frac{1}{n}\sn \psi_\tau(\omega_i)\big\{g(X_i) - g_0(X_i)\big\}  \bigg| \leq \max(C_3, 2) \sigma_0 v\bigg\{V_n + e^{-\tau^2/(2\sigma_0^2)} + \sqrt{\frac{x}{n}}\bigg\}. \label{multiplier.subgaussian.Dudley.bound2}
\#

Note that when $\tau = \infty$, we have
\$
\frac{1}{n} \sn (1 - \EE) \bigg[ \int_0^{\alpha \Delta_g(X_i)} \big\{ \psi_\tau(\omega_i + t) - \psi_\tau(\omega_i) \big\} {\rm d}t \bigg] = \frac{\alpha^2}{2}\big( \|g - g_0\|_n^2 - \|g - g_0\|_2^2 \big).
\$
Therefore, Lemma~\ref{lem:square.empirical.process} with $\tau = \infty$ implies that for $\eta \geq V_n$ and $0 \leq x \leq n\eta^2$ that
\#
\sup_{g \in \cG_n(\eta)}\big|\|g - g_0\|_n^2 - \|g - g_0\|_2^2 \big| \leq 2c_{19} \eta \bigg(V_n + \sqrt{\frac{x}{n}} \bigg) \leq 4c_{19} \eta^2 \label{empirical.population.l2}
\#
with probability at least $1 - e^{-x}$.
Conditioned on the event where the inequality~\eqref{empirical.population.l2} holds, we obtain
\$
\sup_{g \in \cG_n(\eta)} \|g - g_n\|_n^2 \leq \sup_{g \in \cG_n(\eta)} \big|\|g - g_0\|_n^2 - \|g - g_0\|_2^2 \big| + \sup_{g \in \cG_n(\eta)} \|g - g_0\|_2^2 \leq (1 + 4c_{19})\eta^2.
\$
Thus, for the event $\cB(\eta)$ defined as
\$
\cB(\eta) := \bigg\{ \sup_{g \in \cG_n(\eta)} \|g - g_0\|_n \leq (1 + 4c_{19})^{1/2}\eta \bigg\},
\$
we have $\PP\{\cB(\eta)\} \geq 1 - e^{-x}$ for any $0 \leq x \leq n\eta^2$.
Therefore, denoting $C_4 = \max(C_3,2)(1 + 4c_{19})^{1/2}$, we obtain
\$
& \PP\Bigg[\sup_{g \in \cG_n(\eta)} \bigg| \frac{1}{n} \sn \psi_\tau(\omega_i)\big\{ g(X_i) - g_0(X_i) \big\} \bigg| \geq C_4 \sigma_0 \eta \bigg\{ V_n + e^{-\tau^2/(2\sigma_0^2)} + \sqrt{\frac{x}{n}} \bigg\} \Bigg] \\
& \leq \PP\Bigg[\sup_{g \in \cG_n(\eta)} \bigg| \frac{1}{n} \sn \psi_\tau(\omega_i)\big\{ g(X_i) - g_0(X_i) \big\} \bigg| \geq C_4 \sigma_0 \eta \bigg\{ V_n + e^{-\tau^2/(2\sigma_0^2)} + \sqrt{\frac{x}{n}} \bigg\} \Bigg| \cB(\eta) \Bigg] + e^{-x} \\
& \leq 3e^{-x},
\$
where the last inequality follows from~\eqref{multiplier.subgaussian.Dudley.bound2} after taking $v = (1 + 4c_{19})^{1/2}\eta$.
Combining this with~\eqref{multiplier.subgaussian.bias} gives
\#
& \PP\Bigg[ \sup_{g \in \cG_n(\eta)} \bigg| \frac{1}{n} \sn (1 - \EE)\psi_\tau(\omega_i)\big\{ g(X_i) - g_0(X_i) \big\} \bigg| \geq \underbrace{(C_4+ 2)}_{ =:c_{21}} \sigma_0 \eta \bigg\{ V_n + e^{-\tau^2/(2\sigma_0^2)} + \sqrt{\frac{x}{n}} \bigg\}\Bigg] \nn \\
& \leq 3e^{-x} \label{multiplier.subgaussian.first.order.bound}
\#
for $\eta \geq V_n$ and $0 \leq x \leq n\eta^2$.
This completes the proof. \qed

\begin{table}[!t]
\centering
\caption{Empirical mean squared prediction error ($\hat{{\rm MSPE}}$) with corresponding standard deviations (in parentheses) under model \eqref{model1}. The best performance at each $\alpha$ level is highlighted in bold.}
\begin{tabular}{c|cccc}
\toprule
\multicolumn{5}{c}{\textbf{Normal}} \\
\cmidrule(lr){1-5}
$\alpha$ & DES & DRES & NC-DRES & LLES \\
\midrule
0.05 & 0.225\,(0.128) & \textbf{0.183}\,(0.100) & 0.212\,(0.062) & 0.591\,(0.034) \\
0.10 & 0.136\,(0.028) & \textbf{0.124}\,(0.024) & 0.133\,(0.026) & 0.466\,(0.021) \\
0.15 & 0.108\,(0.022) & 0.103\,(0.020) & \textbf{0.099}\,(0.017) & 0.454\,(0.022) \\
0.20 & 0.091\,(0.018) & 0.090\,(0.016) & \textbf{0.086}\,(0.016) & 0.361\,(0.022) \\
0.25 & 0.081\,(0.015) & \textbf{0.080}\,(0.014) & 0.085\,(0.016) & 0.332\,(0.019) \\
\midrule
\multicolumn{5}{c}{\textbf{$t$ distribution}} \\
\cmidrule(lr){1-5}
$\alpha$ & DES & DRES & NC-DRES & LLES \\
\midrule
0.05 & 0.886\,(0.351) & 0.570\,(0.208) & \textbf{0.341}\,(0.092) & 1.650\,(1.371) \\
0.10 & 0.399\,(0.204) & 0.209\,(0.084) & \textbf{0.155}\,(0.034) & 0.673\,(0.344) \\
0.15 & 0.213\,(0.146) & 0.119\,(0.029) & \textbf{0.104}\,(0.021) & 0.443\,(0.152) \\
0.20 & 0.141\,(0.131) & \textbf{0.085}\,(0.019) & 0.085\,(0.018) & 0.400\,(0.080) \\
0.25 & 0.102\,(0.105) & \textbf{0.064}\,(0.014) & 0.079\,(0.018) & 0.366\,(0.129) \\
\bottomrule
\end{tabular}
\end{table}

\begin{table}[!t]
\centering
\caption{Empirical mean squared prediction error ($\hat{{\rm MSPE}}$) with corresponding standard deviations (in parentheses) under model \eqref{model2}. The best performance at each $\alpha$ level is highlighted in bold.}
\begin{tabular}{c|cccc}
\toprule
\multicolumn{5}{c}{\textbf{Normal}} \\
\cmidrule(lr){1-5}
$\alpha$ & DES & DRES & NC-DRES & LLES \\
\midrule
0.05 & 0.798\,(0.092) & 0.754\,(0.070) & \textbf{0.547}\,(0.280) & 0.722\,(0.040) \\
0.10 & 0.675\,(0.108) & 0.650\,(0.122) & \textbf{0.459}\,(0.201) & 0.608\,(0.021) \\
0.15 & 0.566\,(0.131) & 0.543\,(0.145) & \textbf{0.426}\,(0.178) & 0.587\,(0.011) \\
0.20 & 0.489\,(0.156) & 0.475\,(0.161) & \textbf{0.413}\,(0.165) & 0.566\,(0.010) \\
0.25 & 0.422\,(0.169) & 0.435\,(0.173) & \textbf{0.406}\,(0.155) & 0.552\,(0.009) \\
\midrule
\multicolumn{5}{c}{\textbf{$t$ distribution}} \\
\cmidrule(lr){1-5}
$\alpha$ & DES & DRES & NC-DRES & LLES \\
\midrule
0.05 & 0.784\,(0.254) & 0.679\,(0.080) & \textbf{0.561}\,(0.220) & 0.782\,(0.283) \\
0.10 & 0.666\,(0.100) & 0.627\,(0.068) & \textbf{0.390}\,(0.176) & 0.639\,(0.031) \\
0.15 & 0.606\,(0.092) & 0.545\,(0.138) & \textbf{0.343}\,(0.163) & 0.564\,(0.018) \\
0.20 & 0.537\,(0.133) & 0.465\,(0.178) & \textbf{0.324}\,(0.157) & 0.718\,(0.019) \\
0.25 & 0.456\,(0.175) & 0.360\,(0.202) & \textbf{0.314}\,(0.153) & 0.527\,(0.010) \\
\bottomrule
\end{tabular}
\end{table}

\begin{table}[!t]
\centering
\caption{Empirical mean squared prediction error ($\hat{{\rm MSPE}}$) with corresponding standard deviations (in parentheses) under model \eqref{model3}. The best performance at each $\alpha$ level is highlighted in bold.}
\begin{tabular}{c|cccc}
\toprule
\multicolumn{5}{c}{\textbf{Normal}} \\
\cmidrule(lr){1-5}
$\alpha$ & DES & DRES & NC-DRES & LLES \\
\midrule
0.05 & 0.112\,(0.015) & 0.108\,(0.013) & 0.113\,(0.007) & \textbf{0.104}\,(0.009) \\
0.10 & 0.080\,(0.006) & 0.078\,(0.005) & \textbf{0.077}\,(0.004) & 0.064\,(0.005) \\
0.15 & 0.063\,(0.005) & 0.062\,(0.005) & \textbf{0.060}\,(0.004) & 0.054\,(0.004) \\
0.20 & 0.051\,(0.004) & 0.050\,(0.004) & \textbf{0.050}\,(0.004) & 0.044\,(0.003) \\
0.25 & 0.042\,(0.004) & 0.042\,(0.003) & 0.043\,(0.003) & \textbf{0.029}\,(0.002) \\
\midrule
\multicolumn{5}{c}{\textbf{$t$ distribution}} \\
\cmidrule(lr){1-5}
$\alpha$ & DES & DRES & NC-DRES & LLES \\
\midrule
0.05 & 0.095\,(0.045) & \textbf{0.078}\,(0.013) & 0.084\,(0.008) & 0.077\,(0.119) \\
0.10 & 0.043\,(0.015) & 0.038\,(0.005) & \textbf{0.039}\,(0.003) & 0.045\,(0.106) \\
0.15 & 0.027\,(0.006) & 0.025\,(0.004) & \textbf{0.025}\,(0.003) & 0.025\,(0.047) \\
0.20 & 0.020\,(0.004) & 0.018\,(0.002) & \textbf{0.018}\,(0.002) & 0.023\,(0.074) \\
0.25 & 0.015\,(0.004) & 0.014\,(0.002) & \textbf{0.015}\,(0.002) & 0.016\,(0.047) \\
\bottomrule
\end{tabular}
\end{table}

\section{Additional Simulation Results} 

\subsection{Simulation results for various data-generating processes} \label{appendix:simulation}
In this section, we present simulation results for four estimators--\texttt{DES}, \texttt{DRES}, \texttt{NC-DRES}, and \texttt{LLES}--under three distinct data-generating processes. Following the setup in Section~\ref{sec:4.1}, we generate data from the heteroscedastic model $Y = h_1(X) + h_2(X)\eta$, where each coordinate of $X$ is independently sampled from $\mathrm{Unif}(0,1)$. The three scenarios are defined as follows:
\begin{enumerate}
    \item[(i)] For $\bx = (x_1, \ldots, x_8)^\T \in [0,1]^8$,
    \begin{equation} \label{model1}
        \begin{aligned}
            h_1(\bx) &= \cos(2\pi x_1) + \frac{1}{1 + e^{-x_2 - x_3}} + \frac{1}{(1 + x_4 + x_5)^3} + \frac{1}{x_6 + e^{x_7x_8}} , \\
h_2(\bx) &= \sin\bigg(\frac{\pi(x_1 + x_2)}{2}\bigg) + \log(1 + x_3^2x_4^2x_5^2) + \frac{x_8}{1 + e^{-x_6-x_7}}.
        \end{aligned}
    \end{equation}
\item[(ii)] For $\bx = (x_1, \ldots, x_{10})^\T \in [0,1]^{10}$,
\#
h_1(\bx) &= \frac{1}{1 + e^{-x_1 -x_2}} + \frac{1 + x_3 + x_4}{(1 + x_3 + x_4)^2} + \sin(2\pi(x_5 + x_6)) + \frac{x_7 + x_8 + x_9 + x_{10}}{2(1 + e^{x_7 + x_8 + x_9 + x_{10}})}, \nn \\
h_2(\bx) &= 0.1 + \sin\bigg(\frac{\pi}{3}(x_1 + x_2 + x_3)\bigg) + \log(1 + (x_9x_{10})^2). \label{model2}
\#
\item[(iii)] For $\bx = (x_1, \ldots, x_{12})^\T \in [0,1]^{12}$, 
\#
h_1(\bx) &= \exp(x_1 - x_2 + x_3 - x_4 + x_5 - x_6 + x_7 - x_8 + x_9 - x_{10} + x_{11} - x_{12}), \nn \\
h_2(\bx) &=  0.1 + \sin\bigg(\frac{\pi}{12}(-x_1 + x_3 - x_5 + x_7 - x_9 + x_{11})\bigg). \label{model3}
\#
\end{enumerate}

\subsection{Sensitivity analysis on the robustification parameter}

In Section~\ref{sec:2.4}, we specified the robustification parameter as $\hat{\tau} = \hat{\nu}_2^{1/2}(n/\log n)^{0.3}$, and this choice was used throughout the numerical studies in Section~\ref{sec:4}. In this section, we perform a sensitivity analysis to examine the impact of varying $\hat{\tau}$. Specifically, for $\tau_{\mathrm{const}} \in \{0.1, 0.2, \ldots, 2.0\}$, we compute the \texttt{DRES} estimator using $\hat{\tau} = \tau_{\mathrm{const}} \hat{\nu}_2^{1/2}(n/\log n)^{0.3}$. The data-generating process follows that described in Section~\ref{sec:4.1}, with a sample size $n = 4{,}096$.

Figure~\ref{fig:boxplots.sensitivity} presents boxplots of the empirical MSPEs ($\widehat{\mathrm{MSPE}}$) corresponding to each choice of $\tau_{\mathrm{const}}$ at $\alpha = 0.1$. From left to right, we observe that a very small $\tau$ value leads to higher bias--reflected in noticeably larger MSPEs--but provides greater robustness, as indicated by fewer outlying points outside the boxes. In contrast, a large $\tau$ value makes the estimator more sensitive to heavy-tailed error distributions, resulting in a higher likelihood of poor estimates.

\begin{figure}[!t]
\centering
\subfloat[$\eta \sim \cN(0,1)$]
  {\includegraphics[width=\textwidth]{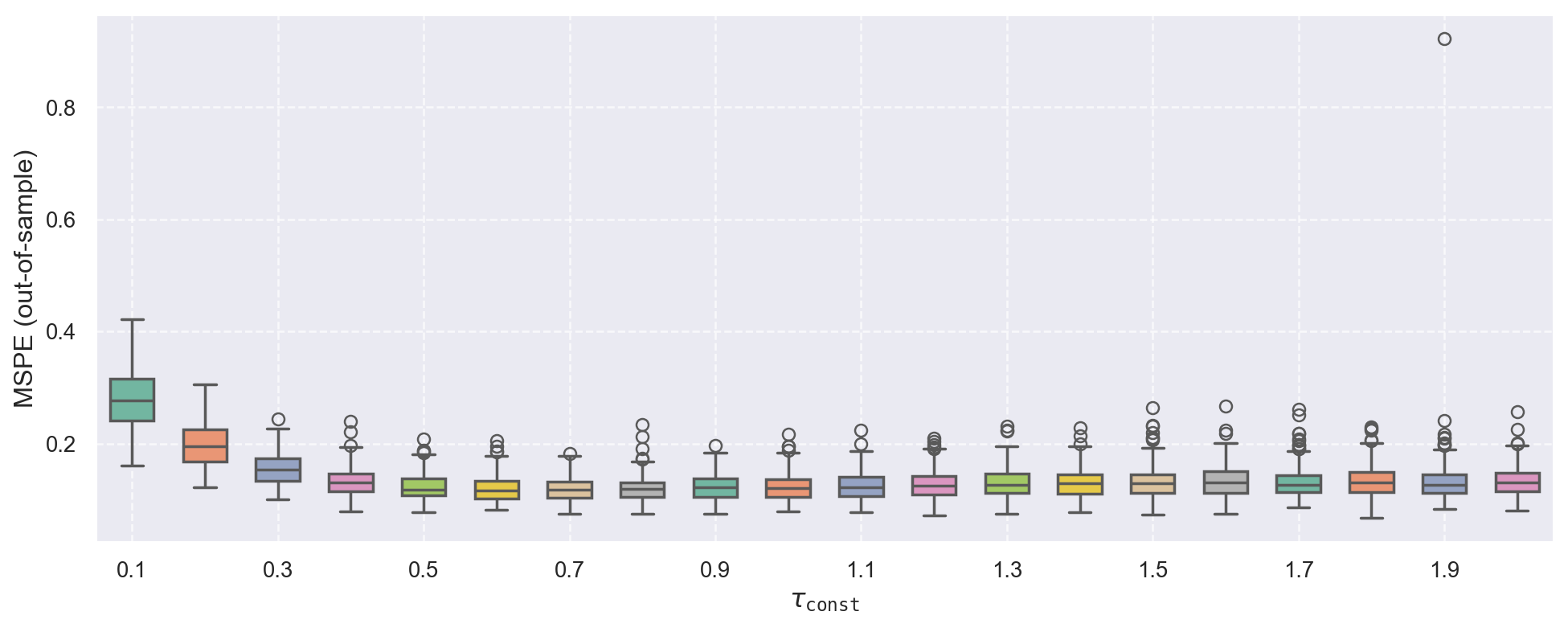}} 
  \\
\subfloat[$\eta  \sim t_{2.25}/3$]
  {\includegraphics[width=\textwidth]{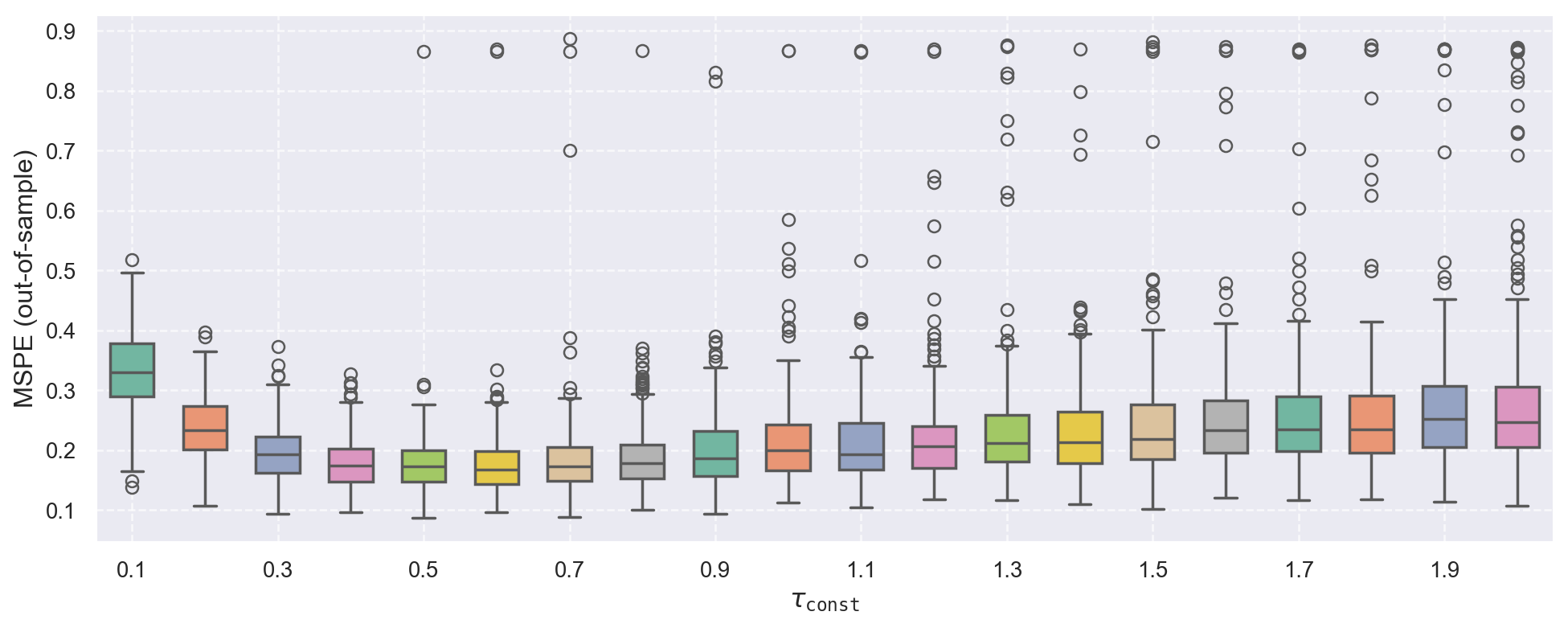}}
\caption{Boxplots of $\widehat{\mathrm{MSPE}}$ (based on 200 repetitions) for different choices of $\hat\tau$ in estimating the conditional 10\% ES function under the location–scale model $Y = h_1(X) + h_2(X)\eta$, where $X \in [0,1]^8$ and the sample size is $n = 4{,}096$.}
\label{fig:boxplots.sensitivity}
\end{figure}

\end{document}